\pdfoutput=1
\PassOptionsToPackage{pdftex,
pdfencoding=auto,
pdfnewwindow=true,
pdfusetitle=true,
bookmarks=true,
bookmarksnumbered=true,
bookmarksopen=true,
pdfpagemode=UseThumbs,
bookmarksopenlevel=1,
pdfpagelabels=true,
breaklinks=true,
colorlinks=true,
hyperfootnotes=true
}{hyperref}
\PassOptionsToPackage{usenames,dvipsnames,pdftex}{xcolor}
\documentclass[aps,amsfonts,amssymb,10pt,letterpaper, 
final, nofootinbib, longbibliography,
twocolumn,notitlepage,hyperref,twoside,allowtoday,accepted=2020-05-28,allowfontchangeintitle=false,nopdfoutputerror=false]{quantumarticle}
\usepackage[flushmargin,splitrule]{footmisc}
\usepackage{microtype}
\microtypesetup{
expansion={true,nocompatibility},
protrusion={true,nocompatibility},
activate={true,nocompatibility},
tracking=true,
kerning=true,
spacing=true
}
\microtypecontext{spacing=nonfrench}
\usepackage[square,sort&compress,comma,numbers,elide]{natbib}


\usepackage[showerrors,immediate]{silence}
\WarningsOff[hyperref,natbib,gensymb]
\usepackage[utf8]{inputenx}
\usepackage[OT1]{fontenc}
\usepackage[english]{babel}
\usepackage[style=american,autopunct=true]{csquotes}
\usepackage{soul}

\usepackage{bold-extra}
\usepackage{appendix}

\usepackage{amsmath, amssymb, amsthm,verbatim, graphicx,bbm}
\DeclareMathAlphabet\mathbfcal{OMS}{cmsy}{b}{n}
\usepackage{epsfig}
\usepackage{textcomp}
\usepackage{mathrsfs, mathtools}
\usepackage{units}
\usepackage{fullpage, stackrel, subfigure}
\usepackage{paralist}
\usepackage{comment}
\usepackage{placeins}


\usepackage{subfigure}
\usepackage{mathrsfs}
\usepackage[all]{xy}

%

\usepackage{tikz}

\usetikzlibrary{calc,decorations.pathreplacing, intersections,through, patterns}
\usetikzlibrary{arrows,shapes}
\usetikzlibrary{calc,decorations.pathreplacing}
\definecolor{darkgreen}{rgb}{0,.5,0}

\usetikzlibrary{backgrounds}
\pgfdeclarelayer{myback}
\pgfsetlayers{background,myback,main} 

\usepackage{footmisc}


\usepackage[usenames,dvipsnames]{xcolor}
\definecolor{purple}{RGB}{128,0,128}
\definecolor{ultramarine}{RGB}{63, 0, 255}
\definecolor{medblue}{RGB}{0, 0, 100}
\definecolor{googleblue}{RGB}{34, 0, 204}
\definecolor{panblue}{RGB}{0,24,150}
\definecolor{carmine}{RGB}{150, 0, 24}
\definecolor{gray}{RGB}{150, 150, 150}
\definecolor{darkgreen}{RGB}{0, 80, 0}

\definecolor{jflyOrange}{RGB}{230,159,0}
\definecolor{jflySkyBlue}{RGB}{86,180,233}
\definecolor{jflyBluishGreen}{RGB}{0,158,115}
\definecolor{jflyYellow}{RGB}{240,228,66}
\definecolor{jflyBlue}{RGB}{0,114,178}
\definecolor{jflyVermillion}{RGB}{213,94,0}
\definecolor{jflyReddishPurple}{RGB}{204,121,167}

\usepackage{hyperref}
\hypersetup{linkcolor=carmine,citecolor=darkgreen,urlcolor=googleblue,
anchorcolor=OliveGreen}

\usepackage{tabularx}
\usepackage{booktabs} 
\newcolumntype{R}{>{\raggedleft\arraybackslash}X}
\newcolumntype{C}{>{\centering\arraybackslash}X}
\newcolumntype{L}{>{\raggedright\arraybackslash}X}

\usepackage{tikz}
\usetikzlibrary{calc,decorations.pathreplacing,decorations.markings}
\usetikzlibrary{arrows,shapes}
\usepackage{accents}

\pgfkeys{/tikz/.cd,
  circle color/.initial=black,
  circle color/.get=\circlecolor,
  circle color/.store in=\circlecolor,
}

\tikzset{dotted pattern/.style args={#1 and #2}{
   postaction=decorate,
   decoration={
    markings,
    mark=
    between positions 0 and 1 step #2
      with
      {
       \fill[radius=#1,\circlecolor] (0,0) circle;
      }
    }
  },
  dotted pattern/.default={1pt and 1.5mm},
}



\newcommand{\beq}{\begin{equation}}
\newcommand{\eeq}{\end{equation}}

\newcommand{\LOSR}[0]{\ifmmode\textup{\upshape LOSR}\else{\textup{\upshape LOSR}}\fi}
\newcommand{\DetLOSR}[0]{\ifmmode\textup{\upshape LDO}\else{\textup{\upshape LDO}}\fi}
\newcommand{\LDO}[0]{\ifmmode\textup{\upshape LDO}\else{\textup{\upshape LDO}}\fi}
\newcommand{\LSO}[0]{\ifmmode\textup{\upshape LSO}\else{\textup{\upshape LSO}}\fi}
\newcommand{\LDTNO}[0]{\ifmmode\textup{\upshape LDTNO}\else{\textup{\upshape LDTNO}}\fi}
\newcommand{\LO}[0]{\ifmmode\textup{\upshape LO}\else{\textup{\upshape LO}}\fi}
\newcommand{\LOCC}[0]{\ifmmode\textup{\upshape LOCC}\else{\textup{\upshape LOCC}}\fi}

\newcommand{\twotwotwotwo}[0]{\ensuremath{
\genfrac{(}{)}{0pt}{3}{2\,2}{2\,2}}}

\newcommand{\fourfourfourfour}[0]{\ensuremath{
\genfrac{(}{)}{0pt}{3}{4\,4}{4\,4}}}
\newcommand{\xyst}[0]{\ensuremath{
\genfrac{(}{)}{0pt}{3}{|X|\,|Y|}{|S|\,|T|}}}
\newcommand{\xystprime}[0]{\ensuremath{
\genfrac{(}{)}{0pt}{3}{|X'|\,|Y'|}{|S'|\,|T'|}}}

\newcommand{\nhphantom}[1]{\sbox0{#1}\hspace{-\the\wd0}}

\DeclareMathOperator*{\CHSH}{\operatorname{CHSH}}
\DeclareMathOperator*{\NPR}{\operatorname{NPR}}
\DeclareMathOperator*{\conv}{\longmapsto}

\DeclareMathOperator*{\interconv}{\longleftrightarrow}

\usepackage[only,Arrownot]{stmaryrd}
\DeclareMathOperator*{\nconv}{\mathrel{\mkern5.5mu\Arrownot\mkern-5.5mu\longmapsto}}

\DeclarePairedDelimiter{\expec}{\langle}{\rangle}

\DeclarePairedDelimiter{\norm}{\lVert}{\rVert}
\usepackage{braket}

\newtheorem{theo}{Theorem}
\newtheorem{thm}[theo]{Theorem}
\newtheorem{prop}[theo]{Proposition}
\newtheorem{lemma}[theo]{Lemma}
\newtheorem{lem}[theo]{Lemma}
\newtheorem{cor}[theo]{Corollary}

\newtheorem{defn}[theo]{Definition}

\theoremstyle{definition} 

\theoremstyle{plain}

\providecommand{\customgenericname}{}
\newcommand{\newcustomtheorem}[2]{%
  \newenvironment{#1}[1]
  {%
   \renewcommand\customgenericname{#2}%
   \renewcommand\theinnercustomgeneric{##1}%
   \innercustomgeneric
  }
  {\endinnercustomgeneric}
}

\newcustomtheorem{customprop}{Proposition}
\newcustomtheorem{customcor}{Corollary}
\newcustomtheorem{customlem}{Lemma}

\newcommand{\term}[1]{\textcolor{medblue}{\textbf{\upshape #1}}}

\newcommand{\bel}[1]{{\color{blue!80!white} #1 }}

\setcounter{tocdepth}{2}
\AtBeginDocument{
\renewcommand{\footrule}{\vbox to 1pt{\hbox to\headwidth{\color{quantumgray}\leaders\hrule\hfil}\vss}}
\fancyfoot[C]{\sffamily\textcolor{quantumviolet}{\thepage}}
}
\begin{document}
\title{Quantifying Bell: the Resource Theory of Nonclassicality of Common-Cause Boxes}
\author{Elie Wolfe}
\affiliation{Perimeter Institute for Theoretical Physics, 31 Caroline St. N, Waterloo, Ontario, N2L 2Y5, Canada}
\author{David Schmid}
\affiliation{Perimeter Institute for Theoretical Physics, 31 Caroline St. N, Waterloo, Ontario, N2L 2Y5, Canada}
\affiliation{Institute for Quantum Computing and Dept. of Physics and Astronomy, University of Waterloo, Waterloo, Ontario N2L 3G1, Canada}
\author{Ana Bel\'en Sainz}
\affiliation{International Centre for Theory of Quantum Technologies, University of Gda\'nsk, 80-308 Gda\'nsk, Poland}
\affiliation{Perimeter Institute for Theoretical Physics, 31 Caroline St. N, Waterloo, Ontario, N2L 2Y5, Canada}
\author{Ravi Kunjwal}
\affiliation{Perimeter Institute for Theoretical Physics, 31 Caroline St. N, Waterloo, Ontario, N2L 2Y5, Canada}
\affiliation{Centre for Quantum Information and Communication, Ecole polytechnique de Bruxelles,
	CP 165, Universit\'e libre de Bruxelles, 1050 Brussels, Belgium}
\author{Robert W. Spekkens}
\affiliation{Perimeter Institute for Theoretical Physics, 31 Caroline St. N, Waterloo, Ontario, N2L 2Y5, Canada}
\date{\today}
\begin{abstract}
We take a resource-theoretic approach to the problem of quantifying nonclassicality in Bell scenarios.  
The resources are conceptualized as probabilistic processes from the setting variables to the outcome variables having a particular causal structure, namely, one wherein the wings are only connected by a common cause. 
We term them ``common-cause boxes''.
We define the distinction between classical and nonclassical resources in terms of whether or not a classical causal model can explain the correlations. One can then quantify the relative nonclassicality of resources by considering their interconvertibility relative to the set of operations that can be implemented using a classical common cause (which correspond to local operations and shared randomness). We prove that the set of free operations forms a polytope, which in turn allows us to derive an efficient algorithm for deciding whether one resource can be converted to another. We moreover define two distinct monotones with simple closed-form expressions in the two-party binary-setting binary-outcome scenario, and use these to reveal various properties of the pre-order of resources, including a lower bound on the cardinality of any complete set of monotones. 
 In particular, we show that the information contained in the degrees of violation of facet-defining Bell inequalities is not sufficient for quantifying nonclassicality, even though it is sufficient for witnessing nonclassicality.  Finally, we show that the continuous set of convexly extremal quantumly realizable correlations are all at the top of the pre-order of quantumly realizable correlations.  In addition to providing new insights on Bell nonclassicality, our work also sets the stage for quantifying nonclassicality in more general causal networks.
\end{abstract}

\maketitle
\onecolumngrid
\clearpage
\tableofcontents
\clearpage
\twocolumngrid

\section{Introduction}

Bell's theorem \cite{Bell64, Bell66} highlights a precise sense in which quantum theory requires a departure from a classical worldview.  Furthermore, violations of Bell inequalities provide a means for certifying the nonclassicality of {\em nature}, independently of the correctness of quantum theory. This is because Bell inequalities can be tested directly on experimental data.  Experimental tests under very weak assumptions have confirmed this nonclassicality \cite{Belltest1, Belltest2, Belltest3}. Correlations that violate Bell inequalities have also found  applications in information theory. Specifically, they constitute an information-theoretic resource insofar as they can be used to perform various cryptographic tasks in a device-independent way~\cite{BHK,Acin2006QKD,Scarani2006QKD,Acin2007QKD, colbeckamp, Pironio2010,Dhara2013DIRNG,vazirani14,Kaniewski2016chsh}.  Consequently, much previous effort has been made to quantify resourcefulness of correlations within Bell scenarios~\cite{gallego2012,de2014nonlocality,GellerPiani,gallego2016nonlocality,horodecki2015axiomatic,Amaral2017NCW,kaur2018fundamental,Brito2018tracedistance}.

In this paper, we take a resource-theoretic approach to quantifying the nonclassicality of a given correlation in a Bell scenario, grounded in a new perspective on Bell's theorem. This is the perspective of \term{causal modelling}, which differs from the traditional operational approaches both conceptually and in practice.
 Nevertheless, the natural choice of the set of free operations for the Bell scenario in our framework coincides with the one proposed in some previous works~\cite{de2014nonlocality,GellerPiani}, namely, \term{local operations and shared randomness} (\term{LOSR})\footnote{There is widespread agreement that the free operations should somehow consist of local operations supplemented with shared randomness, however, different authors have been led to formalize this idea differently, that is, they have been led to distinct proposals for the the set of free operations. Indeed, the formalization provided in Refs.~\cite{gallego2016nonlocality,Amaral2017NCW,kaur2018fundamental} is inconsistent with the one given in Refs.~\cite{de2014nonlocality,GellerPiani} and therefore also with the one presented here. A detailed discussion of this issue can be found in Appendix~\ref{comparisonsub}.}.
See also the subsequent works of Refs.~\cite{schmid2019typeindependent,rosset2019characterizing,LOSRvsLOCC}.

Our causal perspective on quantifying Bell nonclassicality also generalizes naturally to a framework for quantifying the nonclassicality of correlations in more general causal scenarios. We discuss this generalization in Section~\ref{comparisonsub}, but leave its development to future work.

\subsection{Summary of main results}

We now summarize the content and main results of our article.

In Section~\ref{landscape}, we 
articulate the view on Bell's theorem that motivates our approach---the causal modelling paradigm---and contrast it with two other views on Bell's theorem, namely, the strictly operational and superluminal causation paradigms.   In particular, we explain how the differences between these views impacts how one conceptualizes Bell inequality violations as a resource, and we highlight some of the advantages of our approach relative to the alternatives.  We also introduce the notion of partitioned process theories~\cite{Coecke2014} as the mathematical framework for resource theories that we adopt in this article.

In Section~\ref{basicresthry}, we provide a formal definition of the resource theory to be studied.  For bipartite Bell scenarios, we argue that the set of processes which naturally constitute the resources in our approach is the set of all bipartite processes with classical inputs and outputs that can arise within a causal model with a (possibly nonclassical) common cause between the wings. We also argue that the natural set of free operations on such processes are those that are achieved by embedding the process in a circuit for which the only connection between the wings is a {\em classical} common cause, and we demonstrate that this is equivalent to the set of local operations and shared randomness, as the latter is formalized in Refs.~\cite{de2014nonlocality,GellerPiani}. 

In Section~\ref{rtprelim}, we introduce some of the central concepts of any resource theory, including the notion of a pre-order and its features, the notion of monotones and complete sets thereof, and the notions of cost and yield monotones, which underlie the explicit monotone constructions that follow.

In Section~\ref{polything}, we show how one can use two instances of a linear program to determine the ordering relation which holds between any pair of resources (see Proposition~\ref{lem:belstheorem} and the discussion that follows it). 

In Section~\ref{twomonotones}, we define two monotones of particular interest.  The first (defined in Eq.~\eqref{eq:CHSHmonotonedefn}) is based on a yield construction relative to all resources in the Clauser-Horne-Shimony-Holt (CHSH) scenario~\cite{CHSH} (a bipartite Bell scenario where the settings and outcomes all have cardinality two) and where the yield is measured by the value of the canonical CHSH functional.  The second (defined in Eq.~\eqref{Malphadefn}) is based on a cost construction relative to a one-parameter family of resources in the CHSH scenario and where the cost is measured again by the value of the canonical CHSH functional.    Although both of these monotones are originally defined in terms of an optimization problem, we derive closed-form expressions for each of them for resources within the CHSH scenario (see Propositions~\ref{prop:eqaboveCHSH} and ~\ref{geom} respectively).  We show that within the  CHSH scenario~\cite{CHSH}, a variety of monotones which have been previously studied are all equivalent (up to a monotonic function) to the first of these monotones (see Corollary~\ref{measurecor}). Because our two monotones are provably not equivalent, this result implies that the second of our monotones provides information beyond that given by previously studied monotones.

In Section~\ref{sec:results} we leverage our two monotones to derive various global properties of the pre-order induced by single-copy deterministic conversions.
Specifically, we prove that the pre-order:
\begin{compactitem}
\item is not complete (i.e., there exist incomparable resources), 
\item is not weak (the incomparability relation is not transitive), 
\item has both infinite width and infinite height, 
\item is locally infinite. 
\end{compactitem}
We also prove that the two monotones just mentioned do not completely characterize the pre-order of resources, by showing that they fail to do so even for the special case of the CHSH scenario.
We further show (in Theorem~\ref{prop:eightmonotonesV2}) that no fewer than eight continuous monotones can do the job.
We also show (in Proposition~\ref{prop:zerodclasses}) that the equivalence classes among nonfree resources in the CHSH scenario (though not in general) are  given exactly by the orbits of the symmetry group of deterministic free operations. 

Finally, in Section~\ref{sec:qr}, we show that all of the global features of the pre-order hold even for the strict subset of resources which can be realized in quantum theory. We also prove (in Lemma~\ref{prop:ext}) that every extremal quantumly realizable resource is at the top of the pre-order of quantumly realizable resources, and (in Proposition~\ref{prop:continuoustop}) that there are a continuous set of incomparable resources at the top of this pre-order.

\subsection{How to read this article}

We will demonstrate in Section~\ref{basicresthry} that in spite of the difference in our attitude towards Bell's theorem, the definition of the set of resources and the set of free operations that is natural for the Bell scenario within the causal modelling paradigm coincides  with a definition that has been proposed within the strictly operational paradigm, namely, the one proposed in Refs.~\cite{de2014nonlocality,GellerPiani}. Because Bell scenarios are the focus of our article, any reader who would rather take the strictly operationalist attitude towards Bell's theorem can reinterpret all of our results through that lens. 
In particular, readers who are already sympathetic to the notion that LOSR, as defined in Refs.~\cite{de2014nonlocality,GellerPiani}, is the right choice of free operations may wish to skip
Sections \ref{landscape} and \ref{basicresthry}.  

To understand our conviction that LOSR constitutes the right choice of free operations for Bell scenarios, however, readers are advised to read Sections~\ref{landscape} and \ref{sec:freeoperations}.
 In particular, to understand how our approach differs (advantageously) from other approaches, readers are encouraged to examine Sections~\ref{operationalparadigm} and ~\ref{SCparadigm} as well as Appendix~\ref{comparison}. 
  
Because Section~\ref{rtprelim} reviews basic definitions and terminology for concepts related to resource theories, any reader who has expertise on resource theories may wish to skip this section.  We note, however, that some of the material presented therein is not found in standard treatments, such as our discussion of global properties of a pre-order and our discussion of a scheme for constructing useful cost and yield monotones. 

The presentation of our novel technical results begins in Section~\ref{polything}.
  
\section{Motivating our approach and contrasting it with alternatives} \label{landscape}

\subsection{Three views on Bell's theorem}

The traditional commentary on Bell's theorem  \cite{sep-bell-theorem, d1979quantum} takes a particular view on how to articulate the assumptions that are necessary to derive Bell inequalities. 
Among these assumptions, two are typically highlighted as deserving of the most scrutiny, namely, the assumptions that are usually termed {\em realism} and {\em locality}\footnote{Note, however, that different authors will formalize these assumptions in different ways.}.  Abandoning one or the other of these two assumptions is the starting point of most commentaries on what to do in the face of violations of Bell inequalities.\footnote{See, however, the discussion of superdeterminism in footnote~\ref{superd}.}
Furthermore, a schism seems to have developed between the camps that advocate for each of these two views~\cite{wiseman2014two}.

Among the researchers who take Bell's theorem to demonstrate the need to abandon realism, there is a contingent which adopts a purely operational attitude towards quantum theory, that is, an attitude wherein the scientist's job is merely to predict the statistical distribution of outcomes of measurements performed on specific preparations in a specified experimental scenario.  We shall refer to the members of this camp as {\em operationalists}~\cite{werner2014comment}.
 For such researchers, a violation of a Bell inequality is simply a litmus test for the inadequacy of a classical realist account of the experiment.
One particular type of operationalist attitude, which we shall term the \term{strictly operational paradigm}, advocates that physical concepts ought to be defined in terms of operational concepts, and consequently that any properties of a Bell-type experiment, such as whether it is signalling or not and what sorts of causal connections might hold between the wings, must be expressed in the language of the classical input-output functionality of that experiment.  In other words, they advocate that the only concepts that are meaningful for such an experiment are those that supervene\footnote{$A$-properties are said to supervene on $B$-properties if every $A$-difference implies a $B$-difference.} upon its input-output functionality.\footnote{
Some might describe what we have here called the strictly operational paradigm as the ``device-independent'' paradigm~\cite{Scarani2012device}, however, we avoid using the latter term here because its usage is not restricted to  describing a particular type of empiricist philosophy of science: it also has a more technical meaning in the context of quantum information theory, wherein it indicates whether or not a given information-theoretic protocol depends on a prior characterization of the devices used therein.  
Indeed, Bell-inequality-violating correlations have been shown to be a key resource in cryptography because they allow for device-independent implementations of cryptographic tasks\cite{BHK,Acin2006QKD,Scarani2006QKD,Acin2007QKD, colbeckamp, Pironio2010,Dhara2013DIRNG,vazirani14,Kaniewski2016chsh}. \label{fn5}
}   
Most prior work on quantifying the resource in Bell experiments
has been done within this paradigm, and the characteristic of experimental correlations that is usually taken to quantify the resource is simply some notion of distance from the set of correlations that satisfy all the Bell inequalities.

Consider, on the other hand, the researchers who take realism as sacrosanct,
and in particular those who take
 Bell's theorem to demonstrate the failure of locality---that is, the existence of superluminal causal influences~\cite{Maudlin2002quantum,Norsen2006}.\footnote{Although such influences do not imply the possibility of superluminal signalling, they do imply a certain tension with relativity theory if one believes that the latter does not merely concern anthropocentric concepts such as signalling, but also physical concepts such as causation.}
Researchers in this camp, whom we shall refer to as advocates of the \term{superluminal causation paradigm}, 
would presumably find it natural to quantify the resource of Bell inequality violations in terms of the strength of the superluminal causal influences required to account for them (within the framework of a classical causal model). An approach along these lines is described in Refs.~\cite{Chaves2015relaxing,Chaves2017causalmultipartite}.  Earlier work on the communication cost of simulating Bell-inequality violations~\cite{maudlin1992bell,Toner2003} is also naturally understood in this way.\footnote{\label{superd}A less common view on how to maintain realism in the face of Bell inequality violations is to hold fast to locality but give up on a different assumption that goes into the derivation of Bell inequalities, namely, that the hidden variables are statistically independent of the setting variables.  This is known as the ``superdeterministic'' response to Bell's theorem~\cite{hooft2013fate}.   
 Advocates of this approach would presumably find it natural to quantify the resource of Bell inequality violations in terms of the deviation from such statistical independence that is required to explain a given violation.  In particular, the results of Refs.~\cite{hall} and ~\cite{barrettgisin} seeking to quantify the nonindependence needed to explain a given Bell inequality violation 
might be framed within a resource-theoretic framework.  However, given that the setting variables can no longer be considered as freely specifiable inputs within such an approach, it would be inappropriate to conceptualize a Bell experiment as a box-type process as we have done here.
}

In recent years, a third attitude toward Bell's theorem---inspired by the framework of causal inference~\cite{Pearl2009}---has been gaining in popularity.
In this approach, the assumptions that go into the derivation of Bell inequalities are~\cite{Wood2015}: Reichenbach's principle (that correlations need to be explained causally), the framework of classical causal modelling, and the principle of no fine-tuning (that statistical independences should not be explained by fine-tuning of the values of parameters in the causal model). 
Here, a violation of a Bell inequality does not lead to the traditional dilemma between realism and locality, but rather attests to the impossibility of providing a non-fine-tuned explanation of the experiment within the framework of classical causal models.
This attitude implies the possibility of a new option for what assumption to give up in the face of such a violation. 
Specifically, the new possibility being contemplated is that one can hold fast to Reichenbach's principle and the principle of no fine-tuning---and hence to the possibility of achieving satisfactory causal explanations of correlations---by replacing the framework of classical causal models with an intrinsically nonclassical generalization thereof. 

As is shown in Ref.~\cite{Wood2015}, because the correlations in a Bell experiment do not provide a means of sending superluminal signals between the wings, the only causal structure  that is a candidate for explaining these correlations without fine-tuning is one wherein there is 
 a purely common-cause relation between the wings, that is, one which admits no causal influences between the wings.
 Therefore, the new approach to achieving a causal explanation of Bell inequality violations 
 is one that posits a common cause mechanism but replaces the usual formalism for causal models with one which allows for more general possibilities on how to represent its components~\cite{Allenetal}.\footnote{For instance, for the notion of a quantum causal model proposed in Ref.~\cite{Allenetal}, reversible deterministic causal dependencies are represented by unitaries rather than bijective functions, and lack of knowledge is represented by density operators rather than by classical probability distributions.} We refer to this attitude as the \term{causal modelling paradigm}.

The causal modelling paradigm implies not only a novel attitude towards Bell's theorem, but
also a change in how one conceives of the resource that powers the information-theoretic applications of Bell-inequality violations.
The resource is not taken to be some abstract notion of distance from the set of Bell-inequality-satisfying correlations within the space of all nonsignalling correlations, as advocates of the strictly operational paradigm seem to favour, nor to consist of the strength of superluminal causal influences, as advocates of the superluminal causation paradigm would presumably have it. Rather, we take the resource to be the {\em nonclassicality} required by any generalized causal model which can explain the Bell inequality violations without fine-tuning. 

We shall show that in the resource theory that emerges by adopting this attitude, the nonclassicality of common-cause processes in Bell experiments cannot be captured solely by the degree of violation of facet-defining Bell inequalities.
 That is, there are distinctions among such common-cause processes---different ways for these to be nonclassical---which do not correspond to distinctions in the degree of violation of any facet-defining Bell inequality. 

\subsection{The resource theory suggested by the causal modelling paradigm}

\subsubsection{Generalized causal models}

We will work with the notion of a generalized (i.e., not necessarily classical) causal model that has been developed in Refs.~\cite{Henson2014,Fritz2012beyondBell} using the framework of generalized probabilistic theories (GPTs)~\cite{hardy01, barrettGPT, janotta2014}), and refer to it as a \term{GPT causal model}.\footnote{In the language of operational probabilistic theories~\cite{CombsForGPTs,ArianoOPT}, we are considering \term{free} and \term{causal} GPTs. A GPT is said to be free if, for any mathematically well-formed closed circuit, it specifies a joint probability distribution over the outcomes of the instruments. 
{A GPT is said to be causal if there exists a unique deterministic effect (which is often interpreted as excluding backwards-in-time signalling in any circuit).}}
Since we are interested in the distinction between classical and nonclassical, without specifically distinguishing quantum and 
post-quantum
types of nonclassicality,
 we will not be making use of any of the recent work~\cite{CostaShrapnel, Allenetal, BLO} 
on devising an intrinsically {\em quantum} notion of a causal model.\footnote{However, we will consider the question of when certain correlations that arise in a GPT causal model can be quantumly realized. {Moreover, the follow-up work described in Ref.~\cite{schmidpostquantum} explicitly explores the distinction between resources that are quantumly realizable and those that necessitate a post-quantum GPT.}}

\begin{defn}
A \term{GPT causal model} consists of a causal structure, represented by a directed acyclic graph (DAG), and a set of GPT parameters.  The parameters specify, for each node in the DAG, a GPT operation from the composite system associated to the parents of that node to the system associated to the node.
\end{defn}

One can approach the study of nonclassicality in arbitrary causal structures from within the scope of these GPT causal models, and pursue the development of a resource theory of such nonclassical features. 

We focus on experimental scenarios that are multipartite.  The different wings of the experiment are commonly conceptualized as the laboratories of different parties, particularly when discussing information-theoretic tasks that may be undertaken by the parties.
If one restricts attention to scenarios wherein locally, at each wing, 
 systems can be put into arbitrary causal relations with one another (consistent with the absence of backwards-in-time causal influences), then  the only freedom in stipulating the causal structure is in stipulating the causal relations that hold {\em among} the wings of the experiment.  
Causal relations among the wings come in two forms: 
 (i) a relation indicating the potential for causal influence from one wing to another, corresponding to having access to a GPT channel from one to the other, and (ii) a relation indicating the potential for a common cause to act on a set of wings, corresponding to having a source which distributes a multipartite GPT state among them.
The GPT operations and GPT states representing, respectively, these cause-effect and common-cause relations, together with the GPT operations representing causal influences between systems at a given wing, constitute the parameters of the GPT causal model.

The possible operational statistics that one can observe in this scenario hence arise from all the possible ways one may assign values to the parameters of the GPT causal model -- both to those pertaining to the causal structure among the wings, and to those pertaining to the local actions in each wing. 
%



In this article, we focus on a particular type of causal structure among the wings, namely, one wherein there is a common cause that acts on all of the wings, but no causal influences between any of them, which we term a \term{Bell scenario}. However, in Appendix~\ref{diffcausalstr}, we do include some discussion regarding other possible causal structures among the wings.  Details about how entangled states and operations are represented in a GPT causal model can be found in Refs.~\cite{hardy01, barrettGPT,Henson2014}, and explicit descriptions of these for the Bell scenario are provided in Sec.~\ref{envelopingthry} and in Appendix~\ref{diffcausalstr}.


\subsubsection{The distinction between free and nonfree resources in the causal modelling paradigm}\label{sec:freeandnonfree}

We conceptualize any experimental configuration as a process from its inputs to its outputs.
In the framework of GPT causal models, one has the capacity to consider processes that have GPT systems as inputs and outputs at the various wings.  However, we will restrict our attention to processes that have only {\em classical} inputs and outputs.  Such processes can be conceptualized as black-box processes, to which one inputs classical variables and from which classical variables are output. They are therefore precisely the sorts of processes considered in the strictly operational paradigm.
We further restrict our attention to processes with a classical input and classical output at each wing, where the input temporally precedes the output.\footnote{Thus, we do not consider processes which involve a sequence over time of classical input variables and classical output variables; that is, in the language of Refs~\cite{qcombs08,qcombs09}, we do not consider general $n$-combs.  
} 
In the strictly operational paradigm,
 the term ``box'' is generally used as jargon for such processes (for instance, as it is used in the term ``PR box''~\cite{Popescu1994}). We therefore refer to such processes as \term{box-type processes} or simply \term{boxes}.  

\begin{defn} 
A \term{box} is a process with a classical input and a classical output at each wing, represented formally by a stochastic map from the tuple of inputs to the tuple of outputs.
\end{defn}

We use the term \term{common-cause box} to refer to box-type processes which can be realized using a causal structure consisting of a common cause acting on all of the wings.  These will be the resources that we focus on in this article.
 In GPT causal models, all common-cause boxes can be decomposed into the preparation of a GPT state on a multipartite system, followed by the distribution of the component subsystems among the wings, followed by each subsystem being subjected to a GPT measurement, chosen from a fixed set according to the classical input variable at that wing (the local setting variable), and the result of which is the classical output variable at that wing (the local outcome variable).  In short, such processes can be decomposed in the same manner in which a multipartite Bell experiment is decomposed in quantum theory.
 
 \begin{defn} 
A \term{common-cause box} (or, equivalentely, a \term{GPT-realizable common-cause box}) is a box that can arise from a GPT causal model of a multipartite Bell scenario, so that the inputs and outputs correspond, respectively, to the setting and outcome variables associated to a set of local GPT measurements implemented on a multipartite GPT state.
\end{defn}

The distinction between common-cause boxes that are \term{classically realizable} and those that are not
 (illustrated for the bipartite case in Fig.~\ref {fig:CandNCboxes}) is simply the distinction between whether there is a {\em classical} causal model underlying the process, or whether it is only realizable by a causal model 
 which invokes a nonclassical GPT.

\begin{figure}[htb!]
\begin{center}
\subfigure[\label{fig:CandNCboxesGPT}]
{
\centering
\includegraphics[scale=0.5]{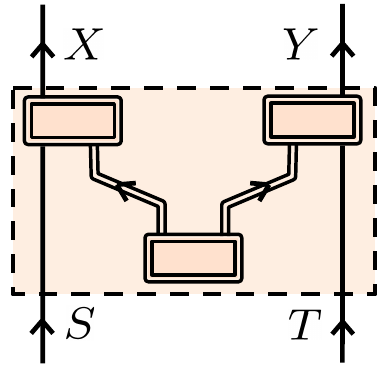}
}
\subfigure[\label{fig:CandNCboxesCLASSICAL}]
{
\centering
\includegraphics[scale=0.5]{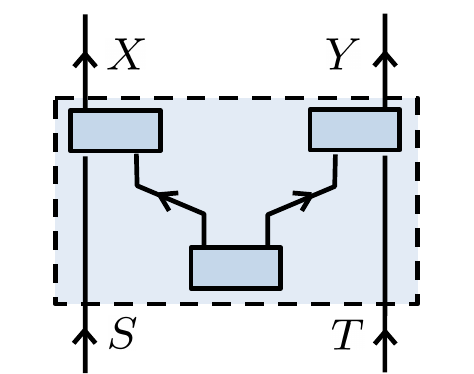}
}
\caption{In the bipartite scenario, the distinction between \subref{fig:CandNCboxesGPT} a {\em generic} GPT-realizable common-cause box and \subref{fig:CandNCboxesCLASSICAL} a {\em classically realizable}
 common-cause box.
Here and throughout this article, single-line edges denote classical systems, and single-line boxes denote processes that have only classical inputs and outputs (depicted in light blue). Double-line edges denote nonclassical systems and double-line boxes denote processes that have one or more nonclassical inputs or outputs (depicted in pink).  Any common-cause box whose input-output functionality is consistent with an internal structure of the type indicated in \subref{fig:CandNCboxesCLASSICAL} (regardless of its actual internal structure) is termed classically realizable and is considered free, while a common-cause box whose input-output functionality is not consistent with the structure of \subref{fig:CandNCboxesCLASSICAL} but instead is only consistent with an internal structure of the type indicated in \subref{fig:CandNCboxesGPT} is considered nonfree.
\label{fig:CandNCboxes}
}\end{center}
\end{figure}

Classical causal models are causal models wherein  all systems mediating causal influences are represented by classical variables, so that every common-cause source is represented by a joint probability distribution (referred to as {\em shared randomness}) and every channel is represented by a conditional probability distribution. Equivalently, classical causal models can be understood to arise as a subset of GPT causal models wherein the systems are presumed to be nonclassical (for instance, they might be presumed to be quantum),
  but every common-cause source is represented by a GPT-unentangled state and every channel is taken to be GPT-entanglement-breaking. 
Particularizing to the case of common-cause boxes, we have:
 
  \begin{defn} \label{CRCCB}
A \term{classically realizable common-cause box}
 is a common-cause box that admits of a {\em classical} causal model, such that the 
common-cause source consists of shared randomness.  Equivalently, 
it is a common-cause box that admits of a GPT causal model wherein the common-cause source consists of a GPT-unentangled state.
\end{defn}

  It follows that the free common-cause boxes are precisely the nonsignalling boxes that satisfy all the Bell inequalities, while the costly common-cause boxes are the nonsignalling boxes that violate some Bell inequality\footnote{Indeed, since the GPT colloquially known as \textit{Boxworld} realizes all and only the nonsignalling boxes in Bell scenarios \cite{barrettGPT}, it follows that all nonsignalling boxes admit of a GPT causal model.}.

\subsubsection{Quantifying resourcefulness in the causal modelling paradigm}\label{sec:quantifying}

In order to quantify the nonclassicality of common-cause boxes (that is, the extent to which they fail to be classically realizable), we will use an approach to resource theories described in Ref.~\cite{Coecke2014}, namely, the framework of  {\em partitioned process theories}.
An \term{enveloping theory of processes} must be specified, together with a subtheory of processes that can be implemented at no cost, called the \term{free subtheory of processes}. This partitions the set of all processes in the enveloping theory into free and costly (i.e., nonfree) processes. 
One can then ask of any pair of processes in the enveloping theory of a given type whether the first can be converted to the second by composing it with a process (of the appropriate type) from the free subtheory.
If interconversion between processes of type $T$ require composition with a process of type $T'$, then the \term{set of free operations} on processes of type $T$ are the elements of the free subtheory of processes that are of type $T'$.
Pairwise convertibility relations under the set of free operations define a pre-order on the set of the resources of interest,
and a partial order over the equivalence classes of such resources.  One can then quantify the relative worth of different resources by their relative positions in this partial order.  Functions over resources that preserve ordering relations, termed monotones, provide a particularly simple means of quantifying the worth of resources.   


The resource theory considered in this article will be described in full detail in Sec.~\ref{basicresthry}.  Nonetheless, we provide a sketch of its definition here  in order to be able to highlight the ways in which it contrasts with other approaches.

We take the {\em enveloping} theory of processes to include all GPT-realizable common-cause boxes as well as the GPT-realizable processes that  take every such box to another such box while only making use of a common cause (depicted in Fig.~\ref{fig:NonclassicalClamps}).\footnote{This is a general approach to determining a pre-order over resources of a given type---define the enveloping theory to include processes corresponding to the resource type of interest as well as 
the processes that are required to interconvert between such resources.}
 A process that takes a box to a box  we will refer to as a {\em clamp} because it is a process that has the form of a comb with two teeth (relative to the notion of ``comb'' introduced in \cite{qcombs08,qcombs09}). More precisely, a process taking boxes to boxes is a clamp  with classical inputs and outputs.  Those that only make use of a common cause, we refer to as {\em common-cause clamps with classical inputs and outputs}.   Such clamps are the most general type of process required  in our enveloping theory because common-cause boxes are a special case of these (for instance, when all the systems at the inputs and outputs on the bottom teeth of the clamp are trivial). 
  
  We take the {\em free subtheory} of processes in our resource theory to consist of the subset of common-cause clamps with classical inputs and outputs  that can be realized in a classical causal model, termed \term{classically realizable}. 
The distinction between a generic GPT-realizable common-cause clamp and a classically realizable one is depicted in Fig.~\ref{fig:ClassicalAndNonclassicalClamps}.  By virtue of boxes being a special type of clamp, this definition is consistent with Definition~\ref{CRCCB}.\footnote{For both the GPT-realizable and classically realizable varieties of these processes, one can define notions of sequential and parallel composition such that the set of processes, together with these composition relations, satisfy the formal definition of a process theory~\cite{Coecke2014}, thereby justifying the claim that the resource theory we have defined is formally a partitioned process theory.
The proof of this fact, however, is not relevant to any of the results in this article and is postponed to forthcoming work~\cite{FutureRTContextuality}.   
}


\begin{figure}[htb!]
\begin{center}
\subfigure[\label{fig:NonclassicalClamps}]
{
\centering
\includegraphics[scale=0.5]{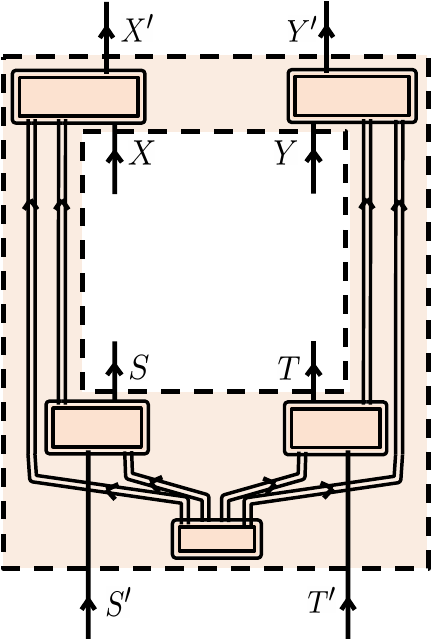}
}
\subfigure[\label{fig:ClassicalClamps}]
{
\centering
\includegraphics[scale=0.5]{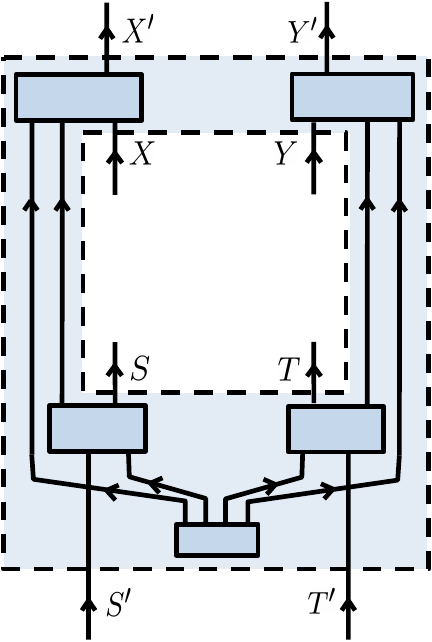}
}
\caption{In the bipartite scenario, the distinction between \subref{fig:NonclassicalClamps} a \emph{generic} GPT-realizable common-cause clamp with classical inputs and outputs and \subref{fig:ClassicalClamps} a \emph{classically realizable}
 common-cause clamp with classical inputs and outputs. 
 Any common-cause clamp whose input-output functionality is consistent with an internal structure of the type indicated in \subref{fig:ClassicalClamps} (regardless of its actual internal structure) is termed classically realizable and is considered free, while a common-cause clamp whose input-output functionality is not consistent with the structure of \subref{fig:ClassicalClamps} but instead is only consistent with an internal structure of the type indicated in \subref{fig:NonclassicalClamps} is considered nonfree.
\label{fig:ClassicalAndNonclassicalClamps}
}
\end{center}
\end{figure}



To determine the ordering relations that hold among these common-cause  boxes, one must determine the convertibility relations among them.  Given the definition of our resource theory, whether one GPT-realizable common-cause box can be converted to another is determined by whether this can be 
achieved by processing it with a {\em classically realizable} common-cause clamp, as depicted in Fig.~\ref{fig:transfexplicit}.
This subsumes correlated local processings of the inputs and outputs of the box, as we describe in Section~\ref{sec:freeoperations}.

\begin{figure}[ht]
 \centering
  \includegraphics[width=225pt]{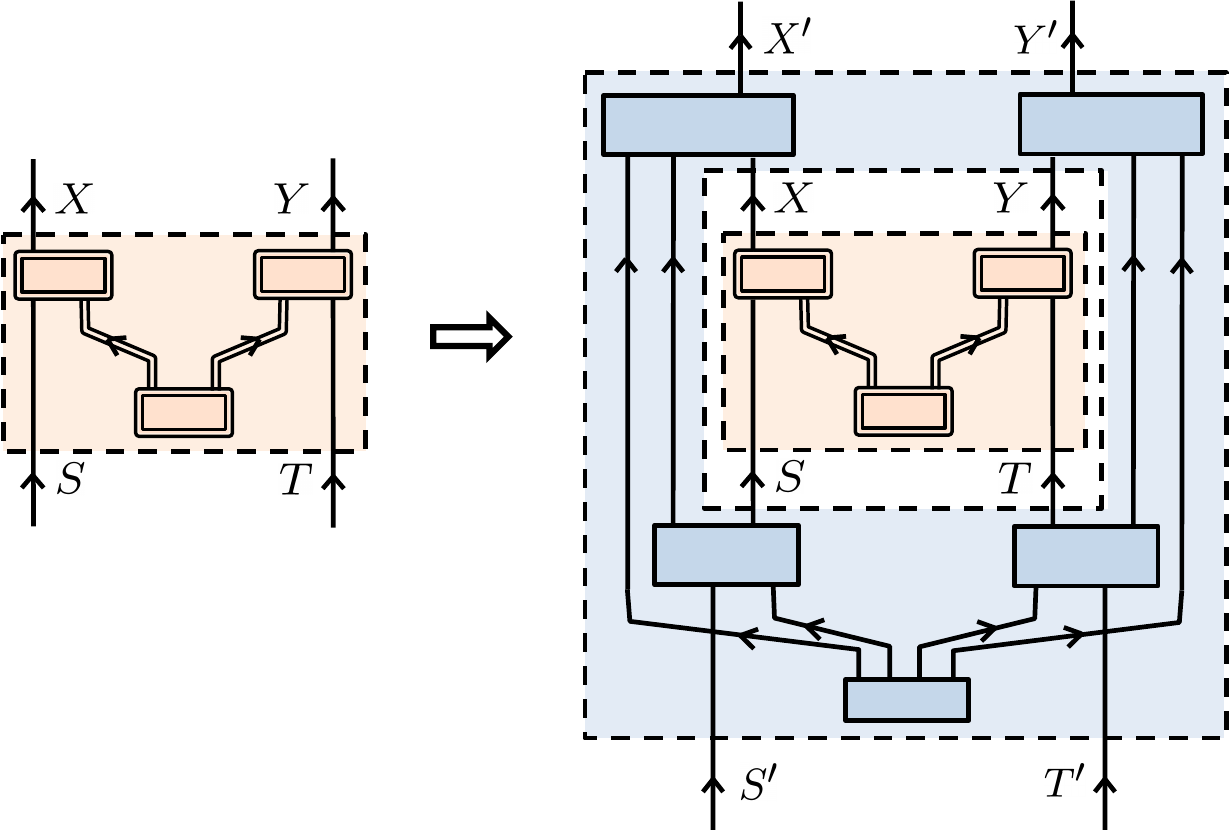}
 \caption{In the bipartite scenario, the most general form of a free operation (in blue) taking a GPT-realizable common-cause box (in pink) to another. } 
 \label{fig:transfexplicit}
\end{figure}

\subsubsection{A note about nomenclature}

In this article, we avoid describing the resource behind Bell inequality violations as \enquote{nonlocality}.  This is because we believe that it is {\em only} for those who take the lesson of Bell's theorem to be the existence of superluminal causal influences that it is appropriate to describe violations of Bell inequalities by this term. 
Researchers in the operationalist camp have not, generally speaking, avoided using the term \enquote{nonlocality}, but seem instead to use it as a synonym for ``violation of a Bell inequality'' rather than to imply a commitment to superluminal causal influences.  However, we believe that such a usage invites confusion and so we opt instead to avoid the term altogether.
Nevertheless,
our project is very much in line with earlier projects that 
describe themselves as developing a \enquote{resource theory of nonlocality}, such as Refs.~\cite{gallego2012,de2014nonlocality,GellerPiani,gallego2016nonlocality}.

\subsection{Contrast to the strictly operational paradigm}\label{operationalparadigm}

As noted in the introduction and as will be demonstrated in Section \ref{sec:freeoperations},
in the special case of Bell scenarios---the focus of this article---the natural set of free operations within our causal modelling paradigm is equivalent to one of the proposals for the set of free operations made in 
earlier works within the strictly operational paradigm,
 namely, {\em local operations and shared randomness} (LOSR), as the latter is defined in Refs.~\cite{de2014nonlocality,GellerPiani}. 
Additionally, the natural enveloping theory adopted in the strictly operational approach, namely, the set of no-signalling boxes, also coincides with that of our enveloping theory for the case of Bell scenarios, namely, the set of GPT-realizable common-cause boxes (where the equivalence of these two sets can be inferred from the results of Ref.~\cite{barrettGPT}).
Therefore, in spite of the difference in the attitude we take towards Bell's theorem, the resource theory that we define for Bell scenarios is the same as the one studied in Refs.~\cite{de2014nonlocality,GellerPiani}.

Nonetheless, the difference in our attitude towards Bell's theorem is not inconsequential.
 We presently outline its significance for the project of this article as well as for potential future generalizations of this project.

Most importantly, the causal modelling approach diverges sharply from any strictly operational approach once one considers causal structures beyond Bell scenarios. As discussed in Appendix~\ref{diffcausalstr},  in a resource theory of nonclassicality for more general causal structures, both the free subtheory and the enveloping theory proposed by the causal modelling approach are radically different from those suggested by the strictly operational approach.
In particular, the free subtheory need not be LOSR in a general causal structure and the enveloping theory need not be the set of all nonsignalling operations.
Our approach allows us to define a resource theory that is specific to a scenario in which only strict subsets of the wings are connected by common causes~\cite{BilocalCorrelations,Fritz2012beyondBell} (such as the triangle-with-settings scenario described in Appendix~\ref{diffcausalstr}) and this provides a concrete example of a case where the free subtheory is not LOSR and the enveloping theory is not all nonsignalling operations.  In these cases, the free operations are``local operations and causally admissable shared randomness'',wherein only those subsets of wings that are connected by a common cause have shared randomness.  This is distinct from the LOSR operations, which assume that randomness is shared between all the wings. It seems unlikely that the resource theory we propose in these cases can be motivated (or even fully characterized) in the strictly operational paradigm. 

Even for Bell scenarios, however, the causal modelling approach offers advantages over its competitors.  In particular, 
 it singles out a unique set of free operations, while the strictly operational approach does not. From our perspective, the resource underlying Bell inequality violations is the nonclassicality of the causal model required to explain them with a common cause, so \emph{clearly} the free operations should involve only classical common causes acting between the wings. In the strictly operational paradigm, by contrast, any operation which preserves no-signalling and takes local boxes to local boxes might constitute a legitimate candidate for a \enquote{free} operation. This ambiguity is reflected in the existence of distinct proposals for the set of free operations in strictly operational resource theories. Aside from LOSR, there is also a proposal called \textit{wirings and prior-to-input classical communication} (WPICC)~\cite{gallego2016nonlocality} which allows for classical causal influences between the wings {\em prior} to when the parties receive their inputs (See Appendix~\ref{WPICCvsLOSR}). If one believes that there is a singular concept which underlies the violation of Bell inequalities, then at most {\em one} of these proposals (LOSR or WPICC) can be taken as the relevant set of free operations.\footnote{Competing sets of free operations may be interesting for studying phenomena {\em other} than the resource that powers violations of Bell inequalities, but this is not the issue at stake in this article.} Although WPICC operations meet all desired operational criteria, they are immediately ruled out as candidates for the free operations within the causal modelling paradigm, on the grounds that they involve nontrivial cause-effect influences between the wings.

Another advantage of our approach for the Bell scenario is that it highlights the fact that {\LOSR} is {\em by construction} a convex set, a fact which is critical for the algorithmic method that we derive for determining the ordering relation between any two resources. In highlighting this fact, our approach led us to notice an oversight in some previous attempts to formalize LOSR, as discussed in Appendix~\ref{comparisonsub}.

Finally, we note that prior work of \citet{GellerPiani} departs from the strictly operational paradigm through their use of the \emph{unified operator formalism}~\cite{Acin2010Unified,Short2013}, which is analogous to the quantum formalism, but where nonpositive Hermitian operators are allowed to represent states. They do not characterize boxes primarily by their input-output functionality, but rather as a composition of a bipartite source with local measurements.  Indeed in their Fig. 4, they explicitly depict the internal structure of the box.  It is in this sense that their approach does not quite fit the mould of a strictly operational approach but is rather
 somewhat more in the flavour of the causal modelling approach we have described here. 

Nonetheless, the unified operator formalism differs significantly from the GPT formalism of Refs.~\cite{Henson2014,Fritz2012beyondBell} with respect to the \emph{independence} of the nonclassical common cause from the measurements employed in realizing nonclassical boxes. In the unified operator formalism, the Hermitian operator describing the shared state cannot be chosen freely for a given set of quantum measurements, because some choices would yield negative numbers rather than valid probabilities. By contrast, in the GPT formalism that we adopt here, the set of GPT states is contained within the dual of the set of GPT product measurements, and hence any measurement scheme can be paired with any shared state while yielding valid probabilities.
The causal modelling paradigm must reject any dependence of the shared state on the choice of measurements, while such dependence is unavoidable within the unified operator formalism. As defined in Ref.~\cite{Pearl2009}, a causal model is a directed acyclic graph, or equivalently, a circuit of causal processes, wherein the distinct processes in the circuit are required to be {\em autonomous} (i.e., independently variable). We therefore classify Ref.~\cite{GellerPiani} as neither within the causal modelling paradigm nor within the strictly operational paradigm, while still exhibiting some features of each of these approaches.
 
\subsection{Contrast to the superluminal causation paradigm} \label{SCparadigm}
 
To our knowledge, 
advocates of the superluminal causation paradigm have not attempted to develop a resource theory for Bell inequality violations (although Refs.~\cite{Chaves2015relaxing,Chaves2017causalmultipartite} are related in spirit). 
If it {\em were} attempted (within the framework of Ref.~\cite{Coecke2014}), then the commitments of the approach suggest that it would also be done differently from the way we have done so here.  
Those who endorse the superluminal causation paradigm do not shy away from the notion of causation, and hence a resource theory developed within their paradigm could be presented using the same framework that we use here --- that of causal models.  However, such an approach would likely be framed entirely in terms of {\em classical} causal models, rather than introducing the notion of GPT causal models.

Advocates of the superluminal causation paradigm would naturally define the free boxes to be those that involve only subluminal causes.  Hence, in scenarios wherein the inputs and the outputs at one wing are space-like separated from those at the other wings, so that subluminal causal influences cannot act between the wings, a box is free if and only if it
 can be realized by a classical common cause.  Thus, the natural choice of the free subtheory in the superluminal causation paradigm coincides with the free subtheory in the causal modelling paradigm.  On the other hand, the natural choice of the enveloping theory in the superluminal causation paradigm consists of the set of boxes that are classically realizable given superluminal causal influences between the wings. This  differs from the enveloping theory in the causal modelling paradigm because it includes boxes that are signalling.
In the superluminal causation paradigm, therefore, it is natural to try and quantify the resource 
in terms of the strength of the superluminal causal influence between the wings that is required to explain it in a classical causal model.\footnote{It should be noted that no \emph{finite} speed of superluminal causal influences can satisfactorily account for the predictions of quantum theory, per Ref.~\cite{Bancal2012}, so such influences would need to be assumed to be of infinite speed.}

Because the enveloping theory within this paradigm includes not only non-signalling boxes that violate Bell inequalities but signalling boxes as well,  the resource theory is rich enough to describe communication between the wings. Therefore, defining the resource theory in this way would not distinguish common-cause resources that are classically realizable from those that are not (as we propose to do here), but would instead draw a line between common-cause resources that are classically realizable and everything else
 --- including classical signalling resources.\footnote{
 That is, if one seeks to partition resources of a given type into classical and nonclassical varieties, then defining the enveloping theory correctly is just as important as defining the free subtheory correctly.}
 If one were to go this route, then all of classical Shannon theory would be subsumed in the resource theory.  A potential response to this expansion in the scope of the project might be to try to eliminate such signalling resources {\em by hand}, by demanding that the enveloping theory was constrained to those boxes that are non-signalling among the wings.  Such a response, however, seems to compromise the ideals of the superluminal causation paradigm, because no-signalling is an operational notion rather than a realist one.\footnote{John Bell famously argued against the idea that no-signalling could embody an assumption of locality in a fundamental physical theory on the grounds that it was too anthropocentric~\cite{bell1995nouvelle}:\begin{quote}
 ...the ``no signaling'' notion rests on concepts which are desperately vague, or vaguely applicable. The assertion that \enquote{we cannot signal faster than light} immediately provokes the question: Who do we think {\em we} are? {\em We} who can make \enquote{measurements}, {\em we} who can manipulate \enquote{external fields}, {\em we} who can \enquote{signal} at all, even if not faster than light? Do {\em we} include chemists, or only physicists, plants, or only animals, pocket calculators, or only mainframe computers?
 \end{quote}
}

\section{Details of the resource theory}\label{basicresthry}

\subsection{
Free and nonfree
 common-cause boxes} \label{envelopingthry}

We begin by formalizing the relevant definitions from Sec.~\ref{sec:freeandnonfree} and \ref{sec:quantifying}, and providing more details about the definition of the resource theory.
For ease of presentation, we focus throughout on the bipartite Bell scenario, but the multipartite Bell scenario can be formalized analogously.

Fig.~\ref{fig:CandNCboxes}(a) depicts the structure of a generic GPT-realizable common-cause box.  The classical variables that range over the (fixed) choices of local measurements are termed the \term{setting variables}, denoted $S$ (left wing) and $T$ (right wing), while the classical variables that range over the possible results of these measurements are termed the \term{outcome variables}, denoted $X$ (left wing) and $Y$ (right wing).   
In this article, we will refer to the cardinality of the set over which a variable $X$ varies as ``the cardinality of $X$'' and we will also sometimes refer to the cardinality of the setting (outcome) variable as simply ``the cardinality of the setting (outcome)''.

Let us label the system distributed to the left wing by $A$ and the one to the right wing by $B$. In the GPT framework,
states and effects on $A$ ($B$) are represented by vectors in a real vector space of dimension $d_A$ ($d_B$), that is, in $\mathbb{R}^{d_A}$ ($\mathbb{R}^{d_B}$). 
States and effects on the composite $AB$ are represented by vectors in the tensor product of these vector spaces\footnote{Strictly speaking, we do allow for a GPT where states and effects may correspond to vectors outside of the tensor product of the local vector spaces. That is, we do \emph{not} assume local tomography. However, since the causal structure of a Bell scenario is such that the measurements are assumed to be local, we can focus on the effects within  $\mathbb{R}^{d_A}\otimes \mathbb{R}^{d_B}$ without loss of generality, and therefore also on the states  within  $\mathbb{R}^{d_A}\otimes \mathbb{R}^{d_B}$. In other words, any so-called holistic degrees of freedom in the GPT play no role in Bell scenarios, and therefore we can ignore them without loss of generality.}, 
$\mathbb{R}^{d_A}\otimes \mathbb{R}^{d_B}$. If the GPT representation of the $X=x$ outcome of the $S=s$ measurement on system $A$ is ${\bf r}^{A}_{x|s} \in \mathbb{R}^{d_A}$ and that of the $Y=y$ outcome of the $T=t$ measurement on system $B$ is ${\bf r}^{B}_{y|t} \in \mathbb{R}^{d_B}$, and if ${\bf s}^{AB} \in \mathbb{R}^{d_A}\otimes \mathbb{R}^{d_B}$ denotes the GPT state of the composite $AB$, then the conditional probability distribution associated to this GPT-realizable common-cause box is
\beq
P_{XY|ST}(xy|st) = ({\bf r}^{A}_{x|s} \otimes {\bf r}^{B}_{y|t}) \cdot {\bf s}^{AB},
\label{GPTCCbox}
\eeq 
where $\cdot$ denotes the Euclidean inner product.

By virtue of their internal causal structure, all GPT-realizable common-cause boxes satisfy the no-signalling conditions ${P_{Y|ST} = P_{Y|T}}$ and ${P_{X|ST} = P_{X|S}}$.  It is straightforward to verify that this follows from Eq.~\eqref{GPTCCbox} using the fact that $\sum_x {\bf r}^{A}_{x|s} = {\bf u}^A$, where ${\bf u}^A$ is the unique deterministic effect on $A$, which is independent of value $s$ of the setting variable, and using the analogous fact for $B$.

The common-cause boxes that are considered to be {\em free} in our resource theory are those that can be realized when the GPT governing the internal workings of the box is classical probability theory, as depicted in Fig.~\ref{fig:CandNCboxesCLASSICAL}.

In such cases, the scope of possibilities for the overall functionality of the common-cause box can be characterized as follows.  The systems $A$ and $B$ are described by classical variables, $\Lambda_A$ and $\Lambda_B$ (here assumed to be discrete).  
Classically, the composite system $AB$ is prepared in a joint distribution over these, $P_{\Lambda_A \Lambda_B}$.
 The GPT state in this case is $[{\bf s}^{AB}]_{\lambda_A\lambda_B} = P_{\Lambda_A \Lambda_B}(\lambda_A, \lambda_B)$, where $[{\bf v}]_i$ denotes the $i$th component of a vector ${\bf v}$ living in vector space $\mathbb{R}^{|\Lambda_A|} \otimes \mathbb{R}^{|\Lambda_B|}$.
 Without loss of generality, we can take systems $A$ and $B$ to be perfectly correlated (by incorporating any noise into the measurements), corresponding to the case where $P_{\Lambda_A \Lambda_B}(\lambda_A, \lambda_B) =  \sum_{\lambda}  \delta_{\lambda_A,\lambda} \delta_{\lambda_B,\lambda} P_{\Lambda}(\lambda)$ 
for some distribution $P_{\Lambda}(\lambda)$, and where $\delta$ denotes the Kronecker-delta function.  This distribution over $\Lambda_A$ and $\Lambda_B$ can be conceptualized as follows: sample a variable $\Lambda$ from some distribution, then let $\Lambda_A$ and $\Lambda_B$ be copies of it.  

Classically, the ${X{=}x}$ outcome of the ${S{=}s}$ measurement on system $A$ is modelled by a conditional probability distribution $P_{X|S\Lambda_A}$.  The GPT effect associated to this measurement on $A$ is ${\bf r}^{A}_{x|s}$ with components $[{\bf r}^{A}_{x|s}]_{\lambda_A} = P_{X|S\Lambda_A}(x|s\lambda_A)$. 
 Similarly, the GPT effect associated to the measurement on $B$ is ${\bf r}_{y|t}^{B}$ and has components $[{\bf r}^{B}_{y|t}]_{\lambda_B} = P_{Y|T\Lambda_B}(y|t\lambda_B)$.  Substituting these expressions into Eq.~\eqref{GPTCCbox}, we conclude that a classically realizable common-cause box satisfies
\begin{align}
&P_{XY|ST}(xy|st)\nonumber\\
&= \smashoperator{\sum_{\lambda_A\lambda_B}}P_{ X|S\Lambda_A}(x|s\lambda_A)P_{Y|T\Lambda_B}(y|t\lambda_B) P_{\Lambda_A \Lambda_B}(\lambda_A\lambda_B)\nonumber\\
&=\smashoperator{\sum_{\lambda}}P_{ X|S\Lambda}(x|s\lambda)P_{Y|T\Lambda}(y|t\lambda) P_{\Lambda}(\lambda).
\label{ClassicalCCbox}
\end{align}
This is recognized to be the expression for a conditional probability distribution $P_{XY|ST}$ that satisfies the Bell inequalities.

\subsection{The free operations on common-cause boxes}
\label{sec:freeoperations}

 

 The most general free operation taking a bipartite common-cause box with settings $S, T$ and outcomes $X, Y$ to a bipartite common-cause box with settings $S', T'$ and outcomes $X', Y'$ is {the clamp} depicted in blue in Fig.~\ref{fig:transfexplicit}.
  It is the most general processing which makes use of a classical common cause that can act on the local pre-processings and the local post-processings at each of the wings.  It subsumes as special cases processings wherein classical common causes act on any of the subsets of these four local processings. 
     
 Note that the most general free operation allows arbitrary feed-forward of classical information on each wing, since this does not require any causal influences between the wings.\footnote{Because the only physical restriction we are imagining is that no cause-effect influences are present {\em between} wings, feed-forward of {\em nonclassical} information (that is, of arbitrary GPT systems) at each wing is also a free LOSR operation.  {\em Without loss of generality}, however, we consider only feed-forward of classical systems in this work, because this is already sufficient to generate {\em any} conditional probability distribution $P_{X' Y' S T|XY S'T'}$ consistent with the causal structure, i.e., satisfying Eqs.~(\ref{freeopNoSignal}-\ref{freeopNoRetro}). }
But any such operation can always also be put into the canonical form depicted in blue in Fig.~\ref{fig:transf}. It suffices to note that the system that mediates the action of the common cause on the post-processings on a given wing can  always be passed down the classical side-channel.  
Henceforth, we use this canonical form when describing the most general free operation.


Formally, such an operation
 transforms the conditional probability distribution $P_{XY|ST}$ to $P_{X'Y'|S'T'}$ as
\begin{equation}\label{actionfreeop}
P_{X'Y'|S'T'} = \sum_{XYST} P_{X' Y' S T|XY S'T'} P_{XY|ST}
\end{equation} 
where the conditional probability distribution $P_{X' Y' S T|XY S'T'}$ satisfies certain constraints, which we specify below. 

\begin{figure}[ht]
 \centering
  \includegraphics[width=225pt]{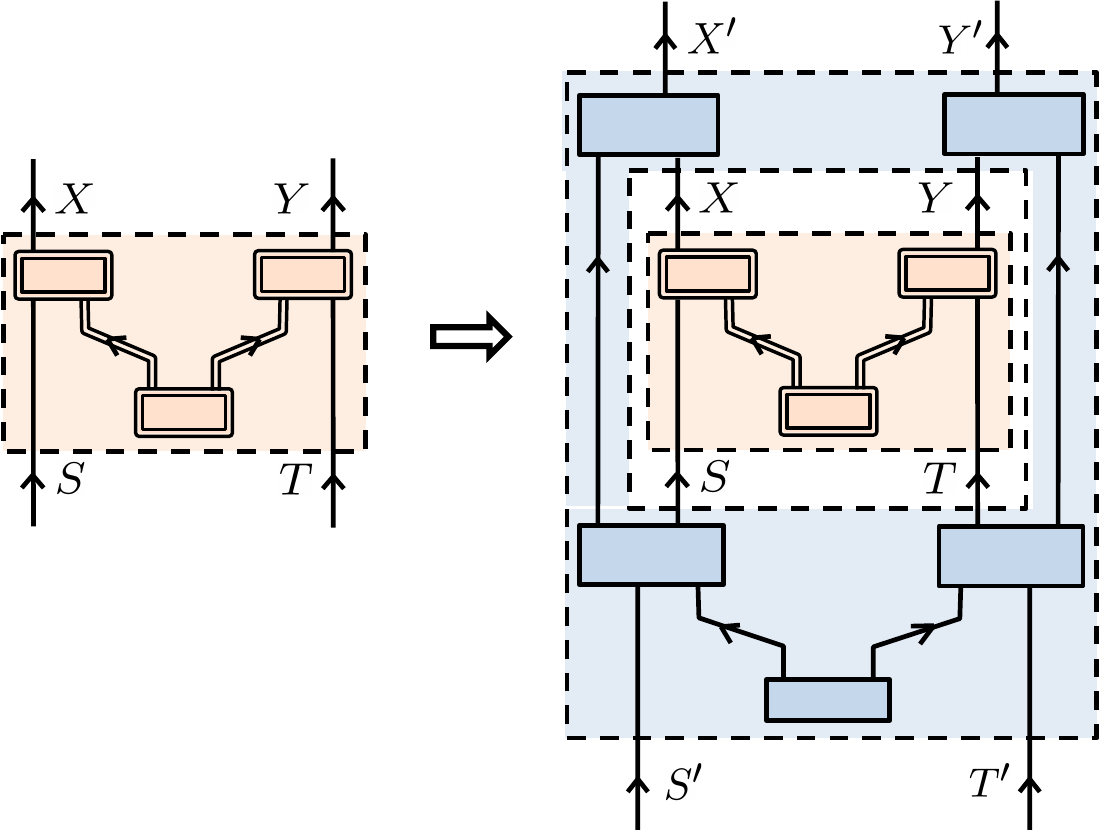}
 \caption{The canonical form of a generic bipartite {\LOSR} operation $P_{X' Y' S T|XY S'T'}$ (in blue) taking a common-cause box $P_{XY|ST}$ (in pink) to a common-cause box $P_{X'Y'|S'T'}$.  
 } \label{fig:transf}
\end{figure}

Circuit fragments that map processes to processes (such as the ones  depicted in blue in Figs.~\ref{fig:transfexplicit} and ~\ref{fig:transf})
have been studied extensively in recent years in a variety of frameworks, most notably the quantum combs framework of Refs.~\cite{qcombs08,qcombs09}, and the process matrix framework of Refs.~\cite{OCB12,Oreshkov2016}.  If the source and target resources are denoted by $R$ and $R'$, respectively, and the free operation is denoted by $\tau$, we represent Eq.~\eqref{actionfreeop} as 
\beq
R' = \tau \circ R,
\eeq 
where $\circ$ is a particular instance of the \term{link product} of Ref.~\cite{qcombs08}.

On the left wing, the most general local {\em pre-}processing takes as input the setting variable  of the target resource ($S'$) and the variable originating from the common cause, and it generates as output the setting variable  of the source resource ($S$) as well as an arbitrary variable which propagates down the side-channel. The most general {\em post-}processing on the left wing takes as input the outcome variable  of the source resource ($X$) and the side-channel variable, and it generates as output the outcome variable  of the target resource ($X'$).
Included as special cases among these pre- and post-processings are maps from $S'$ to $S$ and from $X$ to $X'$ that constitute relabellings, coarse-grainings, or fine-grainings of the variable, where the possibilities are constrained by the cardinalities of these variables.  Also included as special cases are instances where the map from $S'$ to $S$ or the map from $X'$ to $X$ is chosen probabilistically, and instances where these two maps are correlated (by making use of the side-channel). 
The analogous pre- and post-processings at the right wing are also possible.  Finally, the choices of maps on the left can also be correlated with the choices of maps on the right, by leveraging the common cause.

The free operations are characterized by those 
 $P_{X' Y' S T|XY S'T'}$ which can be achieved via the type of circuit fragment depicted in Fig.~\ref{fig:transf}, namely, those 
 such that
\begin{align} \label{LC4}
&P_{X' Y' S T|XY S'T'}(x'y'st|xys't')=
\\ \nonumber 
&\smashoperator{\sum_{\lambda_A\lambda_B}}
\begin{array}{r}
P_{X' S|X S' \Lambda_A }(x's|xs'\lambda_A) P_{Y'T|YT'\Lambda_B}(y't|yt'\lambda_B)
\\\times P_{\Lambda_A \Lambda_B}(\lambda_A\lambda_B)
\end{array}
\end{align}
for some joint distribution $P_{\Lambda_A \Lambda_B}$ and for some $P_{X' S|X S' \Lambda_A }$ and $P_{Y'T|YT'\Lambda_B}$ satisfying no-retrocausation conditions
\begin{align}\begin{split}\label{NRCs}
P_{S|X S' \Lambda_A }=P_{S|S'\Lambda_A}\\
P_{T|Y T'\Lambda_B}=P_{T|T'\Lambda_B}.
\end{split}\end{align}

One can directly check that any $P_{X' Y' S T|XY S'T'}$ admitting of a decomposition as in Eq.~\eqref{LC4} satisfies the {\em operational}
no-signalling constraints
\begin{align}\begin{split}
P_{X' S |XY S'T'} = P_{X'S|XS'}\\
P_{Y' T |XY S'T'} = P_{Y'T|YT'}\end{split}\label{freeopNoSignal}
\end{align}
and the {\em operational} no-retrocausation conditions
\begin{align}\begin{split}
P_{S|X S'}=P_{S|S'}\\
P_{T|Y T'}=P_{T|T'}.\end{split}\label{freeopNoRetro}
\end{align}

The parts of the circuit fragment in Fig.~\ref{fig:transf} that are associated to $P_{X' S|X S' \Lambda_A }$ and $P_{Y'T|YT'\Lambda_B}$ we refer to as 
\term{local operations}. 
The part associated to $P_{\Lambda_A \Lambda_B}$  corresponds to a joint distribution on the variables distributed to the two wings and can therefore be conceived of as \term{shared randomness}.  Consequently, the free operations we are endorsing here can indeed be described as
 local operations and shared randomness ({\em \LOSR}), as noted earlier. 
\begin{defn}\label{defnLOSR}
An operation is in the set \term{LOSR} (and termed an \term{LOSR operation}) if and only if it is associated to a conditional probability distribution $P_{X' Y' S T|XY S'T'}$ that admits of the sort of decomposition specified by Eqs.~\eqref{LC4} and \eqref{NRCs}.
\end{defn}

Previous resource-theoretic approaches to Bell-inequality violations have also endorsed the intuitive notion that local operations supplemented with shared randomness should constitute the free operations.  Different works, however, have made different proposals for how this notion ought to be formalized.  The correct formalization, in our opinion, is the one provided in Geller and Piani~\cite{GellerPiani} and independently in deVincente~\cite{de2014nonlocality},
which coincides with the one given above\footnote{The definition of \term{LOSR} given in Geller and Piani~\cite{GellerPiani} is very similar to the one provided here (see Fig. 4 therein), while the one provided in de Vicente\cite{de2014nonlocality} is much more cumbersome.}. Therefore, in this article we are endorsing the proposal of Refs.~\cite{de2014nonlocality,GellerPiani} to take \term{LOSR} as the free operations.  
On the other hand, Refs.~\cite{gallego2016nonlocality,Amaral2017NCW,kaur2018fundamental} have formalized the notion of local operations supplemented with shared randomness differently, defining a strict subset of the set \term{LOSR} defined above (a subset that can be shown to be nonconvex).  Nonetheless, we believe that this discrepancy was an oversight and that it is unlikely anyone would defend taking this subset rather the full set to define the resource theory.  We discuss the issue in depth in Appendix~\ref{comparisonsub}.  

As a final comment, note that, without loss of generality, we can take the joint distribution to be $P_{\Lambda_A \Lambda_B}(\lambda_A \lambda_B) =  \sum_{\lambda}  \delta_{\lambda_A,\lambda} \delta_{\lambda_B,\lambda} P_{\Lambda}(\lambda)$ 
for some distribution $P_{\Lambda}$,
and hence express Eq.~\eqref{LC4} as 
\begin{align} \label{LC4prime}
&P_{X' Y' S T|XY S'T'}(x' y' st|xy s't')=
\\\nonumber
&\sum_{\lambda}P_{X' S|X S' \Lambda }(x' s|x s' \lambda) P_{Y'T|YT'\Lambda}(y't|yt' \lambda) P_{\Lambda}(\lambda).
\end{align}
As a consequence, the conditional probability distribution $P_{X' Y' S T|XY S'T'}$ can be conceptualized as the more familiar object $P_{\tilde{X} \tilde{Y}|\tilde{S} \tilde{T}}$ for setting variables $\tilde{S},\tilde{T}$ and outcome variables $\tilde{X}, \tilde{Y}$ that are defined as follows.  We take the composite of the outputs of the circuit fragment on the left wing, $X'$ and $S$, as a composite outcome variable $\tilde{X}$, so that $\tilde{X}\coloneqq(X',S)$.  Similarly, we take the composite of the inputs on the left wing, $X$ and $S'$, as a composite setting variable $\tilde{S}$, so that $\tilde{S}\coloneqq(X,S')$.  Making the analogous definitions for $\tilde{Y}$ and $\tilde{T}$ in terms of $Y, T, Y', T'$ on  the right wing, Eq.~\eqref{LC4prime} can be rewritten as
\begin{align} \label{LC4primeprime}
&P_{\tilde{X}\tilde{Y}|\tilde{S}\tilde{T}}(\tilde{x}\tilde{y}|\tilde{s}\tilde{t})=\\
&\nonumber\qquad\sum_{\lambda}P_{ \tilde{X}|\tilde{S}\Lambda}(\tilde{x}|\tilde{s}\lambda)P_{\tilde{Y}|\tilde{T}\Lambda}(\tilde{y}|\tilde{t}\lambda) P_{\Lambda}(\lambda).
\end{align}
Recalling Eq.~\eqref{ClassicalCCbox}, it is clear that $P_{\tilde{X} \tilde{Y}|\tilde{S} \tilde{T}}$ satisfies all of the Bell inequalities.
This illustrates the consistency of  our proposal for the free operations, for we have just shown that the free operations  on a resource $P_{XY|ST}$ are  those that are achieved by taking a link product~\cite{qcombs08} with a process $P_{X' Y' S T|XY S'T'} \coloneqq P_{\tilde{X} \tilde{Y}|\tilde{S} \tilde{T}}$ which satisfies all of the Bell inequalities.



\subsubsection{Cardinality-based types for boxes and for operations} 

\begin{defn}
We define the \term{type of a common-cause box} as the collection of cardinalities of the setting and outcome variables,
and we denote the type of a resource $R$ as $[R]$.
We introduce the following notational convention to specify types: the cardinalities of the setting variables for all $n$ wings and the cardinalities of the outcome variables for all $n$ wings are specified as the bottom and top rows, respectively, of a $2{\times }n$ matrix.
  For example, for the 2-wing common-cause box depicted in Fig.~\ref{fig:CandNCboxes}, the type is 
{\xyst},  where $|O|$ denotes the cardinality of a variable $O$.  
\end{defn}
If we further particularize to  the CHSH scenario, where the cardinalities of both setting and outcome variables is 2, then the type is {\twotwotwotwo}.

\begin{defn}
Consider a source resource $R_1$ of type $[R_1]$ and a target resource $R_2$ of type $[R_2]$.  We denote the \term{type of an operation} $\tau$ taking any resource of type $[R_1]$ to any resource of type $[R_2]$ by ${[\tau] \coloneqq [R_1]\rightarrow [R_2]}$, and we denote the set of all free operations of type $[R_1]\rightarrow [R_2]$ by $\underset{\scriptscriptstyle ^{[R_1]\rightarrow [R_2]}}{\LOSR}$. 
\end{defn} 

Note that operations --- including \emph{free} operations --- can change the type of a resource, and hence specifying the type of an \emph{operation} requires specifying both the type of the initial resource as well as the type of the final resource. This reflects the fact that we have not restricted the cardinalities of $X',Y',S'$, or $T'$ in Eq.~\eqref{actionfreeop} in any way. 

\subsubsection{Locally deterministic operations and local symmetry operations} 

It is valuable to consider two special finite-cardinality subsets of {\LOSR} operations:  those that are deterministic and those that are invertible. Note that the invertible {\LOSR} operations are included among the deterministic ones because any indeterminism in the operation would be an obstacle to invertibility.

\begin{defn}
\label{defn:LDO}  An {\LOSR} operation is in the set \term{\LDO} (i.e., it is a \term{locally deterministic operation}) if and only if the conditional probabilities $P_{X' Y' S T|XY S'T'}$ which define the operation
take values in $\{0,1 \}$ for all values of $X'$,$Y'$,$S$,$T$,$X$,$Y$,$S'$ and $T'$.
 We denote the complete set of {\LDO} operations of type $[R_1]\rightarrow [R_2]$ by $\underset{\scriptscriptstyle ^{[R_1]\rightarrow [R_2]}}{\LDO}$.
\end{defn}

Deterministic {\LOSR} operations---i.e., {\LDO} operations---\emph{factorize} in the sense that every {\LDO} operation can be expressed as the product of two local deterministic operations such that
\begin{align} \label{ExtremalOpLOSR}
&P^{\rm det}_{X' Y' S T|XY S'T'}= P^{\rm det}_{X' S|X S'}P^{\rm det}_{Y' T|Y T'}.
\end{align}
This follows from the fact that the deterministic dependences preclude any dependence on the shared random variables $\lambda_A$ and $\lambda_B$ in Eq.~\eqref{LC4}, which then reduces to Eq.~\eqref{ExtremalOpLOSR}.
Furthermore, the no retrocausation assumption of Eq.~\eqref{freeopNoRetro} implies that these deterministic dependencies are of the following form:
\begin{align}\begin{split} \label{ExtremalOpLOSRb}
P^{\rm det}_{X' S|X S'} = \delta_{S,f_A(S')} \delta_{X',g_A(X,S')},\\
P^{\rm det}_{Y' T|Y T'} = \delta_{T,f_B(T')} \delta_{Y',g_B(Y,T')} 
\end{split}\end{align}
for some functions $f_A$, $g_A$, $f_B$ and $g_B$.
Specifically, on the left wing, $S$ is generated deterministically as a function of $S'$ (the pre-processing) and $X'$ is generated deterministically as a function of $X$ and $S'$ (the post-processing, which is setting-dependent), and similarly for the right wing.   A generic bipartite locally deterministic operation is depicted in Fig.~\ref{det}.

\begin{figure}[htb!]
 \centering
  \includegraphics[width=225pt]{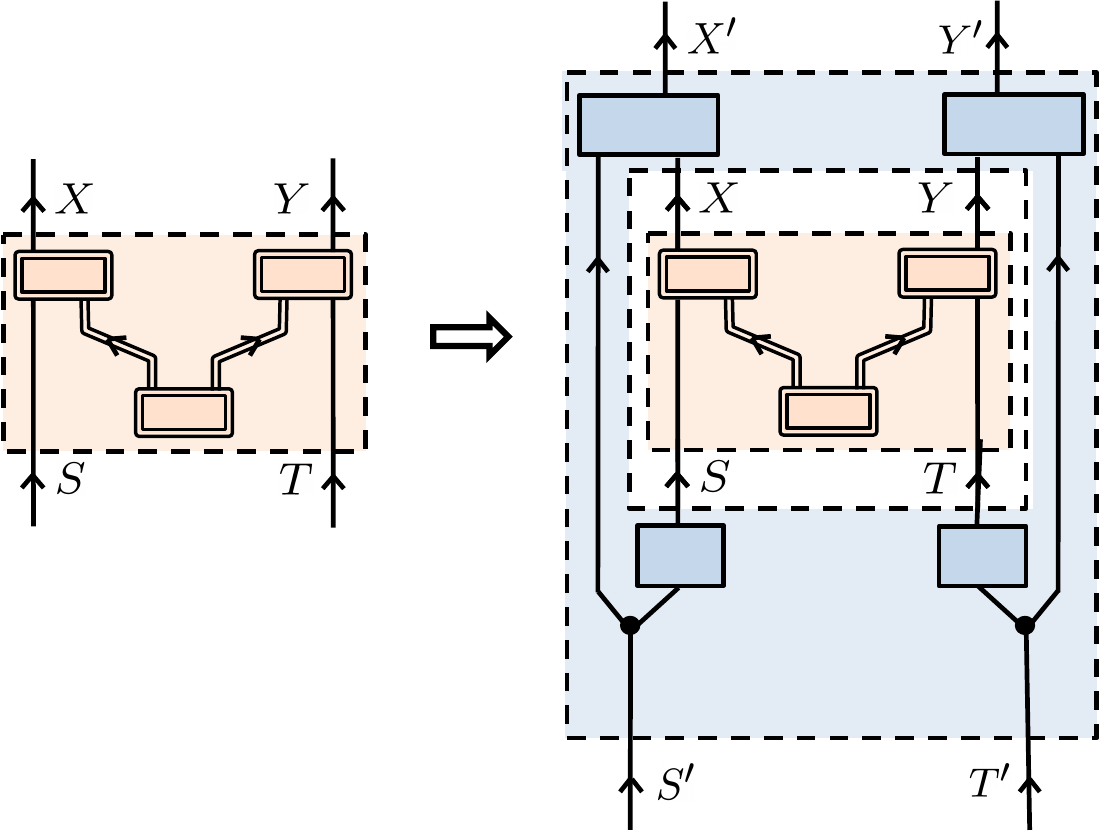}
 \caption{A generic bipartite locally deterministic operation $P_{X'Y'ST|XYS'T'} \in$ {\LDO} consists of a product of deterministic operations at each wing.  The black dots in the figure represent classical copy operations, and the output variables for each gate are deterministic functions of the input variables for that gate.
   }
 \label{det}
\end{figure}

The cardinality of the set {\LDO} for a given type can be easily deduced.  Let $|S|$, $|X|,\dots$ denote the cardinalities of the variables $S$, $X,\dots$ The total number of possibilities for the function $g_A$ is $|X'|^{|X| \cdot |S'|}$, and the total number of possibilities for the function $f_A$ is $|S|^{|S'|}$, so that the total number of possibilities for a deterministic operation on the left wing is  $\left(|S| \cdot |X'|^{|X|}\right)^{|S'|}$.   An analogous decomposition holds for the deterministic operations on the right wing, and the total number of possibilities for these is $\left(|T|\cdot |Y'|^{|Y|}\right)^{|T'|}$. 
Consequently, the cardinality of the set {\LDO} in this bipartite case is
\begin{align}\label{eq:LDOcount}
|\LDO| = \left(|S| \cdot |X'|^{|X|}\right)^{|S'|} \, \left(|T|\cdot |Y'|^{|Y|}\right)^{|T'|}.
\end{align}

The other important subset of {\LOSR} are those type-preserving operations which are  {\em invertible} (and hence also deterministic).
We refer to this subset of {\LOSR} operations as the \emph{local symmetry operations} and denote it {\LSO}.  
\begin{defn}
\label{defn:LSO} 
The set \term{\LSO} (i.e., the \term{local symmetry operations}) is the subset of type-preserving operations in {\LDO} that are invertible. 
\end{defn}

Every local symmetry operation, $P^{\rm sym}_{X' Y' S T|XY S'T'}$, has the form of a locally deterministic operation, $P^{\rm det}_{X' Y' S T|XY S'T'}$, specified in Eqs.~\eqref{ExtremalOpLOSR}-\eqref{ExtremalOpLOSRb}. That is, 
\begin{align} \label{SymOps}
&P^{\rm sym}_{X' Y' S T|XY S'T'}= P^{\rm sym}_{X' S|X S'}P^{\rm sym}_{Y' T|Y T'}.
\end{align}
where
\begin{align}\begin{split} \label{SumOpsb}
P^{\rm sym}_{X' S|X S'} = \delta_{S,f_A(S')} \delta_{X',g_A(X,S')},\\
P^{\rm sym}_{Y' T|Y T'} = \delta_{T,f_B(T')} \delta_{Y',g_B(Y,T')},
\end{split}\end{align}
but where $f_A$, $g_A$ are such that $P^{\rm sym}_{X' S|X S'}$ defines an invertible map from $(X, S')$ to $(X',S)$, and where $f_B$ and $g_B$ are such that $P^{\rm sym}_{Y' T|Y T'}$ defines an invertible map from $(Y,T')$ to $(Y', T)$. Unlike general {\LDO} operations, {\LSO} operations are always type-preserving, and hence the type {\xystprime} always matches the type {\xyst}.

Note that an exchange of the parties is a symmetry operation (i.e., invertible), but it cannot be implemented by {\em local} operations, and so it is not part of {\LSO}. 

As a final remark, notice that the set of LSO operations forms a \emph{group}. This follows from the fact that the properties of being deterministic and invertible persist under composition, and that the inverse of every LSO operation is in LSO.   This group 
is generated by the permutations of the value of a setting variable, and the permutations of the value of an outcome variable, where the choice of the latter permutation might depend also on the value of the setting variable on the same wing.

In the bipartite case, the ${\LSO}$ group is a finite group of order\footnote{The order of a group is the cardinality of the set of group elements, i.e., the order of the ${\LSO}$ group quantifies the total number of \emph{invertible} {\LDO} operations.}
\begin{align}
|{\LSO}|=(|S|!)\cdot (|X|!)^{|S|}\cdot (|T|!)\cdot (|Y|!)^{|T|},
\end{align}
corresponding to the $(|S|!)$ relabelings for the settings of the left wing, multiplied by the $(|X|!)$ relabelings of outcomes for each of the $|S|$ different settings, and similarly for the right wing. The group can be generated by the relabelings of only adjacent settings or outcomes, and hence the ${\LSO}$ group admits a natural representation in terms of ${(|S|{-}1)+|S|(|X|{-}1)+(|T|{-}1)+|T|(|Y|{-}1)}$ generators (see Ref.~\citep[App.~B]{Rosset2014classifying}).

For a concrete example, consider the operations transforming type {\twotwotwotwo} into type {\twotwotwotwo}. Throughout this work, we index the values a variable $X$ can take as $x\in \{0{,}{...},|X|-1\}.$ Accordingly, in the {\twotwotwotwo} scenario, $X,Y,S,T$ take values in $\{0,1\}$. Using this notation, the group of LSO can be generated explicitly by the four operations which interconvert $P_{XY|ST}(x,y|s,t)$ with either $P_{XY|ST}(x,y|s{\oplus}1,t)$, $P_{XY|ST}(x,y|s,t{\oplus}1)$, $P_{XY|ST}(x{\oplus}s,y|s,t)$, or $P_{XY|ST}(x,y{\oplus}t|s,t)$, where  $\oplus$ denotes summation modulo two.\footnote{A second generating set of operations for this group is given by $\tau_1{,}{...},\tau_6$ defined in Proposition~\ref{prop:symmetrypartitioning}.} One can readily verify~\cite{Seress2003} that the order of this group is 64.

\sloppy Suppose that a resource $R$ is represented as a real-valued vector $\vec{R}$ of conditional probabilities $P_{XY|ST}(xy|st)$, or any linear transformation thereof (such as the representation in terms of correlators used in Section~\ref{twomonotones}). ${\LSO}$ operations act as invertible linear maps on such a representation.
 Assuming $f$ is a linear function over $\vec{R}$, then its action can be represented as $ f(\vec{R}) = \vec{f} \cdot \vec{R}$ for some $\vec{f}$.
Hence, it is equally as meaningful to speak about $\vec{f}$ being transformed under ${\LSO}$ group elements as it is to speak about $\vec{R}$ being so transformed.  The action of an  ${\LSO}$ operation on $\vec{f}$ can be thought of as applying the \emph{inverse transformation} to $\vec{R}$, i.e.,
\begin{align}\label{resourcesvsfunctionals}
(\pi \vec{f}) \cdot \vec{R} = f \cdot (\pi^{\text{-}1} \vec{R}).
\end{align}
Note that many \emph{type-changing} ${\LOSR}$ operations are equally well-defined as transformations on linear functions. The critical requirement is that the operation be \emph{left-invertible}, i.e., it should act as an \emph{injective} function on the set of conditional probabilities. See Refs.~\cite{Pironio2005,Rosset2014classifying,Rosset2019CausalInequalities} for discussions on the topic of converting linear functions (and Bell inequalities in particular).

\subsection{Convexity of the set of free operations} \label{detsuf}

We now show that the set of free operations is convex, and that the extremal elements are deterministic, and enumerable for fixed type of the source resource and of the target resource.  This implies that the set of free operations mapping from a given source resource type to a given target resource type is a polytope.

We begin by proving convexity.
\begin{prop} \label{lem:convexity}
The set {\LOSR} is convex, i.e., if $\tau_0 \in \LOSR$ and $\tau_1 \in \LOSR$, then ${{w\tau_0+(1-w)\tau_1} \in \LOSR}$ for $0 \le w \le 1$.
\end{prop}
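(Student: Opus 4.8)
The plan is to exploit the fact that a convex combination is itself a form of shared randomness: mixing two free operations with weights $w$ and $1-w$ is the same as tossing a shared coin that tells both wings which of two local-operation protocols to run. Since the outcome of the coin is perfectly correlated classical data shared between the wings, it can be folded into the shared variable of the decomposition in Eq.~\eqref{LC4}, so the mixture again admits an \LOSR\ decomposition.

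Concretely, I would first write out the assumed decompositions. Since $\tau_0,\tau_1 \in \LOSR$, each admits a decomposition of the form of Eq.~\eqref{LC4}, say with shared distributions $P^{(0)}_{\Lambda_A \Lambda_B}$ and $P^{(1)}_{\Lambda_A \Lambda_B}$ and local operations $P^{(i)}_{X'S|XS'\Lambda_A}$, $P^{(i)}_{Y'T|YT'\Lambda_B}$ for $i \in \{0,1\}$, each satisfying the no-retrocausation conditions \eqref{NRCs}. I would then construct an explicit decomposition witnessing that the mixture is \LOSR, by enlarging the shared variables to $\Lambda_A^{\rm new} \coloneqq (C,\Lambda_A)$ and $\Lambda_B^{\rm new} \coloneqq (C,\Lambda_B)$, where $C \in \{0,1\}$ is a new ``selector'' bit that is perfectly correlated across the wings. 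The new shared distribution is $P^{\rm new}_{\Lambda_A^{\rm new}\Lambda_B^{\rm new}}\big((c,\lambda_A),(c',\lambda_B)\big) \coloneqq \delta_{c,c'}\, p_c\, P^{(c)}_{\Lambda_A \Lambda_B}(\lambda_A\lambda_B)$ with $p_0 = w$ and $p_1 = 1-w$, and the new local operations simply read the selector and apply the corresponding protocol, $P^{\rm new}_{X'S|XS'(c,\lambda_A)} \coloneqq P^{(c)}_{X'S|XS'\lambda_A}$ and analogously on the right wing. Summing over the enlarged shared variables, the factor $\delta_{c,c'}$ collapses the two copies of the selector, and the remaining sum factorizes into $\sum_c p_c\, \tau_c = w\tau_0 + (1-w)\tau_1$, as desired.

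The only nontrivial step—and the one I expect to be the main obstacle—is verifying that this constructed decomposition genuinely lies in \LOSR, i.e., that the new local operations still obey the no-retrocausation conditions \eqref{NRCs}. This follows because conditioning on a fixed selector value $c$ reduces the new left-wing operation to $P^{(c)}_{X'S|XS'\Lambda_A}$, which by hypothesis satisfies $P^{(c)}_{S|XS'\Lambda_A} = P^{(c)}_{S|S'\Lambda_A}$; marginalizing over $X'$ then gives $P^{\rm new}_{S|XS'(c,\lambda_A)} = P^{\rm new}_{S|S'(c,\lambda_A)}$, and the same argument applies on the right wing. I would also emphasize that the selector $C$ must be \emph{shared} rather than independently sampled at each wing, so that the wings apply matching protocols; this is exactly what the perfect correlation $\delta_{c,c'}$ in $P^{\rm new}_{\Lambda_A^{\rm new}\Lambda_B^{\rm new}}$ enforces, and it is legitimate precisely because shared randomness is available by assumption. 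With these checks in place, $w\tau_0 + (1-w)\tau_1$ satisfies Definition~\ref{defnLOSR}, completing the proof.
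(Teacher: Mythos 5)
Your proof is correct and follows essentially the same route as the paper's: the paper likewise introduces a shared binary selector variable, distributed to both wings and absorbed into the shared randomness, so that the local processings implement $\tau_b$ conditioned on its value. Your version simply makes the construction explicit (the $\delta_{c,c'}$ correlation and the verification of the no-retrocausation conditions), which the paper leaves implicit.
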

This follows from the fact that the resources required to achieve such a mixing are achievable using {\LOSR}.  Suppose $\beta$ is a binary variable that decides whether $\tau_0$ or $\tau_1$ will be implemented.  It suffices to imagine that $\beta$ is sampled from a distribution $P_{\beta}$ where $P_{\beta}(0)=w$, that it is copied and distributed to both wings (with a copy sent down the side-channel at each wing), and that the local processings that are implemented on each wing are made to depend on $\beta$ (chosen so that if $\beta=b$, then $\tau_b$ is implemented overall).  Because $\beta$ can be incorporated into the definition of the shared randomness, the procedure just described is itself achievable using {\LOSR}.

The convexity of the set of {\LOSR} operations is crucial for the technique we develop in the next section to answer questions about resource conversion. Recognizing the full potential of this convexity is one of the key contributions of our work. In Appendix~\ref{comparisonsub}, we discuss convexity further, in particular noting that previous formulations of {\LOSR} did not seem to recognize the physical realizability of convex mixing within {\LOSR}, but rather imposed convexity mathematically.

Next, we highlight features of the extremal free operations.
\begin{prop} \label{lem:detextremal}
The set of convexly extremal operations in {\LOSR} are precisely the subset of operations comprising {\LDO}, namely the deterministic {\LOSR} operations.
\end{prop}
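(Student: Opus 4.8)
The plan is to prove the two set inclusions $\LDO \subseteq \mathrm{ext}(\LOSR)$ and $\mathrm{ext}(\LOSR) \subseteq \LDO$ separately, where $\mathrm{ext}(\cdot)$ denotes the set of convexly extremal elements, and then conclude they are equal.

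The inclusion $\LDO \subseteq \mathrm{ext}(\LOSR)$ is the easy direction, which I would dispatch first. An operation $\tau \in \LDO$ is a conditional probability distribution $P_{X'Y'ST|XYS'T'}$ whose every entry lies in $\{0,1\}$ by Definition~\ref{defn:LDO}, whereas any $\tau_0,\tau_1 \in \LOSR$ have all entries in $[0,1]$. If $\tau = w\tau_0 + (1-w)\tau_1$ with $0<w<1$, then at any entry where $\tau=0$ both $\tau_0$ and $\tau_1$ must vanish by nonnegativity, and at any entry where $\tau=1$ both must equal $1$ since each is at most $1$. Hence $\tau_0=\tau_1=\tau$, so $\tau$ is extremal. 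This uses nothing about the internal structure of $\LOSR$ beyond $\LDO \subseteq \LOSR$ and the fact that its elements are genuine conditional distributions.

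For the converse inclusion $\mathrm{ext}(\LOSR) \subseteq \LDO$, the plan is to prove the stronger statement that $\LOSR = \mathrm{conv}(\LDO)$, i.e.\ every {\LOSR} operation is a convex combination of {\LDO} operations; the extremal points of the convex hull of a (finite, by Eq.~\eqref{eq:LDOcount}) set lie within that set, which yields the inclusion, and together with the first inclusion also the polytope claim. Starting from the canonical form Eq.~\eqref{LC4prime}, I read the sum over $\lambda$ as a convex combination, with weights $P_\Lambda(\lambda)$, of the product operations $P_{X'S|XS'\Lambda=\lambda}\,P_{Y'T|YT'\Lambda=\lambda}$. It therefore suffices to decompose a single such product of local stochastic maps into {\LDO} operations, and for that it suffices to show that each single-wing no-retrocausal stochastic map $P_{X'S|XS'}$ (satisfying the constraint in Eq.~\eqref{NRCs}) is a convex combination of single-wing \emph{deterministic} maps of the form $\delta_{S,f_A(S')}\delta_{X',g_A(X,S')}$ appearing in Eq.~\eqref{ExtremalOpLOSRb}. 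Once both wings are so decomposed, multiplying the two convex combinations and invoking the factorization Eq.~\eqref{ExtremalOpLOSR} exhibits the product as a convex combination over pairs of deterministic local maps, each pair being precisely an {\LDO} operation.

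The main obstacle is this single-wing decomposition lemma, because the no-retrocausation condition $P_{S|XS'}=P_{S|S'}$ couples the rows of the map belonging to different input values of $X$ and so forbids decomposing each row independently. I would handle it by exploiting the causal content of the constraint: $P_{S|XS'}=P_{S|S'}$ states that $S$ is conditionally independent of $X$ given $S'$, so the map is realized by a two-stage local process in which $S$ is produced from $S'$ alone and $X'$ is then produced from $(X,S')$, both driven by a single local random variable $\mu$. Writing $P_{S|S'}$ as a convex combination of deterministic functions $s=f_\mu(s')$ (the stochastic maps $S'\to S$ forming a product of simplices whose vertices are exactly such functions) and, conditioned on each $\mu$, writing the resulting $X'$-generation as a convex combination of deterministic functions $x'=g_\mu(x,s')$, assembles the claimed mixture of maps of the form in Eq.~\eqref{ExtremalOpLOSRb}. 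An alternative is the polytope argument: the single-wing constraints (nonnegativity, normalization, and no-retrocausation written as linear equalities) define a polytope, and any non-deterministic feasible point can be shown not to be a vertex by exhibiting a $\pm\epsilon$ perturbation supported within the support of the map; the genuinely technical step there is constructing a perturbation that respects no-retrocausation, which forces any reshuffling of the $S$-marginal to be applied identically across all values of $X$.
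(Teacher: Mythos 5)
Your proposal is correct and follows essentially the same route as the paper's proof of Proposition~\ref{lem:detextremal}: Fine's argument, i.e., absorbing all local indeterminism into the shared randomness and exhibiting a generic {\LOSR} operation as a convex mixture of products of local deterministic maps. The one place you add something beyond the paper is in being explicit about the two points the paper leaves implicit: the easy inclusion $\LDO\subseteq\mathrm{ext}(\LOSR)$ (via the $\{0,1\}$-entries argument), and---more substantively---the verification that the deterministic single-wing maps appearing in the decomposition can be chosen to respect the no-retrocausation constraint of Eq.~\eqref{NRCs}, so that they really are of the form in Eq.~\eqref{ExtremalOpLOSRb} and hence genuinely lie in {\LDO} rather than merely being arbitrary deterministic maps from $(X,S')$ to $(X',S)$. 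The paper simply says to ``run Fine's argument'' on the composite variables of Eq.~\eqref{LC4primeprime} and does not address this point; your two-stage factorization $P_{X'S|XS'\Lambda}=P_{X'|XSS'\Lambda}P_{S|S'\Lambda}$, decomposing each factor separately into deterministic functions, is the right way to close that small gap.
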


This proposition is a minor generalization of Fine's argument~\cite{FinePRL}, since the latter states that {\em locally deterministic} models
 can generate any conditional distribution that arises in a 
  {\em locally indeterministic} model.
As in Fine's argument, here too any indeterminism in the local operations can be absorbed into the shared randomness, and hence allowing indeterministic local operations provides no more generality than considering only deterministic local operations.
\begin{proof}
It suffices to run Fine's argument~\cite{FinePRL} for the composite variables $\tilde{S}, \tilde{T}, \tilde{X}$ and $\tilde{Y}$.
To see this explicitly, note that the constituent factors in 
the expression for an LOSR operation in Eq.~\eqref{LC4primeprime} can be rewritten as 
\begin{align*}
&P_{\tilde{X}|\tilde{S} \Lambda}(\tilde{x}|\tilde{s} \lambda)=\sum_{\lambda_A\in\Lambda_A} P^{\rm det,\lambda_A}_{\tilde{X}|\tilde{S}}(\tilde{x}|\tilde{s}) P_{\Lambda_A|\Lambda}(\lambda_A|\lambda),\\
&P_{\tilde{Y}|\tilde{T} \Lambda}(\tilde{y}|\tilde{t} \lambda)=\sum_{\lambda_B\in\Lambda_B} P^{\rm det,\lambda_B}_{\tilde{Y}|\tilde{T}}(\tilde{y}|\tilde{t}) P_{\Lambda_B|\Lambda}(\lambda_B|\lambda),
\end{align*}
where for each value of $\lambda_A$, the conditional $P^{\rm det,\lambda_A}_{\tilde{X}|\tilde{S}}$ describes a deterministic operation on the left wing specifying the value of $\tilde{X} = (X',S)$ for every value of $\tilde{S} = (X,S')$, and similarly for {$P^{\rm det,\lambda_B}_{\tilde{Y}|\tilde{T}}$} on the right wing.  
Plugging these back into Eq.~\eqref{LC4primeprime}, we have that 
\begin{align} \label{LC4det}
&P_{\tilde{X}\tilde{Y}|\tilde{S}\tilde{T}}(\tilde{x}\tilde{y}|\tilde{s}\tilde{t})
=
\\\nonumber& \sum_{\lambda_A,\lambda_B}
P^{\rm det,\lambda_A}_{ \tilde{X}|\tilde{S}}( \tilde{x}|\tilde{s}) P^{\rm det,\lambda_B}_{\tilde{Y}|\tilde{T}}(\tilde{y}|\tilde{t})
P_{\Lambda_A\Lambda_B}(\lambda_A\lambda_B),
\end{align}
where we have defined 
$P_{\Lambda_A\Lambda_B}(\lambda_A\lambda_B)\coloneqq \sum_{\lambda\in\Lambda}P_{\Lambda_A|\Lambda}(\lambda_A|\lambda)P_{\Lambda_B|\Lambda}(\lambda_B|\lambda)P_{\Lambda}(\lambda).$ Eq.~\eqref{LC4det} shows that a generic indeterministic {\LOSR} operation can always be decomposed into a convex combination of products of deterministic operations on each wing.
Not only is it the case that the convexly extremal {\LOSR} operations are \emph{included} within the {\LDO} operations, but there is actually precise equality between these two sets: all {\LDO} operations are convexly extremal because every {\LDO} operation is a deterministic map.
\end{proof}

What we have shown above is that any element of $\underset{\scriptscriptstyle ^{[R_1]\rightarrow [R_2]}}{\LOSR}$ admits of a convex decomposition into elements of  $\underset{\scriptscriptstyle ^{[R_1]\rightarrow [R_2]}}{\LDO}$. 
This implies the following useful geometric fact:

\begin{prop}[Polytope of free operations]\label{lem:transpolytope}\hspace{0pt}\\
The set of all free operations of a given type is a polytope whose vertices are the locally deterministic operations of that type, 
\begin{align}\label{eq:convexityoftransformations}
\underset{\scriptscriptstyle ^{[R_1]\rightarrow [R_2]}}{\LOSR} = \operatorname{ConvexHull}{\left(\underset{\scriptscriptstyle ^{[R_1]\rightarrow [R_2]}}{\LDO}\right)}.
\end{align}
\end{prop}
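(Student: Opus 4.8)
The plan is to assemble the statement from the two preceding propositions together with the finiteness of $\LDO$, since essentially all of the analytic work has already been carried out. First I would establish the set equality in Eq.~\eqref{eq:convexityoftransformations} by proving the two inclusions separately. The inclusion $\operatorname{ConvexHull}(\LDO) \subseteq \LOSR$ is immediate: every $\LDO$ operation is by definition an $\LOSR$ operation, and Proposition~\ref{lem:convexity} guarantees that $\LOSR$ is convex, so it must contain every convex combination of its own elements, in particular every convex combination of $\LDO$ operations. The reverse inclusion $\LOSR \subseteq \operatorname{ConvexHull}(\LDO)$ is precisely what was exhibited in the discussion following Proposition~\ref{lem:detextremal}: Eq.~\eqref{LC4det} rewrites an arbitrary (possibly indeterministic) $\LOSR$ operation as a convex combination $\sum_{\lambda_A,\lambda_B} P^{\rm det,\lambda_A}_{\tilde X|\tilde S} P^{\rm det,\lambda_B}_{\tilde Y|\tilde T}\, P_{\Lambda_A\Lambda_B}(\lambda_A\lambda_B)$ of products of local deterministic operations, and each such product is an element of $\LDO$ by Eq.~\eqref{ExtremalOpLOSR}. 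Combining the two inclusions yields the asserted equality.

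Second, I would argue that this set is genuinely a polytope. By Eq.~\eqref{eq:LDOcount} the set $\LDO$ of a fixed type is finite, and the convex hull of finitely many points in the ambient real vector space of conditional distributions is, by definition, a bounded polytope. Hence $\LOSR$ of that type is a polytope.

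Third, and this is the one point requiring a little care, I would identify the vertices exactly. A priori the equality $\LOSR = \operatorname{ConvexHull}(\LDO)$ guarantees only that the vertices form a \emph{subset} of $\LDO$, since the extreme points of the convex hull of a finite set are always among that set. To upgrade this to the claim that the vertices are \emph{precisely} the $\LDO$ operations, I would invoke Proposition~\ref{lem:detextremal}, which states that the convexly extremal elements of $\LOSR$ are exactly the $\LDO$ operations. Since every $\LDO$ operation is thereby extremal in $\LOSR = \operatorname{ConvexHull}(\LDO)$, no $\LDO$ operation can be written as a nontrivial convex combination of the others, so each is an irredundant vertex; conversely every vertex is extremal and lies in $\LOSR$, hence is an $\LDO$ operation. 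This two-way containment pins down the vertex set as exactly $\LDO$.

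The proof carries no serious obstacle, as Propositions~\ref{lem:convexity} and \ref{lem:detextremal} bear all the weight. The only subtlety worth flagging is the distinction between ``vertices $\subseteq \LDO$'' (automatic for the convex hull of a finite set) and ``vertices $= \LDO$,'' which genuinely relies on the fact---supplied by Proposition~\ref{lem:detextremal}---that \emph{every} locally deterministic operation is convexly extremal rather than merely a convex combination of other such operations, so that none of the finitely many $\LDO$ points is redundant.
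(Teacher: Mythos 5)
Your proposal is correct and follows essentially the same route as the paper: the paper derives Proposition~\ref{lem:transpolytope} directly from the convex decomposition in Eq.~\eqref{LC4det} (established in the proof of Proposition~\ref{lem:detextremal}) together with the convexity of Proposition~\ref{lem:convexity} and the finiteness count of Eq.~\eqref{eq:LDOcount}. Your additional care in distinguishing ``vertices $\subseteq \LDO$'' from ``vertices $= \LDO$'' is a worthwhile clarification that the paper leaves implicit, but it does not constitute a different argument.
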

The number of vertices of this polytope corresponds to the cardinality of the set of {\LDO} operations, as given in Eq.~\eqref{eq:LDOcount}.

\section{Resource theory preliminaries} \label{rtprelim}

A central question in any resource theory is whether one resource can be converted to another via the free operations. Many notions of conversion are studied: single-copy deterministic conversion, single-copy indeterministic conversion (where the probability of success need only be nonzero), multi-copy conversion (where one is given more than one copy of the resource), asymptotic conversion (where one is given arbitrarily many copies), and catalytic conversion (where one has access to another resource that must be returned intact after the conversion).  We here focus on single-copy deterministic conversion. 

As noted earlier, we denote the application of an operation $\tau$ to a resource $R$ by $\tau \circ R$.
 If $R_1$ can be converted to $R_2$ by free operations, one writes $R_1 \conv R_2$, otherwise one writes $R_1 \nconv R_2$.
Explicitly,
\begin{align*}
& R_1\conv R_2 \quad\text{denotes\; that}\quad\exists\; {\tau\in\underset{\scriptscriptstyle ^{[R_1]\conv [R_2]}}{\LOSR}}
\\&\quad\text{such that}\quad R_2 = \tau \circ R_1, \vphantom{\tau\in\underset{\scriptscriptstyle ^{[R_1]\conv [R_2]}}{\LOSR}}\\
\text{and}\quad
&R_1\nconv R_2 \quad\text{denotes\; that}\quad\nexists\; {\tau\in\underset{\scriptscriptstyle ^{[R_1]\conv [R_2]}}{\LOSR}}
\\&\quad\text{such that}\quad R_2 = \tau \circ R_1.
\end{align*}

If one can determine, for any pair of resources $R_1$ and $R_2$, whether $R_1$ can be converted to $R_2$ using a free operation, then one can determine the \term{pre-order} over all resources that is induced by the conversion relation. A pre-order, by definition, is a transitive and reflexive binary relation between resources. The conversion relation is reflexive because the identity operation is free and maps a resource to itself, while it is transitive because if $R_1\conv R_2$ and $R_2 \conv R_3$ then $R_1 \conv R_3$.

There are four possible ordering relations that might hold between a pair of resources.\begin{flalign*}
&R_1\text{ is \term{strictly above} }R_2\text{ if:}&
\!\!\left(\!\!\begin{array}{r}R_1\conv R_2 \\ \!\text{and }R_2\nconv R_1\end{array}\!\!\right)
\!,\\
&R_1\text{ is \term{strictly below} }R_2\text{ if:}&
\!\!\left(\!\!\begin{array}{r}R_1\nconv R_2 \\ \!\text{and }R_2\conv R_1\end{array}\!\!\right)
\!,\\
&R_1\text{ is \term{incomparable} to }R_2\text{ if:}&
\!\!\left(\!\!\begin{array}{r}R_1\nconv R_2 \\ \!\text{and }R_2\nconv R_1\end{array}\!\!\right)
\!,\\
&R_1\text{ is \term{equivalent} to }R_2\text{ if:}&
\!\!\left(\!\!\begin{array}{r}R_1\conv R_2 \\ \!\text{and }R_2\conv R_1\end{array}\!\!\right)
\!.
\end{flalign*}
If $R_1$ is either strictly above or strictly below $R_2$, we say that $R_1$ and $R_2$ are \term{strictly ordered}.

We pause to comment on the notion of equivalence of resources.  By definition, if $R_1$ is equivalent to $R_2$ then the conversion from one to the other is free in both directions,
\begin{align*}
\begin{array}{r}\exists\; {\tau_1\in\underset{\scriptscriptstyle ^{[R_1]\conv [R_2]}}{\LOSR}}\quad\text{such that}\quad R_2 = \tau_1 \circ R_1, \\
\text{and }\;\exists\; {\tau_2\in\underset{\scriptscriptstyle ^{[R_2]\conv [R_1]}}{\LOSR}}\quad\text{such that}\quad R_1 = \tau_2 \circ R_2.\end{array}
\end{align*}
It need not be the case, however, that either of the free operations $\tau_1$ or $\tau_2$ is invertible, nor that one is the inverse of the other.  For instance,  if $R_1$ and $R_2$ are both free resources, then $\tau_1$ can be the operation which discards $R_2$ and prepares $R_1$, while $\tau_2$ can be the operation which discards $R_1$ and prepares $R_2$. 

The conversion relation between resources implies a corresponding conversion relation between \term{equivalence classes} of resources (relative to the equivalence relation defined above), 
 wherein for any two equivalence classes, they are either strictly ordered or incomparable.  The conversion relation between equivalence classes is therefore antisymmetric and describes a \term{partial order} relation rather than a pre-order relation.   One can therefore conceptualize the project of characterizing the pre-order as a characterization of the equivalence classes and of the partial order that holds among these.  In this work, we do not provide a characterization of the equivalence classes, and so our focus will be on directly characterizing features of the pre-order of resources.
 
 \subsection{Global features of a pre-order}
\label{sec:oraclelimitations}

To have a complete understanding of deterministic single-copy conversion in a resource theory, one must have an understanding of the pre-order that this conversion relation defines.   In this section, we describe some of the basic features that characterize pre-orders. 

Perhaps the most basic question about a pre-order of resources is whether or not it is \term{totally pre-ordered}, meaning that every pair of elements in the pre-order is strictly ordered or equivalent (i.e., the pre-order has no incomparable elements).
Equivalently, we say that a pre-order is totally pre-ordered if and only if the partial order over equivalence classes that it defines is totally ordered (i.e., has no incomparable elements).

If there do exist incomparable resources, one can ask if the binary relation of incomparability is transitive, in which case the pre-order
 is termed \term{weak}.  

A \term{chain} is a subset of the pre-order in which every pair of elements is strictly ordered.
The \term{height} of a pre-order is the cardinality of the largest chain contained therein.
An \term{antichain} is a subset of the pre-order in which every pair of elements is incomparable. 
The \term{width} of a pre-order is the cardinality of the largest antichain contained therein. 

Other important properties of the pre-order refer to the \term{interval} between a pair of resources, where $R$ is in the interval of $R_1$ and $R_2$ if and only if both $R_1\conv R$ and $R\conv R_2$. 
If the number of equivalence classes which lie in the interval between a pair of resources is \emph{finite} for \emph{every} pair of inequivalent resources, then the pre-order is said to be \term{locally finite}, otherwise it is said to be \term{locally infinite}. 

\subsection{Features of resource monotones}\label{sec:intotomonotones}
 A resource monotone is a real-valued function\footnote{Technically, it is an extended-real-valued function, where the set of extended real numbers is obtained by adding $-\infty$ and $+\infty$ to the set of real numbers.} over resources whose value cannot increase under any free operation in the resource theory.  Formally,
\begin{samepage}\begin{defn}\label{def:monotone} A function $M$ from resources to the  reals  is called a \term{resource monotone} if and only if
\begin{subequations}\begin{align}
 &R_1 \conv R_2\quad \;\textrm{implies}\;\quad M(R_1) \ge M(R_2),\\
\shortintertext{or equivalently,}
 &M(R_1) < M(R_2)\quad\;\textrm{implies}\;\quad R_1 \nconv R_2.
\end{align}\end{subequations}
\end{defn}\end{samepage}
\noindent In other words, a resource monotone is an order-preserving map from the pre-order of resources to the total order of  real numbers. Whenever some monotone $M$ and a pair of resources $R_1$ and $R_2$ satisfies $M(R_1) < M(R_2)$, we will say that the monotone $M$ {\em witnesses} the fact that $R_1 \nconv R_2$.

If the pre-order is not totally pre-ordered (i.e., if there exist incomparable resources), then no single monotone can completely characterize the pre-order.  A complete characterization may be achieved, however, by a family of monotones.
Specifically, a family of monotones $\{ M_i \}_i$ is said to be \term{complete} if it completely characterizes the pre-order, that is, if 
\begin{align} \label{completeset}
\begin{split}&\forall R_1,R_2: {R_1 \conv R_2}\;\;\\
&\quad\textrm{if and only if}\;\;\,\forall i: M_i(R_1) \ge M_i(R_2).
\end{split}\end{align}
A complete set of monotones is therefore an alternative way of describing the pre-order.  

Strictly speaking, monotones should be functions from resources {\em of any type in the resource theory} to the reals. However, many natural functions
are only defined for particular types of resources. For instance, the function 
$P_{XY|ST}(00|00)P_{XY|ST}(11|01)+P_{XY|ST}(20|02)$
is only defined for common-cause boxes where the cardinalities of $X$ and $T$ are three. 
To accommodate this, we define the notion of a \term{monotone relative to a set $\boldsymbol{S}$}: $M$ is a monotone relative to a set $\boldsymbol{S}$ of resources if and only if for all $\{R_1,R_2\}\in \boldsymbol{S}$, $R_1\conv R_2$ implies $M(R_1)\geq M(R_2)$. 
A family of monotones $\{M_i\}_i$ is said to be \term{complete relative to a set} $\boldsymbol{S}$ if it holds that
\begin{align}\begin{split} \label{completewrtsubset}
&\forall R_1,R_2\in \boldsymbol{S}: {R_1 \conv R_2}\;\;
\\
&\quad\textrm{if and only if}\;\;\,\forall i: M_i(R_1) \ge M_i(R_2).
\end{split}\end{align}
If $\boldsymbol{S}$ is any set of resources all of which are of a particular type, a monotone relative to $\boldsymbol{S}$ is said to be \term{type-specific}.
 
\subsection{Monotone constructions for any resource theory}\label{sec:genericmonotones}

Here we review a variety of approaches to constructing resource monotones. We will make use of these versatile constructions to define an especially useful pair of monotones for the resource theory of common-cause boxes in Section~\ref{twomonotones}.

\subsubsection{Cost and yield monotones} \label{costandyield}

It is possible to upgrade a type-specific monotone to a type-independent monotone using either a \term{cost construction} or a \term{yield construction}. In fact, a cost or yield construction takes \emph{any} function (monotone or not) together with a set of resources and induces a type-independent monotone from it, as follows.

Given any function $f$ which maps some set  $\boldsymbol{S}$ of resources 
 to the real numbers, one can define associated monotones 
which are applicable to all resources, 
as follows: 
\begin{align} \label{yieldprescr}
&M[f\textup{-yield},\boldsymbol{S}](R)\coloneqq 
\\&\quad  \max\limits_{R^{\star}\in \boldsymbol{S}}
\left\{ f(R^{\star})  \;\text{ s.t. }\; R\conv R^{\star} \right\} ,\nonumber \\ \label{costprescr}
&M[f\textup{-cost},\boldsymbol{S}](R)\coloneqq 
\\&\quad\min\limits_{R^{\star}\in  \boldsymbol{S}}
\left\{ f(R^{\star})  \;\text{ s.t. }\; R^{\star}\conv R \right\}.\nonumber
\end{align}
If there does not exist any $R^{\star} \in \boldsymbol{S}$ such that $R\conv R^{\star}$, then the yield is defined to be $-\infty$. Similarly, if there does not exist any $R^{\star} \in \boldsymbol{S}$ such that $R^{\star}\conv R$, then the cost is defined as $\infty$~\cite{Gonda2019Monotones}.

In words, $M[f\textup{-yield},\boldsymbol{S}]$ is a monotone which asks for the most valuable resource
in the set $\boldsymbol{S}$   (as measured by the function $f$)  that one can create from the given resource $R$.\footnote{The maximum of a function $f$ over the set of boxes to which $R$ can be converted can also be thought of as the performance of $R$ over the so-called `nonlocal game' defined by the `payoff function' $f$. Since the set of boxes to which $R$ can be converted (of any given type) is a polytope, it follows that all 
{\em forbidden conversions (those from $R$ to a resource outside the polytope)}
can be witnessed by a suitable set of payoff functions, namely, whatever linear functions pick out the facets of $R$'s polytope (for any given target type). In other words, any resource outside the polytope will attain a higher value on at least one of these functions. It follows, then, that the set of yield monotones induced by \emph{all possible linear functions} constitutes a complete set of monotones. While this observation may not be useful in practice, it does pose an interesting contrast with the findings of Ref.~\cite{Buscemi2012LOSR}: For common-cause {\em boxes}, we find that `nonlocal games' constitute a complete set of monotones; whereas \cite{Buscemi2012LOSR} shows that for the resource theory of \emph{quantum states} under {\LOSR} it is semiquantum games instead of nonlocal games that form a complete set of monotones.
}
Meanwhile, $M[f\textup{-cost},\boldsymbol{S}](R)$ is a monotone which asks for the least valuable resource in the set $\boldsymbol{S}$
 (as measured by the function $f$)   that one can use to create the given resource $R$.
Note that in both cases, many different functions may yield the same monotone,
so there is a conventional element to one's choice of function. 
Note also that $\boldsymbol{S}$ may be restricted to resources of a particular type (in which case $f$ need only be defined on resources of that type), and yet the type of the resource $R$ for which the monotones may be evaluated is unrestricted. 

\subsubsection{Weight and robustness monotones}\label{sec:othermonotones}

Various functions have been used as measures of the distance of a resource from the set of classically realizable common-cause boxes in previous work~\cite{de2014nonlocality,GellerPiani,beigi2015monotone,
Beirhorst2016Bell,Cavalcanti2016Measures,Brito2018tracedistance,geometry2018}. In what follows, we highlight some of these which are monotones in our resource theory.

The {\em nonlocal fraction}, which we denote here by $M_{\textup{NF}}$, is the minimum weight of the nonfree fraction in any convex decomposition of the resource,
\begin{align}
&M_{\textup{NF}}(R)\coloneqq 
\\&\quad\min_{\substack{0{\leq} \lambda{\leq 1}\\R_{*}{\in}\boldsymbol{S}_{[R]}\\ L{\in}\boldsymbol{L}_{[R]}}} \left\{\lambda
\,\;\text{ s.t. }\;
R=\lambda\, R_{*}+(1{-}\lambda)L\right\}.
\nonumber\end{align}
The nonlocal fraction was proven to be a resource monotone relative to (a superset of) the {\LOSR} free operations in Ref.~\citep[Sec.~5.2]{de2014nonlocality}, though it is there termed the `EPR2' measure.

Next, there is the case of robustness measures\footnote{Note that in Ref.~\cite{geometry2018} these were termed `visibilities'.} which quantify the minimum weight of a resource from some particular class that must be added convexly with the original resource for the mixture to be free. The two robustness measures that we consider differ by the class of resources that are mixed with the original resource.  
The first, which we denote by $M_{\textup{RBST},\boldsymbol{L}}(R)$,
considers mixing the original resource $R$ with any element in the set $\boldsymbol{L}_{[R]}$ of \emph{free} resources of the same type: 
\begin{align}
&M_{\textup{RBST},\boldsymbol{L}}(R)
\coloneqq 
\\\nonumber&\quad\min_{\substack{0{\leq} \lambda{\leq} 1\\L\in\boldsymbol{L}_{[R]}}} \left\{\lambda
\,\;\text{ s.t. }\;
\lambda\, L+(1{-}\lambda)R \:\in \boldsymbol{L}_{[R]}\right\}.
\end{align}
This robustness measure was shown to be a resource monotone relative to  {\LOSR} in Ref.~\citep[Sec.~3]{GellerPiani}.

The second robustness measure, which we denote simply by $M_{\textup{RBST}}(R)$ considers mixing the original resource $R$ with 
any element in the set $\boldsymbol{S}_{[R]}$ of all resources of the same type:
\begin{align}
&M_{\textup{RBST}}(R) \coloneqq 
\\\nonumber&\quad\min_{\substack{0{\leq} \lambda{\leq} 1\\R_{*}{\in}\boldsymbol{S}_{[R]}}} \left\{\lambda
\,\;\text{ s.t. }\;
\lambda\, R_{*}+(1{-}\lambda)R \:\in \boldsymbol{L}_{[R]}\right\}.
\end{align}

The unified resource theory formalism of Ref.~\cite{Gonda2019Monotones} implies that all three of these distance measures are resource monotones in any resource theory wherein all of the operations in the free set are convex-linear\footnote{An operation $\tau$ is convex-linear if the image $\tau\circ(R_3)$ is a given mixture of $\tau\circ(R_1)$ and $\tau\circ(R_2)$ whenever the preimage $R_3$ is the same mixture of $R_1$ and $R_2$. All linear operations are convex-linear.} operations, including our resource theory here. Additionally, in Corollary~\ref{measurecor}, we show that each of these three distance measures can be explicitly related to a monotone for which we provide a closed-form expression relative to {\twotwotwotwo}-type resources. By extension, we therefore also provide closed-form expressions for these three distance measures relative to {\twotwotwotwo}-type resources.  

\section{A linear program for determining the ordering of any pair of resources}
\label{polything}

Next, we provide a linear program which allows one to determine the ordering relation that holds between any two resources in our enveloping theory. 
To do so, it is convenient to set up some useful notation.

\begin{defn}
Let the bold symbol \term{$\boldsymbol{S}$} refer to any set of resources. We use subscripts to specify the type of the resources in the set, 
such as $\boldsymbol{S}_{\xyst}$ or $\boldsymbol{S}_{[R]}$. We use superscripts to specify further properties of a set. For example, the set of all GPT-realizable common-cause boxes is denoted by $\boldsymbol{S}^G$, 
the set of all nonfree resources is denoted by $\boldsymbol{S}^{\rm nonfree}$, and the set of all free resources is denoted by $\boldsymbol{S}^{\rm free}$.
Whenever we wish to emphasize that a specific set is discrete, we denote it \term{$\mathbfcal{V}$}, and whenever we wish to emphasize that a specific set is a polytope, we denote it \term{$\mathbfcal{P}$}.
\end{defn}

Let $\mathbfcal{P}^{\LOSR}_{[R_2]}(R_1)$ denote the {\em continuous} set of resources of type $[R_2]$ into which $R_1$ can be converted under {\LOSR}, that is, the image of $R_1$ under $\underset{\scriptscriptstyle ^{[R_1]\rightarrow [R_2]}}{\LOSR}$.    Similarly, let $\mathbfcal{V}^{\LDO}_{[R_2]}(R_1)$ denote the {\em discrete} set of resources of type $[R_2]$ into which $R_1$ can be converted under {\LDO}, that is, the image of $R_1$ under $\underset{\scriptscriptstyle ^{[R_1]\rightarrow [R_2]}}{\LDO}$.
From Propositions~\ref{lem:convexity}~and~\ref{lem:transpolytope}, and the finite cardinality of $\mathbfcal{V}^{\LDO}_{[R_2]}(R_1)$, it follows that $\mathbfcal{P}^{\LOSR}_{[R_2]}(R_1)$ is a convex set with a finite number of vertices, and hence is a polytope:
\begin{prop}[The polytope of resources obtainable from a given resource by LOSR]\label{lem:belstheorem}\hspace{0pt}\\
The set of all resources of type $[R_2]$ obtainable from $R_1$
 by {\LOSR} forms a polytope, 
\begin{align}
\mathbfcal{P}^{\LOSR}_{[R_2]}(R_1) = {\operatorname{ConvexHull}}{\left(\mathbfcal{V}^{\LDO}_{[R_2]}(R_1)\right)}.
\end{align}
\end{prop}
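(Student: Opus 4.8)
The plan is to recognize that the claimed equality is essentially the statement that a linear map sends a polytope to a polytope, and in particular sends the convex hull of a generating set to the convex hull of the images of that set. All of the substantive work has already been done in Propositions~\ref{lem:convexity}, \ref{lem:detextremal}, and \ref{lem:transpolytope}; what remains is simply to transport the polytope structure of the \emph{operations} through their action on the fixed source resource $R_1$.

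First I would fix $R_1$ and consider the map $\Phi_{R_1}:\tau\mapsto\tau\circ R_1$ sending a free operation of type $[R_1]\rightarrow[R_2]$ to the resource $\tau\circ R_1$ of type $[R_2]$. The key observation is that this map is \emph{linear} in $\tau$: by Eq.~\eqref{actionfreeop}, the action is computed as $\sum_{XYST}P_{X'Y'ST|XYS'T'}\,P_{XY|ST}$, which for fixed $P_{XY|ST}$ (i.e.\ fixed $R_1$) is manifestly linear in the conditional distribution $P_{X'Y'ST|XYS'T'}$ defining $\tau$. Equivalently, the link product is bilinear, so restricting one argument to the fixed resource $R_1$ yields a linear map on the space of operations.

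Next I would invoke Proposition~\ref{lem:transpolytope}, which states that the set of free operations of the relevant type is a polytope equal to the convex hull of its locally deterministic vertices, $\underset{\scriptscriptstyle ^{[R_1]\rightarrow [R_2]}}{\LOSR}=\operatorname{ConvexHull}\bigl(\underset{\scriptscriptstyle ^{[R_1]\rightarrow [R_2]}}{\LDO}\bigr)$. By definition $\mathbfcal{P}^{\LOSR}_{[R_2]}(R_1)$ is precisely the image $\Phi_{R_1}\bigl(\underset{\scriptscriptstyle ^{[R_1]\rightarrow [R_2]}}{\LOSR}\bigr)$, while $\mathbfcal{V}^{\LDO}_{[R_2]}(R_1)$ is the image $\Phi_{R_1}\bigl(\underset{\scriptscriptstyle ^{[R_1]\rightarrow [R_2]}}{\LDO}\bigr)$ of the finite vertex set. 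I would then use the elementary fact that a linear map commutes with the convex-hull operation: since $\Phi_{R_1}$ preserves convex combinations, $\Phi_{R_1}\bigl(\operatorname{ConvexHull}(V)\bigr)=\operatorname{ConvexHull}\bigl(\Phi_{R_1}(V)\bigr)$ for any set $V$. Taking $V=\underset{\scriptscriptstyle ^{[R_1]\rightarrow [R_2]}}{\LDO}$ immediately yields the claimed identity $\mathbfcal{P}^{\LOSR}_{[R_2]}(R_1)=\operatorname{ConvexHull}\bigl(\mathbfcal{V}^{\LDO}_{[R_2]}(R_1)\bigr)$. That this set is genuinely a polytope then follows because $\mathbfcal{V}^{\LDO}_{[R_2]}(R_1)$ is finite (being the image of the finite set $\underset{\scriptscriptstyle ^{[R_1]\rightarrow [R_2]}}{\LDO}$, whose cardinality is bounded by Eq.~\eqref{eq:LDOcount}), and the convex hull of finitely many points is a polytope by definition.

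I do not anticipate a genuine obstacle, since the conceptual heavy lifting---establishing convexity of \LOSR and that its extreme points are exactly the deterministic operations---is already complete, and the remaining steps are routine linearity and convexity facts. The one point worth flagging is that the image points $\mathbfcal{V}^{\LDO}_{[R_2]}(R_1)$ need not be the \emph{extreme points} of the resulting resource polytope: distinct deterministic operations may collapse $R_1$ onto the same resource, and some images may lie in the interior of the hull. Thus the proposition should be read as asserting that these points \emph{generate} the polytope via convex hull, not that each is a vertex of it.
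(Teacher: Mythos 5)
Your proof is correct and follows essentially the same route as the paper's: the paper likewise derives the result from Propositions~\ref{lem:convexity} and~\ref{lem:transpolytope} together with the finiteness of $\mathbfcal{V}^{\LDO}_{[R_2]}(R_1)$, with the linearity of $\tau\mapsto\tau\circ R_1$ left implicit where you make it explicit. Your closing remark that the images of the deterministic operations generate the polytope but need not all be vertices is a correct and worthwhile clarification.
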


We can express the content of Proposition~\ref{lem:belstheorem} equivalently as 
\begin{align}\begin{split}
&R_1 \conv R_2 \quad\text{if and only if}\quad 
\\&R_2\in{\operatorname{ConvexHull}}{\left(\mathbfcal{V}^{\LDO}_{[R_2]}(R_1)\right)}.
\end{split}\end{align}
Therefore, to determine whether $R_1$ is higher than $R_2$ in the pre-order of resources, it suffices to implement the following computational test: 
\begin{compactenum}
\item Enumerate all of the  locally deterministic operations which take resources of type $[R_1]$ to type $[R_2]$. (They are finite in number.)
\item Compute the images of $R_1$ under all of these locally deterministic operations.
\item Determine whether or not $R_2$ can be expressed as a convex combination of these images. (This is a linear program.)
\end{compactenum}

To determine which of the four possible ordering relations holds for a given pair of resources, $R_1$ and $R_2$, it suffices to determine whether $R_1 \conv R_2$ or not and whether $R_2\conv R_1$ or not.   This requires just two instances of the linear program.\footnote{In the language of Ref.~\cite{girard2015witnesses}, these linear programs constitute a {\em complete witness} for conversion.
}

According to Proposition~\ref{lem:belstheorem},  the image of a resource under the set of \emph{all} LOSR free operations is equivalent to the convex closure of the image of the resource under only the \emph{extremal} operations. Replacing the set of all operations with only the extremal ones is a dramatic shortcut. 

In principle, the linear program just described allows one to characterize the pre-order completely.  
For instance, this linear program defines a complete set of monotones for a given set of resources $\mathbf{S}$,  
namely, $\{ M_{R'} : R' \in \mathbf{S}\}$ where the monotone $M_{R'}$ is defined as follows: for all $R\in \mathbf{S}$, $M_{R'}(R)=1$ if $R\to R'$ by {\LOSR} and $M_{R'}(R)=0$ otherwise.  $M_{R'}(R)$ reports the answer returned by the linear program for the question of whether $R\to R'$ by {\LOSR}, and if one has the answer for all $R'\in \mathbf{S}$, then one has located $R$ within the pre-order.
However, such a brute-force characterization of the pre-order requires one to apply the linear program to {\em every} pair of resources, which is not possible in practice. Rather, the linear program is primarily useful for answering questions about conversions among pairs (or finite sets) of resources.

To characterize the full pre-order more generally, one would ideally have a finite set of resource monotones that characterize the pre-order completely.  Furthermore, in order to determine certain global properties of the pre-order, such as those described earlier, knowledge of a few carefully chosen resource monotones will typically suffice.  This is the strategy we will adopt hereafter in the article.
Specifically, over the next few sections, we define a pair of resource monotones and use these to prove that the pre-order of single-copy deterministic conversion is not totally pre-ordered
 (i.e., there exist incomparable resources), that it is not weak (the incomparability relation is not transitive), that it has both infinite width and infinite height, and that it is locally infinite.

\section{Two useful monotones}\label{twomonotones}

We will define two monotones, one a cost construction and the other a yield construction, where the sets of resources relative to which these costs and yields are evaluated (to be described below) contain only resources of type {\twotwotwotwo}.
It is useful to first review some facts about the set of all common-cause boxes of type {\twotwotwotwo}, that is, about $\boldsymbol{S}^G_{\twotwotwotwo}$.  

\subsection{Preliminary facts regarding CHSH inequalities and PR boxes}
We adopt the convention of Ref.~\cite{brunner2013Bell} of parametrizing common-cause boxes of type-{\twotwotwotwo} in terms of outcome biases and two-point correlators. The outcome biases are
\begin{subequations}\begin{align*}
\begin{split}\expec{A_s}&:= \sum_{x\in \{0,1\}} (-1)^x P_{X|S}(x|s) 
\\&= P_{X|S}(0|s) - P_{X|S}(1|s)\end{split}\\
\begin{split}
\text{and}\quad\expec{B_t}&:=   \sum_{y\in \{0,1\}} (-1)^y P_{Y|T}(y|t) 
\\&=  P_{Y|T}(0|t) - P_{Y|T}(1|t),\end{split}\\
\shortintertext{and the two-point correlators are}
\expec{A_t B_s}&:= \sum_{x,y \in \{0,1\}} (-1)^{(x\oplus y)} \; P_{XY|ST}(x y| s t).
\end{align*}\end{subequations}

Recalling that the set of common-cause boxes coincides with the set of no-signalling boxes in the Bell scenario, 
 $\boldsymbol{S}^G_{\twotwotwotwo}$ constitutes what is conventionally referred to as the ``no-signalling'' set for this type.\footnote{However, as noted in Appendix \ref{diffcausalstr}, for causal structures different from the Bell scenario, the set $\boldsymbol{S}$ of processes that can be realized by a GPT causal model on the causal structure is typically distinct from the no-signalling set.}  This set is well-known to be a polytope defined by 16 positivity inequalities~\cite{Barrett2005PRresource,Beirhorst2016Bell}.

The set of classical (free) resources of type {\twotwotwotwo} is a subset therein, conventionally termed the \enquote{local set}, and is defined by the same 16 positivity inequalities together with eight additional facet-defining Bell inequalities, namely, the canonical CHSH inequality and its seven variants~\cite{Bellreview}.  A resource is therefore nonclassical (nonfree) if and only if it violates a facet-defining Bell inequality. 

The eight variants of the canonical CHSH function are
\begin{align}\label{eq:chshvariants0}&\hspace{-3ex}\text{\footnotesize
\(\begin{array}{l}
{\CHSH}_0(R) \coloneqq
  {+}\expec{A_0 B_0}{+}\expec{A_1 B_0}{+}\expec{A_0 B_1}{-}\expec{A_1 B_1},\\
{\CHSH}_1(R) \coloneqq
 {+}\expec{A_0 B_0}{+}\expec{A_1 B_0}{-}\expec{A_0 B_1}{+}\expec{A_1 B_1},\\
{\CHSH}_2(R) \coloneqq
 {+}\expec{A_0 B_0}{-}\expec{A_1 B_0}{+}\expec{A_0 B_1}{+}\expec{A_1 B_1},\\
{\CHSH}_3(R) \coloneqq
 {-}\expec{A_0 B_0}{+}\expec{A_1 B_0}{+}\expec{A_0 B_1}{+}\expec{A_1 B_1},\\
{\CHSH}_4(R) \coloneqq
  {-}\expec{A_0 B_0}{-}\expec{A_1 B_0}{-}\expec{A_0 B_1}{+}\expec{A_1 B_1},\\
{\CHSH}_5(R) \coloneqq
  {-}\expec{A_0 B_0}{-}\expec{A_1 B_0}{+}\expec{A_0 B_1}{-}\expec{A_1 B_1},\\
{\CHSH}_6(R) \coloneqq
  {-}\expec{A_0 B_0}{+}\expec{A_1 B_0}{-}\expec{A_0 B_1}{-}\expec{A_1 B_1},\\
{\CHSH}_7(R) \coloneqq
  {+}\expec{A_0 B_0}{-}\expec{A_1 B_0}{-}\expec{A_0 B_1}{-}\expec{A_1 B_1}.
\end{array}\)} 
\end{align}
The canonical CHSH function is ${\CHSH}_0$, which we will sometimes denote simply as ${\CHSH}$.

In terms of these, the eight facet-defining Bell inequalities are  
\begin{equation}
{\CHSH}_k(R) \leq 2\;\;\text{ for}\;\;k\in \{0,\dots, 7\}.
\end{equation}

Note that the regions defined by strict violation of each of the eight inequalities are nonoverlapping~\cite{Beirhorst2016Bell}.  It follows that {\em one and only one} of the eight ${\CHSH}$ inequalities can be violated by a given resource, i.e., for nonfree $R$ there is precisely one value of $k$ such that ${\CHSH}_k (R) > 2$.

There are eight extremal nonfree vertices of the full polytope $\boldsymbol{S}^G_{\twotwotwotwo}$.  One of these is the canonical PR box~\cite{Popescu1994,Barrett2005PRunit}, denoted $R_{\textup{PR}}$ and defined explicitly in Table~\ref{tab:genboxes2}; the other seven are variants of this PR box. For each $k$, we denote the associated variant of the PR-box by $R_{\text{PR},k}$ (so that the canonical PR box is associated to $k=0$, $R_{\textup{PR}}=R_{\text{PR},0}$).  $R_{\text{PR},k}$ is the unique resource that maximally violates the $k$th ${\CHSH}$ inequality, i.e., that achieves its algebraic maximum, ${\CHSH}_k(R_{\text{PR},k}) = 4$.

Unsurprisingly, the variants of the facet-defining Bell inequalities are interconvertible under {\LSO} operations, as are the variants of the extremal vertices. To illustrate this, it is convenient to factorize the {\twotwotwotwo} LSO group into a subgroup which stabilizes ${\CHSH}_0$ and a subgroup which does not, as follows.

\begin{prop}\label{prop:symmetrypartitioning}
\begin{samepage}Consider the following invertible operations, i.e., elements of the {\LSO} group for {\twotwotwotwo}-type resources: 
\newline
\noindent\begin{tabular}{ll}\label{taus}
$\!\tau_1\!:$ & $P_{XY|ST}(x,y|s,t)\leftrightarrow P_{XY|ST}(x,y{\oplus}1|s,t)$\\
$\!\tau_2\!:$ & $P_{XY|ST}(x,y|s,t)\leftrightarrow P_{XY|ST}(x,y|s{\oplus}1,t)$\\
$\!\tau_3\!:$ & $P_{XY|ST}(x,y|s,t)\leftrightarrow P_{XY|ST}(x,y|s,t{\oplus}1)$\\
$\!\tau_4\!:$ & $P_{XY|ST}(x,y|s,t)\leftrightarrow P_{XY|ST}(x{\oplus}1,y{\oplus}1|s,t)$ \label{CHSHpreservingtransformations}\\
$\!\tau_5\!:$ & $P_{XY|ST}(x,y|s,t)\leftrightarrow P_{XY|ST}(x{\oplus}s,y|s,t{\oplus}1)$\\
$\!\tau_6\!:$ & $P_{XY|ST}(x,y|s,t)\leftrightarrow P_{XY|ST}(x,y{\oplus}t|s{\oplus}1,t)$\\
\end{tabular}\end{samepage}
\newline\noindent Then,\newline
\begin{compactenum}[(\ref{prop:symmetrypartitioning}a)][4]
\item The order-64 group $G_{123456}$ generated by $\{\tau_1,\tau_2,\tau_3,\tau_4,\tau_5,\tau_6\}$ is the entire {\LSO} group for {\twotwotwotwo} resources.
\item The order-8 subgroup $G_{123}$ generated by $\{\tau_1,\tau_2,\tau_3\}$ has no elements in common with the subgroup $G_{456}$ generated by $\{\tau_4,\tau_5,\tau_6\}$ other than the identity operation.
\item\label{prop:CHSHstabilizer} The order-8 subgroup $G_{456}$ generated by $\{\tau_4,\tau_5,\tau_6\}$ stabilizes the canonical PR box and the $\CHSH_0$ inequality.
\item\label{prop:CHSHorbit}  For any $k\in\{0...7\}$, the orbit of $\CHSH_k$ under $G_{123}$ is $\{\CHSH_0,...,\CHSH_7\}$, and the orbit of $R_{\textup{PR},k}$ under $G_{123}$ is $\{R_{\textup{PR},0},...,R_{\textup{PR},7}\}$.
\end{compactenum}
\end{prop}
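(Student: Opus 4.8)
The plan is to treat the whole proposition as a finite computation carried out inside the order-64 group {\LSO}, organized around the induced linear action of each generator on the vector of biases and correlators $\bigl(\expec{A_0},\expec{A_1},\expec{B_0},\expec{B_1},\expec{A_0 B_0},\expec{A_1 B_0},\expec{A_0 B_1},\expec{A_1 B_1}\bigr)$. Since the statement already presents $\tau_1,\dots,\tau_6$ as elements of {\LSO}, I would take this for granted and simply record how each acts on this vector. The generators of $G_{123}$ act transparently: $\tau_1$ (flip $Y$) negates every $\expec{B_t}$ and every correlator, $\tau_2$ (flip $S$) transposes $0\leftrightarrow 1$ in the $A$-slot, and $\tau_3$ (flip $T$) does so in the $B$-slot. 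From this one reads off at once that $\tau_1,\tau_2,\tau_3$ are mutually commuting involutions acting on disjoint degrees of freedom, so $G_{123}\cong(\mathbb{Z}/2)^3$ has order $8$.

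For the stabilizer claim I would verify directly that each of $\tau_4,\tau_5,\tau_6$ leaves the linear functional $\CHSH_0$ invariant. For example $\tau_5$ acts on correlators by $\expec{A_0 B_0}\leftrightarrow\expec{A_0 B_1}$ together with $\expec{A_1 B_0}\mapsto-\expec{A_1 B_1}$ and $\expec{A_1 B_1}\mapsto-\expec{A_1 B_0}$, and substituting into Eq.~\eqref{eq:chshvariants0} returns $\CHSH_0$; the cases $\tau_4$ and $\tau_6$ (the wing-swapped analogue of $\tau_5$) are identical. Hence every element of $G_{456}$ fixes $\CHSH_0$. Because $R_{\textup{PR}}=R_{\text{PR},0}$ is the \emph{unique} resource attaining $\CHSH_0=4$, and because {\LSO} maps the polytope $\boldsymbol{S}^G_{\twotwotwotwo}$ bijectively to itself, any $\pi\in G_{456}$ gives $\CHSH_0(\pi\circ R_{\textup{PR}})=\CHSH_0(R_{\textup{PR}})=4$ and so $\pi\circ R_{\textup{PR}}=R_{\textup{PR}}$. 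The orbit claim is then handled the same way: each $\CHSH_k$ corresponds to a sign pattern on the four correlators with an odd number of minus signs, $\tau_1$ is the global sign flip and $\tau_2,\tau_3$ swap the $A$- and $B$-correlator pairs, and a short check shows these move $\CHSH_0$ to all eight variants (e.g.\ $\CHSH_0\mapsto\CHSH_1$ under $\tau_2$, $\mapsto\CHSH_2$ under $\tau_3$, $\mapsto\CHSH_4$ under $\tau_1$, with products reaching the rest). Thus the $G_{123}$-action on $\{\CHSH_0,\dots,\CHSH_7\}$ is transitive, and since $|G_{123}|=8$ it is regular. For the PR boxes I would again invoke uniqueness: if an element carries $\CHSH_k$ to $\CHSH_j$, then since $R_{\text{PR},j}$ is the unique maximizer of $\CHSH_j$, its image is the unique maximizer of $\CHSH_k$, namely $R_{\text{PR},k}$, so the same permutation governs both orbits. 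Throughout I would use Eq.~\eqref{resourcesvsfunctionals} to pass between the action on resources and on functionals, noting that for these involutions the distinction between a map and its transpose is immaterial for orbit computations.

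The trivial-intersection claim then drops out: every element of $G_{456}$ fixes $\CHSH_0$, while regularity of the $G_{123}$-action means the identity is the only element of $G_{123}$ fixing $\CHSH_0$; hence $G_{123}\cap G_{456}=\{e\}$. Finally, to show $\{\tau_1,\dots,\tau_6\}$ generates all of {\LSO}, I would apply the product formula $|G_{123}G_{456}|=|G_{123}|\,|G_{456}|/|G_{123}\cap G_{456}|=8\cdot 8/1=64$ for the cardinality of the product \emph{set} (valid for any two subgroups, irrespective of normality). The generated group therefore contains $64$ distinct elements, and since it is contained in {\LSO}, whose order is $64$ by the count following Eq.~\eqref{eq:LDOcount}, the two must coincide.

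The step I expect to be the genuine obstacle is establishing $|G_{456}|=8$, because, unlike $G_{123}$, this subgroup is \emph{not} a product of commuting involutions. A direct computation shows $\tau_5$ and $\tau_6$ do not commute and that $(\tau_5\tau_6)^2=\tau_4$, so $\tau_5\tau_6$ has order $4$ and $G_{456}$ is in fact the dihedral group of order $8$, with $\tau_4$ as its central involution. I would pin the order down by exhibiting the defining relations $\tau_4^2=\tau_5^2=\tau_6^2=e$, with $\tau_4$ central and $\tau_5(\tau_5\tau_6)\tau_5=(\tau_5\tau_6)^{-1}$, which bound the order above by $8$, and by noting the explicit order-$4$ element $\tau_5\tau_6$ together with the involution $\tau_5\notin\langle\tau_5\tau_6\rangle$, which bounds it below by $8$. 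Getting this non-abelian bookkeeping right without over-counting is the one delicate point; every remaining verification is a routine substitution in the correlator representation.
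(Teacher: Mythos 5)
Your proof is correct, and for parts (c) and (d) it follows essentially the same route as the paper: write down the induced signed-permutation action of each $\tau_i$ on the vector of biases and correlators (the paper's Table~\ref{relabelingeff}), check that $\tau_4,\tau_5,\tau_6$ each fix the functional $\CHSH_0$, and track $\tau_1,\tau_2,\tau_3$ as the permutations $(0,4)(1,5)(2,6)(3,7)$, $(0,1)(2,3)(4,5)(6,7)$, $(0,2)(1,3)(4,6)(5,7)$ on the eight variants. Two of your choices genuinely differ from the paper, both to your credit. First, for transferring statements from functionals to boxes, you invoke uniqueness of the maximizer ($R_{\textup{PR},k}$ is the unique resource with $\CHSH_k=4$, and {\LSO} acts bijectively), whereas the paper instead observes that the marginals and correlators of $R_{\textup{PR},k}$ numerically coincide with the coefficients of $\CHSH_k$ so the two ordered sets transform identically; your argument is cleaner and generalizes beyond this coincidence. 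Second, and more substantially, the paper disposes of claims (a) and (b) by appeal to \enquote{standard group theory algorithms} (a computer-algebra check), whereas you give an actual hand proof: $G_{123}\cong(\mathbb{Z}/2)^3$ from commuting involutions on disjoint coordinates; $G_{456}\cong D_4$ of order $8$ via the identity $(\tau_5\tau_6)^2=\tau_4$ (which I have checked on both the correlator and marginal blocks and which is correct); trivial intersection from the fact that $G_{456}$ stabilizes $\CHSH_0$ while the regular $G_{123}$-action leaves only the identity fixing it; and finally $|G_{123}G_{456}|=8\cdot 8/1=64=|{\LSO}|$ by the product-set formula. This buys a self-contained, verifiable argument at the cost of the one delicate non-abelian computation you correctly flagged, namely pinning down $|G_{456}|=8$. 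The only point worth making explicit in a final write-up is the upper bound there: the relations $\tau_5^2=\tau_6^2=e$ together with $(\tau_5\tau_6)^4=\tau_4^2=e$ exhibit $\langle\tau_5,\tau_6\rangle$ as a quotient of the order-$8$ dihedral group, and the explicit order-$4$ element $\tau_5\tau_6$ plus the involution $\tau_5\notin\langle\tau_5\tau_6\rangle$ rule out any proper quotient.
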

\begin{proof}
The first two claims in Proposition~\ref{prop:symmetrypartitioning} are readily verified by standard group theory algorithms~\cite{Seress2003}. The latter two claims become self-evident by explicitly examining the actions of the operations on expectation values (and hence, their action on resources or functions on resources), per Table~\ref{relabelingeff}.
{
\begin{table*}[htb!]
\begin{center}\centering 
{\setlength{\tabcolsep}{1.2ex}
\begin{tabular}{|c|rrrrrrrr|} 
\toprule
 & \(\Braket{A_0}\) & \(\Braket{A_1}\) & \(\Braket{B_0}\) & \(\Braket{B_1}\) & \(\Braket{A_0 B_0}\) & \(\Braket{A_1 B_0}\) & \(\Braket{A_0 B_1}\) & \(\Braket{A_1 B_1}\) \\
 \midrule
\(\tau_1\) & \(\Braket{A_0}\) &  \(\Braket{A_1}\) &  \(-\Braket{B_0}\) &  \(-\Braket{B_1}\) & 
\(-\Braket{A_0 B_0}\) & \(-\Braket{A_1 B_0}\) & \(-\Braket{A_0 B_1}\) & \(-\Braket{A_1 B_1}\)\\
\(\tau_2\) & \(\Braket{A_1}\) &  \(\Braket{A_0}\) &  \(\Braket{B_0}\) &  \(\Braket{B_1}\) & 
\(\Braket{A_1 B_0}\) & \(\Braket{A_0 B_0}\) & \(\Braket{A_1 B_1}\) & \(\Braket{A_0 B_1}\)\\
\(\tau_3\) & \(\Braket{A_0}\) &  \(\Braket{A_1}\) &  \(\Braket{B_1}\) &  \(\Braket{B_0}\) & 
\(\Braket{A_0 B_1}\) & \(\Braket{A_1 B_1}\) & \(\Braket{A_0 B_0}\) & \(\Braket{A_1 B_0}\)\\
\(\tau_4\) & \(-\Braket{A_0}\) &  \(-\Braket{A_1}\) &  \(-\Braket{B_0}\) &  \(-\Braket{B_1}\) & 
\(\Braket{A_0 B_0}\) & \(\Braket{A_1 B_0}\) & \(\Braket{A_0 B_1}\) & \(\Braket{A_1 B_1}\)\\
\(\tau_5\) &  \(\Braket{A_0}\) &  \(-\Braket{A_1}\) &  \(\Braket{B_1}\) &  \(\Braket{B_0}\) & 
\(\Braket{A_0 B_1}\) & \(-\Braket{A_1 B_1}\) & \(\Braket{A_0 B_0}\) &  \(-\Braket{A_1 B_0}\)\\
\(\tau_6\) & \(\Braket{A_1}\) &  \(\Braket{A_0}\) &  \(\Braket{B_0}\) & \(-\Braket{B_1}\) & 
\(\Braket{A_1 B_0}\) & \(\Braket{A_0 B_0}\) & \(-\Braket{A_1 B_1}\) & \(-\Braket{A_0 B_1}\)\\
 \bottomrule
\end{tabular}}\vspace{-1ex}
\end{center}
\caption{ Action of each of the six specified 
symmetry operations 
in terms of marginal expectation values and correlators.}
\label{relabelingeff}
\end{table*}}
In light of Table~\ref{relabelingeff}, the third claim is easily verified. The fourth claim simply captures the fact that the eight CHSH functions are related by ${\LSO}$, and similarly the eight PR boxes are also interconvertible under ${\LSO}$. 
We can explicitly show how the interconversions are accomplished by $G_{123}$ by describing the actions of  $\{\tau_1,\tau_2,\tau_3\}$ as permutations on the ordered set of $\CHSH$ functions, or equivalently, on the ordered set of PR boxes. \begin{compactitem}
\item $\tau_1$ flips the sign of every correlator, so the action of $\tau_1$ on the ordered set of $\CHSH$ functions is the permutation $(0,4)(1,5)(2,6)(3,7)$.
\item $\tau_2$ exchanges the roles of $A_0$ and $A_1$, so the action of $\tau_2$ on the ordered set of $\CHSH$ functions is the permutation $(0,1)(2,3)(4,5)(6,7)$. 
\item $\tau_3$ exchanges the roles of $B_0$ and $B_1$, so the action of $\tau_3$ on the ordered set of $\CHSH$ functions is the permutation $(0,2)(1,3)(4,6)(5,7)$. 
\end{compactitem}
Therefore the \emph{orbit} of $\CHSH_k$ under $G_{123}$ is  easily checked to be $\{\CHSH_0,...,\CHSH_7\}$, as claimed.

The ordered set of PR boxes transforms under LSO operations in exactly the same manner as the ordered set of CHSH functions, since the {\em values} of the marginals and the correlators for resource $R_{\textup{PR},k}$ coincide with the {\em coefficients} of the associated terms in the linear function $\CHSH_k$ (compare, e.g., the expression for CHSH$_0$ in Eq.~\eqref{eq:chshvariants0} with the values of the marginals and correlators for $R_{\textup{PR}}$ in Table \eqref{tab:genboxes2}). Hence, the argument just given 
also establishes that the orbit of $R_{\textup{PR},k}$ under $G_{123}$ is $\{R_{\textup{PR},0},...,R_{\textup{PR},7}\}$.
\end{proof}

\subsection{Defining the two useful monotones}

\noindent {\bf Monotone 1: The yield of a resource with respect to the set of resources of type {\twotwotwotwo}, as measured by the CHSH function.}

To define our first monotone, consider the canonical CHSH function
\begin{align*} 
\operatorname{CHSH}(R)\!\coloneqq \expec{A_0 B_0}+\expec{A_0 B_1}+\expec{A_1 B_0}-\expec{A_1 B_1}.
\end{align*}
The CHSH function is type-specific\footnote{The CHSH function is well-defined only for resources of type {\twotwotwotwo}.}
 and furthermore is not a monotone~\cite{de2014nonlocality}.
However, we can apply the prescription of Eq.~\eqref{yieldprescr} to this function, taking the set $\boldsymbol{S}$ to be $\boldsymbol{S}^G_{\twotwotwotwo}$, i.e., the set of all common-cause boxes of type {\twotwotwotwo}.
Doing so, we define the following (type-independent) yield-based monotone, which we will denote by $M_{\CHSH}$:
\begin{align}\begin{split}\label{eq:CHSHmonotonedefn}
&M_{\CHSH}(R) \coloneqq M[\textup{CHSH-yield}, \boldsymbol{S}^G_{\twotwotwotwo}](R)\\
&=  \max\limits_{R^{\star}\in\boldsymbol{S}^G_{\twotwotwotwo}} \left\{\CHSH(R^{\star})\,\;\text{ s.t. }\;R\conv R^{\star} \right\}.
\end{split}\end{align}

Note that one can {\em always} find some $R^{\star}\in\boldsymbol{S}^G_{\twotwotwotwo}$ such that $R\conv R^{\star}$ regardless of the type or details of $R$, simply because free resources of type {\twotwotwotwo} may always be freely generated after discarding $R$. 
Hence, the value of this monotone is never less than 2, which is the maximum of the CHSH function when applied to the subset of free resources.

If one applies this procedure to {\em any} of the eight variants of the CHSH functions in Eq.~\eqref{eq:chshvariants0},
the monotones one thereby obtains all turn out to be equivalent to $M_{\CHSH}$.  This follows from the fact that 
all variants of the CHSH function are interconvertible under LSO and therefore the maximum of any one in an optimiziation over all LOSR operations is the same as any other, as noted in Proposition~\ref{prop:symmetrypartitioning}{\ref{prop:CHSHorbit}}.

\noindent {\bf Monotone 2: The cost of a resource with respect to a set of noisy PR box resources, as measured by the CHSH function.}

Our second monotone also involves optimizing the CHSH function, but it is a cost-based monotone, and the set of resources over which one optimizes is restricted to a particular one-parameter family of resources of type {\twotwotwotwo} (rather than the full set $\boldsymbol{S}^G_{\twotwotwotwo}$). 

To define this family, we need to highlight
a particular resource in the free set, which we denote $L_{\textup{NPR}}^{\rm b}$.\footnote{The use of $L$ instead of $R$ when describing the resource $L_{\textup{NPR}}^{\rm b}$ is a nod to the conventional terminology wherein the classically realizable common-cause boxes are often called the {\em local} boxes.  See the discussion in the introduction for why we explicitly avoid the local-nonlocal terminology here.}
$L_{\textup{NPR}}^{\rm b}$ can be defined as the uniform mixture of the PR box with the maximally mixed resource $L_{\varnothing}$ (defined in Table~\ref{tab:genboxes2}), namely $L_{\textup{NPR}}^{\rm b}=\frac{1}{2} R_{\textup{PR}}+\frac{1}{2} L_{\varnothing}$, as enumerated in Table~\ref{tab:genboxes2}.
 The superscript $\rm b$ in the notation $L_{\textup{NPR}}^{\rm b}$ denotes the fact that this resource sits on the boundary of the free set, namely, that it saturates the canonical CHSH inequality, $\CHSH(L_{\textup{NPR}}^{\rm b})=2$.

The one-parameter family of resources defining our cost construction are the convex mixtures of $R_{\textup{PR}}$ and $L_{\textup{NPR}}^{\rm b}$. We denote the set of these by $\boldsymbol{C}_{\textup{NPR}}$.  
Formally, 
\begin{align}
\boldsymbol{C}_{\textup{NPR}} &\coloneqq \{ C(\alpha) :\alpha \in [0,1]\},\\
\shortintertext{where}
\label{eq:chainparametrization}
C(\alpha)&\coloneqq\;\alpha\, R_{\textup{PR}}+(1{-}\alpha) L_{\textup{NPR}}^{\rm b}.
\end{align}
We use \enquote{$\boldsymbol{C}$} because the set of resources forms a chain (defined in Section~\ref{sec:oraclelimitations}) and \enquote{NPR} because each resource in the chain is a noisy version of the PR box. 

Geometrically, the chain $\boldsymbol{C}_{\textup{NPR}}$ describes a line segment of resources with endpoints $R_{\textup{PR}}$ and $L_{\textup{NPR}}^{\rm b}$, and  $\alpha$ parametrizes the distance from $C(\alpha)$ to $L_{\textup{NPR}}^{\rm b}$ (the bottom of the chain).
To see that the elements of $\boldsymbol{C}_{\textup{NPR}}$ do indeed form a chain in the partial order, it suffices to note that one can move downwards (\emph{decreasing} $\alpha$) starting from any $C(\alpha)$ by mixing $C(\alpha)$ with $L^{\rm b}_{\textup{NPR}}$, but one cannot move upwards (\emph{increasing} $\alpha$) from any $C(\alpha)$, as doing so would require increasing the value of the monotone $M_{\rm CHSH}$. 

Table~\ref{tab:genboxes2} provides an explicit characterization of a generic resource on the chain, as well as its endpoints and the maximally-mixed free resource.
 {
\begin{table*}[htb!]
\centering
{\setlength{\tabcolsep}{1.2ex}
\begin{tabular}{|r|cccccccc|c|c|}
\toprule
 & \(\Braket{A_0}\) & \(\Braket{A_1}\) & \(\Braket{B_0}\) & \(\Braket{B_1}\) & \(\Braket{A_0 B_0}\) & \(\Braket{A_1 B_0}\) & \(\Braket{A_0 B_1}\) & \(\Braket{A_1 B_1}\) & CHSH\\
 \midrule
  \(R_{\textup{PR}}=C(1)\)       &  0 &  0 &  0 &  0 & +1 & +1 & +1 & $-$1 & 4\\
  \(L_{\text{NPR}}^{\rm b}= C(0)\) &  0 &  0 &  0 &  0 & $\nicefrac{+1}{2}$ & $\nicefrac{+1}{2}$ & $\nicefrac{+1}{2}$ & $\nicefrac{-1}{2}$  & 2\\
    \(C(\alpha)\) &  0 &  0 &  0 &  0 & $\tfrac{\alpha+1}{2}$ & $\tfrac{\alpha+1}{2}$ & $\tfrac{\alpha+1}{2}$ & $\tfrac{-\alpha-1}{2}$ & $2\alpha{+}2$ \\\midrule
    \(L_{\varnothing}\)      &  0 &  0 &  0 &  0 & 0 & 0 & 0 & 0 & 0   
\\
 \bottomrule
\end{tabular}}\vspace{-1ex}
\caption{\label{tab:genboxes2} An explicit description of the resources referenced in our definitions.
}
\end{table*}}

Using this one-parameter family of resources, we  define the following cost-based monotone, which we denote $M_{\textup{NPR}}$,
\begin{align} \label{Malphadefn}
M_{\textup{NPR}}(R) &\coloneqq M[\CHSH\textup{-cost},\boldsymbol{C}_{\textup{NPR}}](R)\\\nonumber
&= \min\limits_{R^{\star}\in \boldsymbol{C}_{\textup{NPR}}}  \left\{\CHSH(R^{\star})\,\;\text{ s.t. }\;R^{\star}\conv R\right\},
\end{align}
where if for some $R$ there is no $R^{\star}\in \boldsymbol{C}_{\textup{NPR}}$ such that 
$R^{\star}\conv R$, then we define $M_{\textup{NPR}}= \infty$. 

Critically, note that the CHSH function is an injective (one-to-one) mapping from points on the line segment $\boldsymbol{C}_{\textup{NPR}}$ to the real numbers, with
\begin{align}
{\CHSH}\big(C(\alpha)\big)=2\alpha{+}2.
\end{align}
Thus, the problem of minimizing the CHSH function over $R^{\star}\in \boldsymbol{C}_{\textup{NPR}}$ such that $R^{\star}\conv R$ is exactly the same as minimizing the function $2\alpha{+}2$ under the constraint  $C(\alpha)\conv R$, that is, 
\begin{align} \label{Malpha2}
M_{\textup{NPR}} &= 
\min\limits_{\alpha \in [0,1]}  \left\{ 2\alpha{+}2 \,\;\text{ s.t. }\;C(\alpha) \conv R\right\}.
\end{align}

For each variant $R_{\textup{PR},k}$ of the PR box, where $k\in \{0,\dots,7\}$, we can define the chain of noisy versions thereof, that is, $\boldsymbol{C}_{\textup{NPR},k} \coloneqq \{ C_k(\alpha) :\alpha \in [0,1]\}$ where $C_k(\alpha) \coloneqq\;\alpha\, R_{\textup{PR},k}+(1{-}\alpha) L_{\textup{NPR},k}^{\rm b}$, with $L_{\textup{NPR},k}^{\rm b}=\frac{1}{2} R_{\textup{PR},k}+\frac{1}{2} L_{\varnothing}$.
One can of course define a cost-based monotone for each such chain.  However, all eight of these chains define the {\em same} monotone, because the local symmetry operations allow one to move among these, as a consequence of Proposition~\ref{prop:symmetrypartitioning}{\ref{prop:CHSHorbit}} and the fact that $L_{\varnothing}$ is stable under all {\twotwotwotwo}-type local symmetry operations.\footnote{As an aside, note that, unlike the cost with respect to the chain $\mathbf{C}_{\textup{NPR}}$, Eq.~\eqref{Malphadefn}, the cost 
with respect to the set $\boldsymbol{S}^G_{\twotwotwotwo}$ of \emph{all} resources of type {\twotwotwotwo}, as measured by the CHSH function, is utterly uninformative with regards to distinguishing the elements of $\boldsymbol{S}^G_{\twotwotwotwo}$.
This is because the resource $R_{\text{PR},4}$ can be converted to any other {\twotwotwotwo}-type resource, and yet $\CHSH(R_{\text{PR},4})=-4$, the algebraic minimum of the canonical CHSH function.
Consequently, the value of this CHSH-cost with respect to the set of all resources of type {\twotwotwotwo} is
$-4$.
Since this monotone is constant on all resources in the scenario, it is completely uninformative.}

\subsection{Closed-form expressions for $M_{\CHSH}$ and $M_{\rm NPR}$ for {\twotwotwotwo}-type resources}\label{sec:geometryproof}

The definitions of $M_{\CHSH}$ and $M_{\rm NPR}$ both involve an optimization over a continuous set of states.  In this section, we derive closed-form expressions for these monotones for resources of type {\twotwotwotwo}.

Consider first $M_{\CHSH}$.

\begin{prop}\label{prop:eqaboveCHSH}
For any free resource $R$ of type {\twotwotwotwo}, $M_{\CHSH}(R) =2$.  For any nonfree resource $R$ of type {\twotwotwotwo}, there is a unique $k \in \{0,\dots,7\}$ for which $\operatorname{CHSH}_k(R) > 2$ and such that
\begin{equation} \label{eqaboveCHSH}
 {M_{\CHSH}(R) = {\CHSH}_k(R)}.
\end{equation}
Equivalently, each function $\CHSH_{k}$ is a monotone relative to the subset of {\twotwotwotwo}-type resources for which $\operatorname{CHSH}_k(R) \geq 2$.
\end{prop}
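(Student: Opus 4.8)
The plan is to treat the free case directly and reduce the nonfree case to a finite optimization. If $R$ is free, then every box reachable by {\LOSR} is again free (a link product of classically realizable boxes is classically realizable, hence satisfies all Bell inequalities), so $\CHSH(R^\star)\le 2$ for every admissible $R^\star$; conversely $R$ can always be converted to a free box saturating the canonical inequality, e.g.\ $L^{\rm b}_{\textup{NPR}}$ with $\CHSH(L^{\rm b}_{\textup{NPR}})=2$, which may be freely prepared after discarding $R$. Hence $M_{\CHSH}(R)=2$. For nonfree $R$ I would invoke Proposition~\ref{lem:belstheorem}: the set of type-{\twotwotwotwo} boxes reachable from $R$ is $\operatorname{ConvexHull}(\{D\circ R\})$ where $D$ ranges over the finitely many {\LDO} operations of type {\twotwotwotwo}$\to${\twotwotwotwo}. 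Since $\CHSH=\CHSH_0$ is linear, its maximum over a polytope is attained at a generating point, so
\begin{equation*}
M_{\CHSH}(R)=\max_{D}\CHSH_0(D\circ R),
\end{equation*}
the maximum ranging over those {\LDO} operations.

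For the lower bound, Proposition~\ref{prop:symmetrypartitioning}\ref{prop:CHSHorbit} guarantees an {\LSO} element $\pi\in G_{123}$ whose action on functionals (Eq.~\eqref{resourcesvsfunctionals}) sends $\CHSH_0$ to $\CHSH_k$, i.e.\ $\pi^T\vec{\CHSH}_0=\vec{\CHSH}_k$. As ${\LSO}\subset{\LDO}$, the box $\pi\circ R$ is admissible and $\CHSH_0(\pi\circ R)=\CHSH_k(R)$, giving $M_{\CHSH}(R)\ge\CHSH_k(R)$.

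For the upper bound I would show $\CHSH_0(D\circ R)\le\CHSH_k(R)$ for every $D$, by tracking its effect on the $\pm1$-valued local observables. By Eq.~\eqref{ExtremalOpLOSRb} each $D$ factorizes across the wings, and each effective observable $A'_{s'}$ is either a signed original observable $\pm A_{a(s')}$ or a constant $\pm 1$, and similarly for $B'_{t'}$. If all four effective observables are genuine box observables and both settings are fed on each wing, then all four terms of $\CHSH_0(D\circ R)$ are correlators of $R$, and the expression collapses to $\sum_{st}\tilde\eta_{st}\langle A_sB_t\rangle$ for a sign pattern $\tilde\eta$ with an odd number of minus signs; since the eight odd sign patterns are precisely the functionals $\CHSH_j$, one gets $\CHSH_0(D\circ R)=\CHSH_{j^\ast}(R)$ for some $j^\ast$, which is $\le\CHSH_k(R)$ (equality iff $j^\ast=k$, and otherwise $\CHSH_{j^\ast}(R)\le 2<\CHSH_k(R)$ by uniqueness of the violated inequality). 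In every remaining case some effective observable is constant or some setting is collapsed; here one uses {\LSO} to normalise signs (say $A'_1=+1$), groups the four terms via no-signalling so that a marginal and a correlator sharing a wing-setting combine into one joint expectation, and bounds pointwise, e.g.\ $\langle(1+A'_0)B'_0\rangle\le 1+\langle A'_0\rangle$ and $\langle(A'_0-1)B'_1\rangle\le 1-\langle A'_0\rangle$, summing to $\CHSH_0(D\circ R)\le 2<\CHSH_k(R)$. Combining the two bounds yields $M_{\CHSH}(R)=\CHSH_k(R)$.

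The main obstacle is this last case: operations that replace an observable by a constant introduce marginal terms not controlled by the correlator functionals $\CHSH_j$, so a naive bound would be as large as $4$. The resolution is that collapsing an observable forces a deterministic response on that input, and the no-signalling identity plus the pointwise estimate confine such operations to the classical value $2$, strictly below $\CHSH_k(R)$. Finally, the ``equivalently'' clause follows from the closed form together with the fact that $M_{\CHSH}$ is a genuine monotone, being a yield construction: for any $R_1,R_2$ of type {\twotwotwotwo} with $\CHSH_k\ge 2$ one has $M_{\CHSH}(R_i)=\CHSH_k(R_i)$ (the unique violated index is $k$ when $\CHSH_k>2$, and $M_{\CHSH}=2=\CHSH_k$ on the boundary), so $R_1\conv R_2$ forces $\CHSH_k(R_1)=M_{\CHSH}(R_1)\ge M_{\CHSH}(R_2)=\CHSH_k(R_2)$.
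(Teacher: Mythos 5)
Your proof is correct, but it takes a genuinely different route from the paper's. The paper's argument is structural: it invokes Theorem 2.2 of Bierhorst et al.\ to decompose any resource with $\CHSH_k(R)>2$ as $R=\lambda\, R_{\textup{PR},k}+(1-\lambda)L_k^{\rm b}$ with $L_k^{\rm b}$ a free, $\CHSH_k$-saturating box, and then uses linearity of $\CHSH_k$ and of \LOSR{} operations together with the two bounds $\CHSH_k(\tau\circ R_{\textup{PR},k})\le 4$ and $\CHSH_k(\tau\circ L_k^{\rm b})\le 2$ to conclude that no free operation can increase $\CHSH_k$. You instead reduce to the vertices of the operations polytope via Proposition~\ref{lem:belstheorem} and run a direct sign-parity/case analysis on the factorized form of \LDO{} operations. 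Both are sound; the paper's is shorter but leans on an external decomposition theorem, while yours is more self-contained and, as a byproduct, essentially reproves the fact (invoked later in the proof of Lemma~\ref{prop:ext}) that every non-symmetry deterministic type-preserving operation maps all \twotwotwotwo{} resources into the free set. Your ``remaining cases'' are only sketched, but the sketch is the standard classical CHSH bound for strategies with a deterministic observable or a collapsed setting, and it does go through (e.g.\ when $f_A$ is constant the four terms pair up into $\pm 2\langle A_aB_b\rangle\le 2$). Two small points worth tightening: in the all-genuine case you should note explicitly that the product of the four induced signs is always $-1$ (since the local sign flips contribute $\prod_{s',t'}\sigma_A(s')\sigma_B(t')=+1$), which is what guarantees an odd pattern and hence membership in $\{\CHSH_0,\dots,\CHSH_7\}$; and in the ``equivalently'' clause your claim that $M_{\CHSH}=2=\CHSH_k$ on the boundary tacitly assumes that $\CHSH_k(R)=2$ forces $R$ to be free, which is true but needs the slightly stronger observation that $\CHSH_j(R)>2$ implies $\CHSH_k(R)<2$ strictly for $j\neq k$ (immediate from the identities $\CHSH_j+\CHSH_k\le 4$ for non-antipodal pairs and $\CHSH_{j+4}=-\CHSH_j$).
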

\begin{proof} 
We already noted in Section~\ref{twomonotones} that $M_{\CHSH}(R)=2$ for all resources $R$ that are free, so it suffices to consider the case of nonfree resources.  As noted above, the fact that 
there is precisely one value of $k$ such that ${\CHSH}_k (R) > 2$ for a nonfree resource $R$ follows from the results in Ref.~\cite{Beirhorst2016Bell}.  Thus, we must show that $M_{\CHSH}(R) = \operatorname{CHSH}_k(R)$ for this value of $k$.

To prove this, we invoke Theorem 2.2 of Ref.~\cite{Beirhorst2016Bell}, which informs us that every resource $R$ which violates the $k$th ${\CHSH}$ inequality admits a convex decomposition in terms of the $k$th variant of the PR box and some free resource that saturates the $k$th ${\CHSH}$ inequality, denoted $L_k^{\rm b}$, such that ${R=\;\lambda\, R_{\textup{PR},k}+(1{-}\lambda)L_k^{\rm b}}$ for some $\lambda\in [0,1]$. 
Further, $\lambda$ is specified {\em uniquely} by the linearity of the ${\CHSH}$ functions and the fact that ${\CHSH_k(R_{\textup{PR},k})=4}$ and ${\CHSH_k(L_k^{\rm b})=2}$, which together imply that $\CHSH_k(R)={{\CHSH}_k \big(\lambda\, R_{\textup{PR},k}+(1{-}\lambda)L_k^{\rm b}\big)}={4\lambda+2(1{-}\lambda)}$. 
Again leveraging this unique decomposition together with linearity of the $\operatorname{CHSH}_k$ function and the linearity of {\LOSR} transformations, it follows that for any {\LOSR} operation $\tau$, we have  $\CHSH_k(\tau\circ R)={\lambda\CHSH_k(\tau\circ R_{\textup{PR},k})+(1{-}\lambda)\CHSH_k(\tau\circ L_k^{\rm b})}$.
Clearly ${\CHSH(\tau\circ R_{\textup{PR},k})\leq 4}$, since four is the algebraic maximum of the $\operatorname{CHSH}_k$ function, and ${\CHSH_k(\tau\circ L_k^{\rm b})\leq 2}$, since every {\LOSR} operation takes a free resource $L_k^{\rm b}$ to a free resource $L'_k$, for which $\CHSH_k(L'_k)\leq 2$. For $R$ such that $\CHSH_k(R)>2$, then, it follows that 
free operations on $R$ cannot increase its ${\CHSH}_k$ value, and hence the maximum in Eq.~\eqref{eq:CHSHmonotonedefn} is achieved by $R$ itself.  
This proves Eq.~\eqref{eqaboveCHSH}. 
\end{proof}

Using the closed-form expression for $M_{\CHSH}$, we can additionally provide closed-form expressions for the weight and robustness monotones introduced in Section~\ref{sec:othermonotones} for {\twotwotwotwo}-type resources:
\begin{cor}\label{measurecor}
For resources of type {\twotwotwotwo}, the nonlocal fraction and the robustnesses to mixing are related to $M_{\CHSH}$ as follows:
\begin{subequations}\begin{align}
M_{\textup{NF}}(R)	&= \frac{M_{\CHSH}(R)-2}{2},\\
M_{\textup{RBST},\boldsymbol{L}}(R) &= \frac{M_{\CHSH}(R)-2}{M_{\CHSH}(R)+2},\\
M_{\textup{RBST}}(R)&= \frac{M_{\CHSH}(R)-2}{M_{\CHSH}(R)+4}.
\end{align}\end{subequations}
\end{cor}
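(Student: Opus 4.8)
The plan is to reduce all three identities to the closed form for $M_{\CHSH}$ established in Proposition~\ref{prop:eqaboveCHSH}, together with the convex decomposition it invokes. First I would dispose of free resources: if $R$ is free then $M_{\CHSH}(R)=2$ and each of $M_{\textup{NF}}$, $M_{\textup{RBST},\boldsymbol{L}}$, $M_{\textup{RBST}}$ vanishes, so all three formulas hold trivially. For nonfree $R$, I would use that these three measures and $M_{\CHSH}$ are all invariant under the {\LSO} group (being monotones preserved by invertible free operations), together with Proposition~\ref{prop:symmetrypartitioning}{\ref{prop:CHSHorbit}}, to assume without loss of generality that the unique violated inequality is the canonical one. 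Thus $M \coloneqq M_{\CHSH}(R) = \CHSH_0(R) > 2$ while $\CHSH_j(R)\le 2$ for $j\neq 0$; and since $\CHSH_4 = -\CHSH_0$ (read off from Eq.~\eqref{eq:chshvariants0}), we have $\CHSH_4(R) = -M$.

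Each identity then splits into a lower bound from linearity of the $\CHSH_0$ functional and a matching upper bound from an explicit optimal mixing resource. For the nonlocal fraction, linearity gives $M = \CHSH_0(R) \le 4\lambda + 2(1-\lambda)$ for any decomposition $R = \lambda R_* + (1-\lambda)L$ with $L$ free, forcing $\lambda \ge (M-2)/2$; the decomposition $R = \tfrac{M-2}{2}R_{\textup{PR}} + (1-\tfrac{M-2}{2})L^{\rm b}_0$ established in the proof of Proposition~\ref{prop:eqaboveCHSH} (following Ref.~\cite{Beirhorst2016Bell}) saturates this, giving $M_{\textup{NF}}(R) = (M-2)/2$. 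For the two robustnesses the lower bounds come from requiring the $\CHSH_0$ value of the mixture to be at most $2$, using that any free $L$ satisfies $\CHSH_0(L)\ge -2$ (equivalently $\CHSH_4(L)\le 2$) whereas an arbitrary box $\sigma$ only obeys $\CHSH_0(\sigma)\ge -4$; feasibility of $\lambda\CHSH_0(L)+(1-\lambda)M\le 2$ then forces $\lambda \ge (M-2)/(M+2)$, and of $\lambda\CHSH_0(\sigma)+(1-\lambda)M\le 2$ forces $\lambda \ge (M-2)/(M+4)$.

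The substantive work, and the step I expect to be the main obstacle, is achievability: I must exhibit, for each robustness, a mixing resource for which the mixture is genuinely free, i.e.\ satisfies all eight $\CHSH$ inequalities \emph{and} all sixteen positivity constraints, not merely $\CHSH_0\le 2$. The key enabling observation is that $R_{\textup{PR},4}$ (and hence $L^{\rm b}_{\textup{NPR},4}=\tfrac12 R_{\textup{PR},4}+\tfrac12 L_\varnothing$) affects only the $\CHSH_0$ and $\CHSH_4$ combinations of correlators, leaving $\CHSH_j$ for $j\in\{1,2,3,5,6,7\}$ at the value $0$. For $M_{\textup{RBST},\boldsymbol{L}}$ I would mix $R$ with the free box $L^{\rm b}_{\textup{NPR},4}$ at weight $\lambda=(M-2)/(M+2)$, and for $M_{\textup{RBST}}$ I would mix with $R_{\textup{PR},4}$ at weight $(M-2)/(M+4)$. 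In both cases positivity is automatic because the mixture is a convex combination of two valid boxes; the six untouched $\CHSH_j$ stay at $(1-\lambda)\CHSH_j(R)\le 2$; the chosen weight sends $\CHSH_0$ to exactly $2$; and a direct computation shows $\CHSH_4$ lands at exactly $-2$ in each case. Hence the mixtures are free and the lower bounds are attained.

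Finally I would observe that the three resulting closed forms are strictly increasing functions of $M_{\CHSH}(R)$, which is the content of the corollary and confirms that on {\twotwotwotwo}-type resources these previously studied distance measures carry exactly the same ordering information as $M_{\CHSH}$. The only points requiring care are justifying the {\LSO}-invariance of each measure and checking that the unique-violation structure of Proposition~\ref{prop:eqaboveCHSH} correctly transports the canonical ($k=0$) computation to an arbitrary violated inequality; once the optimal mixing resources $R_{\textup{PR},4}$ and $L^{\rm b}_{\textup{NPR},4}$ are identified, every remaining step is an elementary linear computation.
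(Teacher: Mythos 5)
Your proof is correct, but it takes a genuinely different route from the paper. The paper's own proof is essentially a citation: it imports the relations between $\CHSH(R)$ and the three distance measures from Appendix~E of Ref.~\cite{geometry2018} and then recasts them in terms of $M_{\CHSH}(R)$ via Proposition~\ref{prop:eqaboveCHSH}. You instead give a self-contained derivation: the lower bounds follow from linearity of $\CHSH_0$ together with the bounds $\CHSH_0(R_*)\le 4$, $\CHSH_0(L)\in[-2,2]$, and $\CHSH_0(\sigma)\ge -4$, and the matching upper bounds come from exhibiting explicit optimizers --- the Bierhorst decomposition $R=\lambda R_{\textup{PR}}+(1-\lambda)L_0^{\rm b}$ for the nonlocal fraction, and the ``anti-PR'' resources $L^{\rm b}_{\textup{NPR},4}$ and $R_{\textup{PR},4}$ for the two robustnesses. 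Your verification that the resulting mixtures are genuinely free is the step that Appendix~E of the cited reference is doing implicitly, and you handle it correctly: positivity is automatic for convex combinations of no-signalling boxes, the six functionals $\CHSH_j$ with $j\notin\{0,4\}$ vanish on $R_{\textup{PR},4}$ so the mixture inherits $|(1-\lambda)\CHSH_j(R)|\le 2$, and the chosen weights pin $\CHSH_0$ and $\CHSH_4$ of the mixture to $+2$ and $-2$ respectively. The reduction to $k=0$ via {\LSO}-invariance is also sound, since all four quantities are monotones and hence constant on {\LSO} orbits. What the paper's approach buys is brevity; what yours buys is a proof that can be read without consulting the external reference, and it makes explicit which resources achieve the optima in each of the three minimizations, which is information the paper's version leaves opaque.
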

\begin{proof}
The relationship of these distance measures to the extent by which the ${\CHSH}$ inequality is violated was derived in Appendix~E of Ref.~\cite{geometry2018}. We simply recast those results in terms of $M_{\CHSH}(R)$ instead of ${\CHSH}(R)$ by means of Proposition~\ref{prop:eqaboveCHSH}.
\end{proof}

The values of the four monotones $M_{\CHSH}(R)$, $M_{\textup{NF}}(R)$, $M_{\textup{RBST},\boldsymbol{L}}(R)$, and $M_{\textup{RBST}}(R)$
 are therefore all expressible as strictly-increasing functions of 
 one another when applied to resources of type {\twotwotwotwo}. 
That is, if any one of these monotones increases (respectively decreases) between a given pair of resources of type {\twotwotwotwo}, then \emph{all} of monotones will similarly increase (respectively decrease) between that pair of resources. As we will focus on the {\twotwotwotwo} type below, and the three distance-function monotones are no more informative than $M_{\CHSH}$ in this case, we will not discuss them further.

We now turn to providing a closed-form expression for $M_{\NPR}$ for resources of type {\twotwotwotwo}. We first recall some more details of the geometry of $\boldsymbol{S}^G_{\twotwotwotwo}$.

Recall that we use the superscript $b$ to denote that a resource lies on the particular boundary of the free set that is defined by the CHSH inequality (and thus that it saturates this inequality). 
 We further use the superscript ${bb}$ to denote that a resource both saturates the CHSH inequality and {\em additionally} lies on the boundary of the full polytope of resources, $\boldsymbol{S}^G_{\twotwotwotwo}$.
The set $\boldsymbol{L}_k^{\rm b}$ 
of CHSH$_k$-inequality-saturating resources is 7-dimensional, and the set $\boldsymbol{L}_k^{\rm bb}$ of CHSH$_k$-inequality-saturating resources on the boundary of the full polytope $\boldsymbol{S}^G_{\twotwotwotwo}$
 is 6-dimensional.\footnote{$\boldsymbol{L}^{\rm b}$ is a facet of the 8-dimensional {\twotwotwotwo} local polytope, and facets of polytopes are always one dimension lower than the dimension of the polytope itself. A resource is within $\boldsymbol{L}_k^{\rm bb}$ if it is both a member of the facet defined by the CHSH$_k$ inequality and also a member of some other facet defined by a positivity inequality. The regions defined by the intersection of adjacent facets are generally termed `ridges', and a ridge always has dimensionality $d-2$, where $d$ is the dimension on the polytope. $\boldsymbol{L}_k^{\rm bb}$ is a collection of all the eight ridges adjacent to the $\boldsymbol{L}_k^{\rm b}$ facet. Equivalently, $R\in \boldsymbol{L}_k^{\rm bb}$ if and only if $R$ can be convexly decomposed as a mixture over seven-or-fewer (out of eight) deterministic boxes which saturate the CHSH inequality. Each possible size-seven subset of CHSH-inequality-saturating deterministic boxes defines one of the eight 6-dimensional ridges comprising $\boldsymbol{L}_k^{\rm bb}$.} It follows that $\boldsymbol{L}_k^{\rm bb} \subseteq \boldsymbol{L}_k^{\rm b}$.

\begin{prop}\label{geom}
For any free resource $R$ of type {\twotwotwotwo}, $M_{\rm NPR}(R) =2$.  For any nonfree resource $R$ of type {\twotwotwotwo}, there is a unique $k \in \{0,\dots,7\}$ for which $\operatorname{CHSH}_k(R) > 2$.  
Within this region, if $R\in\boldsymbol{C}_{\textup{NPR},k}$, then we have simply $M_{\textup{NPR}}(R) = \CHSH_k(R)$.  If, on the other hand, $R\not\in\boldsymbol{C}_{\textup{NPR},k}$, we have 
$$M_{\textup{NPR}}(R)=2\alpha{+}2,$$
where $\alpha$ is the value appearing in the decomposition $R=\;\gamma\, L_{R}^{\rm bb}+(1{-}\gamma) C_k(\alpha)$, where $C_k(\alpha) \in \boldsymbol{C}_{\textup{NPR},k}$, $L_{R}^{\rm bb} \in \boldsymbol{L}_k^{\rm bb}$ and $\gamma \in [0,1]$.
This value of $\alpha$ is unambiguous (and computable from simple geometry) because there exists a {\em unique} resource $L_{R}^{\rm bb} \in \boldsymbol{L}_k^{\rm bb}$ and a unique choice of $\gamma \in [0,1]$ and of $\alpha \in [0,1]$ such that $R=\;\gamma\, L_{R}^{\rm bb}+(1{-}\gamma) C_k(\alpha)$.
\end{prop}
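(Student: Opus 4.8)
The plan is to treat three regimes---free $R$, nonfree $R$ lying on the chain, and nonfree $R$ off the chain---and throughout to replace the defining chain $\boldsymbol{C}_{\textup{NPR}}$ by the chain $\boldsymbol{C}_{\textup{NPR},k}$ adapted to the unique violated inequality $\CHSH_k$; this is legitimate since all eight chains induce the same monotone by Proposition~\ref{prop:symmetrypartitioning}. For free $R$ the bottom chain element $C_k(0)=L_{\textup{NPR},k}^{\rm b}$ is itself free and hence converts to any free resource, so $M_{\textup{NPR}}(R)=2$. For nonfree $R=C_k(\beta)$ on the chain, $C_k(\beta)\conv R$ trivially, while any cheaper $C_k(\alpha')$ with $\alpha'<\beta$ would have to increase $M_{\CHSH}$ (which equals $2\alpha'+2$ on the chain by Proposition~\ref{prop:eqaboveCHSH}), contradicting monotonicity; hence $M_{\textup{NPR}}(R)=2\beta+2=\CHSH_k(R)$.

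The substance is the off-chain case, and I would first set up the geometry. The region $\CHSH_k\geq 2$ is a pyramid with apex $R_{\textup{PR},k}$ (the unique vertex of $\boldsymbol{S}^G_{\twotwotwotwo}$ strictly violating $\CHSH_k$) and base facet $\boldsymbol{L}_k^{\rm b}$, so every $R$ in it has a unique decomposition $R=\rho R_{\textup{PR},k}+(1-\rho)L^{\rm b}_R$ with $\rho=(\CHSH_k(R)-2)/2$ and $L^{\rm b}_R\in\boldsymbol{L}_k^{\rm b}$. A direct check shows $L_{\textup{NPR},k}^{\rm b}$ lies in the relative interior of $\boldsymbol{L}_k^{\rm b}$ (its probabilities are all in $\{1/8,3/8\}$, so every positivity inequality is strict). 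Consequently the ray from $L_{\textup{NPR},k}^{\rm b}$ through $L^{\rm b}_R$ meets the boundary $\boldsymbol{L}_k^{\rm bb}$ in a unique point $L_R^{\rm bb}$, and rearranging $L^{\rm b}_R=\delta L_R^{\rm bb}+(1-\delta)L_{\textup{NPR},k}^{\rm b}$ back into apex/base coordinates yields the unique decomposition $R=\gamma L_R^{\rm bb}+(1-\gamma)C_k(\alpha)$ claimed in the statement, with $\gamma=(1-\rho)\delta$ and $\alpha=\rho/[1-(1-\rho)\delta]$. This settles existence and uniqueness, and en route records the two identities $\CHSH_k(R)=2+2\alpha(1-\gamma)$ and $\Pi(R)=(1-\gamma)(1-\alpha)\,\Pi(L_{\textup{NPR},k}^{\rm b})$ for any positivity functional $\Pi$ saturated by both $L_R^{\rm bb}$ and $R_{\textup{PR},k}$ (the latter because the PR box has vanishing probability entries on the positivity facets adjacent to the $\CHSH_k$ facet), which I will use below. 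The upper bound $M_{\textup{NPR}}(R)\leq 2\alpha+2$ is then immediate: since $L_R^{\rm bb}$ is free, the decomposition exhibits $R$ as a free convex-mixing image of $C_k(\alpha)$, so $C_k(\alpha)\conv R$.

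The hard part is the matching lower bound, i.e.\ that no cheaper chain element reaches $R$. For this I would invoke Proposition~\ref{lem:belstheorem}: $R$ is reachable from $C_k(\alpha')$ iff $R$ lies in the convex hull of the images of $C_k(\alpha')$ under the locally deterministic operations. These images split into the $\LSO$-orbit of $C_k(\alpha')$---which, by the $G_{456}$-stabilizer and $G_{123}$-orbit structure of Proposition~\ref{prop:symmetrypartitioning}, consists of the eight boxes $C_{k'}(\alpha')$, of which only $C_k(\alpha')$ lies in the $k$-pyramid---and the images under non-invertible $\LDO$s, each of which is free (a non-bijective local map either probes a single resource setting or renders one effective local measurement trivial, and a box with a repeated or trivial local measurement satisfies every CHSH inequality). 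Hence any $R$ in the pyramid reachable from $C_k(\alpha')$ can be written $R=\mu\,C_k(\alpha')+(1-\mu)L$ with $L$ free.

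Finally I would combine two linear functionals applied to this last decomposition. Evaluating $\CHSH_k$ and using $\CHSH_k(L)\leq 2$ gives $\alpha(1-\gamma)\leq\mu\alpha'$; evaluating the ridge positivity functional $\Pi$ and using $\Pi(L)\geq 0$ together with the identities above gives $(1-\gamma)(1-\alpha)\geq\mu(1-\alpha')$. Since $R$ is nonfree and off-chain one has $\gamma\in(0,1)$ and $\alpha,\alpha'\in(0,1]$; dividing the two inequalities and cross-multiplying collapses them to $\alpha\leq\alpha'$, so $M_{\textup{NPR}}(R)\geq 2\alpha+2$. The main obstacle, and the step I would be most careful about, is precisely this lower bound: pinning down the reachable set (in particular the sub-claim that non-invertible $\LDO$s output free boxes) and identifying the second functional $\Pi$ whose combination with $\CHSH_k$ is tight enough to force $\alpha'\geq\alpha$---monotonicity of $M_{\CHSH}$ alone yields only the weaker bound $\alpha'\geq\alpha(1-\gamma)$.
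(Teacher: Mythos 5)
Your proof is correct in substance, but it reaches the key lower bound by a genuinely different route than the paper. The paper's proof (Appendix~B.1) introduces the notion of a \emph{screening-off inequality}: since every resource saturating $\CHSH_k(R)=2$ is free, and since $C_k(\alpha')$ is the \emph{only} element of $\mathbfcal{V}^{\LDO}(C_k(\alpha'))$ violating $\CHSH_k$, any $R$ with $\CHSH_k(R)>2$ reachable from $C_k(\alpha')$ must lie on a segment from $C_k(\alpha')$ to a point of $\boldsymbol{L}_k^{\rm b}$; the paper then recasts the minimization over $\alpha'$ as a constrained optimization whose optimum is attained when the positivity constraint on $L_R^{\rm b}$ is saturated, i.e.\ when $L_R^{\rm b}\in\boldsymbol{L}_k^{\rm bb}$. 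You instead establish the unique pyramid decomposition directly and then derive $\alpha\leq\alpha'$ by evaluating two linear functionals ($\CHSH_k$ and a positivity functional $\Pi$ vanishing on both $L_R^{\rm bb}$ and $R_{\textup{PR},k}$) on the generic reachable decomposition and taking their ratio. Both arguments rest on the same ingredients --- Proposition~\ref{lem:belstheorem}, the Bierhorst pyramid geometry, and the fact that non-symmetry type-preserving $\LDO$ images of a {\twotwotwotwo} box are free --- but your two-functional computation makes explicit and rigorous the ``optimum at constraint saturation'' step that the paper asserts somewhat informally, at the price of having to identify the correct second functional and check the edge cases of the division.

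One imprecision to fix: in the decomposition $R=\mu\,C_k(\alpha')+(1-\mu)L$, the residual $L$ is a mixture of the genuinely free $\LDO$ images \emph{and} the seven boxes $C_{k'}(\alpha')$ with $k'\neq k$, which are nonfree; so $L$ is not free in general. This does not break your argument, because the only properties of $L$ you actually use are $\CHSH_k(L)\leq 2$ (which holds for the $C_{k'}(\alpha')$, $k'\neq k$, since the eight violation regions are disjoint) and $\Pi(L)\geq 0$ (which holds for any valid box), but the claim as written should be weakened accordingly.
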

\noindent The (unique) relevant decomposition is shown in Fig.~\ref{fig:TwoDifferentDecompositions} (for the case where $k=0$). The proof of this proposition is given in Appendix~\ref{sec:geom}.

\begin{figure} 
\begin{center}
\begin{tikzpicture}[scale=0.7]
\path[draw, ultra thick] (-30:4) -- (90:4) -- (210:4) -- cycle;

\node[draw,shape=circle,fill,scale=.4, label={[label distance=0.1cm]0:\large{${L_{R}^{\rm bb}}$}}] at (-30:4) {};
\node[draw,shape=circle,fill,scale=.4, label={[label distance=0.1cm]90:\large{${R_{\textup{PR}}}$}}] at (90:4) {};
\node[draw,shape=circle,fill,scale=.4, label={[label distance=0.1cm]180:\large{${L_{\textup{NPR}}^{b}}$}}] at (210:4) {};

\path[name path=chsh] (-4,0) -- (4,0);
\path[name path=nrp, shift=(-30:4)] (0:0) -- (150:7); 
    
\path [name intersections={of=chsh and nrp, by=R}];
\node [draw,shape=circle,fill,scale=.4, label={[label distance=-0.1cm]60:${\textbf{\textit{R}}}$}] at (R) {};

\path[name path=ls] (210:4) -- (90:4);
\path [name intersections={of=ls and nrp, by=C}];
\node [draw,shape=circle,fill,scale=.4, label={[label distance=-0.1cm]135:${C(\alpha)}$}] at (C) {};

\draw [decoration={brace, mirror, raise=0.1cm }, decorate] (C) -- (R);
\node [label={[label distance=0cm, rotate=-30]-90:$\boldsymbol{\gamma}\phantom{1}$}] at ($(C)!0.5!(R)$) {};

\draw [decoration={brace, mirror, raise=0.1cm }, decorate] (R) -- (-30:3.7);
\node [label={[label distance=0cm, rotate=-30]-90:$\boldsymbol{(1-\gamma)}\phantom{111}$}] at ($(-30:4)!0.5!(R)$) {};

\draw [decoration={brace, mirror, raise=0.1cm }, decorate] (90:4) -- (C);
\node [label={[label distance=0.1cm, rotate=60]90:$\boldsymbol{(1-\alpha)}\phantom{1}$}] at ($(C)!0.5!(90:4)$) {};

\draw [decoration={brace, mirror, raise=0.1cm }, decorate] (C) -- (210:4);
\node [label={[label distance=0.1cm, rotate=60]90:$\boldsymbol{\alpha}\phantom{1}$}] at ($(C)!0.5!(210:4)$) {};

\path[clip] (-30:4) -- (90:4) -- (210:4) -- cycle;
\draw[thick,  shift=(-30:4)] (0:0) -- (150:7);
\end{tikzpicture}
\end{center}
\caption[]{
A depiction of a family of resources parametrized by $\alpha$ and $\gamma$, and the unique decomposition of a particular point $R(\alpha,\!\gamma)$ in terms of a point $C(\alpha)$ on the chain $\boldsymbol{C}_{\textup{NPR}}$ and a (unique) CHSH-saturating resource $L_{R}^{\rm bb}$ that lies in the boundary of the set of GPT-realizable common-cause boxes. Note that the parameters $\alpha$ and $(1-\alpha)$ indicate the fraction of the full line segment attributed to each sub-segment, and similarly with $\gamma$ and $(1-\gamma)$. 
}
\label{fig:TwoDifferentDecompositions}
\end{figure}

\FloatBarrier

\section{Properties of the pre-order of common-cause boxes}\label{sec:results}
We now leverage the two monotones just introduced to prove multiple interesting features of the pre-order of common cause boxes.

\subsection{Inferring global properties of the pre-order
}\label{sec:lessonsfromtwomonotones}

Important properties of the pre-order over all resources can already be learned
by considering just these two monotones ($M_{\rm CHSH}$ and $M_{\rm NPR}$)
and just resources of  type {\twotwotwotwo}, indeed, just a specific kind of two-parameter family of resources within this set. 
The kind of two-parameter family that we consider, denoted ${\boldsymbol{S}_{\twotwotwotwo}^{L^{\rm bb}_{\star}} \subset \boldsymbol{S}^G_{\twotwotwotwo}}$, is
\begin{equation}\label{Rset}
\boldsymbol{S}_{\twotwotwotwo}^{L^{\rm bb}_{\star}} \coloneqq \{ R(\alpha,\!\gamma) : \alpha \in [0,1],\; \gamma \in [0,1]\},
\end{equation}
where
\begin{align}\label{eq:def2paramfamily}
R(\alpha,\!\gamma)\coloneqq\; \gamma\, L^{\rm bb}_{\star}+(1{-}\gamma)C(\alpha),
\end{align}
with $C(\alpha) \in \boldsymbol{C}_{\textup{NPR}}$. There are many such families, one for each choice of a resource $L^{\rm bb}_{\star}\in \boldsymbol{L}^{\rm bb}$.
Each such family $\boldsymbol{S}_{\twotwotwotwo}^{L^{\rm bb}_{\star}}$ is  the convex hull of the chain $\boldsymbol{C}_{\textup{NPR}}$ and the associated point $L^{\rm bb}_{\star}$, i.e.,
\begin{align}
\boldsymbol{S}_{\twotwotwotwo}^{L^{\rm bb}_{\star}} = {\operatorname{ConvexHull}}{\left(\;\{L^{\rm bb}_{\star},\;R_{\textup{PR}},\;L_{\textup{NPR}}^{\rm b}\}\;\right)}.
\end{align}

Evaluating $M_{\textup{NPR}}$ for resources in this family is straightforward, thanks to Proposition~\ref{geom}. The proposition directly implies that for any $R(\alpha,\!\gamma) \in \boldsymbol{S}_{\twotwotwotwo}^{L^{\rm bb}_{\star}}$,
\begin{align}\label{eq:alphacostonfam}
M_{\rm NPR}\big(R(\alpha,\!\gamma)\big) 
 = 2\alpha{+}2.
\end{align}

We now consider the value of $M_{\CHSH}$ for resources in this family. 
Noting that ${{\CHSH}\big(R(\alpha,\!\gamma)\big)\geq 2}$ for all ${R(\alpha,\!\gamma)\in \boldsymbol{S}_{\twotwotwotwo}^{L^{\rm bb}_{\star}}}$, Proposition~\ref{prop:eqaboveCHSH} states that ${M_{\CHSH}\big(R(\alpha,\!\gamma)\big)={\CHSH}\big(R(\alpha,\!\gamma)\big)}$.
\noindent 
Substituting the definition of $C(\alpha)$ from Eq.~\eqref{eq:chainparametrization} into
Eq.~\eqref{eq:def2paramfamily}, we obtain
\begin{align*}
R(\alpha,\!\gamma)={\gamma\, L_{\star}^{\rm bb} \!+ (1{-}\gamma)\alpha\, R_{\textup{PR}}+ (1{-}\gamma)(1{-}\alpha)L_{\textup{NPR}}^{\rm b}}.
\end{align*}
Recalling that the CHSH function is linear and that it satisfies ${\CHSH(L^{\rm b})=2}$ for all ${L^{\rm b} \in \boldsymbol{L}^{\rm b}}$
 and $\CHSH(R_{\rm PR})=4$, it follows that
\begin{align}
\nonumber M_{\CHSH}\big(R(\alpha,\!\gamma)\big) &=\,{\CHSH}{\big(R(\alpha,\!\gamma)\big)}
\\\nonumber &= 2\gamma + 4(1{-}\gamma)\alpha+ 2(1{-}\gamma)(1{-}\alpha) 
\\&= 2\alpha(1{-}\gamma)+2. \label{MCHSHfam}
\end{align}

\begin{figure}[b!]
\begin{center}
\subfigure[\label{fig:twoMTrianglePlot}]
{
\centering
\begin{tikzpicture}[scale=0.7]
\path[draw, ultra thick] (-30:4) -- (90:4) -- (210:4) -- cycle;

\node[draw,shape=circle,fill,scale=.4, label={[label distance=0.1cm]0:\large{${L_{\star}^{\rm bb}}$}}] at (-30:4) {};
\node[draw,shape=circle,fill,scale=.4, label={[label distance=0.1cm]90:\large{${R_{\textup{PR}}}$}}] at (90:4) {};
\node[draw,shape=circle,fill,scale=.4, label={[label distance=0.1cm]180:\large{${L_{\textup{NPR}}^{\rm b}}$}}] at (210:4) {};

\begin{pgfonlayer}{myback}
\path[clip] (-30:4) -- (90:4) -- (210:4) -- cycle;
\foreach \y in{-3,-2.5,...,4}
  \draw[name path=c.\y, color=jflyBlue] (-4,\y) -- (4,\y);
  
\foreach \y in{180,175,...,120}
  \draw[name path=s.\y, shift=(-30:4), color=jflyVermillion] (0:0) -- (\y:7);
\end{pgfonlayer} 
  
\path[name path=sr1, shift=(-30:4)] (0:0) -- (135:7);  
\path[name path=cr1] (-4,1.5) -- (4,1.5);  
\path [name intersections={of=cr1 and sr1, by=R1}];
\node [draw,shape=circle,fill,scale=.4, label={[label distance=-0.1cm]60:${R_1}$}] at (R1) {};

\path[name path=sr2, shift=(-30:4)] (0:0) -- (130:7);  
\path[name path=cr2] (-4,0) -- (4,0);  
\path [name intersections={of=cr2 and sr2, by=R2}];
\node [draw,shape=circle,fill,scale=.4, label={[label distance=-0.1cm]240:${R_2}$}] at (R2) {};


\path[name path=sr3, shift=(-30:4)] (0:0) -- (145:7);  
\path[name path=cr3] (-4,0.5) -- (4,0.5);  
\path [name intersections={of=cr3 and sr3, by=R3}];
\node [draw,shape=circle,fill,scale=.4, label={[label distance=-0.1cm]240:${R_3}$}] at (R3) {};

\end{tikzpicture}
}
\subfigure[\label{fig:twoMMonotoneAxes}]
{
\centering
\begin{tikzpicture}[scale=1]

\path[draw, ultra thick] (0,0) -- (0,4) --  (4,4) -- (4,0) ;
\path[draw, ultra thick, dashed] (0,0) -- (4,0) ;
\node [draw,shape=circle,fill,scale=.4, label={[label distance=-0.1cm]225:${2}$}] at (0,0) {};
\node [draw=none,scale=.4, label={[label distance=-0cm]-90:${4}$}] at (4,0) {};
\node [draw=none,scale=.4, label={[label distance=-0cm]180:${4}$}] at (0,4) {};
\node [draw,shape=circle,fill,scale=.4, label={[label distance=-0.1cm]45:${R_{PR}}$}] at (4,4) {};
\node at (2,-0.5) {${M_{\textup{NPR}}}$};
\node at (-1, 2) {${M_{\textup{CHSH}}}$};

\path[pattern=north west lines] (0,0) -- (0,4) -- (4,4)--cycle;
\path[pattern=north east lines] (0,0) -- (0,4) -- (4,4)--cycle;

\node [draw,shape=circle,fill,scale=.4, label={[label distance=-0.1cm]45:${R_1}$}] at (3,2) {};
\node [draw,shape=circle,fill,scale=.4, label={[label distance=-0.1cm]225:${R_2}$}] at (3.33,1) {};
\node [draw,shape=circle,fill,scale=.4, label={[label distance=-0.1cm]225:${R_3}$}] at (2.33,1.33) {};

\begin{pgfonlayer}{myback}
\path[clip] (0,0) -- (4,0) -- (4,4) -- cycle;
\foreach \y in{0,0.334,...,4}
  \draw[name path=c.\y, color=jflyBlue] (\y,\y) -- (4,\y);
\foreach \y in{0,0.334,...,4}
  \draw[name path=c.\y, color=jflyVermillion] (\y,0) -- (\y,\y);;
\end{pgfonlayer} 

\end{tikzpicture}}
\end{center}
\caption{\subref{fig:twoMTrianglePlot} A plot of the 2-parameter family of resources $\boldsymbol{S}_{\twotwotwotwo}^{L^{\rm bb}_{\star}}$ (defined in Eq.~\eqref{Rset}), with values for $M_{\CHSH}$ depicted by a set of level curves (light blue, horizontal lines) and values for $M_{\rm NPR}$ depicted by another set of level curves (orange, diagonal lines).  \subref{fig:twoMMonotoneAxes} A plot of the same 2-parameter family of resources, but in a Cartesian coordinate system with $M_{\CHSH}$ and $M_{\rm NPR}$ as the coordinates.  
Because  all resources on the bottom border in plot \subref{fig:twoMTrianglePlot} are free, these all map 
 to a single point in \subref{fig:twoMMonotoneAxes}, namely $(M_{\CHSH},M_{\rm NPR})=(2,2)$. 
The fact that there are no resources with $M_{\CHSH}=2$ and $M_{\rm NPR} > 2$ is represented by the use of a dashed line at the base of the plot in \subref{fig:twoMMonotoneAxes}.  Similarly, the hatched region in \subref{fig:twoMMonotoneAxes} describes joint values of the two monotones that are not achieved by any resource in the family, as 
$M_{\CHSH}(R)\leq M_{\NPR}(R)$ for all $R$.  
Pictured in both plots are three illustrative resources.
The points $R_1$ and $R_2$ are incomparable, as are $R_3$ and $R_2$, while $R_1$ and $R_3$ are strictly ordered.  This implies that the incomparability relation in the pre-order is not transitive.
}
\label{fig:twoM}
\end{figure}

In Fig.~\ref{fig:twoMTrianglePlot}, we plot some of the level curves\footnote{A level curve of a function $f$ is a set of points that yield the same value of $f$; e.g., $\{ x\,|\,f(x){=}c\}$.} for $M_{\rm NPR}$ and $M_{\CHSH}$ over any such two-parameter family of resources.  The level curve defined by ${M_{\rm NPR}(R)=2\alpha{+}2}$ is a diagonal line in Fig.~\ref{fig:twoMTrianglePlot}, extending from the (implicit) point ${C(\alpha)}$ to the point ${L_{\star}^{\rm bb}}$. The level curve defined by ${M_{\CHSH}(R)=2\alpha(1{-}\gamma)+2}$ is a horizontal line in Fig.~\ref{fig:twoMTrianglePlot}, extending between the two implicit points ${C(\alpha)}$ and ${\alpha\, R_{\textup{PR}}+(1{-}\alpha) L_{\star}^{\rm bb}}$.

From these level curves, we can immediately deduce a number of features of the pre-order of resources. In particular, we consider those features of the pre-order that were defined in Section~\ref{sec:oraclelimitations}.

First, we see that the pre-order is locally infinite, simply by virtue of the fact that there exist chains which are represented by \emph{continuous} sets of distinct resources, such as the chain $\boldsymbol{C}_{\textup{NPR}}$. The interval between any two resources in such a continuous chain contains a continuous infinity of inequivalent resources. 

Second, one can also see that the pre-order of resources is not totally pre-ordered.
For instance, the two resources $R_1$ and $R_2$ in Fig.~\ref{fig:twoMTrianglePlot} are incomparable, as witnessed by the fact that
 \(R_1\) has a larger value of $M_{\CHSH}$ than \(R_2\) does, but \(R_2\) has a larger value of $M_{\rm NPR}$ than \(R_1\) does. More generally, the level curves for the two monotones allow one to immediately construct (by inspection) a continuous infinity of such incomparable pairs.

Furthermore, the binary relation of incomparability is not transitive, so the partial order is not weak. This can be seen by the example of the three resources in Fig.~\ref{fig:twoMTrianglePlot}: $R_1$ and $R_2$ are incomparable (as just argued) and $R_3$ and $R_2$ are incomparable (by the same logic), yet $R_1$ and $R_3$ are \emph{comparable}, as evidenced by the fact that one can obtain $R_3$ from $R_1$, by mixing $R_1$ with any free resource that intersects the line defined by the points $R_1$ and $R_3$.
 
In addition, one can also see that the height of the pre-order is infinite.  It suffices to note that the chain $\boldsymbol{C}_{\textup{NPR}}$ is totally ordered and contains a continuum of elements.  The width of the pre-order is also infinite. Consider, for example, the line segment defined by the points $R_1$ and $R_2$ in Fig.~\ref{fig:twoMTrianglePlot}. 
This subset of resources constitutes an antichain, as every resource in it is incomparable to every other: each resource has a higher $M_{\rm NPR}$ value and lower $M_{\CHSH}$ value than any of its neighbors towards the left, and has a lower $M_{\rm NPR}$ value and higher $M_{\CHSH}$ value than any of its neighbors towards the right. Because this subset also forms a continuum, it follows that the width of the pre-order is infinite. 

\begin{figure}[b!]
\begin{center}
\subfigure[\label{fig:completeTrianglePlot}]
{
\centering
\begin{tikzpicture}[scale=0.7]
\path[draw, ultra thick] (-30:4) -- (90:4) -- (210:4) -- cycle;

\node[draw,shape=circle,fill,scale=.4, label={[label distance=0.1cm]0:\large{${L_{\star}^{\rm bb}}$}}] at (-30:4) {};
\node[draw,shape=circle,fill,scale=.4, label={[label distance=0.1cm]90:\large{${R_{\textup{PR}}}$}}] at (90:4) {};
\node[draw,shape=circle,fill,scale=.4, label={[label distance=0.1cm]180:\large{${L_{\textup{NPR}}^{\rm b}}$}}] at (210:4) {};

\path[name path=ls] (210:4) -- (90:4);
\path[name path=chsh] (-4,0) -- (4,0);
\path[name intersections={of=chsh and ls, by=lp}];
\path[name path=nrp, shift=(-30:4)] (0:0) -- (150:7);
\path[name intersections={of=nrp and ls, by=lup}];

\path[clip] (-30:4) -- (90:4) -- (210:4) -- cycle;
\draw[thick] (-4,0) -- (4,0);
  
\path[clip] (-30:4) -- (90:4) -- (210:4) -- cycle;
\draw[thick, shift=(-30:4)] (0:0) -- (150:7);

\path [name intersections={of=chsh and nrp, by=R}];
\node [draw,shape=circle,fill,scale=.4, label={[label distance=-0.1cm]45:${R}$}] at (R) {};

\path[name path=rs] (-30:4) -- (90:4);
\path[name intersections={of=chsh and rs, by=rp}];

\begin{pgfonlayer}{myback}
\filldraw[draw=black, fill=jflyYellow, opacity=0.5] (-30:4) -- (R) -- (rp)  -- cycle;
\filldraw[draw=black, fill=jflyYellow, opacity=0.5] (-30:4) -- (R) -- (lp)  -- (lup);
\filldraw[draw=black, fill=jflyBlue, opacity=1] (-30:4) -- (R) -- (lp) -- (210:4) -- cycle;
\filldraw[draw=black, fill=jflySkyBlue, opacity=0.4] (R) -- (lup)  -- (90:4) -- (rp) -- cycle;
\end{pgfonlayer}

\end{tikzpicture}}
\subfigure[\label{fig:completeMonotoneAxes}]
{
\centering
\begin{tikzpicture}[scale=1]

\path[draw, ultra thick] (0,0) -- (0,4) --  (4,4) -- (4,0) ;
\path[draw, ultra thick, dashed] (0,0) -- (4,0) ;
\node [draw,shape=circle,fill,scale=.4, label={[label distance=-0.1cm]225:${2}$}] at (0,0) {};
\node [draw=none,scale=.4, label={[label distance=-0cm]-90:${4}$}] at (4,0) {};
\node [draw=none,scale=.4, label={[label distance=-0cm]180:${4}$}] at (0,4) {};
\node [draw,shape=circle,fill,scale=.4, label={[label distance=-0.1cm]45:${R_{PR}}$}] at (4,4) {};

\node at (2,-0.5) {${M_{\textup{NPR}}}$};
\node at (-1, 2) {${M_{\textup{CHSH}}}$};

\path[pattern=north west lines] (0,0) -- (0,4) -- (4,4)--cycle;
\path[pattern=north east lines] (0,0) -- (0,4) -- (4,4)--cycle;

\node [draw,shape=circle,fill,scale=.4, label={[label distance=-0.1cm]45:${R}$}] (R) at (2.75,1.25) {};

\draw[thick] (1.25,1.25) -- (4,1.25);
\draw[thick] (2.75,0) -- (2.75,2.75);
\path[draw, ultra thick] (0,0) -- (4,4);  

  
\begin{pgfonlayer}{myback}
\fill[fill=jflyYellow, opacity=0.5] (R) rectangle (4,0);
\path[clip] (0,0) -- (4,0) -- (4,4) -- cycle;
\fill[fill=jflyYellow, opacity=0.5] (R) rectangle (1.25,2.75);
\fill[fill=jflyBlue, opacity=1] (R) rectangle (0,0);
\fill[fill=jflySkyBlue, opacity=0.4] (R) rectangle (4,4);
\end{pgfonlayer}

\end{tikzpicture}}
\end{center}
\caption[]{
\subref{fig:completeTrianglePlot} and \subref{fig:completeMonotoneAxes} provide the same pair of depictions of the 2-parameter family of resources $\boldsymbol{S}_{\twotwotwotwo}^{L^{\rm bb}_{\star}}$ as were introduced in Fig.~\ref{fig:twoM}.
We consider a particular resource $R$.  In \subref{fig:completeTrianglePlot}, we depict the level curves of $M_{\rm CHSH}$ (horizontal) and $M_{\rm NPR}$ (angled) which include $R$. By monotonicity of the two monotones, $R$ cannot be freely converted into any resource in the upper light-blue region or in the pair of yellow regions. As we prove in Section~\ref{sec:ourcompleteness}, the two monotones are complete for this subset, 
which is equivalent to the fact that an arbitrary resource $R$ {\em can} be freely converted to any resource in the lower dark-blue region; namely, the entire region wherein $M_{\rm CHSH}$ and $M_{\rm NPR}$ do not have a value greater than the one they have on $R$. Resources in the upper light-blue region can be converted to $R$, while resources in the pair of yellow regions are incomparable to $R$. 
}
\label{fig:complete} 
\end{figure}

Also by inspection, for a given nonfree resource, there are a continuum of chains and antichains which contain it.  In order to see this, let us first introduce some terminology.  Within the plane of the two-parameter family of resources, depicted in Fig.~\ref{fig:completeTrianglePlot}, we refer to a direction from a given point $R$ as an \enquote{antichain direction} relative to that point, if this direction lies {\em strictly clockwise}  from  the direction defined by the $M_{\rm CHSH}$ level curve that passes through $R$ and strictly counterclockwise from the direction defined by the $M_{\rm NPR}$ level curve that passes through $R$.
 Otherwise, it is called a ``chain direction''. 
Thus an antichain direction relative to $R$ is defined by any vector originating in $R$ and terminating at a point strictly within either yellow region in Fig.~\ref{fig:completeTrianglePlot}, while a chain direction relative to $R$ is defined by any vector originating in $R$ and terminating in either blue region.

A one-dimensional curve of resources in this subset defines a chain (antichain) if and only if at every point on the curve, the tangent to the curve at that point is aimed\footnote{More precisely: a line defines two opposing directions, and both of these directions will point in a chain direction, or both will point in an antichain direction.} in a chain direction (antichain direction) relative to that point.

A final lesson we learn from these two monotones is that the set of all monotones induced (via Eq.~\eqref{eq:CHSHmonotonedefn}) by the facet-defining Bell inequalities for a given type do not yield a complete set of monotones for the resources of that type. We have shown that the set of resources is not totally pre-ordered, and as stated in Section~\ref{costandyield}, the eight facet-defining Bell inequalities for the {\twotwotwotwo}-scenario induce only a single monotone: $M_{\CHSH}$. Since no single monotone can be complete for a pre-order of resources that includes incomparable resources, it follows immediately that the monotones induced by the facet-defining Bell inequalities for the {\twotwotwotwo} type are not sufficient for fully characterizing the pre-order of resources of that type. Since such resources trivially can be lifted to any nontrivial Bell scenario (where the lifted resource will violate no facet-defining Bell inequalities other than CHSH), it follows that: 
\begin{prop}\label{prop:beyondbellviolation}
The pre-ordering of resources relative to {\LOSR} operations cannot be resolved solely using the degree of violations of facet-defining Bell inequalities.
\end{prop}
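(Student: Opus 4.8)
The plan is to establish the insufficiency first within the 2222 scenario and then propagate it to every nontrivial Bell scenario by lifting. First I would show that, for 2222-type resources, the degrees of violation of all facet-defining Bell inequalities collectively encode nothing more than the single number $M_{\CHSH}(R)$. By Proposition~\ref{prop:symmetrypartitioning}{\ref{prop:CHSHorbit}}, the eight variants $\CHSH_k$ form a single orbit under the LSO group, so the yield construction of Eq.~\eqref{eq:CHSHmonotonedefn} applied to any one of them returns the same function; and by Proposition~\ref{prop:eqaboveCHSH} that function equals $2$ on free resources and equals $\CHSH_k(R)$ on a nonfree resource for the unique violated index $k$. Hence the entire facet-violation profile of a 2222 resource is a function of $M_{\CHSH}(R)$ alone, and the family of monotones induced by the facet-defining inequalities reduces to the singleton $\{M_{\CHSH}\}$.

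Next I would apply the elementary fact that a \emph{single} real-valued monotone can never be complete, in the sense of Eq.~\eqref{completeset}, for a pre-order containing incomparable elements. Were $\{M_{\CHSH}\}$ complete, then $R_1\conv R_2$ would hold precisely when $M_{\CHSH}(R_1)\geq M_{\CHSH}(R_2)$; but since $\mathbb{R}$ is totally ordered, one of $M_{\CHSH}(R_1)\geq M_{\CHSH}(R_2)$ or $M_{\CHSH}(R_2)\geq M_{\CHSH}(R_1)$ always holds, so every pair would be comparable. This is flatly contradicted by the incomparable pair $R_1,R_2$ already exhibited in Fig.~\ref{fig:twoMTrianglePlot} (whose incomparability is certified by the second monotone $M_{\rm NPR}$): there $M_{\CHSH}(R_1)>M_{\CHSH}(R_2)$ yet $R_1\nconv R_2$, so the facet-violation data predicts a conversion that provably does not occur. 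This settles the claim for the 2222 scenario.

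Finally I would lift this obstruction to an arbitrary nontrivial Bell scenario. Embedding a 2222 resource into a larger scenario by padding with inert settings and outcomes, each facet-defining inequality of the enlarged polytope restricts on the image either to a lifted copy of a $\CHSH_k$ inequality on the active wings or to an inequality that the lifted box does not violate (see Ref.~\cite{Pironio2005} for the theory of lifted inequalities). Thus the lifted resource violates no facet other than a lifted CHSH facet, so its facet-violation profile remains a function of the original $M_{\CHSH}$ value, while the embedding preserves the incomparability of $R_1$ and $R_2$. The 2222 argument therefore transports verbatim, establishing the proposition in general.

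The step I expect to demand the most care is the lifting argument: one must verify that \emph{no} facet-defining inequality of the enlarged polytope, evaluated on the lifted resource, yields a violation carrying information beyond the original CHSH value. This is precisely where the structure of lifted Bell inequalities must be invoked, since a poorly chosen embedding could manufacture spurious new witnesses and defeat the reduction to $M_{\CHSH}$.
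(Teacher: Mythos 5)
Your proof is correct and follows essentially the same route as the paper: the text immediately preceding the proposition argues exactly as you do that the eight facet-defining inequalities induce only the single monotone $M_{\CHSH}$, that no single monotone can be complete for a pre-order with incomparable elements (witnessed via $M_{\rm NPR}$), and that the obstruction lifts to any nontrivial Bell scenario since the lifted resource violates no facets beyond CHSH. The paper's formal proof box merely rephrases this by noting that a complete set of monotones must determine the value of every monotone, whereas $M_{\rm NPR}$ cannot be computed from the eight $\CHSH_k$ values — the same substance you establish through the incomparable pair.
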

\begin{proof}
By definition, any complete set of monotones allows one to compute the values of any other monotone from them~\footnote{If one has a set of monotones $\{M_i\}_i$ which is complete, then for a given resource $R$, the set of values $\{M_i(R)\}_i$ is sufficient for (in principle) computing the value $M(R)$ of any monotone $M$ on resource $R$. First, one can deduce the equivalence class of $R$ from $\{M_i(R)\}_i$; this is possible by the completeness of the set $\{M_i\}_i$. Then, one can select any resource $R'$ from the equivalence class of $R$ and can evaluate $M(R')$ for the given monotone $M$. Because a monotone must assign the same value to all resources within an equivalence class, it holds that $M(R')=M(R)$. (Note that our argument here does not imply that one can {\em in practice} compute the value $M(R)$; this computation might involve solving a hard problem.)}.
 However, although the value of $M_{\CHSH}(R)$ can be computed (for any type-{\twotwotwotwo} resource $R$) from the eight values of the facet-defining ${\CHSH}$ functionals in Eq.~\eqref{eq:chshvariants0}, the value of $M_{\rm NPR}(R)$ cannot. This implies that any complete set of monotones must include at least one monotone (like $M_{\rm NPR}(R)$) which depends on information beyond the values of the eight ${\CHSH}$ functionals.
\end{proof}
Proposition~\ref{prop:beyondbellviolation} shows that the nonclassicality of common-cause processes is not completely characterized by the monotones that are naturally associated to facet-defining Bell functionals,
despite the fact that such Bell functionals {\em are} sufficient to witness whether or not a resource is nonclassical. 

\subsection{Incompleteness of the two monotones} \label{sec:ourincompleteness}

In this section, we prove that the two-element set of monotones $\{ M_{\CHSH},M_{\rm NPR}\}$ is not a complete set.  We do so by showing 
that it is not complete 
even for resources of type {\twotwotwotwo}.

A simple proof is as follows.  Consider resources of the form $R={\tfrac{1}{2} L^{\rm bb}_{\star}+\tfrac{1}{2}C(\text{½})}$ for different choices of the CHSH-saturating resource $L^{\rm bb}_{\star}$ that lies in the boundary of $\boldsymbol{S}^G_{\twotwotwotwo}$. 
We will show that there are pairs of resources of this form which are strictly ordered, and other pairs of resources of this form which are incomparable. These facts cannot be captured by the two monotones, which see all resources of this form as equivalent, with $M_{\rm NPR} = 3$ and $M_{\CHSH} =2.5$.

{
\begin{table*}[ht]
\centering
{\setlength{\tabcolsep}{1.2ex}
\begin{tabular}{|c|cccccccc|c|c|}
\toprule
 & \(\Braket{A_0}\) & \(\Braket{A_1}\) & \(\Braket{B_0}\) & \(\Braket{B_1}\) & \(\!\Braket{A_0 B_0}\!\) & \(\!\Braket{A_1 B_0}\!\) & \(\!\Braket{A_0 B_1}\!\) & \(\!\Braket{A_1 B_1}\!\) & \(M_{\CHSH}\) & \(M_{\rm NPR}\)\\
 \midrule
    \(L_1^{\rm bb}\)       & 1 & 1 & 1 & 1 & 1 & 1 & 1 & 1 & 2 & 2\\
    \(L_2^{\rm bb}\)     & 0 & 0 & 0 & 0 & 1 & 1 & 0 & 0 & 2 & 2\\
    \(L_3^{\rm bb}\)     & 0 & 0 & 0 & 0 & 1 & 0 & 1 & 0 & 2 & 2\\   
      \midrule
${C(\text{½})}$
       & 0 & 0 & 0 & 0 & \(\nicefrac{3}{4}\) & \(\nicefrac{3}{4}\) & \(\nicefrac{3}{4}\) & \(\nicefrac{-3}{4}\) & \(3\) & 3\\   
     \midrule
$\frac{1}{2}L_1^{\rm bb}+\frac{1}{2}C(\text{½})$
       & \(\nicefrac{1}{2}\) & \(\nicefrac{1}{2}\) & \(\nicefrac{1}{2}\) & \(\nicefrac{1}{2}\) & \(\nicefrac{7}{8}\) & \(\nicefrac{7}{8}\) & \(\nicefrac{7}{8}\) & \(\nicefrac{1}{8}\) & \(\nicefrac{5}{2}\) & 3\\
$\frac{1}{2}L_2^{\rm bb}+\frac{1}{2}C(\text{½})$
     & 0 & 0 & 0 & 0 & \(\nicefrac{7}{8}\) & \(\nicefrac{7}{8}\) & \(\nicefrac{3}{8}\) & \(\nicefrac{-3}{8}\) & \(\nicefrac{5}{2}\) & 3\\
$\frac{1}{2}L_3^{\rm bb}+\frac{1}{2}C(\text{½})$
          & 0 & 0 & 0 & 0 & \(\nicefrac{7}{8}\) & \(\nicefrac{3}{8}\) & \(\nicefrac{7}{8}\) & \(\nicefrac{-3}{8}\) & \(\nicefrac{5}{2}\) & 3\\
 \bottomrule
\end{tabular}}\vspace{-1ex}
\caption{\label{tab:genboxes3} An explicit description of the resources which demonstrate the incompleteness of the pair of monotones $\{ M_{\rm CHSH}, M_{\rm NPR}\}$. The fact that $\Braket{A_0 B_0}{=}1$ for the free boxes immediately proves that these do indeed lie on the boundary of the full set of GPT-realizable common-cause boxes of this type, $\boldsymbol{S}^G_{\twotwotwotwo}$ (since it implies that $p(0,1|0,0) = 0 =p(1,0|0,0)$, and hence these boxes saturate positivity inequalities).
}
\end{table*}

Consider for example the resources $L_1^{\rm bb}$, $L_2^{\rm bb}$, and $L_3^{\rm bb}$ defined in Table~\ref{tab:genboxes3}.
Using the pairwise comparison algorithm described in Section~\ref{polything}, one can verify that the resource ${\tfrac{1}{2} L_1^{\rm bb}+\tfrac{1}{2}C(\text{½})}$ is strictly higher in the order than ${\tfrac{1}{2} L_2^{\rm bb}+\tfrac{1}{2}C(\text{½})}$, while the two resources ${\tfrac{1}{2} L_2^{\rm bb}+\tfrac{1}{2}C(\text{½})}$ and ${\tfrac{1}{2} L_3^{\rm bb}+\tfrac{1}{2}C(\text{½})}$ are incomparable. Note that $L_1^{\rm bb}$ is a {\em convexly extremal} resource, while $L_2^{\rm bb}$ and $L_3^{\rm bb}$ are not. 

As an aside, it is worth noting that because the nonlocal fraction and the two standard robustness measures witness exactly the same ordering relations as $M_{\CHSH}$ does (as demonstrated in Section~\ref{sec:othermonotones}),  one gains nothing by supplementing $M_{\CHSH}$ and $M_{\textup{NPR}}$ with them. Rather, new monotones are needed.

The incompleteness of the two-element set $\{M_{\CHSH},M_{\textup{NPR}}\}$ is also established directly from the argument presented in
Section~\ref{sec:monotonecount}.

\subsubsection{Completeness of the two monotones for certain families of resources} \label{sec:ourcompleteness}

Although $M_{\CHSH}$ and $M_{\rm NPR}$ do not form a complete set of monotones for the set of all resources of type {\twotwotwotwo}, it turns out that they {\em do} form a complete set of monotones  for certain subsets thereof. 

\begin{prop}\label{prop:whencomplete}
The pair of monotones $\{M_{\CHSH},M_{\rm NPR}\}$ are a complete set relative to the subset of resources $\boldsymbol{S}_{\twotwotwotwo}^{L^{\rm bb}_{\star}}$ (defined in Eq.~\eqref{Rset}) for any $L^{\rm bb}_{\star} \in \boldsymbol{L}^{\rm bb}$.
\end{prop}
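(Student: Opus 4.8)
The plan is to prove the two directions of Eq.~\eqref{completewrtsubset} separately, for $R_1,R_2$ both in the family $\boldsymbol{S}_{\twotwotwotwo}^{L^{\rm bb}_{\star}}$. The forward implication is immediate: since $M_{\CHSH}$ and $M_{\rm NPR}$ are resource monotones (a yield and a cost construction, respectively), $R_1\conv R_2$ forces $M_{\CHSH}(R_1)\geq M_{\CHSH}(R_2)$ and $M_{\rm NPR}(R_1)\geq M_{\rm NPR}(R_2)$, so all the work is in the reverse implication. Writing $R_i=R(\alpha_i,\gamma_i)$, Eqs.~\eqref{eq:alphacostonfam} and~\eqref{MCHSHfam} let me translate the two hypothesized monotone inequalities into the scalar conditions $\alpha_2\leq\alpha_1$ and $\alpha_2(1-\gamma_2)\leq\alpha_1(1-\gamma_1)$; the goal is to exhibit an {\LOSR} conversion $R_1\conv R_2$ under exactly these. (If $R_1$ is free then $M_{\CHSH}(R_1)=2$ forces $R_2$ free, and free-to-free conversions are trivial, so I may assume $R_1$ is nonfree; since every nonfree resource in the family violates the $\CHSH_0$ inequality and only that one, I then have $\CHSH_0(R_1)>2$.)

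The key step, which I expect to be the main obstacle, is to show that $R_1$ can be freely converted into the chain resource $C(c_1)$ with $c_1\coloneqq\alpha_1(1-\gamma_1)$. The idea is to \emph{symmetrize} $R_1$ over the order-$8$ subgroup $G_{456}$ of Proposition~\ref{prop:symmetrypartitioning}, i.e.\ to apply the free operation
\begin{equation}
\overline{R}_1 \coloneqq \tfrac{1}{8}\sum_{g\in G_{456}} g\cdot R_1 ,
\end{equation}
which is free because it is a convex combination of {\LSO} operations and {\LOSR} is convex (Proposition~\ref{lem:convexity}). Since $G_{456}$ stabilizes both $R_{\textup{PR}}$ and the $\CHSH_0$ functional (Proposition~\ref{prop:symmetrypartitioning}\ref{prop:CHSHstabilizer}), $\overline{R}_1$ is $G_{456}$-invariant and obeys $\CHSH_0(\overline{R}_1)=\CHSH_0(R_1)=2c_1+2$. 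Reading off the action of $\tau_4,\tau_5,\tau_6$ from Table~\ref{relabelingeff}, I would check that the boxes fixed by all of $G_{456}$ are exactly those with vanishing single-party marginals and correlators of the form $(\expec{A_0B_0},\expec{A_1B_0},\expec{A_0B_1},\expec{A_1B_1})=(e,e,e,-e)$, i.e.\ the one-parameter line through $L_{\varnothing}$, $L_{\textup{NPR}}^{\rm b}$ and $R_{\textup{PR}}$ that carries the chain $\boldsymbol{C}_{\textup{NPR}}$. Because a {\twotwotwotwo} box is determined by its marginals and correlators, and $\CHSH_0=4e$ on this line, the constraint $\CHSH_0(\overline{R}_1)=2c_1+2$ pins down $e=(c_1+1)/2$ and hence $\overline{R}_1=C(c_1)$, establishing $R_1\conv C(c_1)$.

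Finally, I would assemble $R_2$ from reachable building blocks using convexity. By Proposition~\ref{lem:belstheorem} the image $\mathbfcal{P}^{\LOSR}_{\twotwotwotwo}(R_1)$ is convex; it contains $R_1$, the two free resources $L_{\textup{NPR}}^{\rm b}$ and $L^{\rm bb}_{\star}$ (reachable from any resource by discarding and re-preparing), and $C(c_1)$ (just shown), hence their convex hull. It therefore suffices to verify $R_2\in\operatorname{ConvexHull}\{R_1,\,L^{\rm bb}_{\star},\,C(c_1),\,L_{\textup{NPR}}^{\rm b}\}$. Working in barycentric coordinates relative to the family vertices $(L^{\rm bb}_{\star},R_{\textup{PR}},L_{\textup{NPR}}^{\rm b})$, in which $R(\alpha,\gamma)$ has coordinates $(\gamma,\,(1-\gamma)\alpha,\,(1-\gamma)(1-\alpha))$, I would solve the linear system for nonnegative weights $(w_1,w_L,w_C,w_N)$ with unit sum and $R_2=w_1R_1+w_LL^{\rm bb}_{\star}+w_CC(c_1)+w_NL_{\textup{NPR}}^{\rm b}$. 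Matching the $R_{\textup{PR}}$- and $L^{\rm bb}_{\star}$-coordinates gives $w_1+w_C=\alpha_2(1-\gamma_2)/c_1$ and $w_L=\gamma_2-w_1\gamma_1$, and a short check shows the feasible interval for $w_1$ is nonempty precisely when $\alpha_2(1-\gamma_2)\leq c_1$ and $\alpha_2\leq\alpha_1$ — exactly the two hypotheses. Hence $R_2$ lies in the reachable polytope and $R_1\conv R_2$, completing the reverse implication and the proof.
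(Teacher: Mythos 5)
Your proof is correct and follows essentially the same route as the paper's: the heart of both arguments is the free twirl over the $\CHSH_0$-stabilizing subgroup $G_{456}$ (the paper's $\tau_{\textrm{erase-}\gamma}$), which projects $R_1$ onto the chain $\boldsymbol{C}_{\textup{NPR}}$ without changing $M_{\CHSH}$, followed by taking the convex hull of the same four reachable points $\{R_1, L^{\rm bb}_{\star}, C(\alpha_1(1-\gamma_1)), L^{\rm b}_{\textup{NPR}}\}$. The only differences are presentational: you identify the $G_{456}$-invariant subspace by solving the fixed-point equations from Table~\ref{relabelingeff} rather than by tracking the images of extremal boxes, and you verify the final convex-hull membership by explicit barycentric algebra rather than by the geometric picture of Fig.~\ref{fig:DownwardClosure}.
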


Proposition~\ref{prop:whencomplete} is proven in Appendix~\ref{proofprop2}. The logic of the proof is quite simple: we prove that there always exists a free operation $\tau_{{\rm erase}-\gamma}$ which converts an arbitrary resource $R(\alpha_1,\gamma_1)$ in the family to some resource $R(\alpha_2,0)$ lying on the chain $\boldsymbol{C}_{\textup{NPR}}$ without changing the value of $M_{\CHSH}$. By convexity, it follows that $R(\alpha_1,\gamma_1)$ can be converted to any resource in the convex hull of $R(\alpha_1,\gamma_1)$, $R(\alpha_2,0)$, $L^{\rm bb}_{\star}$, and $L_{\rm NPR}^{\rm b}$; namely, the dark-blue region in Fig.~\ref{fig:complete}. This region corresponds to the set of all resources with a lower value of both $M_{\CHSH}$ and $M_{\rm NPR}$.  It follows that if a conversion is not forbidden by consideration of this pair of monotones, then it is achievable.  By the definition of completeness for a set of monotones (see Eq.~\eqref{completeset}), this implies that the two monotones are indeed a complete set for this family of resources.

\subsection{At least eight independent measures of nonclassicality}
\label{sec:monotonecount}

In this section, we tackle the question of how many independent continuous monotones are required to fully specify the partial order of resources. This is the content of Theorem~\ref{prop:eightmonotonesV2}. Along the way to proving this result, we also prove a powerful result about the equivalence classes under {\LOSR} for nonfree resources of type {\twotwotwotwo}, stated in Proposition~\ref{prop:zerodclasses}.

We begin by drawing a distinction among resources.  
\begin{defn}\label{def:orbital}
A resource is said to be \term{orbital} if its equivalence class under type-preserving {\LOSR} is equal to its equivalence class under {\LSO}.
\end{defn}
It follows that if all the resources in a set $\mathbf{S}$ are orbital, then the quotient space~\cite{Quotients1994} of $\mathbf{S}$ under the group {\LSO} provides a representation of the partial order of {\LOSR}-equivalence classes of resources in $\mathbf{S}$ (despite the fact that the {\LOSR} operations do not themselves form a group).\footnote{For practical purposes, Ref.~\cite[App.~B]{Rosset2014classifying} provides a technical discussion regarding how to efficiently select a representative Bell inequality under a finite symmetry group; the procedure discussed there is equally applicable for the task of efficiently selecting \emph{canonical form} resources. Note, however, that the {\LSO} symmetry group differs from the Bell-polytope automorphism group considered in Ref.~\cite{Rosset2014classifying}, in that  {\LSO} does \emph{not} include the symmetry of exchange-of-parties.}

This property of resources is pertinent to the discussion here because of the following result:
\begin{prop}\label{prop:zerodclasses}
All nonfree resources of type {\twotwotwotwo} are orbital.
\end{prop}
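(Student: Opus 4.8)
The plan is to prove the two inclusions of equivalence classes. Since $\LSO\subseteq\LOSR$ is a \emph{group} whose elements have inverses in $\LSO$, the $\LSO$-orbit of any resource is automatically contained in its type-preserving $\LOSR$-equivalence class, so the entire content is the reverse inclusion. Thus I fix a nonfree $R$ of type {\twotwotwotwo} and an $R'$ that is type-preserving $\LOSR$-equivalent to it, and aim to show $R'=g\cdot R$ for some $g\in\LSO$. Because $\LSO$ acts transitively on the eight $\CHSH$-variants (Proposition~\ref{prop:symmetrypartitioning}), I first apply a fixed $\LSO$ element to $R$ so that, without loss of generality, $R$ violates $\CHSH_0$; this relabeling preserves both $\LOSR$-equivalence and the question of lying in a common $\LSO$-orbit. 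Using the convex decomposition from the proof of Proposition~\ref{prop:eqaboveCHSH}, write $R=\lambda R_{\textup{PR}}+(1-\lambda)L^{\rm b}$ with $\lambda\in(0,1]$, so $M_{\CHSH}(R)=2\lambda+2$; since $M_{\CHSH}$ is a monotone and the equivalence is two-way, $M_{\CHSH}(R')=2\lambda+2$, and hence $R'$ violates a unique $\CHSH_{k'}$ with $\CHSH_{k'}(R')=2\lambda+2$.

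The first key step is to pin down the action on the PR box. Take $\tau\in\LOSR$ with $R'=\tau\circ R$ and decompose it into deterministic extreme points, $\tau=\sum_i p_i\tau_i$ with $\tau_i\in\LDO$ (Proposition~\ref{lem:transpolytope}). Expanding $\CHSH_{k'}(R')$ by linearity exactly as in the proof of Proposition~\ref{prop:eqaboveCHSH}, and using $\CHSH_{k'}(\tau\circ R_{\textup{PR}})\le 4$ together with $\CHSH_{k'}(\tau\circ L^{\rm b})\le 2$, the equality $\CHSH_{k'}(R')=2\lambda+2$ forces $\CHSH_{k'}(\tau\circ R_{\textup{PR}})=4$, i.e.\ $\tau\circ R_{\textup{PR}}=R_{\text{PR},k'}$. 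As $R_{\text{PR},k'}$ is an extreme point of $\boldsymbol{S}^G_{\twotwotwotwo}$, every $\tau_i$ in the support must individually satisfy $\tau_i\circ R_{\textup{PR}}=R_{\text{PR},k'}$.

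The technical heart is a combinatorial lemma: any $\LDO$ operation of type {\twotwotwotwo} that maps $R_{\textup{PR}}$ to a PR-box variant must be invertible, hence lies in $\LSO$. I would parametrize a generic {\twotwotwotwo} $\LDO$ by the functions $f_A,g_A,f_B,g_B$ of Eqs.~\eqref{ExtremalOpLOSR}--\eqref{ExtremalOpLOSRb} and compute the image marginals and correlators on $R_{\textup{PR}}$. Since every PR-box variant has vanishing marginals, the post-processings $g_A(\cdot,s')$ and $g_B(\cdot,t')$ must be non-constant, i.e.\ bijections on $\{0,1\}$; and if either pre-processing $f_A$ or $f_B$ were constant, the image correlators would factorize as $\langle A'_{s'}B'_{t'}\rangle=a_{s'}b_{t'}$, forcing every $\CHSH$ variant to at most $2$ and making the image free, contradicting nonfreeness of $R_{\text{PR},k'}$. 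Hence $f_A,f_B$ are also bijections, so the induced maps $(X,S')\mapsto(X',S)$ and its right-wing analogue are invertible, which is precisely the $\LSO$ condition. Applying this (and its mirror, obtained by pre-composing with a fixed $\LSO$ element carrying $R_{\textup{PR}}\to R_{\text{PR},k'}$) yields $\tau_i\in\LSO$, so $R'=\sum_i p_i(\tau_i\circ R)\in\operatorname{ConvexHull}(\mathrm{Orb}_{\LSO}(R))$. Running the identical argument on a reverse operation realizing $R'\conv R$ gives $R\in\operatorname{ConvexHull}(\mathrm{Orb}_{\LSO}(R'))$.

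I would then close with a short convexity argument. For a finite group acting by invertible linear maps (as $\LSO$ does on the correlator representation), the convex hull of an orbit is a polytope whose vertex set equals the orbit: the vertex set is a nonempty, group-invariant subset of the orbit, so transitivity forces it to be the whole orbit. The two mutual containments above give $\operatorname{ConvexHull}(\mathrm{Orb}_{\LSO}(R))=\operatorname{ConvexHull}(\mathrm{Orb}_{\LSO}(R'))$, and equating vertex sets yields $\mathrm{Orb}_{\LSO}(R)=\mathrm{Orb}_{\LSO}(R')$; in particular $R'\in\mathrm{Orb}_{\LSO}(R)$. Undoing the initial relabeling proves that $R'$ is an $\LSO$-image of $R$, establishing orbitality. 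I expect the combinatorial lemma—that an $\LDO$ carrying one PR box to another is necessarily an $\LSO$ element—to be the main obstacle: it is where the special structure of the {\twotwotwotwo} scenario is used, and it is exactly what fails for free resources, which explains why the statement is restricted to nonfree $R$.
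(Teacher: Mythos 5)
Your proof is correct, but it takes a genuinely different route from the paper's. The paper introduces the auxiliary notion of a \emph{sensitive} resource ($R\not\in\operatorname{ConvexHull}(\mathbfcal{V}^{\LDTNO}(R))$), shows that sensitivity propagates downward from $R_{\textup{PR}}$ through the whole nonfree set (via the composition identities $\LDTNO\circ\LDO=\LDO\circ\LDTNO=\LDTNO$), and then converts sensitivity into orbitality with a strict-convexity argument on the $\LSO$-invariant $2$-norm, which rules out nontrivial stochastic mixing of symmetries. You instead argue directly: you use the decomposition $R=\lambda R_{\textup{PR}}+(1-\lambda)L^{\rm b}$ and the saturation of the $\CHSH$ bound under free operations to force any equivalence-realizing $\tau$ to send $R_{\textup{PR}}$ to a PR-box variant, then invoke extremality of that variant to push the conclusion down to each deterministic component $\tau_i$, and finally apply your combinatorial lemma (an $\LDO$ carrying $R_{\textup{PR}}$ to a PR variant is invertible) to place each $\tau_i$ in $\LSO$; you close with an orbit-polytope vertex argument in place of the $2$-norm. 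The computational core is the same in both proofs---it is exactly the paper's Lemma on $R_{\textup{PR}}$, namely that every type-preserving deterministic nonsymmetry operation destroys the PR box's nonclassicality---but the scaffolding differs. What the paper's machinery buys is reusability: the sensitivity lemmas are type-agnostic, so orbitality of any set $\mathbfcal{P}^{\LOSR}(R)\setminus\operatorname{ConvexHull}(\mathbfcal{V}^{\LDTNO}(R))$ follows once a single sensitive resource of that type is found. What your argument buys is transparency for the {\twotwotwotwo} case: it makes explicit exactly where nonfreeness is used (the forced equality $\CHSH_{k'}(\tau\circ R_{\textup{PR}})=4$ requires $\lambda>0$), and it avoids introducing the sensitivity vocabulary. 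One small point worth recording if you write this up: in the final step you need $R'$ itself to be a vertex of $\operatorname{ConvexHull}(\mathrm{Orb}_{\LSO}(R'))$, which follows from your observation that the vertex set is a nonempty $\LSO$-invariant subset of a transitive orbit and hence the whole orbit.
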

The proof is provided in Appendix~\ref{proofprop3}.

Note that for {\em free} resources, {\LOSR}-equivalence is distinct from {\LSO}-equivalence because the {\LSO}-equivalence class of any resource (including a free resource) is of finite cardinality, while the {\LOSR}-equivalence of a free resource is the entire set of free resources, which is of infinite cardinality.
  Thus, free resources are not orbital.
Moreover, the coincidence between being nonfree and being orbital 
does {\em not} generalize beyond the {\twotwotwotwo} scenario. 
For instance, note that a pair of ${\twotwotwotwo}$ resources, $R_1$ and $R_2$, which are implemented in parallel can be conceptualized as a ${\fourfourfourfour}$ resource, $R_{1\otimes2}$,  by composing the two binary setting variables on the left wing into a single 4-valued setting variable on the left wing, and similarly for the other setting variable and the outcome variables.   If $R_1$ is free and $R_2$ is nonfree, then $R_{1\otimes 2}$ is nonfree, and yet because $R_1$'s equivalence class is not generated by {\LSO}, neither is the equivalence class of $R_{1\otimes 2}$.  Thus, $R_{1\otimes 2}$ is a nonfree resource that is not orbital.

To express the next proposition, we require the following definition. 
\begin{defn}\label{intrinsicdimension}
The \term{intrinsic dimension} of a set of resources $\boldsymbol{S}$, denoted $\operatorname{IntrinsicDim}(\boldsymbol{S})$,  is the smallest cardinality of continuous functions from the set to the real numbers required to uniquely identify a resource within $\boldsymbol{S}$.
\end{defn}

 \begin{prop}\label{prop:orbitalstuff}
For any compact set $\mathbf{S}$ of resources that are all orbital, the intrinsic dimension of the set $\mathbf{S}$ is a lower bound on the cardinality of a complete set of continuous monotones for $\mathbf{S}$ (and for any superset of $\mathbf{S}$).
\end{prop}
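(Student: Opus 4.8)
The plan is to convert any complete set of continuous monotones into a topological embedding of the symmetry-reduced resource space, and then bound the number of monotones below by the dimension of that space, which equals $\operatorname{IntrinsicDim}(\mathbf{S})$ precisely because $\LSO$ is finite. The first ingredient is that every continuous monotone $M$ is constant on $\LSO$-orbits: for any $\sigma\in\LSO$ and any resource $R$ one has both $R\conv\sigma\circ R$ and $\sigma\circ R\conv R$ (the latter via $\sigma^{-1}\in\LSO$), so $R$ and $\sigma\circ R$ are equivalent and hence $M(\sigma\circ R)=M(R)$. Consequently the map $\Phi\coloneqq(M_i)_{i\in I}\colon\mathbf{S}\to\mathbb{R}^{|I|}$ assembled from any family of monotones is $\LSO$-invariant.

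Next I would invoke completeness together with orbitality. Completeness relative to $\mathbf{S}$ forces two resources to agree on every $M_i$ only when they are mutually interconvertible, i.e.\ $\LOSR$-equivalent; orbitality identifies this with membership in a common $\LSO$-orbit. Thus $\Phi$ separates distinct orbits, so it descends to a continuous injection $\bar\Phi\colon\mathbf{S}/\LSO\to\mathbb{R}^{|I|}$. Since $\mathbf{S}$ is compact, so is $\mathbf{S}/\LSO$, and a continuous injection from a compact space into a Hausdorff space is a homeomorphism onto its image; hence $\mathbf{S}/\LSO$ embeds as a subset of $\mathbb{R}^{|I|}$, which forces $\dim(\mathbf{S}/\LSO)\le|I|$.

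It then remains to identify $\dim(\mathbf{S}/\LSO)$ with $\operatorname{IntrinsicDim}(\mathbf{S})$. The finiteness of $\LSO$ is essential here: the quotient map $q\colon\mathbf{S}\to\mathbf{S}/\LSO$ is a finite-to-one closed surjection, and quotients by finite groups preserve covering dimension, so $\dim(\mathbf{S}/\LSO)=\dim(\mathbf{S})$. For the resource sets at issue---compact subsets of the Euclidean space of conditional-probability vectors---the minimal number of separating continuous functions of Definition~\ref{intrinsicdimension} coincides with this covering dimension, giving $\operatorname{IntrinsicDim}(\mathbf{S})=\dim(\mathbf{S})\le|I|$, as claimed. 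For a superset $\mathbf{S}'\supseteq\mathbf{S}$, any family of monotones that is complete relative to $\mathbf{S}'$ restricts to a family complete relative to $\mathbf{S}$, so the identical lower bound on $|I|$ applies.

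The step I expect to be the main obstacle is the clean identification of $\operatorname{IntrinsicDim}$ with a notion of dimension that the finite $\LSO$-action manifestly preserves. Minimal embedding dimension is \emph{not} invariant under finite quotients in general---passing from $S^2$ to $\mathbb{RP}^2$ raises it---so I would route the argument through the covering dimension as the common currency: one must argue that for the convex or semialgebraic resource sets in play the intrinsic dimension agrees with the covering dimension, and that it is precisely the covering dimension that is respected both by the finite quotient and by the embedding into $\mathbb{R}^{|I|}$. Making this identification rigorous (rather than merely plausible for ``nice'' sets) is where the care is required.
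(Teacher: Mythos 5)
Your argument is sound and reaches the same conclusion, but by a genuinely different route than the paper. You pass to the quotient $\mathbf{S}/\LSO$, embed it in $\mathbb{R}^{|I|}$ via the (orbit-invariant, orbit-separating) tuple of monotones, and then transport the dimension bound back to $\mathbf{S}$ using the fact that a closed finite-to-one surjection cannot lower covering dimension (Hurewicz). The paper never forms the quotient: it instead picks a resource with trivial $\LSO$-stabilizer, shrinks a compact neighborhood $N\subset\mathbf{S}$ of it until $N$ is disjoint from all of its nontrivial $\LSO$-images, observes that no two resources in $N$ are then equivalent (by orbitality), and concludes that any complete set of monotones is injective on $N$ --- so the bound $|I|\geq d$ follows directly from the definition of intrinsic dimension applied to $N$, with no covering-dimension machinery. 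What your route buys is a cleaner global statement ($\bar\Phi$ is an embedding of the whole order-quotient, not just of a local patch); what the paper's route buys is that it stays entirely within the paper's own notion of ``intrinsic dimension'' (minimal number of separating continuous functions) and so avoids your final identification step. The obstacle you flag --- that $\operatorname{IntrinsicDim}$ is a minimal-embedding-type invariant while your chain controls only the covering dimension, and the two can differ (a topological circle of pairwise-inequivalent orbital resources has covering dimension $1$ but intrinsic dimension $2$) --- is real: as written, your chain proves $|I|\geq\dim_{\mathrm{cov}}(\mathbf{S})$, which is weaker than the stated claim for a general compact $\mathbf{S}$. But you should know that the paper's proof carries an exactly parallel unproven assertion (that the neighborhood $N$ can be chosen with intrinsic dimension equal to that of $\mathbf{S}$, which likewise fails for a circle); both arguments are rigorous only for sets, like the full-dimensional nonfree region of the $\twotwotwotwo$ polytope used in Theorem~\ref{prop:eightmonotonesV2}, whose local, covering, and embedding dimensions all coincide. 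So I would not count this against you relative to the paper; just be explicit that the Hurewicz inequality you need is the dimension-\emph{lowering} one, $\dim\mathbf{S}\leq\dim(\mathbf{S}/\LSO)$, for the closed zero-dimensional-fiber map $q$, and that the final equality $\dim_{\mathrm{cov}}=\operatorname{IntrinsicDim}$ is an additional hypothesis on $\mathbf{S}$ rather than a fact about arbitrary compact subsets of Euclidean space.
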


The proof is provided in Appendix~\ref{sec:proofproporbitalstuff}.

Recognizing that the set of nonfree resource of type {\twotwotwotwo} has intrinsic dimension equal to eight,\footnote{That ${\operatorname{IntrinsicDim}(\boldsymbol{S}^{\textup{nonfree}}_{\twotwotwotwo})=8}$ is evidenced by the characterization of such resources in terms of outcome biases and two-point correlators. If $T$ indicates any type, then ${\operatorname{IntrinsicDim}(\boldsymbol{S}^{\textup{nonfree}}_{T})=\operatorname{IntrinsicDim}(\boldsymbol{S}^{G}_{T})}$ whenever ${\boldsymbol{S}^{\textup{G}}_{T}\neq\boldsymbol{S}^{\textup{free}}_{T}}$ (think of subtracting one polytope from a circumscribing polytope of the same dimension). See Refs.\cite{Bellreview,Pironio2005,CG2004I3322,Rosset2014classifying} for discussions on the intrinsic dimension of no-signalling polytopes.} then Propositions~\ref{prop:zerodclasses}~and~\ref{prop:orbitalstuff} together imply the following theorem:

 \begin{thm}\label{prop:eightmonotonesV2}
For resources of type {\twotwotwotwo}, the cardinality of a complete set of continuous monotones is no less than 8.
\end{thm}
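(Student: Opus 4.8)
The plan is to obtain the bound as an essentially immediate consequence of three facts already in hand: Proposition~\ref{prop:zerodclasses}, that every nonfree {\twotwotwotwo} resource is orbital; Proposition~\ref{prop:orbitalstuff}, that for a \emph{compact} set of orbital resources the intrinsic dimension lower-bounds the cardinality of any complete set of continuous monotones, even for supersets of that set; and the stated fact that $\operatorname{IntrinsicDim}(\boldsymbol{S}^{\textup{nonfree}}_{\twotwotwotwo})=8$. The one gap to be bridged is that Proposition~\ref{prop:orbitalstuff} demands compactness, whereas $\boldsymbol{S}^{\textup{nonfree}}_{\twotwotwotwo}$ is not compact: its limit points along the CHSH-saturating boundary are free, hence not orbital, so the hypothesis of Proposition~\ref{prop:orbitalstuff} fails on the full nonfree set. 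The real content of the proof is therefore to exhibit a compact, full-dimensional set of orbital resources.

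First I would fix the canonical variant $\CHSH_0$, choose any $\epsilon$ with $0<\epsilon<2$, and define the candidate set
\begin{equation}
\mathbf{S}_\epsilon \coloneqq \left\{\, R\in\boldsymbol{S}^{G}_{\twotwotwotwo} \;:\; \CHSH_0(R)\ge 2+\epsilon \,\right\}.
\end{equation}
Being the intersection of the compact no-signalling polytope $\boldsymbol{S}^{G}_{\twotwotwotwo}$ with the closed half-space $\{\CHSH_0\ge 2+\epsilon\}$, the set $\mathbf{S}_\epsilon$ is closed and bounded, hence compact. Moreover every $R\in\mathbf{S}_\epsilon$ satisfies $\CHSH_0(R)>2$ and so violates the canonical CHSH inequality; such a resource is nonfree, and therefore orbital by Proposition~\ref{prop:zerodclasses}. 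Thus $\mathbf{S}_\epsilon$ is a compact set of orbital resources.

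Next I would verify that $\operatorname{IntrinsicDim}(\mathbf{S}_\epsilon)=8$, i.e.\ that $\mathbf{S}_\epsilon$ is full-dimensional within the $8$-dimensional polytope $\boldsymbol{S}^{G}_{\twotwotwotwo}$. Since the latter is cut out by the sixteen positivity inequalities alone, its interior consists precisely of the resources with strictly positive probabilities. Consider $R_\alpha \coloneqq \alpha R_{\textup{PR}}+(1-\alpha)L_{\varnothing}$: it has vanishing marginals and correlators of magnitude $\alpha<1$, so all of its probabilities lie strictly in $(0,1)$ and it is an interior point; at the same time $\CHSH_0(R_\alpha)=4\alpha$, which exceeds $2+\epsilon$ once $\alpha>(2+\epsilon)/4$. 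Hence $\mathbf{S}_\epsilon$ contains an interior point of the polytope, and thus an open $8$-ball, giving $\operatorname{IntrinsicDim}(\mathbf{S}_\epsilon)=8$.

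Finally, applying Proposition~\ref{prop:orbitalstuff} to the compact orbital set $\mathbf{S}_\epsilon$ shows that any complete set of continuous monotones for $\mathbf{S}_\epsilon$ has cardinality at least $8$; and by the superset clause of that proposition the same lower bound holds for any set containing $\mathbf{S}_\epsilon$, in particular for the full set of {\twotwotwotwo}-type resources (which is what reintroduces the non-orbital free resources we had to exclude from $\mathbf{S}_\epsilon$). This establishes the theorem. The only delicate point, and hence the main obstacle, is the compactness and full-dimensionality bookkeeping handled above; with a full-dimensional compact orbital set in hand the bound is forced.
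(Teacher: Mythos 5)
Your proof is correct and follows essentially the same route as the paper's: it combines Proposition~\ref{prop:zerodclasses}, Proposition~\ref{prop:orbitalstuff}, and the fact that the nonfree {\twotwotwotwo}-type resources have intrinsic dimension eight. Your explicit construction of the compact, full-dimensional orbital set $\mathbf{S}_\epsilon$ is in fact more careful than the paper's argument, which applies Proposition~\ref{prop:orbitalstuff} directly to the non-compact set $\boldsymbol{S}^{\textup{nonfree}}_{\twotwotwotwo}$ without remarking on the compactness hypothesis.
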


\section{Properties of the pre-order of {\em quantumly realizable} common-cause boxes} \label{sec:qr}

The bulk of this article has considered the resource theory which is defined by taking the enveloping theory of resources to be the GPT-realizable common-cause boxes,
and the free subtheory of resources to be the classically realizable common-cause boxes.
   In this section, we consider a slightly different resource theory, wherein the enveloping theory of resources is taken to be the common-cause boxes that are realizable in a {\em quantum} causal model, which we term \term{quantumly realizable}, while the free subtheory is chosen to be, as before, the common-cause boxes that are classically realizable.   Effectively, the new resource theory concerns  the nonclassicality of common-cause boxes within the scope of nonclassicality that can be achieved quantumly.  In other words, it concerns the {\em intrinsic quantumness} of common-cause boxes.

Formally, the conditional probability distribution associated to a quantumly realizable common-cause box is of the same form as Eq.~\eqref{GPTCCbox}, that is, 
\beq
P_{XY|ST}(xy|st) = ({\bf r}^{A}_{x|s} \otimes {\bf r}^{B}_{y|t}) \cdot {\bf s}^{AB},
\label{QuantumCCbox}
\eeq 
but where the vector ${\bf s}^{AB}$ is a real vector representation of a quantum state on the bipartite system composed of quantum systems $A$ and $B$, and the sets of vectors  $\{ {\bf r}^{A}_{x|s}\}_x$ and $\{ {\bf r}^{B}_{y|t}\}_y$ are real vector representations of POVMs on $A$ and on $B$ respectively. (See, e.g., Ref.~\cite{Henson2014}.) 
 
Although the conclusions we drew in Section~\ref{sec:lessonsfromtwomonotones} concerned the pre-order of GPT-realizable common-cause boxes, analogous results hold true for the pre-order of quantumly realizable common-cause boxes. This is because 
the kind of two-parameter family of GPT-realizable common-cause boxes that was used to establish global features of the pre-order of such boxes in Section~\ref{sec:lessonsfromtwomonotones} contains a two-parameter family of quantumly realizable common-cause boxes that can be used for the same purpose.  A caricature of one such quantumly realizable family is provided in Fig.~\ref{TsirelsonHardy}.   
Specifically, if one reviews the arguments that were used  in Section~\ref{sec:lessonsfromtwomonotones} to establish the various global properties of the pre-order of GPT-realizable common-cause boxes, it becomes apparent that these apply equally well to the quantumly realizable common cause boxes. 

It is also straightforward to show that the lower bound on the cardinality of a complete set of monotones, obtained in Section~\ref{sec:monotonecount}, also applies to the resource theory of quantumly realizable common-cause boxes. 
  It suffices to consider the case of the quantumly realizable resources of type {\twotwotwotwo},  hereafter $\boldsymbol{S}^Q_{\twotwotwotwo}$, and to note that the set of nonfree resources therein, that is, the set $\boldsymbol{S}^{\textup{nonfree}}_{\twotwotwotwo} \bigcap \boldsymbol{S}^{Q}_{\twotwotwotwo}$, still has intrinsic dimension equal to eight.

{
\begin{table*}[htb]
\centering
{\setlength{\tabcolsep}{1.2ex}
\begin{tabular}{|c|cccccccc|c|c|}
\toprule
 & \(\Braket{A_0}\) & \(\Braket{A_1}\) & \(\Braket{B_0}\) & \(\Braket{B_1}\) & \(\Braket{A_0 B_0}\) & \(\Braket{A_1 B_0}\) & \(\Braket{A_0 B_1}\) & \(\Braket{A_1 B_1}\) & $M_{\CHSH}$ & $M_{\NPR}$\\
 \midrule
  \(R_{\textup{Tsirelson}}\)       &  0 &  0 &  0 &  0 & $\nicefrac{\sqrt{2}}{2}$ & $\nicefrac{\sqrt{2}}{2}$ & $\nicefrac{\sqrt{2}}{2}$ & $\nicefrac{-\sqrt{2}}{2}$ & $2\sqrt{2}$ & $2\sqrt{2}$\\
            &   &   &   &   & \(\scriptstyle\approx 0.707\) & \(\scriptstyle\approx 0.707\) & \(\scriptstyle\approx 0.707\) & \(\scriptstyle\approx {-}0.707\) & \(\scriptstyle\approx 2.828\) & \(\scriptstyle\approx 2.828\)\\\midrule
  \(R_{\textup{Hardy}}\) &  $\scriptstyle 5{-}2\sqrt{5}$ &  $\scriptstyle \sqrt{5}{-}2$ &  $\scriptstyle 5{-}2\sqrt{5}$ &  $\scriptstyle \sqrt{5}{-}2$ & $\scriptstyle 6\sqrt{5}{-}13$ & $\scriptstyle 3\sqrt{5}{-}6$ & $\scriptstyle 3\sqrt{5}{-}6$ & $\scriptstyle 2\sqrt{5}{-}5$  & $\scriptstyle 10(\sqrt{5}{-}2)$ & 4\\
              &  \(\scriptstyle\approx 0.528\) & \(\scriptstyle\approx 0.236\) & \(\scriptstyle\approx 0.528\) & \(\scriptstyle\approx 0.236\) & \(\scriptstyle\approx 0.416\) & \(\scriptstyle\approx 0.708\) & \(\scriptstyle\approx 0.708\) & \(\scriptstyle\approx {-}0.528\) & \(\scriptstyle\approx 2.361\) & \\ \midrule
    \(R_{\textup{Tilt}}(\theta)\) &  $\cos (\theta)$ & 0 & $\tfrac{\cos (\theta)}{\xi(\theta)}$ & $\tfrac{\cos (\theta)}{\xi(\theta)}$ & $\tfrac{1}{\xi(\theta)}$ & $\tfrac{\sin ^2(\theta )}{\xi(\theta)}$ & $\tfrac{1}{\xi(\theta)}$ & $\!\tfrac{-\sin
   ^2(\theta )}{\xi(\theta)}$ & $2\, \xi(\theta)$ & $\scriptstyle\binom{\text{see}}{\text{caption}}$
   \\\midrule
    \(R_{\textup{Tilt}}(0)\) &  1 & 0 & 1 & 1 & 1 & 0 & 1 & 0 & 2 & 2
    \\
 \bottomrule
\end{tabular}}\vspace{-1ex}
\caption{\label{tab:boxes3} An explicit description of the Tsirelson resource, the Hardy resource, and a family of extremal quantum resources (parametrized by $\theta$) which are exposed by tilted Bell inequalities~\cite{Yang2013selftesting,Bamps2015selftesting}. We employ the shorthand $\xi(\theta)\coloneqq \sqrt{\sin ^2(\theta){+}1}$ to allow all definitions to fit within the table. We also analytically derived ${M_{\NPR}}\big(R_{\textup{Tilt}}(\theta)\big)=\frac{\xi(\theta)\left(\xi(\theta)-1\right)}{2\left(1-\cos(\theta)\right)-\xi(\theta)\left(\xi(\theta)-1\right)}$, for $0<\theta\leq \pi/2$. One can readily verify that ${M_{\NPR}}\big(R_{\textup{Tilt}}(\theta)\big)$ increases with the amount of tilt (i.e., $\expec{A_0}=\cos(\theta)$), whereas ${M_{\CHSH}}\big(R_{\textup{Tilt}}(\theta)\big)=2\sqrt{2-\cos^2(\theta)}$ decreases with added tilt. The opposite behavior of the two monotones implies that every resource in the tilted family $\theta\in (0,\pi/2]$ is incomparable to every other. $R_{\textup{Tilt}}(0)$ is a free resource, not violating any Bell inequality; at the other end of the family, $R_{\textup{Tilt}}(\tfrac{\pi}{2})=R_{\textup{Tsirelson}}$. 
}
\end{table*}}

In the rest of this section, we consider properties of the pre-order of quantumly realizable common-cause boxes that are particular to the quantum case.

Unlike for the set $\boldsymbol{S}^G_{\twotwotwotwo}$, 
 where the partial order of equivalence classes has a unique element at the top of the order (the equivalence class of $R_{\rm PR}$), in $\boldsymbol{S}^Q_{\twotwotwotwo}$ there is no unique element at the top of the order.
An easy way to see this is by considering the example of the Tsirelson box ($R_{\rm Tsirelson}$) and the Hardy box ($R_{\rm Hardy}$), each of which is defined explicitly in Table~\ref{tab:boxes3}.
Noting that $M_{\CHSH}(R_{\rm Tsirelson})=M_{\NPR}(R_{\rm Tsirelson})= 2\sqrt{2}\approx 2.828$, and that $M_{\CHSH}(R_{\rm Hardy})=10(\sqrt{5}{-}2) \approx 2.361$ and $M_{\NPR}(R_{\rm Hardy})=4$, it follows immediately that the two boxes are incomparable since $M_{\CHSH}(R_{\rm Tsirelson}) > M_{\CHSH}(R_{\rm Hardy})$ while $M_{\NPR}(R_{\rm Tsirelson}) < M_{\NPR}(R_{\rm Hardy})$.

We show these two resources in Fig.~\ref{TsirelsonHardyTrianglePlot}, together with an approximate sketch\footnote{An analytic characterization of the set of all extremal quantumly realizable resources within $\boldsymbol{S}^Q_{\twotwotwotwo}$ is not known. In Fig.~\ref{TsirelsonHardyTrianglePlot}, the endpoints and the slope of the curve at the endpoints are exact, and the rest of the curve is merely an interpolation.} of the extremal quantumly realizable resources which interpolate between them (the light-blue curve). The values of $M_{\rm CHSH}$ and $M_{\rm NPR}$ on all of these resources is plotted in Fig.~\ref{TsirelsonHardyMonotoneAxes}. From the figure, one can immediately infer that $R_{\rm Tsirelson}$ and $R_{\rm Hardy}$ are incomparable.

\color{black}
\begin{figure}[b!]
\begin{center}
\subfigure[\label{TsirelsonHardyTrianglePlot}]
{
\centering
\begin{tikzpicture}[scale=0.7]
\path[draw, ultra thick] (-30:4) -- (90:4) -- (210:4) -- cycle;

\node[draw,shape=circle,fill,scale=.4, label={[label distance=0.1cm]0:\large{${L_{\star}^{\rm bb}}$}}] at (-30:4) {};
\node[draw,shape=circle,fill,scale=.4, label={[label distance=0.1cm]90:\large{${R_{\textup{PR}}}$}}] at (90:4) {};
\node[draw,shape=circle,fill,scale=.4, label={[label distance=0.1cm]180:\large{${L_{\textup{NPR}}^{\rm b}}$}}] at (210:4) {};

\path[name path=sr1, shift=(-30:4)] (0:0) -- (150:7);  
\path[name path=cr1] (-4,1) -- (4,1);  
\path [name intersections={of=cr1 and sr1, by=R1}];
\node [draw,shape=circle,fill,scale=.4, label={[label distance=-0.1cm]135:${R_{\textup{Tsirelson}}}$}] at (R1) {};

\path[name path=sr2, shift=(-30:4)] (0:0) -- (120:7);  
\path[name path=cr2] (-4,-0.5) -- (4,-0.5);  
\path [name intersections={of=cr2 and sr2, by=R2}];
\node [draw,shape=circle,fill,scale=.4, label={[label distance=-0.1cm]45:${R_{\textup{Hardy}}}$}] at (R2) {};

\path [name intersections={of=cr1 and sr2, by=R3}];

\draw[thick, color=Blue, circle color=Blue, dotted pattern] (R1) to [out=-5, in = 130] (R2) ;

\begin{pgfonlayer}{myback}
\path[clip] (-30:4) -- (90:4) -- (210:4) -- cycle;
\foreach \y in{-3,-2.5,...,4}
  \draw[name path=c.\y, color=jflyBlue] (-4,\y) -- (4,\y);
  
\foreach \y in{180,175,...,120}
  \draw[name path=s.\y, shift=(-30:4), color=jflyVermillion] (0:0) -- (\y:7);
\end{pgfonlayer}   
  
\end{tikzpicture}
}
\hskip -0.5cm 
\subfigure[\label{TsirelsonHardyMonotoneAxes}]
{
\centering
\begin{tikzpicture}[scale=1]

\path[draw, ultra thick] (0,0) -- (0,4) --  (4,4) -- (4,0) ;
\path[draw, ultra thick, dashed] (0,0) -- (4,0) ;
\node [draw,shape=circle,fill,scale=.4, label={[label distance=-0.1cm]225:${2}$}] at (0,0) {};
\node [draw=none,scale=.4, label={[label distance=-0cm]-90:${4}$}] at (4,0) {};
\node [draw=none,scale=.4, label={[label distance=-0cm]180:${4}$}] at (0,4) {};
\node [draw,shape=circle,fill,scale=.4, label={[label distance=-0.1cm]45:${R_{\textup{PR}}}$}] at (4,4) {};
\node at (3,-0.5) {${M_{\textup{NPR}}}$};
\node at (-1, 3) {${M_{\textup{CHSH}}}$};

\path[pattern=north west lines] (0,0) -- (0,4) -- (4,4)--cycle;
\path[pattern=north east lines] (0,0) -- (0,4) -- (4,4)--cycle;

\node[fill=white] at (2,2.3) {${R_{\textup{Tsirelson}}}$};
\node [draw,shape=circle,fill,scale=.4] (R1) at (2,2) {};
\node [draw,shape=circle,fill,scale=.4, label={[label distance=-0.1cm]0:${R_{\textup{Hardy}}}$}] (R2) at (4,1) {};
\draw[thick, color=Blue, circle color=Blue, dotted pattern] (R1) to [out=-5, in = 100] (R2);

\draw[dashed] (R1) -- (0,2);
\draw[dashed] (R1) -- (2,0);
\node [draw=none] at (2,-0.2) {\scriptsize{${2\sqrt{2}}$}};
\node [draw=none] at (-0.4,2) {\scriptsize{${2\sqrt{2}}$}};

\draw[thick] (R1) to [out=-15, in = 100] (4,0);

\begin{pgfonlayer}{myback}
\path[clip] (0,0) -- (4,0) -- (4,4) -- cycle;
\foreach \y in{0,0.334,...,4}
  \draw[name path=c.\y, color=jflyBlue] (\y,\y) -- (4,\y);
\foreach \y in{0,0.334,...,4}
  \draw[name path=c.\y, color=jflyVermillion] (\y,0) -- (\y,\y);;
\end{pgfonlayer} 

\end{tikzpicture}
}
\end{center}
\caption[.]{
\subref{fig:completeTrianglePlot} and \subref{fig:completeMonotoneAxes} provide the same pair of depictions of the 2-parameter family of resources $\boldsymbol{S}_{\twotwotwotwo}^{L^{\rm bb}_{\star}}$ as were introduced in Fig.~\ref{fig:twoM}.
Here, we provide a caricature of some ordering relations among quantumly realizable common-cause boxes within this 2-parameter family.
We depict the Tsirelson and Hardy boxes (with scaled-up values of the monotones, but accurate ordering of these values), 
together with a guess of what the boundary of the set of quantumly realizable resources within this 2-parameter family might be (dotted blue curves). 
In \subref{TsirelsonHardyMonotoneAxes}, we also depict the values of the two monotones
for the set of convexly extremal, quantumly realizable resources which are self-tested by the tilted Bell inequalities (smooth black curve). 
} \label{TsirelsonHardy}
\end{figure}

Recall that no quantumly realizable resource can achieve the algebraic maximum of $M_{\rm CHSH}$, while some GPT-realizable 
 (such as $R_{\rm PR}$) {\em can} achieve the maximum. In contrast to $M_{\CHSH}$,  $M_{\rm NPR}$ is such that some quantumly realizable resources (such as $R_{\rm Hardy}$) violate it maximally. 
Furthermore, whereas $R_{\rm PR}$ maximizes both $M_{\rm CHSH}$ and $M_{\rm NPR}$, no single quantumly realizable resource maximizes both those monotones. Therefore, a unique feature of the enveloping theory of \emph{quantumly realizable} common-cause  boxes is that {\em inequivalent} resources can simultaneously be maximally nonclassical (according to distinct monotones), even among {\twotwotwotwo}-type resources.

The interpolated curve in Figs.~\ref{TsirelsonHardyTrianglePlot} and \ref{TsirelsonHardyMonotoneAxes} furthermore suggests that perhaps all extremal quantum-realizable resources depicted 
therein
are relatively incomparable.
\noindent The following lemma gives a powerful result regarding maximally nonclassical resources:
\begin{lem} \label{prop:ext}
If a nonfree resource $R$ is convexly extremal in the set $\boldsymbol{S}^Q_{\twotwotwotwo}$  of quantumly realizable resources of type {\twotwotwotwo}, then $R$ is at the top of the pre-order among quantumly realizable resources of type {\twotwotwotwo}. 
\end{lem}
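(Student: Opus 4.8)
The plan is to show that $R$ is a maximal element of the pre-order, i.e., that any quantumly realizable resource $R'$ of type \twotwotwotwo{} with $R' \conv R$ necessarily also satisfies $R \conv R'$; this is exactly the statement that nothing lies strictly above $R$. So I would begin by assuming $R' \conv R$ for an arbitrary quantumly realizable $R'$ and aim to construct a free operation sending $R$ back to $R'$.

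First I would invoke Proposition~\ref{lem:belstheorem}: the set of type-\twotwotwotwo{} resources obtainable from $R'$ by \LOSR{} is the polytope $\operatorname{ConvexHull}\bigl(\mathbfcal{V}^{\LDO}_{\twotwotwotwo}(R')\bigr)$, whose vertices are the images $\sigma\circ R'$ of $R'$ under the finitely many deterministic operations $\sigma\in\LDO$. Since a deterministic local operation acts on $R'$ only by relabeling inputs and locally post-processing outputs, each vertex $\sigma\circ R'$ is again quantumly realizable and hence lies in $\boldsymbol{S}^Q_{\twotwotwotwo}$. Because $R' \conv R$, the resource $R$ lies in this polytope and is therefore a convex combination of these vertices, all of which belong to $\boldsymbol{S}^Q_{\twotwotwotwo}$. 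Invoking the hypothesis that $R$ is \emph{convexly extremal} in $\boldsymbol{S}^Q_{\twotwotwotwo}$, every vertex carrying nonzero weight must coincide with $R$ itself; hence there exists a single deterministic operation $\sigma\in\LDO$ with $\sigma\circ R' = R$.

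The crux is then to show that this $\sigma$ is \emph{invertible}, i.e., $\sigma\in\LSO$. I would argue the contrapositive: a non-invertible $\sigma$ forces $R$ to be free, contradicting the hypothesis that $R$ is nonfree. Using the factorized form of a deterministic operation, $\sigma$ splits into a left factor $\sigma_A$ and a right factor $\sigma_B$ determined by a setting-map $f_A$ and an outcome-map $g_A$ (and analogously $f_B,g_B$) as in Eqs.~\eqref{ExtremalOpLOSR}--\eqref{ExtremalOpLOSRb}. For type \twotwotwotwo{} the left factor is invertible iff $f_A$ is a bijection of the binary setting and $g_A(\cdot,s')$ is a bijection of the binary outcome for each value $s'$. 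Non-invertibility of $\sigma$ therefore means some factor, say $\sigma_A$, has either (i) $f_A$ constant, or (ii) $g_A(\cdot,s'_0)$ constant for some $s'_0$. In case (i) the left wing of $R$ only ever probes a single setting of $R'$, while in case (ii) one of the left settings yields a deterministic outcome; in either case I would exhibit an explicit local-hidden-variable model for $R$ — taking the shared variable to be the relevant left-wing outcome and letting the right wing sample from the corresponding no-signalling conditional of $R'$ — thereby proving $R$ is free. The main obstacle is precisely this step: verifying that \emph{every} way a binary deterministic local factor can fail to be a bijection destroys the nonclassicality of the output box. The argument is elementary but must treat each non-bijective case separately and confirm the resulting box is Bell-local.

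Finally, with $\sigma\in\LSO$ established, the fact that \LSO{} is a group (noted after Definition~\ref{defn:LSO}) guarantees that $\sigma^{-1}\in\LSO\subseteq\LOSR$, and $\sigma^{-1}\circ R = \sigma^{-1}\circ\sigma\circ R' = R'$, so $R\conv R'$. Since $R'$ was an arbitrary quantumly realizable resource lying above $R$, this shows that no resource is strictly above $R$; equivalently, $R$ sits at the top of the pre-order of quantumly realizable resources of type \twotwotwotwo{}.
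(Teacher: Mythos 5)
Your proof has the same architecture as the paper's: convex extremality of $R$ in $\boldsymbol{S}^Q_{\twotwotwotwo}$, combined with Proposition~\ref{lem:belstheorem} and the fact that {\LOSR} preserves quantum realizability, forces the conversion $R'\conv R$ to be effected by a single deterministic operation $\sigma$, and everything then hinges on showing that a non-invertible type-preserving deterministic operation cannot output a nonfree box, so that $\sigma\in\LSO$ and $R\conv R'$ via $\sigma^{-1}$. Where you genuinely diverge is in how that key sub-claim is established. The paper invokes Lemma~\ref{lem:allCHSHsensitive} (equivalently, it verifies on the convexly extremal {\twotwotwotwo}-type resources that every type-preserving deterministic nonsymmetry operation yields a free box, and extends this to all resources by convexity), whereas you give a direct structural argument: for binary cardinalities, $\sigma_A$ fails to be invertible only if $f_A$ is constant or some $g_A(\cdot,s_0')$ is constant, and in each case the output box admits an explicit local model. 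Both of your cases do go through --- with a constant $f_A$ the left wing probes only one setting of $R'$, so the left outcome at that setting can serve as the shared variable; with a constant $g_A(\cdot,s_0')$ one left setting has a deterministic outcome, no-signalling forces factorization at that setting, and the outcome at the other setting serves as the hidden variable --- and the enumeration of failure modes is exhaustive for binary variables. Your route is more self-contained and explains \emph{why} the fact holds, at the price of a case analysis tied to the {\twotwotwotwo} cardinalities; the paper's route is shorter given its Appendix~\ref{proofprop3} machinery and reduces to a finite check on extremal resources that would generalize more readily to other types.
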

\begin{proof}
Let $R\in \boldsymbol{S}^Q_{\twotwotwotwo}$ be nonfree and extremal in $\boldsymbol{S}^Q_{\twotwotwotwo}$. Then,
to prove the proposition, we need only prove that any quantumly realizable $R'\in \boldsymbol{S}^Q_{\twotwotwotwo}$ that can be freely converted to $R$ cannot be higher in the order than $R$ (rather, it must be equivalent).
Assume the existence of some quantumly realizable $R'$ such that $R'\conv R$. Since $R$ is extremal in the image of $R'$ under {\LOSR},\footnote{This is justified as follows: from $R' \conv R$ it follows that $R \in\mathbfcal{P}^{\textup{LOSR}}_{[R]}(R')$, and from the fact that quantumly realizable boxes remain quantumly realizable under {\LOSR}, it follows that $\mathbfcal{P}^{\textup{LOSR}}_{[R]}(R') \subset \boldsymbol{S}^Q_{\twotwotwotwo}$. Finally, $R$ is by assumption extremal in $\boldsymbol{S}^Q_{\twotwotwotwo}$; hence, it is extremal in $\mathbfcal{P}^{\textup{LOSR}}_{[R]}(R')$ as well.} it must be that $R'$ is converted to $R$ 
through extremal operations: that is, through {\LDO}. 
But as follows from Lemma~\ref{lem:allCHSHsensitive} in Appendix~\ref{proofprop3}, or as can be explicitly checked,\footnote{One can explicitly check that all extremal \twotwotwotwo-type resources are mapped to the free set by any deterministic operation which is not a symmetry, which implies by convexity that {\em all} \twotwotwotwo-type resources are also mapped to the free set by these operations. } the image of any {\twotwotwotwo}-scenario resource is \emph{free} under any deterministic operation which is not a symmetry! 
Put another way, there \emph{is no preimage} of any nonfree {\twotwotwotwo}-scenario resource among {\twotwotwotwo}-scenario resources under deterministic nonsymmetry operations. This means that the only $\tau\in \underset{\scriptscriptstyle ^{\twotwotwotwo\rightarrow\twotwotwotwo}}{\LDO}$ such that \emph{conceivably} $\tau\circ R'=R$ are symmetry operations. As such, if $R$ is a nonfree extremal quantumly realizable resource of type {\twotwotwotwo}, the only quantumly realizable resources (of the same type) which can be converted to $R$ are symmetries of $R$. Since resources related by a symmetry operation are in the same equivalence class, there are no {\twotwotwotwo}-type quantumly realizable resources strictly above $R$ in the partial order.\end{proof}

Lemma~\ref{prop:ext} allows us to conclude the following:
\begin{prop}\label{prop:continuoustop} 
There exists a continuous set of resources that are at the top of the pre-order of quantumly realizable {\twotwotwotwo} resources, and wherein each resource is incomparable to every other resource in the set. 
\end{prop}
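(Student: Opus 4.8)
The plan is to produce an explicit continuous family of extremal quantumly realizable resources, all of the same type, that are pairwise incomparable, and then invoke Lemma~\ref{prop:ext} to conclude that each sits at the top of the pre-order. The natural candidate is already supplied by the paper: the tilted family $R_{\textup{Tilt}}(\theta)$ for $\theta \in (0, \pi/2]$ defined in Table~\ref{tab:boxes3}. These resources are self-tested by the tilted Bell inequalities of Refs.~\cite{Yang2013selftesting,Bamps2015selftesting}, which is exactly the statement that each $R_{\textup{Tilt}}(\theta)$ is convexly extremal in $\boldsymbol{S}^Q_{\twotwotwotwo}$ (self-testing means the correlation is the unique quantum realization of a boundary exposing functional, hence an exposed, and in particular extremal, point of the quantum set). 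So the first step is simply to record that each member of this one-parameter family is nonfree and extremal in $\boldsymbol{S}^Q_{\twotwotwotwo}$.

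The second step is to show the family is an antichain, i.e.\ that $R_{\textup{Tilt}}(\theta_1)$ and $R_{\textup{Tilt}}(\theta_2)$ are incomparable whenever $\theta_1 \neq \theta_2$. For this I would use the two monotones together with the closed-form expressions given in the caption of Table~\ref{tab:boxes3}: one has $M_{\CHSH}\big(R_{\textup{Tilt}}(\theta)\big) = 2\sqrt{2 - \cos^2(\theta)}$, which is strictly decreasing in the amount of tilt (increasing with $\theta$ on $(0,\pi/2]$ since $\cos^2\theta$ decreases), while $M_{\NPR}\big(R_{\textup{Tilt}}(\theta)\big)$ is strictly increasing with the amount of tilt. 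Because the two monotones move in opposite directions along the family, for any distinct $\theta_1, \theta_2$ one of them is strictly larger on $R_{\textup{Tilt}}(\theta_1)$ and the other is strictly larger on $R_{\textup{Tilt}}(\theta_2)$; by Definition~\ref{def:monotone} this rules out conversion in either direction, so the two resources are incomparable. The monotonicity of the two closed-form expressions is the only calculation required, and it is elementary.

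The third step is the invocation of Lemma~\ref{prop:ext}: since each $R_{\textup{Tilt}}(\theta)$ is nonfree and extremal in $\boldsymbol{S}^Q_{\twotwotwotwo}$, the lemma immediately places each of them at the top of the pre-order of quantumly realizable {\twotwotwotwo} resources. Combining this with the antichain property from step two yields a continuous set of top-of-the-order resources, pairwise incomparable, which is exactly the claim. Finally one notes that the family is genuinely continuous (indexed by $\theta$ over a real interval) and that distinct $\theta$ give genuinely distinct, indeed inequivalent, resources---this last point follows automatically from incomparability, since equivalent resources are comparable.

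The main obstacle I anticipate is not step two (which is a short monotonicity check) but making step one airtight: one must justify that every $R_{\textup{Tilt}}(\theta)$ in the open family is truly extremal in the quantum set, not merely extremal among the GPT-realizable boxes. The cleanest route is to cite the self-testing results directly, since a correlation that self-tests a state and measurements is exposed by the corresponding tilted Bell functional and is therefore an extreme point of $\boldsymbol{S}^Q_{\twotwotwotwo}$; the only care needed is to confirm that self-testing holds throughout the relevant parameter range $\theta \in (0, \pi/2]$ and that the endpoint $R_{\textup{Tsirelson}} = R_{\textup{Tilt}}(\tfrac{\pi}{2})$ is included consistently. If one prefers to avoid leaning on the full strength of self-testing, an alternative is to exhibit for each $\theta$ the explicit tilted Bell inequality that $R_{\textup{Tilt}}(\theta)$ uniquely saturates within the quantum set, but this amounts to the same fact.
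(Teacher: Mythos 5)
Your proposal is correct, and it shares its central ingredient with the paper's proof---both rest entirely on Lemma~\ref{prop:ext} to place extremal quantumly realizable resources at the top of the pre-order---but the two arguments diverge in how they obtain continuity and incomparability. The paper's own proof is more abstract: it gets a continuum of extremal points from the well-known fact that $\boldsymbol{S}^Q_{\twotwotwotwo}$ is not a polytope (citing explicit families only as references), and it establishes pairwise incomparability using a \emph{single} monotone, by picking extremal resources on which $M_{\CHSH}$ takes distinct values and then observing that two inequivalent resources that are both at the top of the pre-order cannot be strictly ordered, hence must be incomparable. You instead instantiate the set concretely as the tilted family $R_{\textup{Tilt}}(\theta)$ and prove incomparability directly from the opposite monotonic behaviour of $M_{\CHSH}$ and $M_{\NPR}$ along the family (this is exactly the ``concrete example'' the paper appends \emph{after} its proof). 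What each buys: your route is fully explicit and does not need the ``top plus inequivalent implies incomparable'' observation, but it leans on the closed-form expression for $M_{\NPR}\big(R_{\textup{Tilt}}(\theta)\big)$ and its strict monotonicity, a computation the paper only asserts in a table caption; the paper's route needs only one monotone and applies to any continuous set of exposed quantum extreme points, at the cost of being non-constructive. Your flagged concern about extremality is handled in the paper the same way you propose: each $R_{\textup{Tilt}}(\theta)$ for $\theta\in(0,\pi/2]$ is the unique quantum maximizer of the corresponding tilted Bell functional with $\beta<2$, hence exposed and convexly extremal in $\boldsymbol{S}^Q_{\twotwotwotwo}$.
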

\begin{proof} 
 Lemma~\ref{prop:ext} states that any subset of resources which are extremal in $\boldsymbol{S}^Q_{\twotwotwotwo}$ are at the top of the pre-order of quantumly realizable {\twotwotwotwo} resources. The fact that one can find a continuous set of such resources follows from the well-known fact that $\boldsymbol{S}^Q_{\twotwotwotwo}$ is not a polytope. By furthermore choosing such a set of extremal resources for which $M_{\CHSH}$ takes a distinct value for every resource in the set, one additionally guarantees that no two of these top-of-the-order resources are in the same equivalence class, and hence each must be incomparable to every other in the set. Refs.~\cite{Masanes2003,Allcock2009,Bellreview,geometry2018} provide some explicit sets of resources satisfying these criteria.
\end{proof}

As one concrete example, consider the one-parameter family of quantumly realizable resources which are self-tested by the tilted Bell inequalities.  We denote this family by $\{ R_{\textup{Tilt}}(\theta): \theta \in (0,\pi/2]\}.$ The definition of $R_{\textup{Tilt}}(\theta)$ is given in Table~\ref{tab:boxes3}. 
 These resources are related to a corresponding family of tilted Bell functionals~\cite{Acin2012randomnessvsnonlocality,Wolfe2012quantumbounds,Yang2013selftesting,Bamps2015selftesting}, parametrized by $\beta \in [0,2]$, namely,
\begin{align*}
\label{eq:tiltedCHSH}\operatorname{TiltedCHSH}_{\beta}&(R)\coloneqq 
\beta
 \expec{A_0} +\expec{A_0 B_0}\\\nonumber&\quad+\expec{A_1 B_0}+\expec{A_0 B_1}-\expec{A_1 B_1},
\\\nonumber\begin{split}\text{where}\quad  \smashoperator{\max\limits_{R^{\star}\in \boldsymbol{S}^{\rm free}_{\twotwotwotwo}}}&
 \;\;\operatorname{TiltedCHSH}_{\beta}(R^{\star}) = 
 2+ \beta,
\\ \text{and where}\quad  \smashoperator{\max\limits_{R^{\star}\in \boldsymbol{S}^Q_{\twotwotwotwo}}}&
 \;\;\operatorname{TiltedCHSH}_{\beta}(R^{\star}) = 
 \sqrt{8+2\beta^2}.
\end{split}\end{align*}

Note that the only value of $\beta$ for which the maximum value of this function over the quantumly realizable set $\boldsymbol{S}^Q_{\twotwotwotwo}$ coincides with the maximum value over the free set $\boldsymbol{S}^{\rm free}_{\twotwotwotwo}$ is $\beta=2$.
 Whenever $\beta<2$,
 the resource $R_{\textup{Tilt}}(\theta)$  for $\theta$ defined implicitly by the equation  $\beta=\frac{2}{\sqrt{1+2\tan^2 (\theta)}}$ is the \emph{unique} maximizer over $\boldsymbol{S}^Q_{\twotwotwotwo}$ of the corresponding tilted Bell functional.
 Formally,
\begin{align*} 
 &\beta=\frac{2}{\sqrt{1+2\tan^2 (\theta)}}<2\;\quad\text{implies}\quad
 \\&\operatorname{TiltedCHSH}_{\beta}(R^{\star}) < 
 {\operatorname{TiltedCHSH}_{\beta}}\big\lparen R_{\textup{Tilt}}(\theta)\big\rparen
\end{align*} for any  $R^{\star}\in \boldsymbol{S}^Q_{\twotwotwotwo}$ 
 $R_{\textup{Tilt}}(\theta)$.  It follows that every resource $R_{\textup{Tilt}}(\theta)$ is convexly extremal in the set of quantumly realizable resources, and its extremality is \emph{exposed} by the corresponding tilted Bell functional.

In fact, every resource in this family is incomparable to every other in the family, as can be shown directly by considering the values of $M_{\CHSH}$ and $M_{\rm NPR}$. In Fig.~\ref{TsirelsonHardyMonotoneAxes}, we show a plot of the values of the two monotones evaluated on this family. The points form a continuous antichain, shown in black. Note that the family of resources 
$\{ R_{\textup{Tilt}}(\theta): \theta \in (0,\pi/2]\}$
does not lie in any plane in the linear space of resources, and as such we do not attempt to plot the family directly (rather we \emph{only} plot its valuations with respect to the two monotones).

\section{Conclusions and outlook}

We have conceptualized Bell experiments as common-cause `box-type' processes: bipartite or multipartite processes with classical variables as inputs and outputs, the internal causal structure of which is a common-cause acting on all of the wings of the experiment.  We have argued in favour of this conceptualization by appeal to the fact that Bell's theorem can be regarded as implying the need for nonclassicality in the causal model that underlies the process. We have begun to quantify the nonclassicality of such common-cause box-type processes by developing a resource theory thereof.  We have argued in favour of a particular choice of the free operations for this resource theory, namely, those which can be achieved by embedding the resource into a circuit consisting of box-type processes realizable with a  {\em classical} common cause, and we have shown that this set is equivalent to the set of local operations and shared randomness. 

We have focused here on characterizing the pre-order defined by single-copy deterministic conversion of resources under the free operations. We have provided a linear program that decides how any two resources are ordered.  
By leveraging a pair of functions that we have proven to be monotones, we have also established a number of properties of this pre-order, such as the fact that it contains incomparable resources, that it has infinite width and height, that it is locally infinite, and that the incomparability relation is not transitive. 
Moreover, despite the fact that the values of the facet-defining Bell functionals are necessary and sufficient for \emph{witnessing} the nonclassicality of a common-cause box, we have shown that they are not sufficient for \emph{quantifying} the nonclassicality of a common-cause box.  In other words, there are aspects of the nonclassicality of such boxes relevant to resource conversions that are not captured by the degree of violation of the facet-defining Bell inequalities. 
For the particular case of resources with two binary inputs and two binary outputs, we moreover showed that at least eight continuous monotones are required to fully specify the pre-order among resources.
We have also derived some interesting facts about the pre-order of resources when one restricts attention to common-cause boxes that can be realized in quantum theory.  In particular, we have shown that for quantumly realizable resources of type {\twotwotwotwo}, all convexly extremal resources are at the top of the pre-order of such resources,
and that  there are an infinite number of incomparable resources at the top of this pre-order.

\bigskip

There is much scope for advancing and generalizing our work, some examples of which we now describe.

One of the most fundamental problems that is yet to be solved is that of characterizing the equivalence classes of resources in the pre-order induced by single-copy deterministic conversion.  That is, one would like a compressed representation of each resource that includes all and only information that is relevant to determining its equivalence class in this pre-order. Finding such a representation would be the analogue within our resource theory of proving that the equivalence classes of pure bipartite entangled states under LOCC~\cite{nielsen1999conditions}
 are given by the Schmidt coefficients of the state.
All resource monotones could then be efficiently expressed in terms of this compressed representation, while all other parameters of a resource could be safely ignored. 

Even among resources of type {\twotwotwotwo} (much less for resources of arbitrary type), we do not have a complete set of monotones for this pre-order.\footnote{Although considerations of the examples given in Section~\ref{sec:ourincompleteness} might provide the intuition necessary to find such a complete set for resources of type {\twotwotwotwo}.}  
Another interesting open problem is to connect the existing monotones to figures of merit for interesting operational tasks. E.g.,~does the value of the monotone $M_{\CHSH}$ determine the extent to which a given resource can be used for key distribution or randomness generation~\cite{BHK,Acin2006QKD,Scarani2006QKD,Acin2007QKD, colbeckamp, Pironio2010,Dhara2013DIRNG,vazirani14,Kaniewski2016chsh}? 
Since the monotone $M_{\textup{NPR}}$ is maximized for high-bias boxes from the $R_{\textup{Tilt}}(\theta)$ family (and by the Hardy box) as opposed to by the Tsirelson box, $M_{\textup{NPR}}$ is likely a figure of merit for operational tasks where the advantage is provided by such correlations~\cite{Acin2012randomnessvsnonlocality,BMP}.

Note that in deriving our results about properties of this pre-order, we have not needed to consider any types of resource beyond {\twotwotwotwo}, that is, it has sufficed to consider Bell experiments of the CHSH type.  It may be that more nuanced features of this pre-order only become apparent for more general types of resources. 

An obvious generalization of our work is to consider the pre-order induced by different sorts of conversion relations, such as 
indeterministic single-copy conversion\footnote{Indeterministic single-copy conversion is single-copy conversion that makes use of a post-selection.  Therefore, to contemplate this notion of conversion for our resource theory is to contemplate expanding the set of free operations from LOSR to LOSR with post-selection.  However, LOSR with postselection can map a correlation $P_{XY|ST}$ that satisfies the Bell inequalities to one that violates them, and even to one that violates the no-signalling condition.  (This is in contrast to the situation with LOCC, where allowing postselection does not change the set of states that one can prepare for free.)  Consequently, what sort of correlation is consistent with a classical common cause---and hence what should be deemed free in a resource theory of nonclassicality of common cause boxes---becomes contingent on what sort of postselection was implemented. For example, in a Bell experiment wherein detectors are not perfectly efficient, postselecting on detection can induce Bell inequality violations even in the absence of a nonclassical common cause.  However, for a given value of the detection efficiency, this might only be able to explain a particular degree of violation, while any higher violation would still attest to the presence of a nonclassical common cause.  In such a context, the boundary between the correlations that are consistent with a classical common cause and those that are not would no longer coincide with the facets of the Bell polytope.  Consequently, even defining the free set of resources becomes quite complicated when postselection is allowed.}, 
multi-copy conversion, asymptotic conversion, and conversion in the presence of a catalyst (see Refs.~\cite{Coecke2014,Gour2015thermo,Fritz2017asymptotic} for a discussion of these different notions, and Refs.~\cite{Brunner2009distillation,Lang2014zoo,
SandersGour,JonathanPlenio,vanDamHayden} for relevant examples of such generalized conversions). 

Other generalizations require changes to the enveloping theory of resources one is considering.
We have noted that our definition of the free operations can easily be extended to define a resource theory of nonclassicality for box-type processes in more general causal structures, distinct from that of a Bell experiment.  For example, as discussed in Appendix.~\ref{diffcausalstr}, it can be extended to a scenario we term the triangle-with-settings scenario~\cite[Fig.~8]{BilocalCorrelations}, of which the much-studied `triangle scenario'~\cite{Steudel2015,Wolfe2016inflation,Gisin2017triangle,Fraser2018} is a special case.
Another example would be to extend our definition to the `bilocality scenario'~\cite{Branciard2010,BilocalCorrelations,Chaves2017starnetworks,
TavakoliStarNetworks,RossetNetworks,tavakoli2016noncyclic}.  The analysis of such cases is complicated by the fact that our proposal implies that the set of free operations is not convex for them.  Another such generalization would be to causal structures wherein there are cause-effect relations between different parts of the experiment, for instance, experiments involving sequences of nondestructive measurements on parts of a shared resource, such as the causal structure known as the `instrumental scenario'~\cite{Pearl1995,Bonet2001,Evans2012,Henson2014,Chaves2017instrumental,
Himbeeck2018instrumental}.

A generalization of our resource theory in a different direction is to consider processes whose inputs and outputs are {\em not} classical (i.e., processes that are not `box-type'), but rather describe quantum or post-quantum systems.  For the case of the common-cause structure which we focused on here, a quantum resource theory of this sort would subsume entanglement theory, but where quantum correlation is defined relative to the set of local  operations and shared randomness (LOSR) rather than local operations and classical communication (LOCC).
\vspace{-1em}
\tocless
\section{Acknowledgments}
The authors acknowledge useful discussions with Jonathan Barrett, Tobias Fritz, Tomáš Gonda and Denis Rosset.  D.S. is supported by a Vanier Canada Graduate Scholarship. R.K. is supported by the Program of Concerted Research Actions (ARC) of the Universit\'e libre de Bruxelles. This research was supported by Perimeter Institute for Theoretical Physics. Research at Perimeter Institute is supported in part by the Government of Canada through the Department of Innovation, Science and Economic Development Canada and by the Province of Ontario through the Ministry of Colleges and Universities. This publication was made possible through the support of a grant from the John Templeton Foundation. The opinions expressed in this publication are those of the authors and do not necessarily reflect the views of the John Templeton Foundation. ABS acknowledges support by the Foundation for Polish Science (IRAP project, ICTQT, contract no. 2018/MAB/5, co-financed by EU within Smart Growth Operational Programme).

\onecolumngrid
\bigskip

\begin{center}
\line(1,0){250}
\end{center}

\bigskip

\twocolumngrid


\onecolumngrid
\clearpage
\appendix

\appendixpage
\addappheadtotoc
\section{Comparing our framework with prior work} \label{comparison}

Correlations that violate Bell inequalities have become an important object of study, not only for their relevance in foundational aspects of quantum theory, but also for their role as a resource in quantum information-processing tasks \cite{BHK,Acin2006QKD,Scarani2006QKD,Acin2007QKD, colbeckamp, Pironio2010,Dhara2013DIRNG,vazirani14,Kaniewski2016chsh}. Hence, particular effort has been devoted to the formulation of a resource theory describing them~\cite{de2014nonlocality,horodecki2015axiomatic,gallego2012,gallego2016nonlocality}. Two sets of free operations have previously been proposed to define such a resource theory, namely LOSR~\cite{de2014nonlocality,GellerPiani,gallego2016nonlocality,horodecki2015axiomatic} which we have developed in the main text, but also \term{wirings and prior-to-input classical communication} (\term{WPICC})~\cite{gallego2012}. 

In this section, we assess WPICC from the lens of our resource theory, and we identify an inconsistency among
 previous proposals for the definition of LOSR.
 The primary differences between our approach and previous approaches become most evident when one considers the question of how to develop such a resource theory for more general causal structures, as we discuss further on in Appendix~\ref{diffcausalstr}.

\subsection{WPICC versus LOSR as the set of free operations} \label{WPICCvsLOSR}
The set of WPICC operations allows for classical causal influences among the wings prior to when the parties receive their inputs.
An example of a free operation in the WPICC approach is depicted in Fig.~\eqref{LOSR_nonconvex_new}.
If one seeks to understand the resource as nonclassicality of common-cause processes, as we do here, then it is clear that the free operations should not include any cause-effect influences between the wings, and therefore should not include any classical communication between the wings. In other words, in our approach, WPICC is not a viable choice for the set of free operations, as wirings that connect different wings of the experiment 
cannot be part of any free operation.

One might think that the choice to take WPICC or LOSR to be the set of free operations is not a particularly consequential one, since WPICC and LOSR define the same partial order for boxes~\cite{gallego2016nonlocality} (see the discussion of this point in Sec.~\ref{PrevResultsInLightOfOversight}). This equivalence breaks down, however, when one considers more general resources, e.g., bipartite quantum states. Since bipartite quantum states have no inputs, allowing classical communication {\em prior to inputs} means allowing {\em arbitrary} classical communication. Hence, WPICC coincides with LOCC in this case, and LOCC defines a partial order on bipartite quantum states that is distinct from the partial order defined by LOSR~\cite{Buscemi2012LOSR}.

\begin{figure}[htb!]
 \centering
 \includegraphics[scale=.4]{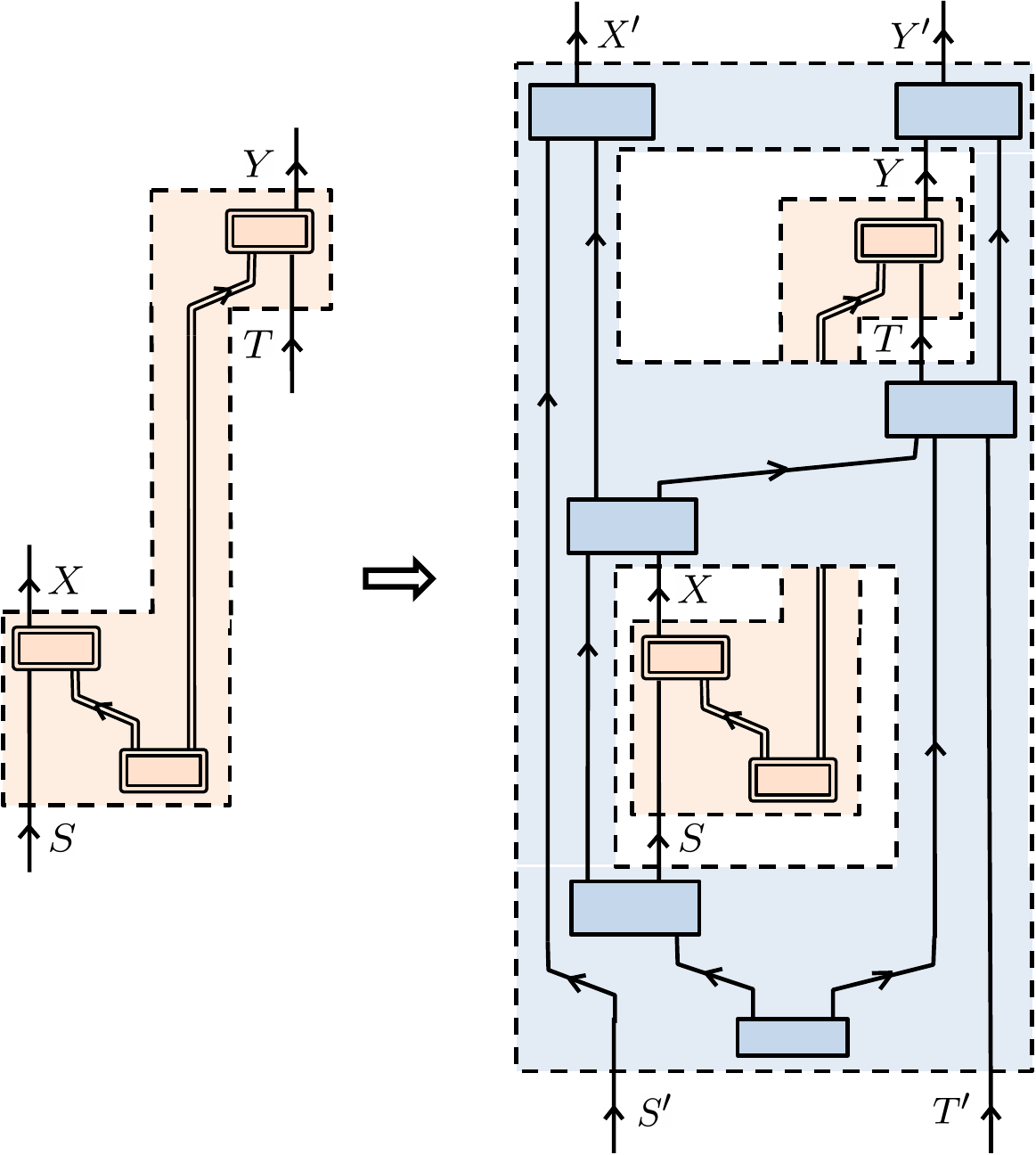}
 \caption{An example of a free operation in the WPICC approach, using the diagrammatic conventions of this article. (Compare with Fig.~1(b) of Ref.~\cite{gallego2016nonlocality}.)
Here, we see an example in which there is communication from the left wing to the right wing, which (in contrast to our approach) is allowed for free in the WPICC approach, for all times prior to when the wings receive the inputs $S'$ and $T'$.
}
 \label{LOSR_nonconvex_new}
\end{figure}

\subsection{An oversight in the literature concerning how to formalize LOSR}\label{comparisonsub}

\begin{figure}[!tbh]
 \centering
 \includegraphics[scale=.4]{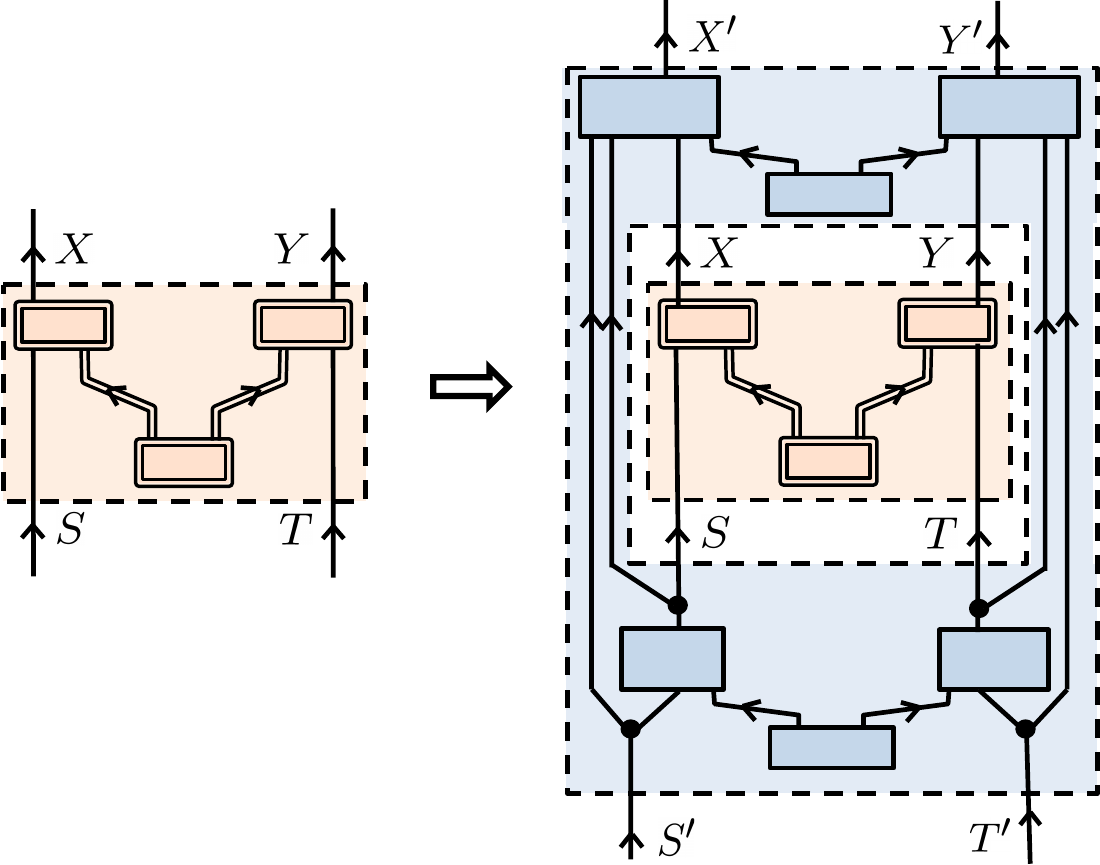}
 \caption{A depiction of Fig.~1(a) of Ref.~\cite{gallego2016nonlocality} using the diagrammatic conventions of this article.  The set of operations having this form is not as general as those depicted in our Fig.~\ref{fig:transf} because the post-processing does not have complete access to the shared randomness available at the pre-processing. One can explicitly show that the set of operations having this form is not convex. 
  It is only after taking the convex closure of the set of operations depicted here that one recovers {\LOSR}.}
 \label{LOSR_nonconvex}
\end{figure}

As we noted in the Introduction and in Section~\ref{landscape}, the intuitive notion that the set of free operations should constitute local operations supplemented by shared randomness is widely agreed upon in previous work~\cite{gallego2012,de2014nonlocality,GellerPiani,gallego2016nonlocality,horodecki2015axiomatic,
Amaral2017NCW,kaur2018fundamental,Brito2018tracedistance}.
Nonetheless, some prior work seems to have formalized this intuitive notion incorrectly.  
Specifically, the set of free operations defined in Ref.~\cite{gallego2016nonlocality} (and repeated in Refs.~\cite{Amaral2017NCW,kaur2018fundamental}) does not coincide with the set of free operations defined in Refs.~\cite{de2014nonlocality,GellerPiani} and which we endorse here as the correct choice.  Rather, it is a nonconvex subset thereof, as we will show here.  (Note that  Ref.~\cite{gallego2016nonlocality} referred to their set of free operations as ``LOSR'' but we will here reserve that term for the set of operations described in Definition~\ref{defnLOSR}.)

We suspect that the discrepancy in the definitions introduced in these papers was merely an oversight, and in particular, that none of the authors of these articles would advocate for this nonconvex subset over the full set.   
Nonetheless, we think that it is important to highlight this oversight, so that it may be avoided in future work.

It is easiest to see the difference between the definition of the free operations given in Ref.~\cite{gallego2016nonlocality} and the one endorsed here (which coincides with the definitions of Refs.~\cite{de2014nonlocality, GellerPiani}) by considering the diagrammatic representation of a generic operation in each case.   The most general free operation proposed by Ref.~\cite{gallego2016nonlocality} is depicted in their Fig. 1(a), which we reproduce here as Fig.~\ref{LOSR_nonconvex}, which should be compared with Figs.~\ref{fig:transfexplicit} and~\ref{fig:transf} of our article.  The difference is that in Fig.~\ref{LOSR_nonconvex}, the side-channels on each wing that carry information forward from the pre-processing to the post-processing are limited to carry information {\em only} about the setting variables ($S$, $S'$, $T$ and $T'$), while in Figs.~\ref{fig:transfexplicit} and ~\ref{fig:transf}, they can also carry information about the common cause that acts on the local pre-processings. 

This difference is also reflected in the equations.  The most general free operation proposed by Ref.~\cite{gallego2016nonlocality} is defined via their Eq.~(7).  In terms of the notation of this article, their Eq.~(7) asserts that
\begin{equation}\label{nonconvexsubsetLOSR}
P_{X'Y'ST|XYS'T'} = P_{X'Y'|XYSTS'T'}\; P_{ST|S'T'},
\end{equation}
where $P_{ST|S'T'}$ (denoted $I^{(L)}$ in Ref.~\cite{gallego2016nonlocality}) and $P_{X'Y'|XYSTS'T'}$ (denoted $O^{(L)}$ in Ref.~\cite{gallego2016nonlocality}) represent, respectively, the pre- and post-processings (depicted in Fig.~\ref{LOSR_nonconvex}).  Consistently with their Fig. 1(a), the expression for the post-processing stipulates
that the side-channel between the pre- and post-processings only carries information about the setting variables $S$, $S'$, $T$ and $T'$.  The analogue of this equation for the proposal endorsed here is
\begin{equation}\label{LOSRproperly}
P_{X'Y'ST|XYS'T'} = \sum_{Z_A, Z_B} P_{X'Y'|XYZ_A Z_B}\; P_{ST Z_A Z_B|S'T'},
\end{equation}
where $Z_A$ and $Z_B$ represent the variables propagated along the side-channels in Fig.~\ref{fig:transf}.  This is more general, given that $Z_A$ and $Z_B$ can encode information about the common cause in the pre-processing.

To see the difference more explicitly, consider how these expressions appear if one includes the common causes.
Because the pre- and the post-processings in the proposal of Ref.~\cite{gallego2016nonlocality}
must depend on independent sources of shared randomness (by virtue of the restriction on the side-channels), we distinguish the common causes notationally using primed and unprimed variables.  The post-processing is given by
\begin{equation}
P_{X'Y'|XYSTS'T'}=\sum_{\Lambda'} P_{X'|XSS'\Lambda'} P_{Y'|YTT'\Lambda'} P_{\Lambda'},
\end{equation}
and the pre-processing is given by
\begin{equation}
 P_{ST|S'T'}=\sum_{\Lambda} P_{S|S'\Lambda} P_{T|T'\Lambda} P_{\Lambda}.
\end{equation}
  Putting these together, we have
\begin{align}\label{them}
P_{X'Y'ST|XYS'T'} = \sum_{\Lambda \Lambda'} 
P_{X'|XSS'\Lambda'} P_{Y'|YTT'\Lambda'} 
P_{S|S'\Lambda} P_{T|T'\Lambda} P_{\Lambda'} P_{\Lambda}
\end{align}

By contrast, the proposal endorsed here distributes a single source of shared randomness between the pre- and post-processings.  If we consider the circuit depicted in Fig.~\ref{fig:transf} and note that the side-channels can now feed forward not just $S$, $S'$, $T$ and $T'$, but the common cause as well, we see that we can express the most general free operation as follows (which is equivalent to Eq.~\eqref{LC4prime}) 
\begin{align}\label{us}
P_{X'Y'ST|XYS'T'} = \sum_{\Lambda} 
P_{X'|XSS'\Lambda} P_{Y'|YTT'\Lambda} 
P_{S|S'\Lambda} P_{T|T'\Lambda} P_{\Lambda}.
\end{align}

The operational discrepancy between the two proposals is a consequence of the fact that Eq.~\eqref{them} is strictly less general than Eq.~\eqref{us}.

One can intuitively expect a failure of convex closure for the set of operations depicted in Fig.~\ref{LOSR_nonconvex} and described in Eq.~\eqref{them}, since the pre-processing and the post-processing have access to independent sources of shared randomness, and these two sources cannot generally be subsumed into a single source. 
To explicitly demonstrate the failure of convexity, we consider the following operations:
\begin{align*}
& \tau_1 \coloneqq P_{X'Y'ST|XYS'T'}=\delta_{Y',0}\delta_{S,0}\delta_{T,0}\delta_{X',0},\\
& \tau_2 \coloneqq P_{X'Y'ST|XYS'T'}=\delta_{Y',1}\delta_{S,1}\delta_{T,0}\delta_{X',0}, \\
& \tau_3\coloneqq \tfrac{1}{2}(\tau_1+\tau_2).
\end{align*}
While $\tau_1$ and $\tau_2$ are each free operations which can be realized using the circuit in Fig.~\ref{LOSR_nonconvex},
 the transformation $\tau_3$ defined by their mixture cannot be realized using this circuit.
 To see this, note that in Fig.~\ref{LOSR_nonconvex},
  any correlations between $S$ and $Y'$ can only be mediated by $T$, since the only variable in the causal past of both $S$ and $Y'$ is the variable acting as the common cause of the pre-processing, which we will denote by $\Lambda_1$, and the only means by which the value of $\Lambda_1$ could be communicated via the side channel is through $T$. But in this example, $T$ does not vary and so cannot mediate any correlations; the point distribution on $T$ screens off any correlation between $S$ and $Y'$. Hence, $\tau_3$, which exhibits perfect correlation between $S$ and $Y'$, cannot be realized in a circuit of the form of Fig.~\ref{LOSR_nonconvex}. It follows that the set of operations depicted in Fig.~\ref{LOSR_nonconvex} is not convexly closed.

Despite the definition of the free operations given in Ref.~\cite{gallego2016nonlocality}, in Appendix~A of that article, the authors avail themselves of convex mixtures of operations of the sort described by their Fig.~1(a)~and~Eq.~(7).   However, a mixing operation is only allowed if the shared randomness required to implement it is present, and given that the shared randomness available for the pre-selection is independent from that which is available for the post-selection, an arbitrary mixing operation is {\em not} allowed under the free operations proposed by Ref.~\cite{gallego2016nonlocality}.  The use of convex mixtures in Appendix~A of Ref.~\cite{gallego2016nonlocality} is therefore inconsistent with the definition of the free operations provided therein.
  
The mistake of defining the free operations as this nonconvex subset of LOSR is repeated in Ref.~\cite{kaur2018fundamental}:
 Fig.~1~and~Eq.~(13) therein
  are reproductions of Fig.~1(a)~and~Eq.~(7) of Ref.~\cite{gallego2016nonlocality}, and, like the latter, limits the side-channels to carry information only about the setting variables. It is also repeated in Ref.~\cite{Amaral2017NCW}, where the formalization of a \enquote{noncontextual wiring} per Eq.~(9) there utilizes a post-processing with randomness independent from that of pre-processing, again restricting the side-channels to exclusively information pertaining to the setting variables.

The above discussion has highlighted the fact that if one wishes the set of free operations to include  arbitrary convex mixtures of some smaller set, it is important that it be stipulated precisely how the shared randomness is distributed in order to ensure the possibility of such mixing.  In this regard, although de Vicente~\cite{de2014nonlocality} provided a definition of the free operations that is equivalent to LOSR, the physical justification for this choice was wanting.   Specifically, the definition in Ref.~\cite{de2014nonlocality} proceeds by enumerating a long list of nominally `elementary' operations and then stating 
 (in Section~4.1 of that article) that any mixture of these operations is also allowed.  No discussion is provided of why the type of shared randomness necessary for achieving arbitrary mixtures should be considered freely available.  The work of Geller and Piani~\cite{GellerPiani}, by contrast, {\em does} stipulate the physical structure of the circuit that defines the free operations, thereby providing a physical justification for taking LOSR as the set of free operations.  
 
\subsubsection{Previous results in light of this oversight}\label{PrevResultsInLightOfOversight}

Given that some previous work \cite{gallego2016nonlocality,Amaral2017NCW,kaur2018fundamental} formally defined the set of free operations by Eq.~\eqref{nonconvexsubsetLOSR}, which yields a nonconvex subset of LOSR,  one might wonder to what extent the results reached by those works still hold for LOSR proper, as defined in Eq.~\eqref{LOSRproperly}. 
In the following, we will briefly comment on some results described in Refs.~\cite{gallego2016nonlocality} and \cite{kaur2018fundamental}.

\medskip

Lemma 6 of Ref.~\cite{gallego2016nonlocality} purports to demonstrate that if a function is a monotone relative to  a set of operations that the authors term ``LOSR'', then it is also a monotone relative to WPICC.  If one interprets the set of operations termed ``LOSR'' by the authors of Ref.~\cite{gallego2016nonlocality} in the manner of the definition stipulated by their Fig.~1(a) or their Eq.~(7) (which is equivalent to Eq.~\eqref{nonconvexsubsetLOSR} above), namely, as a nonconvex subset of LOSR proper, as defined in Eq.~\eqref{LOSRproperly}, then the question would arise as to whether an analogous lemma holds for LOSR proper rather than simply the nonconvex subset thereof.  In fact, however, the proof of Lemma 6 in Ref.~\cite{gallego2016nonlocality} assumes that the set of free operations can map a resource $R$ to a convex combination of $R$ with any local box.  This is not possible if the set of free operations is the one 
 defined by their Fig.~1(a) or Eq.~(7) (or equivalently, by Eq.~\eqref{nonconvexsubsetLOSR} above). Hence, the proof of Lemma 6 holds {\em only} if the set of free operations termed ``LOSR'' in the statement of the lemma is taken to be LOSR proper, as defined in Eq.~\eqref{LOSRproperly}, and not the nonconvex subset of LOSR defined by Eq.~\eqref{nonconvexsubsetLOSR}.  
 This fact provides yet another piece of evidence that the nonconvexity of the formal definition of the free operations in Ref.~\cite{gallego2016nonlocality} was merely an oversight. The bottom line is that the proofs provided in Ref.~\cite{gallego2016nonlocality} do establish that monotonicity relative to LOSR proper implies monotonicity relative to WPICC.

\medskip

\citet{kaur2018fundamental} state in their Proposition 6 that their proposed \enquote{intrinsic non-locality} measure is monotonically nonincreasing under the set of free operations they term ``LOSR''.  But given that their definition of this term is precisely the same as the definition provided in Ref.~\cite{gallego2016nonlocality}, the set of operations in question is the nonconvex subset of LOSR defined by Eq.~\eqref{nonconvexsubsetLOSR}.
This prompts the question of whether this proposition holds if one considers LOSR proper, as defined in Eq~\eqref{LOSRproperly}, rather than this nonconvex subset thereof.

The answer is that it does.  
Establishing this is nontrivial, however, as an arbitrary monotone relative to the nonconvex subset of LOSR defined by Eq.~\eqref{nonconvexsubsetLOSR}}
need \emph{not} be a monotone relative to LOSR.
Note, however, that if
\begin{compactenum}[(i)]
\item a function $f$ is a monotone relative to {\LDO}, and 
\item $f$ happens to be a convex function,
\end{compactenum} 
then $f$ is also a monotone relative to {\LOSR}, as a consequence of Proposition~\ref{lem:belstheorem}.  Since {\LDO} is contained within the nonconvex subset of {\LOSR} defined by Eq.~\eqref{nonconvexsubsetLOSR},
convex monotones relative to those limited operations are also valid monotones relative to {\LOSR} proper. Finally, we can use this implication to recover
Ref.~\cite{kaur2018fundamental}'s Proposition 6 by leveraging Proposition 7 there regarding the convexity of \enquote{intrinsic non-locality} over box-type resources.

\subsection{Generalizing from Bell scenarios to more general causal structures}\label{diffcausalstr}

In the introduction, we contrasted our approach to defining a resource theory, which we termed the causal modelling paradigm, with a pre-existing approach, which we termed the strictly operational paradigm.  Considering causal scenarios beyond Bell scenarios helps to clarify the differences between these two approaches.

Consider, for instance, a tripartite box-type process, with setting variables for the three wings denoted $S$, $T$, and $U$, and outcome variables for the three wings denoted $X$, $Y$, and $Z$ respectively.  
One can distinguish two distinct causal structures that could underlie this sort of process: 
(i) the {\em tripartite Bell scenario}, where there is a common cause acting on all the three wings, depicted in Fig.~\ref{fig:TripartiteBellScenario}, and (ii) the {\em triangle-with-settings} scenario~\cite[Fig.~8]{BilocalCorrelations}, where there is a common cause for each pair of wings, depicted in  Fig.~\ref{fig:TriangleWithSettingsScenario}.

\begin{figure}[htb!]
\begin{center}
\subfigure[\label{fig:CandNCboxesGHZGPT}]
{
\centering
\includegraphics[scale=0.4]{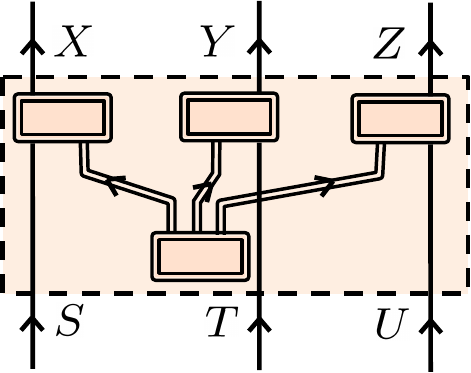}
}
\subfigure[\label{fig:CandNCboxesGHZCLASSICAL}]
{
\centering
\includegraphics[scale=0.4]{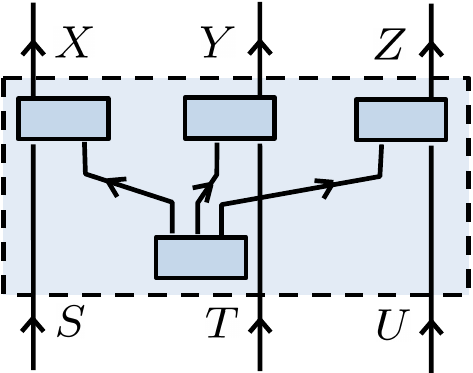}
}
\caption{The distinction between \subref{fig:CandNCboxesGHZGPT} a {\em generic} box in the tripartite Bell scenario and \subref{fig:CandNCboxesGHZCLASSICAL} a {\em classical} box in this scenario.
\label{fig:TripartiteBellScenario}
}\end{center}
\end{figure}

\begin{figure}[htb!]
\begin{center}
\subfigure[\label{fig:CandNCboxesTriangleGPT}]
{
\centering
\includegraphics[scale=0.4]{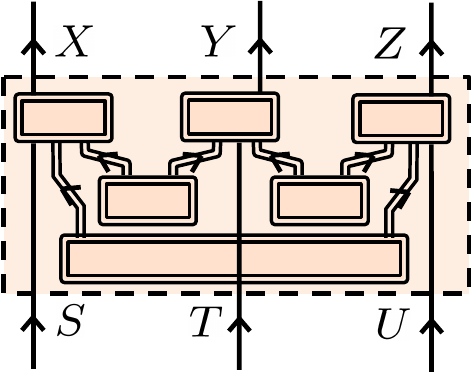}
}
\subfigure[\label{fig:CandNCboxesTriangleCLASSICAL}]
{
\centering
\includegraphics[scale=0.4]{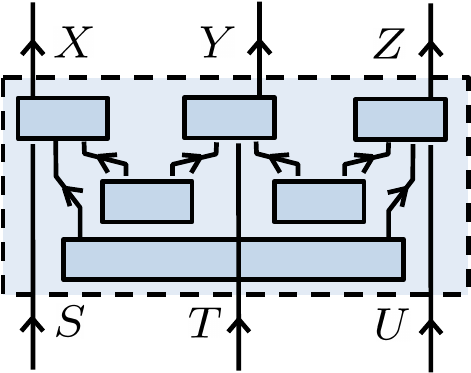}
}
\caption{The distinction between \subref{fig:CandNCboxesTriangleGPT} a {\em generic} box in the triangle-with-settings scenario and \subref{fig:CandNCboxesTriangleCLASSICAL} a {\em classical} box in this scenario.
\label{fig:TriangleWithSettingsScenario}
}\end{center}
\end{figure}

Consider the case of a generic box in the tripartite Bell scenario, depicted in Fig.~\ref{fig:CandNCboxesGHZGPT}, and label the systems distributed to the three wings by $A$, $B$ and $C$ respectively. Let us denote by ${\bf r}^{A}_{x|s}$ the GPT representation of the $X=x$ outcome of the $S=s$ measurement on system $A$, and similarly define ${\bf r}^{\rm b}_{y|t}$ and ${\bf r}^{C}_{z|u}$. If ${\bf s}^{ABC}$ denotes the GPT state of the composite $ABC$,
then the conditional probability distribution associated to this 
 box is
\beq
P_{XYZ|STU}(xyz|stu) = ({\bf r}^{A}_{x|s} \otimes {\bf r}^{\rm b}_{y|t}\otimes {\bf r}^{C}_{z|u} ) \cdot {\bf s}^{ABC}.
\label{GPT3partCCbox}
\eeq 
When the GPT is classical probability theory, we obtain the classically-realizable
box shown in Fig.~\ref{fig:CandNCboxesGHZCLASSICAL}, and the conditional probability distribution associated to it is
\begin{align}
\nonumber P_{XYZ|STU}&(xyz|stu)
\\\nonumber &=  \smashoperator{\sum_{\lambda_A\lambda_B,\lambda_C}}P_{ X|S\Lambda_A}(x|s\lambda_A)P_{Y|T\Lambda_B}(y|t\lambda_B)P_{Z|U\Lambda_C}(z|u\lambda_C) P_{\Lambda_A \Lambda_B \Lambda_C}(\lambda_A\lambda_B,\lambda_C)
\\
&=\sum_{\lambda}P_{ X|S\Lambda}(x|s\lambda)P_{Y|T\Lambda}(y|t\lambda) P_{Z|U\Lambda}(z|u\lambda) P_{\Lambda}(\lambda).
\label{Classical3partCCbox}
\end{align}

Now consider a generic box in the triangle-with-settings scenario, depicted in Fig.~\ref{fig:CandNCboxesTriangleGPT}.
Instead of an arbitrary joint GPT state ${\bf s}^{ABC}$ on the triple of systems associated to the three wings, each system is composed of two parts---$A$ is composed of $A_1$ and $A_2$, and similarly for $B$ and $C$---and the joint GPT state has the form ${\bf s}^{A_1 B_1}\otimes {\bf s}^{A_2 C_1}\otimes {\bf s}^{B_2 C_2}$. The conditional probability distribution associated to this 
box \bel{is}
\beq
P_{XYZ|STU}(xyz|stu) = ({\bf r}^{A}_{x|s} \otimes {\bf r}^{\rm b}_{y|t}\otimes {\bf r}^{C}_{z|u} ) \cdot ({\bf s}^{A_1 B_1}\otimes {\bf s}^{A_2 C_1}\otimes {\bf s}^{B_2 C_2}).
\label{GPTtriangleCCbox}
\eeq 
When the GPT is classical probability theory, we obtain the classically-realizable box shown in Fig.~\ref{fig:CandNCboxesTriangleCLASSICAL}.
Taking $\Lambda_A = (\Lambda_{A_1},\Lambda_{A_2})$, and similarly for $\Lambda_B$ and $\Lambda_C$, we have
\begin{align}
P_{XYZ|STU}(xyz|stu)&=  \sum_{\lambda_A\lambda_B,\lambda_C}P_{ X|S\Lambda_A}(x|s\lambda_A)P_{Y|T\Lambda_B}(y|t\lambda_B)P_{Z|U\Lambda_C}(z|u\lambda_C)  \nonumber\\
&\;\;\times P_{\Lambda_{A_1} \Lambda_{B_1}}(\lambda_{A_1},\lambda_{B_1}) P_{\Lambda_{B_2} \Lambda_{C_1}}(\lambda_{B_2},\lambda_{C_1}) P_{\Lambda_{A_2} \Lambda_{C_2}} (\lambda_{A_2},\lambda_{C_2})
\nonumber\\
&=\sum_{\lambda,\lambda',\lambda''}P_{ X|S\Lambda\Lambda''}(x|s\lambda\lambda'')P_{Y|T\Lambda\Lambda'}(y|t\lambda\lambda') P_{Z|U\Lambda'\Lambda''}(z|u\lambda'\lambda'') \nonumber\\
&\;\;\times P_{\Lambda}(\lambda) P_{\Lambda'}(\lambda') P_{\Lambda''}(\lambda'').
\label{ClassicaltriangleCCbox}
\end{align}
  
\begin{figure}[htb]
 \centering
 \includegraphics[scale=0.4]{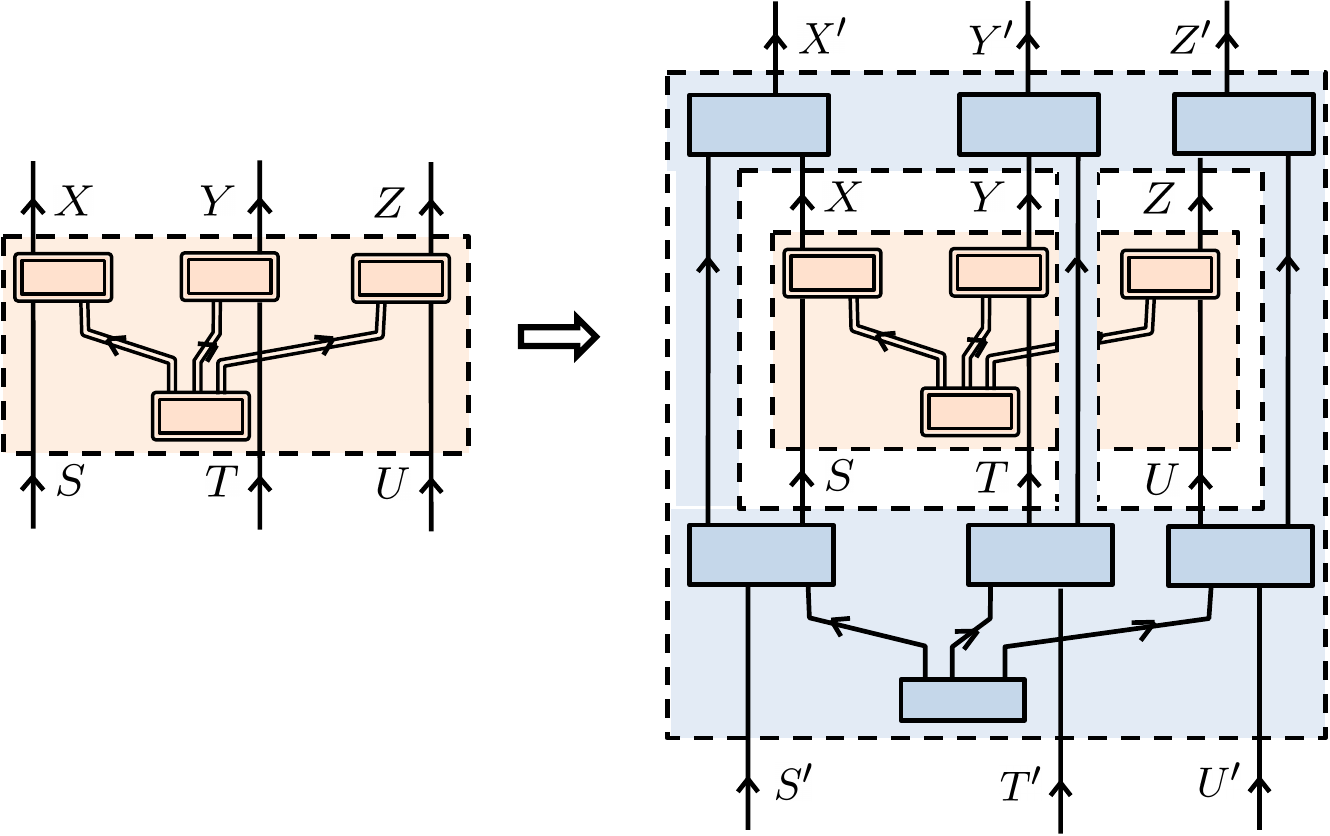}
 \caption{The free operations for a tripartite Bell scenario, $P_{X' Y'Z' S T U|XYZ S'T'U'}$, taking a tripartite common-cause box $P_{XYZ|STU}$ to a new such box $P_{X'Y'Z'|S'T'U'}$.} \label{fig:FreeOpsTripartiteBellScenario}
\end{figure}

 \begin{figure}[hbt]
 \centering
 \includegraphics[scale=0.4]{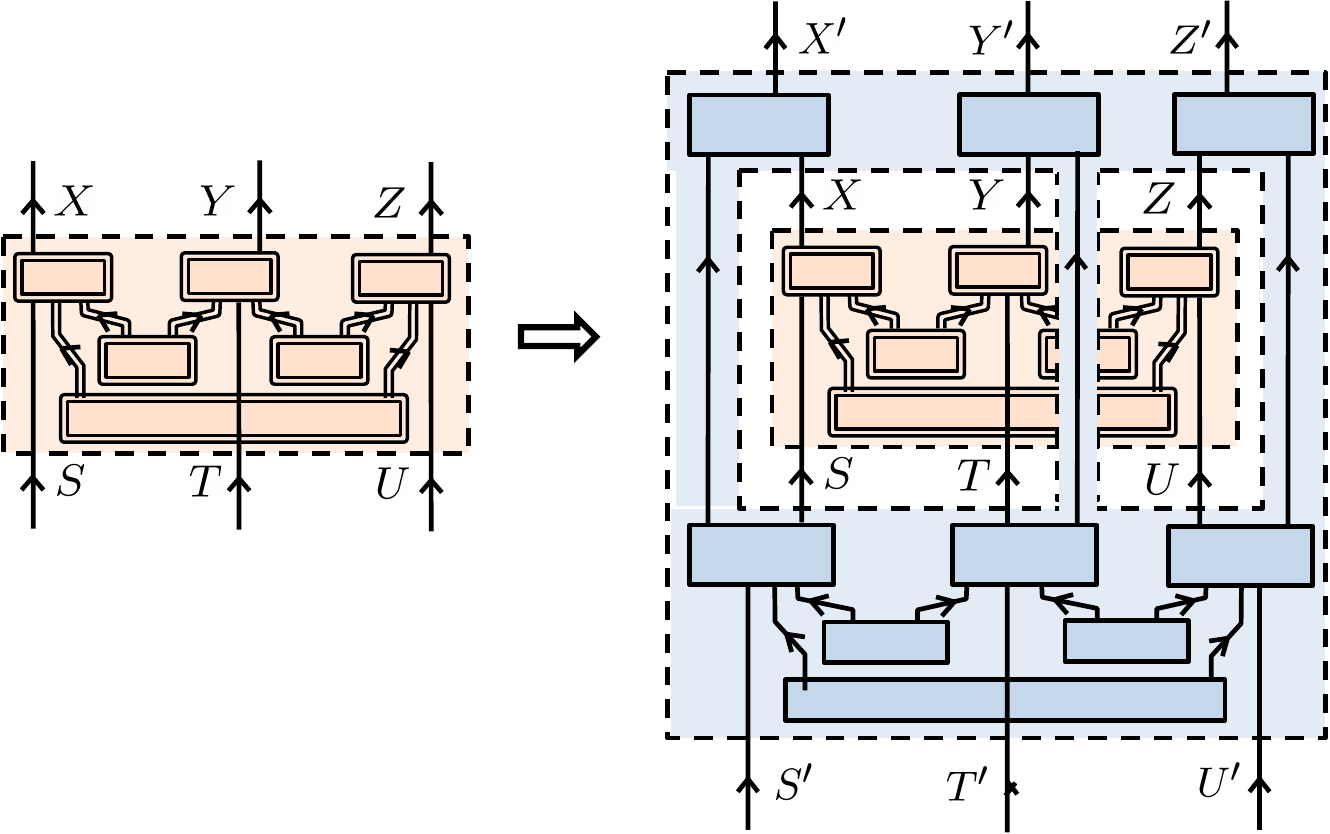}
 \caption{The free operations for the so-called triangle-with-settings scenario, $P_{X' Y'Z' S T U|XYZ S'T'U'}$, taking a triangle-with-settings box $P_{XYZ|STU}$ to a new such box $P_{X'Y'Z'|S'T'U'}$.} \label{fig:FreeOpsTriangleWithSettingsScenario}
\end{figure}  
As we see, the form of the GPT-realizable boxes in the tripartite Bell scenario differs from the form of the GPT-realizable boxes in the triangle-with-settings scenario.  Similarly for the form of the classically realizable boxes.
These differences have consequences when one compares the strictly operational paradigm with our causal modelling paradigm, as we argue next. 

\bigskip

We begin by considering what each paradigm implies for the definitions of the free and enveloping sets of resources for each scenario.

For the tripartite Bell scenario, the definitions of both the enveloping process theory and the free subtheory of processes that are natural from the perspective of the causal modelling paradigm can also be expressed in a way that is natural within the strictly operational paradigm.  Specifically, the boxes in the enveloping theory, which we take to be those that are realizable in a GPT causal model of this scenario (formalized in Eq.~\eqref{GPT3partCCbox}), can also be characterized as those that are nonsignalling between the wings.  Similarly, the boxes in the free subtheory, which we take to be those that are realizable in a classical causal model of this scenario (formalized in Eq.~\eqref{Classical3partCCbox}), can also be characterized as those that are mixtures of deterministic boxes which are nonsignalling between the wings.  

For the triangle-with-settings scenario, on the other hand,  the set of boxes realizable in a GPT causal model for that scenario (formalized in Eq.~\eqref{GPTtriangleCCbox}),
 is a strict subset of the boxes that are nonsignalling between the wings, and the set of boxes that are realizable in a classical causal model for that scenario (formalized in Eq.~\eqref{ClassicaltriangleCCbox})
is a strict subset of the set of boxes that are mixtures of deterministic boxes that are nonsignalling between the wings.  In both the enveloping theory and the free subtheory, the set of boxes is characterized via nontrivial {\em inequalities} in addition to merely the {\em equalities} that represent the no-signalling constraints.  See Ref.~\cite{Wolfe2016inflation} for a discussion of these inequalities in the special case of trivial setting variables.  
Consequently, within the causal modelling paradigm, the resource theory associated to the triangle-with-settings scenario and the resource theory associated to the tripartite Bell scenario differ in both the choice of enveloping theory and free subhteory.  Within the strictly operational paradigm, however, it is unclear whether there is any natural way to pick out the enveloping theory and free subtheory that the causal modelling paradigm dictates for the triangle-with-settings scenario because it is unclear whether there is any natural way  of picking these out by referring merely to the input-output functionality of the boxes.  

Now, we shift our attention to what each paradigm implies for the definitions of the \emph{free operations} in each scenario.  We will show that the definitions that are natural within the causal modelling paradigm 
 cannot be easily motivated within the strictly operational paradigm.
 
  The free operations  prescribed by the causal modelling paradigm for  the tripartite Bell scenario are depicted in Fig.~\ref{fig:FreeOpsTripartiteBellScenario}.  They are of the form \begin{align} \label{FreeOps3partBell}
&P_{X' Y' Z' S T U|XY Z S'T' U'}(x' y' z' stu|xyz s't'u')\nonumber\\
&= \sum_{\lambda}P_{X' S|X S' \Lambda }(x' s|x s' \lambda) P_{Y'T|YT'\Lambda}(y't|yt' \lambda) P_{Z'U|ZU'\Lambda}(z'u|zu' \lambda) P_{\Lambda}(\lambda),
\end{align}
which is clearly a convex set.  This, we believe, is the appropriate definition of local operations and shared randomness for three parties.  

This scenario does not show much difference with what would be natural in the strictly operational paradigm, because one can motivate taking this set of operations to be free on the grounds that they take nonsignalling boxes to nonsignalling boxes (even though, as in the case with the bipartite Bell scenario, the set of WPICC operations between the three wings can also be motivated in this way).

It is the free operations in the triangle-with-settings scenario that really distinguishes the causal modelling paradigm from the strictly operational paradigm. 

The free operations prescribed by the causal modelling paradigm for the triangle-with-settings scenario are depicted in Fig.~\ref{fig:FreeOpsTriangleWithSettingsScenario}.  They are of the form
\begin{align}
&P_{X' Y' Z' S T U|XY Z S'T' U'}(x' y' z' stu|xyz s't'u')\nonumber\\
&=\sum_{\lambda,\lambda',\lambda''}P_{X' S|X S' \Lambda\Lambda' }(x' s|x s' \lambda\lambda') P_{Y'T|YT'\Lambda'\Lambda''}(y't|yt' \lambda'\lambda'') P_{Z'U|ZU'\Lambda\Lambda''}(z'u|zu' \lambda\lambda'')  \nonumber\\
&\;\;\times P_{\Lambda}(\lambda) P_{\Lambda'}(\lambda') P_{\Lambda''}(\lambda'').
\label{FreeOpsTriangle}
\end{align}
Note that this is {\em not} a convex set.  Furthermore, since a triple of pairwise common causes can be simulated by a triplewise common cause, the free operations defined in Eq.~\eqref{FreeOpsTriangle} are a strict subset of the tripartite LOSR operations defined in Eq.~\eqref{FreeOps3partBell}.  It follows that, just as we saw for the free boxes in the triangle-with-settings scenario, one cannot motivate the free operations defined in Eq.~\eqref{FreeOpsTriangle} by appeal to the no-signalling principle.  And, again just as we noted for the free boxes, it is unclear how such a choice could ever be motivated by a principle that appealed only to the input-output functionality of the operation.

The triangle-with-settings scenario also illustrates why one should not mathematically impose convex closure of the set of free operations, as was done in Refs.~\cite{gallego2016nonlocality}.  Rather, whether or not the set of free operations is convexly closed depends on the causal structure, which specifies precisely how randomness is shared among the parties.  For Bell scenarios, the set of free operations is convex by construction, whereas for other causal structures, such as the triangle-with-settings scenario, it is not.  Mathematically imposing convex closure in the triangle-with-settings scenario would be equivalent to asserting that there was a common cause for all three wings,
which would constitute a change in the causal structure being considered.
 In other words, imposing convexity in an ad-hoc manner contradicts the foundations of the causal modelling paradigm, where it is the causal structure that specifies how randomness is shared among the parties, and consequently specifies whether or not convexity holds.

Note finally that the lack of convexity in general causal structures (such as the triangle-with-settings scenario) implies that  the project of quantifying nonclassicality in these cases will be much more complicated than it was in the Bell scenario.

\FloatBarrier

\bigskip
\section{Proofs}

\subsection{Proof of Proposition~\ref{geom}: closed-form expression for \(M_{\rm NPR}(R)\)} \label{sec:geom}

In this section, we present some arguments that aid in justifying Proposition~\ref{geom}, the proof of which is given at the end of this appendix. Recall Proposition~\ref{geom}:
\begin{customprop}{\ref{geom}}
For any free resource $R$ of type {\twotwotwotwo}, $M_{\rm NPR}(R) =2$.  For any nonfree resource $R$ of type {\twotwotwotwo}, there is a unique $k \in \{0,\dots,7\}$ for which $\operatorname{CHSH}_k(R) > 2$.  
Within this region, if $R\in\boldsymbol{C}_{\textup{NPR},k}$, then we have simply $M_{\textup{NPR}}(R) = \CHSH_k(R)$.  If, on the other hand, $R\not\in\boldsymbol{C}_{\textup{NPR},k}$, we have 
$$M_{\textup{NPR}}(R)=2\alpha{+}2,$$
where $\alpha$ is the value appearing in the decomposition $R=\;\gamma\, L_{R}^{\rm bb}+(1{-}\gamma) C_k(\alpha)$, where $C_k(\alpha) \in \boldsymbol{C}_{\textup{NPR},k}$, $L_{R}^{\rm bb} \in \boldsymbol{L}_k^{\rm bb}$ and $\gamma \in [0,1]$.
This value of $\alpha$ is unambiguous because there exists a {\em unique} resource $L_{R}^{\rm bb} \in \boldsymbol{L}_k^{\rm bb}$ and a unique choice of $\gamma \in [0,1]$ and of $\alpha \in [0,1]$ such that $R=\;\gamma\, L_{R}^{\rm bb}+(1{-}\gamma) C_k(\alpha)$. 
\end{customprop}

\noindent The (unique) relevant decomposition is shown in Fig.~\ref{fig:TwoDifferentDecompositions} (for the case where $k=0$).

We first demonstrate the equivalence of three statements
which pertain to the value of $M_{\rm NPR}(R)$ for the subset of resources that satisfy $\operatorname{CHSH}(R)\ge 2$:

\begin{prop} \label{geom2}
For any resource $R$ of type {\twotwotwotwo} such that $\operatorname{CHSH}(R)\ge 2$, the following definitions are equivalent to $M_{\rm NPR}(R)$:
\begin{subequations}\begin{align}\label{eq:rawdefinitionA}
\min\limits_{0{\leq}\alpha{\leq}1} &\left\{ \CHSH(C(\alpha))\text{ such that } \; C(\alpha)\conv R \right\}\!,\\
\label{eq:geometricconvercrit}
\min\limits_{\alpha} &\left\{\CHSH(C(\alpha))\text{ such that  }\;
{\exists\, \gamma\geq 0} \text{ and } {\exists\, L_{R}^{\rm b}\in\boldsymbol{L}^{\rm b}}
\text{ with }{R=\;\gamma\, L_{R}^{\rm b}+(1{-}\gamma) C(\alpha)}\right\}\!,\\ 
\label{eq:boundrybounbryok}
&\Bigg\{\begin{array}{rl}
\text{if }R\in\boldsymbol{C}_{\textup{NPR}} :&\CHSH(R),\text{ else}\\
\text{if }R\not\in\boldsymbol{C}_{\textup{NPR}} :&2\alpha{+}2,\text{ where }\alpha,\,\gamma\geq 0,\text{ and }L_R^{\rm bb}\in\boldsymbol{L}^{\rm bb} \text{ are \emph{all} unique}\\
&\quad\text{in the decomposition }{R=\;\gamma\, L_{R}^{\rm bb}+(1{-}\gamma) C(\alpha).}
\end{array}\quad 
\end{align}\end{subequations}
\end{prop}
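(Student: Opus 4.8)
The plan is to prove the chain $M_{\rm NPR}(R)=\eqref{eq:rawdefinitionA}=\eqref{eq:geometricconvercrit}=\eqref{eq:boundrybounbryok}$. The first equality is immediate from Eq.~\eqref{Malpha2} together with the injectivity of $\alpha\mapsto\CHSH(C(\alpha))=2\alpha+2$ on the chain, so the content lies in showing that the feasible set $\mathcal{A}=\{\alpha : C(\alpha)\conv R\}$ of \eqref{eq:rawdefinitionA} coincides with the feasible set $\mathcal{B}=\{\alpha : \exists\,\gamma\in[0,1],\,L\in\boldsymbol{L}^{\rm b}\text{ with }R=\gamma L+(1{-}\gamma)C(\alpha)\}$ of \eqref{eq:geometricconvercrit}, and then in evaluating the common minimum geometrically to recover \eqref{eq:boundrybounbryok}.

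The inclusion $\mathcal{B}\subseteq\mathcal{A}$ is the easy half: if $R=\gamma L+(1-\gamma)C(\alpha)$ with $L\in\boldsymbol{L}^{\rm b}$ free, then $C(\alpha)\conv R$, since the operation that (with probability $\gamma$) discards $C(\alpha)$ and freshly prepares the free box $L$, and otherwise keeps $C(\alpha)$, is an {\LOSR} operation. Thus every $\alpha\in\mathcal{B}$ lies in $\mathcal{A}$.

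The reverse inclusion $\mathcal{A}\subseteq\mathcal{B}$ is the crux. Suppose $C(\alpha)\conv R$. Using Proposition~\ref{lem:detextremal}, I would write the {\LOSR} operation as a convex mixture of {\LDO} operations, so that $R=\sum_i w_i(\tau_i\circ C(\alpha))$. Each image $\tau_i\circ C(\alpha)$ is either free (when $\tau_i$ is a non-symmetry deterministic operation, by the fact used in the proof of Lemma~\ref{prop:ext} that such operations map every {\twotwotwotwo} resource into the free set) or a symmetry image; and a symmetry $\sigma$ either stabilizes $\CHSH_0$, whence it stabilizes $C(\alpha)$ and $\sigma\circ C(\alpha)=C(\alpha)$, or it sends $\CHSH_0$ to a distinct variant $\CHSH_{k'}$, so that $\CHSH_0(\sigma\circ C(\alpha))=\CHSH_{k'}(C(\alpha))\le 2$ because $C(\alpha)$ violates only $\CHSH_0$. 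Since free boxes also satisfy $\CHSH_0\le 2$, the only images with $\CHSH_0>2$ are copies of $C(\alpha)$ itself; collecting these (total weight $p$) yields $R=p\,C(\alpha)+(1-p)G$ with $G\in\boldsymbol{S}^G_{\twotwotwotwo}$ and $\CHSH_0(G)\le 2$. The segment from $C(\alpha)$ (where $\CHSH_0=2\alpha+2>2$) to $G$ then crosses the hyperplane $\{\CHSH_0=2\}$ at a point $L^{*}\in\boldsymbol{S}^G_{\twotwotwotwo}$ lying on the $C(\alpha)$-side of $R$. Here I would invoke the key observation $\boldsymbol{S}^G_{\twotwotwotwo}\cap\{\CHSH_0=2\}=\boldsymbol{L}^{\rm b}$: for every $k\ne0$ one has $\CHSH_0+\CHSH_k\le 4$ (each such sum reduces to twice a combination of two correlators, each in $[-1,1]$), so $\CHSH_0=2$ forces $\CHSH_k\le2$ for all $k$, i.e.\ any no-signalling box saturating $\CHSH_0$ is automatically free (consistent with the non-overlapping of violation regions noted from Ref.~\cite{Beirhorst2016Bell}). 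Hence $L^{*}\in\boldsymbol{L}^{\rm b}$, and $R=\gamma' L^{*}+(1-\gamma')C(\alpha)$ witnesses $\alpha\in\mathcal{B}$ with the same $\alpha$. Therefore $\mathcal{A}=\mathcal{B}$ and \eqref{eq:rawdefinitionA}$=$\eqref{eq:geometricconvercrit}.

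Finally, to evaluate this common minimum and obtain \eqref{eq:boundrybounbryok}, I would use linearity of $\CHSH$: any decomposition $R=\gamma L+(1-\gamma)C(\alpha)$ with $L\in\boldsymbol{L}^{\rm b}$ gives $\CHSH(R)=2+2\alpha(1-\gamma)$, so $\alpha(1-\gamma)=\tfrac12(\CHSH(R)-2)=:c$ is fixed across all decompositions. If $R\in\boldsymbol{C}_{\textup{NPR}}$ then $\gamma=0$, $\alpha=c$ is feasible and minimal (every feasible $\alpha\ge c$), giving value $\CHSH(R)$. If $R\notin\boldsymbol{C}_{\textup{NPR}}$ then $\gamma>0$, and minimizing $\alpha=c/(1-\gamma)$ amounts to maximizing $1/\gamma$; solving for $L$ gives $L=L_{\textup{NPR}}^{\rm b}+\tfrac1\gamma(R-C(c))$, a ray from $L_{\textup{NPR}}^{\rm b}$ in the direction $v=R-C(c)$, which lies in $\{\CHSH_0=2\}$ since $\CHSH(v)=0$. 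As $L_{\textup{NPR}}^{\rm b}$ is in the relative interior of the facet $\boldsymbol{L}^{\rm b}$ and every ridge of $\boldsymbol{L}^{\rm b}$ is a positivity ridge, this ray exits $\boldsymbol{L}^{\rm b}$ through its relative boundary $\boldsymbol{L}^{\rm bb}$ at a \emph{unique} point $L_R^{\rm bb}$, fixing $\gamma$ and hence $\alpha$ uniquely; the minimal value is $2\alpha+2$ for this $\alpha$, exactly as in \eqref{eq:boundrybounbryok}. The main obstacle is the inclusion $\mathcal{A}\subseteq\mathcal{B}$, and specifically the identification $\boldsymbol{S}^G_{\twotwotwotwo}\cap\{\CHSH_0=2\}=\boldsymbol{L}^{\rm b}$, which is precisely what rules out a conversion from the chain exploiting a nonfree intermediate box to reach $R$ from lower on the chain than the geometric construction permits.
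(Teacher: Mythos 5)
Your proof is correct and follows essentially the same route as the paper's: reduce \eqref{eq:rawdefinitionA} to \eqref{eq:geometricconvercrit} by showing that the only image of $C(\alpha)$ under \LDO{} that violates $\CHSH_0$ is $C(\alpha)$ itself (so that the rest of any convex decomposition can be collapsed onto a single point of $\boldsymbol{L}^{\rm b}$), and then evaluate the minimum by pushing the free component of the decomposition out to the relative boundary $\boldsymbol{L}^{\rm bb}$. The paper packages the first step as the general notion of a \emph{screening-off inequality} and cites Ref.~\cite{Beirhorst2016Bell} for the fact that a no-signalling box saturating $\CHSH_0$ is free; you instead prove this from scratch via the identity $\CHSH_0+\CHSH_k\leq 4$ for $k\neq 0$ and an explicit symmetry/non-symmetry split of the \LDO{} images, which makes the argument self-contained. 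Your ray argument for the final step (fixing $\alpha(1-\gamma)=\tfrac12(\CHSH(R)-2)$ by linearity and sliding $L$ from the relative interior point $L_{\textup{NPR}}^{\rm b}$ along $R-C(c)$ until a positivity facet is hit) is in fact somewhat more careful about uniqueness than the paper's appeal to "saturating the constraint." One small slip: the crossing point $L^{*}$ of the segment from $C(\alpha)$ to $G$ with the hyperplane $\{\CHSH_0=2\}$ lies on the far side of $R$ from $C(\alpha)$ (since $\CHSH_0(R)\geq 2$ and $\CHSH_0$ decreases toward $G$), not on the $C(\alpha)$-side as you wrote; your subsequent formula $R=\gamma' L^{*}+(1-\gamma')C(\alpha)$ is the correct conclusion, so this is only a verbal error.
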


\begin{proof}[Proof of Eq.~\eqref{eq:rawdefinitionA}]
\quad\\
Eq.~\eqref{eq:rawdefinitionA} is directly equivalent to the definition of the $M_{\rm NPR}$ monotone given in Eq.~\eqref{Malphadefn}. Hence, we take that as our starting point, and prove the implications of the subequations in Proposition~\ref{geom2}.
\end{proof}

\begin{proof}[Proof that Eq.~\eqref{eq:rawdefinitionA} $\Leftrightarrow$  Eq.~\eqref{eq:geometricconvercrit}]\quad\\
Section~\ref{polything} guarantees that $C(\alpha)\conv R$ if and only if we can generate R by convex mixtures of $C(\alpha)$ with the images of $C(\alpha)$ under {\LDO} operations.
For any $R\notin \boldsymbol{C}_{\textup{NPR}}$ such that $\operatorname{CHSH}(R)\ge 2$, we simplify the situation by proving that if $R$ can be generated by mixing $C(\alpha)$ with its images under {\LDO}, then $R$ can alternatively be generated by mixing $C(\alpha)$ with a local point which saturates the CHSH inequality; namely, a point in $\boldsymbol{L}^{\rm b}$, as stated in Eq.~\eqref{eq:geometricconvercrit}. To prove this, it is useful to define the notion of a \term{screening-off inequality}.

\begin{samepage}\begin{defn} 
The inequality $f(R)\geq b$ is said to \term{screen-off} the fixed-type set of resources which \emph{satisfy} it, i.e., $\left\{ R: f(R)\geq b\,, [R]=\xyst\,\right\}$, if the fixed-type set of resources which \emph{saturate} it is a free set, i.e., if $\left\{ R: f(R)= b\,, [R]=\xyst\,\right\}$ consists only of classically realizable common-cause boxes.
\end{defn}\end{samepage}
For example, the inequality $\CHSH(R)\geq 2$ screens-off the set $\left\{ R: \CHSH(R)\geq 2\,, [R]=\twotwotwotwo\,\right\}$ since $\left\{ R: \CHSH(R)= 2\,, [R]=\twotwotwotwo\,\right\} \subset \boldsymbol{S}^{\textup{free}}_{{\twotwotwotwo}}$. 

Screening-off inequalities are useful when making statements about resource convertibility, as follows. Consider the case where we ask whether $R_2 \conv R_1$: if $R_1$ lies inside some screened-off region, then, given Proposition~\ref{lem:belstheorem}, $R_1 \in \mathbfcal{P}^{\LOSR}_{[R_1]}(R_2)$ if and only if $R_1$ is in the convex hull of \emph{those images of $R_2$ under {\LDO} inside the screened-off region,} together with the boundary (where the inequality is saturated). Formally, if $f(R)\geq b$ is a screening-off inequality for resources of type $[R_1]$, then, given Proposition~\ref{lem:belstheorem}, 
\begin{align*}
R_2 \conv R_1\text{ iff }
{\exists \,L^{\rm b}}\text{ such that } f(L^{\rm b})=b\;\text{and}\;
R_1 \in {\operatorname{ConvexHull}}\big(\,
L^{\rm b}
,\; \underbrace{\mathbfcal{V}^{\LDO}_{[R_1]}(R_2)\bigcap\left\{R: f(R)> b\right\}}_{\substack{\text{the {\LDO} images of } R_2 \\\text{interior to screened-off region}}}
\big).
\end{align*}

Since $\CHSH(R)\geq 2$ is a screening-off inequality whose saturation-boundary is given by  $\boldsymbol{L}^{\rm b}$, and since the \emph{only} image in $\mathbfcal{V}_{{\twotwotwotwo}}^{\LDO}(C(\alpha))$ which violates the CHSH inequality is $C(\alpha)$ itself, it follows that 
\begin{align}
C(\alpha)\conv R \;\text{ if and only if }\;{\exists\, \gamma\geq 0} \text{ and } {\exists\, L_{R}^{\rm b}\in\boldsymbol{L}^{\rm b}}\; \text{ such that }{R=\;\gamma\, L_{R}^{\rm b}+(1{-}\gamma) C(\alpha)}.
\end{align}
The equivalence Eq.~\eqref{eq:rawdefinitionA} $\Leftrightarrow$  Eq.~\eqref{eq:geometricconvercrit} follows.
As a final comment, notice that this characterization of convertibility in terms of the existence of a geometric decomposition involves arbitrary points which saturate the CHSH inequality, and there are typically many such decompositions.
\end{proof}

\begin{proof}[Proof that Eq.~\eqref{eq:geometricconvercrit} $\Leftrightarrow$ Eq.~\eqref{eq:boundrybounbryok}.]\quad\\
Recall that Eq.~\eqref{eq:geometricconvercrit} involves a minimization under the constraint that $\alpha$ is such that ${R=\;\gamma\, L_{R}^{\rm b}+(1{-}\gamma) C(\alpha)}$.
We can formally recast it as a constrained optimization problem, as follows:
\begin{align*}
\min\limits_{0{\leq}\alpha{\leq}1}&	\quad{\CHSH}\big(C(\alpha)\big)\quad\\&\text{such that}\quad L_{R}^{\rm b}\coloneqq \frac{R-(1{-}\gamma) C(\alpha)}{\gamma},\quad\text{under the constraint that}\\
&\text{all conditional probabilities in the expression of }\vec{L}_{R}^{\rm b}\text{ are nonnegative},\\
&\text{where }
\gamma \text{ is an implicit function of } \alpha \text{ according to}\\
&\frac{\CHSH(R)-(1{-}\gamma)\,{\CHSH}{\big(C(\alpha)\big)}}{\gamma}=2,\quad\text{as implied by the fact that}\quad{\CHSH}(L_{R}^{\rm b})=2.
\end{align*}
Essentially, this is a constrained optimization problem with a linear objective subject to one nonlinear constraint; namely, that the \emph{smallest} conditional probability in the expression $\vec{L}_{R}^{\rm b}$\footnote{$\vec{L}_{R}^{\rm b}$ denotes the \emph{representation} of $L_{R}^{\rm b}$ as a vector whose components are the conditional probabilities $\{ P_{XY|ST}(xy|st): x,y,s,t \in \{0,1\}\}$.} must be nonnegative. For such optimization problems, it is always the case that the objective is maximized when the constraint is not merely \emph{satisfied} but \emph{saturated}. Put another way, the set of {\em achievable} $\alpha$ arise from points $L_{R}^{\rm b}$ wherein all conditional probabilities are nonnegative, but the \emph{optimal} $\alpha$ arises for some unique $L_{R}^{\rm b}=L_{R}^{\rm bb}$ where the smallest conditional probability in $\vec{L}_{R}^{\rm bb}$ is precisely zero.
\end{proof}

\begin{proof}[Proof of Proposition~\ref{geom}]
\quad \\ Proposition~\ref{geom} for {\em arbitrary resources} follows from Proposition~\ref{geom2} by the symmetry noted in Proposition~\ref{prop:symmetrypartitioning}{\ref{prop:CHSHorbit}}. Namely, the argument can be repeated unchanged in each of the eight spaces of resources generated by the images under {\LSO} of the set of resources satisfying $\operatorname{CHSH}(R)\ge 2$. Together with the trivial observation that free resources (which do not violate any of the eight CHSH inequalities) always have value of $M_{\textup{NPR}}$ equal to 2, Proposition~\ref{geom} follows.
\end{proof}

\bigskip
\subsection{Proof of Proposition~\ref{prop:whencomplete}: when the two monotones are complete} \label{proofprop2}

We now prove Proposition~\ref{prop:whencomplete}, repeated here:
\begin{customprop}{\ref{prop:whencomplete}}
Consider a two-parameter family ${R(\alpha,\!\gamma)=\;\gamma\, L^{\rm bb}_{\star}+(1{-}\gamma)C(\alpha)}$, marked by any fixed $L^{\rm bb}_{\star}$---that is, a point which saturates the CHSH inequality and is on the boundary of the no-signaling set. The pair of monotones $\{M_{\CHSH},M_{\rm NPR}\}$ is complete relative to such a family of resources. 
\end{customprop}

\begin{figure} [htb!]
\begin{center}
\subfigure[\label{fig:DownwardClosureTrianglePlot}]
{
\centering
\begin{tikzpicture}[scale=0.7]
\path[draw, ultra thick] (-30:4) -- (90:4) -- (210:4) -- cycle;

\node[draw,shape=circle,fill,scale=.4, label={[label distance=0.1cm]0:\large{${L_{\star}^{\rm bb}}$}}] at (-30:4) {};
\node[draw,shape=circle,fill,scale=.4, label={[label distance=0.1cm]90:\large{${R_{\textup{PR}}}$}}] at (90:4) {};
\node[draw,shape=circle,fill,scale=.4, label={[label distance=0.1cm]180:\large{${L_{\textup{NPR}}^{\rm b}}$}}] at (210:4) {};

\path[name path=ls] (210:4) -- (90:4);
\path[name path=chsh] (-4,0) -- (4,0);
\path[name intersections={of=chsh and ls, by=lp}];
\path[name path=nrp, shift=(-30:4)] (0:0) -- (150:7);
\path[name intersections={of=nrp and ls, by=lup}];
  
\path [name intersections={of=chsh and nrp, by=R}];
\draw[thick] (lp) -- (R);
\draw[thick] (-30:4) -- (R);
\node [draw,shape=diamond,fill=red,scale=.4, label={[label distance=-0.1cm]90:$\qquad{R}(\alpha_1,\gamma_1)$}] at (R) {};

\node [draw,shape=star,fill=green,scale=0.4, label={[label distance=0.1mm]180:${\begin{array}{r}
R(\alpha_2,0)\\
={R}\big(\alpha_1(1{-}\gamma_1),0\big)
\end{array}}$}] at (lp) {};

\path[name path=rs] (-30:4) -- (90:4);
\path[name intersections={of=chsh and rs, by=rp}];

\begin{pgfonlayer}{myback}
\filldraw[draw=black, fill=jflyBlue, opacity=1] (-30:4) -- (R) -- (lp) -- (210:4) -- cycle;
\end{pgfonlayer}

\begin{scope}[on background layer]
\path[clip] (-30:4) -- (90:4) -- (210:4) -- cycle;
\foreach \y in{-3,-2.5,...,0}
  \draw[name path=c.\y, color=jflyBlue] (-4,\y) -- (4,\y);
  
\foreach \y in{150,155,...,180}
  \draw[name path=s.\y, shift=(-30:4), color=jflyVermillion] (0:0) -- (\y:7);
\end{scope} 

\end{tikzpicture}}
\subfigure[\label{fig:DownwardClosureMonotoneAxes}]
{
\centering
\begin{tikzpicture}[scale=1]

\path[draw, ultra thick] (0,0) -- (0,4) --  (4,4) -- (4,0) ;
\path[draw, ultra thick, dashed] (0,0) -- (4,0) ;
\node [draw,shape=circle,fill,scale=.4, label={[label distance=-0.1cm]225:${2}$}] at (0,0) {};
\node [draw=none, scale=.4, label={[label distance=-0cm]-90:${4}$}] at (4,0) {};
\node [draw=none, scale=.4, label={[label distance=-0cm]180:${4}$}] at (0,4) {};
\node [draw,shape=circle,fill,scale=.4, label={[label distance=-0.1cm]45:${R_{PR}}$}] at (4,4) {};
\node at (2,-0.5) {${M_{\textup{NPR}}}$};
\node at (-1, 2) {${M_{\textup{CHSH}}}$};

\path[pattern=north west lines] (0,0) -- (0,4) -- (4,4)--cycle;
\path[pattern=north east lines] (0,0) -- (0,4) -- (4,4)--cycle;

\draw[thick] (1.25,1.25) -- (2,1.25);
\draw[thick] (2,0) -- (2,1.25);
\path[draw, ultra thick] (0,0) -- (4,4);  

\node [draw,shape=diamond,fill=red,scale=.4, label={[label distance=-0.1cm]45:${R(\alpha_1,\gamma_1)}$}] (R) at (2,1.25) {};
\node [draw,shape=star,fill=green,scale=.4] at (1.25,1.25) {};

\begin{pgfonlayer}{myback}
\path[clip] (0,0) -- (4,0) -- (4,4) -- cycle;
\fill[fill=jflyBlue, opacity=1] (R) rectangle (0,0);
\end{pgfonlayer}

\begin{scope}[on background layer]
\path[clip] (0,0) -- (4,0) -- (4,4) -- cycle;
\foreach \y in{0,0.3125,...,1.25}
  \draw[name path=c.\y, color=jflyBlue] (\y,\y) -- (4,\y);
\foreach \y in{1.25,1.625,2}
  \draw[name path=c.\y, color=jflyVermillion] (\y,0) -- (\y,\y);;
\end{scope}

\end{tikzpicture}}
\end{center}
\caption[]{
\subref{fig:completeTrianglePlot} and \subref{fig:completeMonotoneAxes} provide the same pair of depictions of the two-parameter family of resources $\boldsymbol{S}_{\twotwotwotwo}^{L^{\rm bb}_{\star}}$ as were introduced in Fig.~\ref{fig:twoM}.
We consider a two-parameter family of resources.  A generic such resource,
specified by $\alpha_1$ and $\gamma_1$, is marked by a red diamond.
  Also depicted are some of the level curves of the two monotones $M_{\CHSH}$ and $M_{\rm NPR}$. The solid dark blue region denotes the set of all resources within this family which have values for both monotones less than or equal to their values for  $R(\alpha_1,\gamma_1)$.
  To prove Proposition~\ref{prop:whencomplete}, one must show that $R(\alpha_1,\gamma_1)$ can be converted to any resource in the solid blue region. 
The critical step in this proof
 is the demonstration that it is possible to convert any resource to one lying on the line connecting $R_{\textup{PR}}$ and $L_{\textup{NPR}}^{\rm b}$ without changing the value of $M_{\CHSH}$. Graphically, this corresponds to converting the generic resource $R(\alpha_1,\gamma_1)$ to the resource $R(\alpha_2,0)$ marked by a green star. 
}
\label{fig:DownwardClosure} 
\end{figure}
\begin{proof} 
A set of monotones is complete relative to a family of resources if and only if every candidate conversion among resources in the family which is {\em not} ruled out by any of the monotones in the set is in fact possible for free, as per Eq.~\eqref{completeset}. 

In Fig.~\ref{fig:DownwardClosureTrianglePlot}, we depict in blue the set of candidate conversions (from a generic resource $R(\alpha_1,\gamma_1)$ to another resource in the family) which are not ruled out by $\{M_{\CHSH},M_{\rm NPR}\}$; namely, the blue shaded region contains all resources which have a value for each of the two monotones that is equal to or lower than that of $R(\alpha_1,\gamma_1)$. To prove the proposition, we argue that $R(\alpha_1,\gamma_1)$ can indeed be converted to any resource in the blue region. By convexity, it suffices to prove that $R(\alpha_1,\gamma_1)$ can be converted to each of the four extreme points of the blue region.
Since $L^{\rm bb}_{\star}$ and $L^{\rm b}_{\rm NPR}$ are free resources, $R(\alpha_1,\gamma_1)$ can freely be converted to either of them, and the resource $R(\alpha_1,\gamma_1)$ can obviously be `converted' to itself, as the identity is free. Our proof, therefore, focuses on demonstrating that $R(\alpha_1,\gamma_1)$ can indeed be converted to the fourth extreme point $R(\alpha_2,0)$, shown as a green star. We now give the explicit free operation which takes a generic initial resource $R(\alpha_1,\gamma_1)$ and projects it onto the chain leftwards in the two-dimensional coordinate system of Fig.~\ref{fig:DownwardClosureTrianglePlot}, i.e., to the target resource ${R}(\alpha_2,0)$, where $\alpha_2=\alpha_1(1-\gamma_1)$\footnote{The particular relation between $\alpha_2$ and $(\alpha_1, \gamma_1)$ follows from Eq.~\eqref{MCHSHfam}, by noticing that $R(\alpha_2,0)$ and $R(\alpha_1,\gamma_1)$ must have the same value for $M_{\CHSH}$.}. 

We denote the free operation which enacts this conversion by $\tau_{\textrm{erase-}\gamma}$; it is the operation which projects any resource into the subspace of resources that are invariant under
 the $G_{456}$ subgroup of ${\LSO}_{\twotwotwotwo}$
($G_{456}$ is defined in Proposition~\ref{prop:symmetrypartitioning}{\ref{prop:CHSHstabilizer}} on page~\pageref{prop:CHSHstabilizer}), i.e., onto the chain $\boldsymbol{C}_{\textup{NPR}}$.\footnote{Equivalently, $\tau_{\textrm{erase-}\gamma}$ is the Reynold's operator of the subgroup $G_{456}$.} This operation is indeed free, as it can be constructed by a uniform mixture of all the elements of $G_{456}$, each of which is free. Recall that $G_{456}$ is the subgroup of $\LSO_{\twotwotwotwo}$ which stabilizes $\CHSH_0$, and therefore clearly does not modify the value of the $M_{\CHSH}$ monotone. 

It remains only to show that the $G_{456}$-invariant subspace of resources within the set of
  all {\twotwotwotwo}-type resources for which $\CHSH(R)\geq 2$ is the chain $\boldsymbol{C}_{\textup{NPR}}$, i.e., the line of points between $R_{\textup{PR}}$ and $L_{\textup{NPR}}^{\rm b}$. This is evident by confirming that $\tau_{\textrm{erase-}\gamma}$ leaves $R_{\textup{PR}}$ invariant, but maps each of the 8 deterministic ${\CHSH}$-saturating boxes to $L_{\textup{NPR}}^{\rm b}$. Those $1{+}8$ resources are the extreme points of the set of all {\twotwotwotwo}-type resources such that $\CHSH(R)\geq 2$; since the extreme points map to the line under the action of $\tau_{\textrm{erase-}\gamma}$, by convex linearity it follows that the chain is the only space invariant under $G_{456}$ within the two-parameter family.
\end{proof}

\bigskip
\subsection{Proof of Proposition~\ref{prop:zerodclasses}: all nonfree resources of type {\twotwotwotwo} 
are orbital} \label{proofprop3}

We now prove Proposition~\ref{prop:zerodclasses}, 
repeated here:
\begin{customprop}{\ref{prop:zerodclasses}}\label{prop:zerodclassesCOPY}
All nonfree resources of type {\twotwotwotwo} 
are orbital.
\end{customprop}

Before presenting the proof, we introduce some additional concepts and a few lemmas on which our proof relies.
Throughout the following, we are focused on sets of resources of fixed type, and on type-preserving operations. Hence, we use slightly abbreviated notation; e.g. $\mathbfcal{V}^{\textup{LDO}}(R)$ is used as shorthand for $\mathbfcal{V}_{[R]}^{\textup{LDO}}(R)$, and so on.

\begin{defn}
The set of \term{local deterministic type-preserving nonsymmetry operations}, denoted \term{LDTNO}, contains all type-preserving operations in LDO which are not in {\LSO}.
 The image of a resource $R$ under {\LDTNO} constitutes a discrete set of resources denoted $\mathbfcal{V}^{\textup{LDTNO}}(R)$. Moreover, we use \term{HullLDTNO$(R)$} to indicate the set of all resources in the {\em convex hull} of the \emph{image} of a resource $R$ under  {\LDTNO}, i.e., $\mathbf{HullLDTNO}(R)\coloneqq\operatorname{ConvexHull}{\left(\mathbfcal{V}^{\textup{LDTNO}}(R)\right)}$. 
\end{defn}

\begin{defn}
A resource $R$ is said to be \term{sensitive} if every element of {\LDTNO}
 removes $R$ from its equivalence class; i.e., if for all ${\tau{\in}\LDTNO}$ it holds that ${\tau\circ R \nconv R}$. Equivalently, a resource $R$ is sensitive if and only if $R$ is not in the convex hull of its images under {\LDTNO}, i.e., if $R\not\in\mathbf{HullLDTNO}(R)$. A \emph{set} of resources is called sensitive if every resource in the set is sensitive. 
\end{defn}

We bring up the property of sensitivity because (i) it is straightforward to test if a given resource is sensitive or not by means of a linear program, and (ii) eventually we will argue that if a resource is sensitive, then it is also orbital. 
Furthermore, we now prove that sensitive resources never appear in isolation.
That is, a single sensitive resource can be used to construct a set of sensitive resources, as follows:
\begin{lemma}\label{lem:onesensitiveimpliessensitiveset}
For any resource $R$, every resource $R'$ which is below $R$ in the pre-order and which cannot be generated from $R$ by mixtures of {\LDTNO} operations is Formally: the set of resources 
$\boldsymbol{S}_{\rm sens}^R \coloneqq \mathbfcal{P}^{\textup{LOSR}}(R) \setminus \mathbf{HullLDTNO}(R)$ is  always sensitive.
\end{lemma}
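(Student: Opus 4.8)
The plan is to prove the statement by contraposition: I will show that any $R' \in \mathbfcal{P}^{\textup{LOSR}}(R)$ which fails to be sensitive must in fact lie in $\mathbf{HullLDTNO}(R)$, and therefore cannot belong to the difference set $\boldsymbol{S}_{\rm sens}^R = \mathbfcal{P}^{\textup{LOSR}}(R) \setminus \mathbf{HullLDTNO}(R)$. The entire argument will rest on a single closure property of the type-preserving deterministic operations, combined with the polytope characterization of Proposition~\ref{lem:belstheorem}. Throughout I use the abbreviated (type-preserving) conventions fixed just before the lemma, so that every operation in play maps resources of type $[R]$ to resources of type $[R]$ and all the relevant polytopes live in one common ambient space.

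The first thing I would record is the key structural lemma. Within the monoid of type-preserving $\LDO$ operations one has the disjoint decomposition $\LDO = \LSO \sqcup \LDTNO$, and $\LDTNO$ is \emph{absorbing on the left}: $\tau \circ \delta \in \LDTNO$ whenever $\tau \in \LDTNO$ and $\delta \in \LDO$. Indeed, the composition of two deterministic type-preserving operations is again deterministic and type-preserving, so $\tau \circ \delta \in \LDO$; and since each such operation acts as a square linear map on the resource vector space, the product is invertible only if both factors are. As $\tau \notin \LSO$ is non-invertible, $\tau \circ \delta$ cannot be invertible either, hence $\tau \circ \delta \in \LDTNO$. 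This absorbing property is the crux of the whole proof; it is short, but everything downstream is merely bookkeeping with convex combinations.

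Next I would use Proposition~\ref{lem:belstheorem} to write an arbitrary $R' \in \mathbfcal{P}^{\textup{LOSR}}(R)$ as a convex combination $R' = \sum_k r_k\,(\delta_k \circ R)$ with $\delta_k \in \LDO$ type-preserving. Suppose now, for contradiction, that $R'$ is \emph{not} sensitive; by definition this means $R' \in \mathbf{HullLDTNO}(R')$, so $R' = \sum_i p_i\,(\tau_i \circ R')$ for some $\tau_i \in \LDTNO$ and convex weights $p_i$. Substituting the first decomposition into this expression and using linearity of the operations under the link product gives $\tau_i \circ R' = \sum_k r_k\,(\tau_i \circ \delta_k) \circ R$. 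By the absorbing property above, each $\tau_i \circ \delta_k \in \LDTNO$, so $(\tau_i \circ \delta_k)\circ R \in \mathbfcal{V}^{\textup{LDTNO}}(R)$, and therefore $\tau_i \circ R' \in \mathbf{HullLDTNO}(R)$.

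Finally, since $\mathbf{HullLDTNO}(R)$ is convex and $R' = \sum_i p_i\,(\tau_i \circ R')$ is a convex combination of points each lying in $\mathbf{HullLDTNO}(R)$, I conclude $R' \in \mathbf{HullLDTNO}(R)$. This contradicts the assumption $R' \in \boldsymbol{S}_{\rm sens}^R = \mathbfcal{P}^{\textup{LOSR}}(R)\setminus \mathbf{HullLDTNO}(R)$, and so $R'$ must be sensitive after all. The only two points I expect to require explicit care are (a) the absorbing lemma, namely confirming that $\tau_i \circ \delta_k$ genuinely lands in $\LDTNO$ rather than $\LSO$ (handled by the linear-algebra fact that a product of square maps is invertible iff both factors are), and (b) verifying that all operations are type-preserving so that the Proposition~\ref{lem:belstheorem} polytope and $R'$ share the same ambient vector space; both follow immediately from the conventions established for the lemma.
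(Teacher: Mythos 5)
Your proof is correct, and it hinges on the same key fact as the paper's: the absorption property $\LDTNO\circ\LDO\subseteq\LDTNO$ for type-preserving deterministic operations, combined with the polytope decomposition of Proposition~\ref{lem:belstheorem} and the same substitution-plus-convexity bookkeeping. The organization differs in a small but real way. You argue by contraposition against the hull characterization of sensitivity, showing that $R'\in\mathbf{HullLDTNO}(R')$ forces $R'\in\mathbf{HullLDTNO}(R)$; this needs only the one-sided absorption $\LDTNO\circ\LDO\subseteq\LDTNO$. The paper instead targets the other (declared-equivalent) characterization---that every $\tau\in\LDTNO$ removes $R'$ from its equivalence class---and for that it also invokes the second closure fact $\LDO\circ\LDTNO\subseteq\LDTNO$, which makes $\mathbf{HullLDTNO}(R)$ an {\LOSR}-invariant set not containing $R'$. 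Your route is marginally leaner; the paper's yields the ``every LDTNO operation strictly demotes $R'$'' form directly rather than via the asserted equivalence of the two definitions of sensitivity. You also supply an explicit justification of the absorption property (a product of square linear maps with a singular left factor is singular), which the paper asserts without argument; this is a worthwhile addition, though it tacitly relies on the fact that a type-preserving {\LDO} operation lies in {\LSO} if and only if it acts invertibly on the space of conditional distributions, which one should (and easily can) verify from the functional form in Eqs.~\eqref{ExtremalOpLOSR}--\eqref{ExtremalOpLOSRb}.
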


\begin{proof}

First, note two related, useful facts:\\
(1) The composition of any deterministic operation (invertible or not) followed by some deterministic \emph{nonsymmetry} operation is precisely some (other) deterministic \emph{nonsymmetry} operation. Formally, if $\tau_{\LDTNO} \in \LDTNO$ and $\tau_{\LDO} \in \LDO$, and defining $\tau' \coloneqq \tau_{\LDTNO} \circ \tau_{\LDO}$, then $\tau' \in \LDTNO$. A consequence of this is that the entire set $\mathbfcal{P}^{\textup{LOSR}}(R)$ is mapped to the set $\mathbf{HullLDTNO}(R)$ under $\LDTNO$ and convex mixtures thereof. To see this, recall that the image of \emph{any} convex set of resources under \emph{any} convex set of operations is identically the convex hull of the images of the \emph{extremal} resources under the \emph{extremal} operations (in the respective sets). We use this fact to effectively replace $\mathbfcal{P}^{\textup{LOSR}}(R)$ with $\mathbfcal{V}^{\textup{LDO}}(R)$ and to replace convex mixtures of {\LDTNO} with {\LDTNO} itself, without loss of generality. In summary: $\mathbf{HullLDTNO}(\mathbfcal{P}^{\textup{LOSR}}(R))=\mathbf{HullLDTNO}(R)$ by virtue of the fact that $\LDTNO \circ \LDO = \LDTNO$.\\
(2) The composition of any deterministic \emph{nonsymmetry} operation followed by some deterministic operation (invertible or not) is some (other) deterministic \emph{nonsymmetry} operation. Formally, if $\tau_{1-\LDTNO} \in \LDTNO$ and $\tau_{2-\LDO} \in \LDO$, and defining $\tau_4 \coloneqq \tau_{2-\LDO}\circ \tau_{1-\LDTNO}$, then $\tau_4 \in \LDTNO $. A consequence of this is that the entire set $\mathbf{HullLDTNO}(R)$ is mapped to itself under $\LOSR$. To see this, we reuse the shortcut of considering only extremal resources and extremal operations. Specializing to our objects of interest, we effectively replace the operations-set $\LOSR$ by its extremal operations --- namely $\LDO$ --- and the resources-set $\mathbf{HullLDTNO}(R)$ by $\mathbfcal{V}^{\textup{LDTNO}}(R)$ without loss of generality. In summary: $\mathbfcal{P}^{\textup{LOSR}}(\mathbf{HullLDTNO}(R))=\mathbf{HullLDTNO}(R)$ by virtue of the fact that $\LDO \circ \LDTNO = \LDTNO$.

Now we are in position to prove Lemma~\ref{lem:onesensitiveimpliessensitiveset}. 
The set of resources $R'$ below $R$ in the partial order is identically $\mathbfcal{P}^{\textup{LOSR}}(R)$. The set of resources which can be generated from $R$ by mixtures of deterministic nonsymmetry operations is identically $\mathbf{HullLDTNO}(R)$. So, a resource $R'$ is below $R$ in the partial order \emph{and} cannot be generated from $R$ by mixtures of deterministic nonsymmetry operations if and only if $R'\in\mathbfcal{P}^{\textup{LOSR}}(R) \setminus \mathbf{HullLDTNO}(R) =: \boldsymbol{S}_{\rm sens}^R$.

Now, consider any ${\tau\in\LDTNO}$ and any $R'\in \boldsymbol{S}_{\rm sens}^R$, and define $R''\coloneqq \tau\circ R'$. Since we have established that the entirety of $\mathbfcal{P}^{\textup{LOSR}}(R)$ is mapped to $\mathbf{HullLDTNO}(R)$ under {\LDTNO}, it follows that $R''\in\mathbf{HullLDTNO}(R)$. However, since we have also established that the entirety of $\mathbf{HullLDTNO}(R)$ is mapped \emph{only} to itself under {\LOSR}, and since $R' \notin \mathbf{HullLDTNO}(R)$, it further follows that $R''\nconv R'$. Evidently, any $R' \in \boldsymbol{S}_{\rm sens}^R$ is removed from its equivalence class by every deterministic nonsymmetry operation, i.e.,  $\boldsymbol{S}_{\rm sens}^R$ is sensitive. This proves the Lemma.
\end{proof}

Note that Lemma~\ref{lem:onesensitiveimpliessensitiveset} implies that if $R$ is sensitive, and $R'$ is equivalent to $R$, then $R'$ is also sensitive.

\begin{lemma}\label{lem:townorm}
If a resource is sensitive, then it is also orbital. That is,
if two sensitive resources are interconvertible under type-preserving {\LOSR}, then they are interconvertible under {\LSO}.
 \end{lemma}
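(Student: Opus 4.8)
The plan is to prove the reformulated statement directly: if $R_1$ and $R_2$ are sensitive and interconvertible under type-preserving $\LOSR$ (so $R_1\conv R_2$ and $R_2\conv R_1$), then $R_2=\sigma\circ R_1$ for a single $\sigma\in\LSO$. The $\supseteq$ half of orbitality is immediate since $\LSO\subseteq\LOSR$. For the forward direction I would first record two consequences of sensitivity. Every element of $\LDTNO$ strictly lowers a sensitive resource: for $\mu\in\LDTNO$ one has $R_1\conv \mu\circ R_1$ (the operation is free) but $\mu\circ R_1\nconv R_1$ (sensitivity), so $\mu\circ R_1$ is strictly below $R_1$. Moreover, since $R_1\sim R_2$, each $\mu\circ R_1$ is strictly below $R_2$ as well, because $\mu\circ R_1\conv R_2$ together with $R_2\conv R_1$ would give $\mu\circ R_1\conv R_1$, contradicting sensitivity.

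Next I would invoke Proposition~\ref{lem:belstheorem} together with Proposition~\ref{lem:detextremal}: the conversion $R_1\conv R_2$ means $R_2$ lies in the convex hull of the images of $R_1$ under type-preserving $\LDO$ operations, which split into symmetries $\sigma\in\LSO$ and nonsymmetries $\mu\in\LDTNO$, giving $R_2=\sum_\sigma p_\sigma\,\sigma\circ R_1+\sum_\mu p_\mu\,\mu\circ R_1$. The main obstacle is showing that all the $\LDTNO$ weights $p_\mu$ vanish; a convex combination of a point equivalent to $R_2$ and a point strictly below $R_2$ can still lie outside $\mathbf{HullLDTNO}(R_1)$, so sensitivity alone does not obviously forbid an $\LDTNO$ contribution. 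I would resolve this using the yield monotones induced by linear functions, which (as noted in Section~\ref{costandyield}) form a complete set and, being pointwise maxima of functions linear in the resource, are convex. For each $\mu$ with $p_\mu>0$, completeness applied to $\mu\circ R_1\nconv R_2$ produces a convex monotone $M$ with $M(\mu\circ R_1)<M(R_2)$. Convexity then gives $M(R_2)\le\sum_\sigma p_\sigma M(\sigma\circ R_1)+\sum_{\mu'} p_{\mu'}M(\mu'\circ R_1)$; since $M(\sigma\circ R_1)=M(R_2)$ (equivalent resources) and $M(\mu'\circ R_1)\le M(R_2)$, every slack term must vanish, forcing $p_\mu=0$. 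Hence $R_2=\sum_\sigma p_\sigma\,\sigma\circ R_1$ is a convex combination of orbit points only.

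Finally I would collapse this orbit-mixture to a single symmetry. Representing resources as conditional-probability vectors, every $\LSO$ operation merely permutes the entries of $\vec R$ and therefore preserves the Euclidean norm $\Phi(R):=\lVert\vec R\rVert^2$, which is strictly convex. Applying $\Phi$ to $R_2=\sum_\sigma p_\sigma\,\sigma\circ R_1$ and using $\Phi(\sigma\circ R_1)=\Phi(R_1)$ yields $\Phi(R_2)\le\Phi(R_1)$, with equality iff the mixture is concentrated on one orbit point. Running the identical argument on $R_2\conv R_1$ gives $R_1=\sum_{\sigma'}p'_{\sigma'}\,\sigma'\circ R_2$ and $\Phi(R_1)\le\Phi(R_2)$. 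The two bounds force $\Phi(R_1)=\Phi(R_2)$, hence equality in the strict-convexity estimate, so the mixture for $R_2$ reduces to a single term $R_2=\sigma\circ R_1$ with $\sigma\in\LSO$. This is exactly $\LSO$-interconvertibility; combined with the remark following Lemma~\ref{lem:onesensitiveimpliessensitiveset} that every resource equivalent to a sensitive one is itself sensitive, it shows that the $\LOSR$-equivalence class of a sensitive $R$ equals its $\LSO$-orbit, i.e.\ $R$ is orbital.
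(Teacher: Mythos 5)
Your proof is correct and follows the same skeleton as the paper's: decompose $R_2$ over the images of $R_1$ under type-preserving $\LDO$, argue that only the $\LSO$ images can carry weight, and then use invariance plus strict convexity of the squared $2$-norm (together with the symmetric argument for $R_2\conv R_1$) to collapse the orbit mixture to a single symmetry. The one genuine difference is in the middle step. The paper simply asserts that ``any operation which preserves the equivalence class of a sensitive resource can be expressed as a convex combination of elements of $\LSO$,'' which, as you correctly observe, does not follow immediately from the definition of sensitivity: $R\notin\mathbf{HullLDTNO}(R)$ forbids a decomposition into $\LDTNO$ images \emph{alone}, not a mixed decomposition with nonzero $\LDTNO$ weight. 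Your fix --- noting that each $\mu\circ R_1$ with $\mu\in\LDTNO$ sits strictly below $R_2$, invoking the completeness of the convex yield monotones induced by linear payoff functions to witness $\mu\circ R_1\nconv R_2$, and then using convexity of such a monotone to force $p_\mu=0$ --- is sound (the yield of a linear function is indeed convex because $\LOSR$ acts linearly on $\vec R$, and the paper itself asserts completeness of this family in Section~\ref{costandyield}). Note that the $2$-norm alone could not have closed this gap, since deterministic nonsymmetry operations can \emph{increase} $\Vert\vec R\Vert^2$, so your monotone argument is doing real work here; it makes the lemma's proof more self-contained than the paper's version.
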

\begin{proof}
Let $R$ and $R'$ be distinct sensitive resources that are interconvertible under type-preserving {\LOSR}, i.e., $R\neq R'$ but $R\interconv R'$ and $[R]=[R']$. Any operation which preserves the equivalence class of a sensitive resource can be expressed as a convex combination of elements of {\LSO}.
The assumption of sensitivity thus dictates that $R'$ is in the convex hull of the images of $R$ under {\LSO},
and vice versa. We proceed to show that this sort of relationship must imply that $R\in\mathbfcal{V}^{\LSO}(R')$ and $R'\in\mathbfcal{V}^{\LSO}(R)$, that is, that $R$ and $R'$ are {\LSO}-equivalent.\footnote{The fact that $R'$ is in the convex hull of the images of $R$ under permutations of $R$'s probabilities is equivalent to stating that $R$ \emph{vector majorizes} $R'$. The relationship is reflexive, however. Readers familiar with vector majorization may recall that two vectors are equivalent under the majorization order
 if and only if they are related by some reordering, i.e., a (not necessarily physical) symmetry operation.} 

This can be seen by recognizing that the 2-norm is a convex function invariant under {\LSO}, meaning 
$R' \in \operatorname{ConvexHull}{\left(\mathcal{V}^{\LSO}(R)\right)}$
 implies $\norm{\vec{R'}}^2 \leq  \norm{\vec{R}}^2$.\footnote{Recall that $\vec{R}$ is shorthand for the \emph{representation} of the resource in terms of a real-valued vector consisting of all possible conditional probabilities, i.e., $\vec{R}=\left\{P_{XY|ST}(xy|st):\,x,y,s,t\in \{0,1\} \right\}$.}  By symmetry under exchange of $R$ and $R'$, it holds that $\norm{\vec{R}}^2 \leq  \norm{\vec{R'}}^2$, and hence $\norm{\vec{R'}}^2 = \norm{\vec{R}}^2$. 
The 2-norm, moreover, \emph{strictly decreases} under nontrivial stochastic mixing;\footnote{Consider the hypersphere consisting of all resources with 2-norm in common with $\vec{R}$. All the images of $R$ under {\LSO} lie on the surface of this hypersphere. \emph{Stochastic} mixing of symmetry operations (applied to $R$) is equivalent to convexly combining different points from the surface of the hypersphere. Any convex combination of points from the surface of a hypersphere results in a final point strictly interior to the sphere. Strictly interior points are closer to the center, in precisely the sense of having a strictly smaller 2-norm.} hence all interconversions between equivalent sensitive resources must be mediated by   \emph{deterministic} symmetries. 
Formally: If ${R' = \sum_{i=1}^n w_i \;\pi_i\circ R}$ and ${R = \sum_{i=1}^n w'_i \;\pi_i\circ R}$, where ${\sum_{i=1}^n w_i = \sum_{i=1}^n w'_i = 1}$ and ${\{w_1,...,w_n,w'_1,...,w'_n\} \geq  0}$, then $\norm{\vec{R'}}^2 = \norm{\vec{R}}^2$ and $w_i, w'_i \in \{0,1\}$.
\end{proof}

\begin{lemma}\label{lem:allCHSHsensitive}
$R_{\textup{PR}}$ is a sensitive resource, and $\mathbfcal{P}^{\textup{LOSR}}(R_{\textup{PR}}) \setminus \mathbf{HullLDTNO}(R_{\textup{PR}})$ is the entire eight dimensional set of all nonfree resources of type {\twotwotwotwo}.
\end{lemma}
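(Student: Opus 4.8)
The plan is to reduce the whole statement to a single structural fact: the image of \emph{any} {\twotwotwotwo}-type resource under \emph{any} {\LDTNO} operation is free. Granting this, both halves of the lemma follow quickly. For sensitivity, every element of $\mathbfcal{V}^{\textup{LDTNO}}(R_{\textup{PR}})$ is then free, so its convex hull $\mathbf{HullLDTNO}(R_{\textup{PR}})$ lies inside $\boldsymbol{S}^{\textup{free}}_{\twotwotwotwo}$; since $R_{\textup{PR}}$ is nonfree it cannot lie in this hull, which is exactly the definition of being sensitive. For the second claim I would argue two inclusions. First, $R_{\textup{PR}}$ is at the top of the pre-order of {\twotwotwotwo}-type resources (it converts to every such resource: its variant $R_{\textup{PR},4}$ does, and all eight variants are {\LSO}-equivalent by Proposition~\ref{prop:symmetrypartitioning}), so the type-preserving image $\mathbfcal{P}^{\textup{LOSR}}(R_{\textup{PR}})$ is the whole polytope $\boldsymbol{S}^G_{\twotwotwotwo}$. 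Second, $\mathbf{HullLDTNO}(R_{\textup{PR}})=\boldsymbol{S}^{\textup{free}}_{\twotwotwotwo}$: the inclusion $\subseteq$ is the structural fact, while $\supseteq$ holds because each of the sixteen local deterministic boxes---the vertices of the free polytope---is itself an {\LDTNO} image of $R_{\textup{PR}}$, produced by the (non-invertible, hence non-symmetry) operation that discards the source outcome and emits a fixed deterministic box. Subtracting, $\mathbfcal{P}^{\textup{LOSR}}(R_{\textup{PR}})\setminus\mathbf{HullLDTNO}(R_{\textup{PR}})=\boldsymbol{S}^G_{\twotwotwotwo}\setminus\boldsymbol{S}^{\textup{free}}_{\twotwotwotwo}$, the full eight-dimensional set of nonfree {\twotwotwotwo}-type resources.

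The core of the argument is therefore the structural fact. By the factorization of deterministic operations (Eqs.~\eqref{ExtremalOpLOSR}--\eqref{ExtremalOpLOSRb}), a type-preserving {\LDO} operation splits across the wings with left factor $\delta_{S,f_A(S')}\delta_{X',g_A(X,S')}$ and an analogous right factor, and it lies in {\LSO} precisely when both factors are invertible. Hence an {\LDTNO} operation has at least one non-invertible factor; say the left one. Writing each effective output observable of the image box as $A'_{s'}=\epsilon_A(s')\,A_{f_A(s')}$ when $g_A(\cdot,s')$ is a bijection, and $A'_{s'}=\epsilon_A(s')$ (a deterministic constant) when $g_A(\cdot,s')$ is constant, non-invertibility of the left factor forces one of two situations: (a) some $A'_{s'}$ is deterministic (because some $g_A(\cdot,s')$ is constant), or (b) $f_A$ is constant, so that both $A'_0$ and $A'_1$ are signed copies $\epsilon_0 A_{s_0},\epsilon_1 A_{s_0}$ of a \emph{single} source observable.

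In case (b) every {\CHSH} variant $\sum_{s't'}\sigma_{s't'}\langle A'_{s'}B'_{t'}\rangle$ (with an odd number of minus signs) collapses to $\sum_{t'}(\sigma_{0t'}\epsilon_0+\sigma_{1t'}\epsilon_1)\langle A_{s_0}B'_{t'}\rangle$; the coefficient lies in $\{-2,0,2\}$, and the odd-sign condition forbids it from being nonzero for both $t'$, so the functional equals $c\,\langle A_{s_0}B'_{t^\ast}\rangle$ with $|c|\le 2$ and is thus bounded by $2$. In case (a), say $A'_0=\epsilon$ is deterministic, the terms carrying it factorize as $\langle A'_0 B'_{t'}\rangle=\epsilon\langle B'_{t'}\rangle$, and one checks from positivity of the joint distributions of the image box that $\langle B'_0\rangle+\langle A'_1B'_0\rangle\le 1+\langle A'_1\rangle$ and $\langle B'_1\rangle-\langle A'_1B'_1\rangle\le 1-\langle A'_1\rangle$, whose sum gives ${\CHSH}\le 2$. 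Since all eight {\CHSH} inequalities hold (the remaining variants are bounded by the same positivity constraints, and the right-non-invertible case is identical after exchanging wings), the image is free, as claimed.

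I expect case (a) to be the main obstacle: unlike case (b), bounding the {\CHSH} value when one wing carries a genuinely deterministic outcome is not a one-line substitution. The conceptual shortcut is that fixing $A'_0$ to a constant reduces the left wing to a single nontrivial observable, placing the box in a sub-scenario with effectively one setting per wing, for which the local and no-signalling polytopes coincide; the displayed positivity inequalities are the concrete verification of this. The only remaining bookkeeping is to confirm the case split is exhaustive and that the reasoning applies verbatim in each of the eight {\LSO}-images of the {\CHSH}-region, which is guaranteed by Proposition~\ref{prop:symmetrypartitioning}.
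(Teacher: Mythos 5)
Your proposal is correct, and it takes a genuinely different route from the paper. The paper disposes of this lemma with a one-line ``proof by inspection'': one enumerates the (finitely many) type-preserving {\LDTNO} operations, applies each to $R_{\textup{PR}}$, and checks that every image lands in $\boldsymbol{S}^{\rm free}_{\twotwotwotwo}$; the second half of the statement is left implicit. You instead prove the stronger structural fact analytically --- that \emph{every} {\twotwotwotwo}-type no-signalling box is mapped into the free set by \emph{every} type-preserving {\LDTNO} operation --- via the factorization of Eqs.~\eqref{ExtremalOpLOSR}--\eqref{ExtremalOpLOSRb} and the observation that non-invertibility of a wing's factor forces either a repeated source setting (your case (b), killed by the odd-sign parity of the $\CHSH$ coefficients) or a deterministic effective observable (your case (a), killed by positivity plus no-signalling of the image box). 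Both case analyses check out: in (b) the product $\sigma_{00}\sigma_{10}\sigma_{01}\sigma_{11}=-1$ indeed forbids both collapsed coefficients from being nonzero, and in (a) your two positivity inequalities sum to $\CHSH_0\le 2$ (with the $\langle A'_1\rangle$ terms cancelling by no-signalling), the other sign choices and variants following identically. What your approach buys is twofold: it replaces a brute-force check over thousands of operations with a transparent argument, and the stronger statement you prove is exactly the fact the paper separately asserts (again only ``as can be explicitly checked'') in the footnote to the proof of Lemma~\ref{prop:ext}, so your argument covers that claim as well. Your derivation of the second half --- $\mathbfcal{P}^{\textup{LOSR}}(R_{\textup{PR}})=\boldsymbol{S}^G_{\twotwotwotwo}$ because the {\LDO} images of $R_{\textup{PR}}$ contain all $8+16$ vertices of the no-signalling polytope, and $\mathbf{HullLDTNO}(R_{\textup{PR}})=\boldsymbol{S}^{\rm free}_{\twotwotwotwo}$ by the two inclusions you give --- is also sound and more explicit than anything in the paper. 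The only cosmetic point is that invoking $R_{\textup{PR},4}$ is unnecessary: it suffices that discard-and-prepare operations yield all sixteen local deterministic vertices and {\LSO} yields all eight PR variants.
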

\begin{proof}[Proof by inspection]\quad\\ One can readily verify that $\tau\circ R_{\textup{PR}}\in \mathbf{S}^{\rm free}_{{\twotwotwotwo}}$ for all type-preserving LDTNO operations $\tau$.
\end{proof}

\bigskip
\begin{proof}[Proof of Proposition~\ref{prop:zerodclasses}.]\quad\\
Lemma~\ref{lem:allCHSHsensitive} together with Lemma~\ref{lem:onesensitiveimpliessensitiveset} immediately imply that all nonfree resources of type ${\twotwotwotwo}$ are sensitive. Lemma~\ref{lem:townorm} then directly implies that all these resources are orbital.
\end{proof}

A final comment: consider generalizing Proposition~\ref{prop:zerodclasses} in light of the discussion just given. If one desires to construct an orbital set of resources beyond {\twotwotwotwo}-type, one needs only to find \emph{some} single sensitive resource $R$ of the desired type. From Lemmas~\ref{lem:onesensitiveimpliessensitiveset} and~\ref{lem:townorm}, it then follows that the set of resources $\mathbfcal{P}^{\textup{LOSR}}(R) \setminus \mathbf{HullLDTNO}(R)$ constitutes an orbital set.  It might be the case, for instance, that for any nontrivial choice of resource type, there is 
at least one convexly extremal resource that
 is sensitive,  analogous to how the PR-box is a sensitive resource for type {\twotwotwotwo}.

\subsection{Proof of Proposition~\ref{prop:orbitalstuff}: lower bound on the number of monotones in any complete set
} \label{sec:proofproporbitalstuff}

Recall that a resource is termed orbital if and only if its {\LOSR}-equivalence class of resources of the same type is equal to its {\LSO}-equivalence class. 
We now prove Proposition~\ref{prop:orbitalstuff}, recalled below:
\begin{customprop}{\ref{prop:orbitalstuff}}\label{prop:orbitalstuffCOPY}
For any compact set $\mathbf{S}$ of resources that are all orbital, the intrinsic dimension of the set $\mathbf{S}$ is a lower bound on the cardinality of a complete set of continuous monotones for $\mathbf{S}$ (and for any superset of $\mathbf{S}$).
\end{customprop}

\begin{proof}
The set of local symmetry operations for a given type has finite cardinality, and hence there are a finite number of resources in the {\LSO}-equivalence class of any resource.  For an orbital resource $R$ , this implies that the {\LOSR}-equivalence class of $R$ (over resources of type $[R]$) is precisely $\mathbfcal{V}^{\LSO}(R)$, which is a finite set. If a compact set $\mathbf{S}$ of orbital resources has intrinsic dimension $d$, and the {\LOSR}-equivalence class of every resource in the set is finite and hence zero-dimensional, then it follows that one can find $d$-dimensional compact subsets of resources in $\mathbf{S}$ in which {\em no two resources are equivalent}.\footnote{Not {\em all} compact subsets will necessarily have this property, but {\em some} will. For example, consider a nonfree resource $R_{\textup{asym}}$ which is {\em not} invariant under any {\LSO} operation. Every {\LSO} operation maps such a resource to a distinct resource not in the original neighborhood for a suitably small neighborhood. 
Because {\LSO} operations are linear (and hence continuous), they 
map compact subspaces to compact subspaces. Hence, every {\LSO} operation takes the entire neighborhood of nonfree resources around $R_{\textup{asym}}$ to some other nonfree neighborhood; if the original neighborhood is chosen to be small enough, these two neighborhoods will not intersect. Hence, no two resources in the original neighborhood are interconvertible by {\LSO}. } 

Hence, no two resources in such a subset are assigned the same tuple of values by any complete set of monotones. In other words, a complete set of $n$ continuous monotones maps the subset of resources {\em injectively} to $\mathbb{R}^n$. But this map can only be injective if $n \geq d$, which guarantees that the number of continuous monotones required to identify a resource in the set is at least as large as the intrinsic dimension $d$ of the set $\mathbf{S}$. Finally, note that the number of continuous monotones required to identify a resource in any superset of $\mathbf{S}$ must be at least as large as for the set $\mathbf{S}$ itself, which completes the proof.
\end{proof}

\bigskip
\setlength{\bibsep}{1pt plus 1pt minus 1pt}
\bibliographystyle{apsrev4-2-wolfe}
\nocite{apsrev41Control}
\bibliography{resourcetheory_sparse}

\end{document}